\definecolor{mygreen}{rgb}{0,0.6,0}
\definecolor{mygray}{rgb}{0.5,0.5,0.5}
\definecolor{mymauve}{rgb}{0.58,0,0.82}
\theoremstyle{plain}
\newtheorem{thm}{Theorem}
\newtheorem{lem}[thm]{Lemma}
\newtheorem{prop}[thm]{Proposition}
\newtheorem{conj}[thm]{Conjecture}
\newtheorem{fact}[thm]{Fact}
\theoremstyle{definition}
\newtheorem{defn}{Definition}
\newtheorem{exmp}{Example}
\theoremstyle{remark}
\newtheorem{rem}{Remark}
\newcommand{\red}{\rightsquigarrow}
\begin{document}
\begin{frontmatter}
% Title, authors and addresses
% use the thanksref command within \title, \author or \address for footnotes;
% use the corauthref command within \author for corresponding author footnotes;
% use the ead command for the email address,
% and the form \ead[url] for the home page:
% \title{Title\thanksref{label1}}
% \thanks[label1]{}
% \author{Name\corauthref{cor1}\thanksref{label2}}
% \ead{email address}
% \ead[url]{home page}
% \thanks[label2]{}
% \corauth[cor1]{}
% \address{Address\thanksref{label3}}
% \thanks[label3]{}

%\title{Sound and complete erasure and duplication\\ in a 
%purely linear second-order setting}
%\title{A logic of linear erasure and duplication}
\title{A type-assignment of linear erasure and duplication}

\author[unito]{Gianluca~Curzi}
\ead{curzi@di.unito.it, gianluca.curzi@unito.it}
\ead[urlpa]{http://www.di.unito.it/~curzi}
\author[unito]{Luca~Roversi}
\ead{roversi@di.unito.it, luca.roversi@unito.it}
\ead[urlr]{http://www.di.unito.it/~rover}

\address[unito]{Dipartimento di Informatica -- Universit\`a di Torino}

\begin{abstract}
We introduce  $\mathsf{LEM}$, a type-assignment system for the linear $ \lambda $-calculus that extends second-order 
$\mathsf{IMLL}_2$, i.e., intuitionistic multiplicative Linear Logic, by 
means of logical rules that weaken and contract assumptions, but in a 
purely linear setting. $\mathsf{LEM}$ enjoys both a mildly weakened 
cut-elimination, whose computational cost is cubic, and 
Subject reduction. A translation of $\mathsf{LEM}$ 
into $\mathsf{IMLL}_2$ exists such that the derivations of the former can 
exponentially compress the dimension of the derivations in the latter. 
$\mathsf{LEM}$ 
allows for a modular and compact representation of boolean circuits, directly encoding the fan-out nodes, 
by contraction, and disposing 
garbage, by weakening.
It can also represent natural numbers with terms  very close to
standard Church numerals which, moreover, apply to Hereditarily Finite 
Permutations, i.e. a group structure that exists inside the linear 
$ \lambda $-calculus.
\end{abstract}

\begin{keyword}
Second-Order Multiplicative Linear Logic\sep
Type-assignment\sep
Linear $ \lambda $-calculus\sep
Cut-elimination (cost)\sep
Boolean Circuits\sep
Numerals \sep
Hereditarily Finite Permutations

%\PACS code \sep code \PACS 
\end{keyword}
\end{frontmatter}

%%%%%%%%%%%%%%%%%%%%%%%%%%%%%%%%%%%%%%%%%%%%%%%%%%%%
\section{Introduction}
\label{section:Introduction}
Girard introduces \textit{Linear Logic} ($\mathsf{LL}$) in 
\cite{Girard:TCS87} as a refinement of both classical and intuitionistic 
logic. \textsf{LL} decomposes the intuitionistic implication  
\enquote{$\Rightarrow$} into the more primitive linear implication 
\enquote{$\multimap$} and modality \enquote{$\oc$} (of course), 
the latter giving a logical status to weakening and contraction by
means of the so-called  \textit{exponential rules}. 
According to the \emph{Curry-Howard correspondence},
 this  decomposition allows to identify a strictly linear 
component of the functional computations that interacts with the non-linear one, in which duplication and erasure are allowed.

This work focuses on $\mathsf{IMLL}_2$, i.e.~second-order intuitionistic \textit{multiplicative} Linear Logic which, we recall, is free of any kind of exponential rules. 
The Curry-Howard correspondence tightly relates $\mathsf{IMLL}_2$ and 
the linear $\lambda$-calculus, a sub-language of the standard $\lambda$-calculus without explicit erasure and duplication.

Interesting works exist on the expressiveness of both the untyped and the typed linear $ \lambda $-calculus.

Alves et al.~\cite{DBLP:conf/csl/AlvesFFM06} recover the full computational power of G\"odel System $T$ by adding booleans, natural numbers, and a linear iterator to the linear $\lambda$-calculus, the non-linear features coming specifically from the iterator and the numerals. 

Matsuoka investigates the discriminating power of linear $ \lambda $-terms  with types in $ \textsf{IMLL}$, i.e.~intuitionistic multiplicative Linear Logic, proving typed variants of B\"ohm Theorem~\cite{MATSUOKA200737}. We remark that, in this setting, discriminating among linear $ \lambda $-terms relies on \emph{a specific form of weakening} already inside $ \textsf{IMLL} $. 

Another work that exploits the built-in erasure and copying mechanisms of the linear $\lambda$-calculus is by Mairson~\cite{mairsonlinear}. 
With no new constructors, Mairson encodes boolean circuits in the linear 
$ \lambda $-calculus. Moreover, Mairson\&Terui reformulate Mairson's 
results inside $\mathsf{IMLL}_2$ and prove bounds on the complexity of the cut-elimination in sub-systems of \textsf{LL}~\cite{mairson2003computational}.

\paragraph{Contributions}
Starting from Mairson\&Terui's \cite{mairson2003computational}, this work investigates a structural proof-theory, and the related Curry-Howard correspondence,  of $\mathsf{IMLL}_2$  extended with inference rules  for contraction and  weakening. 
\begin{enumerate}
\item 
We introduce the \emph{Linearly Exponential and Multiplicative}  system $\mathsf{LEM}$, giving a logical status to
the erasure and the duplication that~\cite{mairson2003computational} identifies inside the linear $ \lambda $-calculus. 
$\mathsf{LEM}$ is a type-assignment for a \emph{linear} 
$\lambda$-calculus endowed with constructs for weakening and 
contraction, and it is obtained by extending $\mathsf{IMLL}_2$ with 
rules on modal formulas ``$ \shpos A$''.  $\mathsf{LEM}$ can be seen  as a sub-system of $\mathsf{LL}$ 
with a restricted form ``$ \shpos $'' of \enquote{$\oc$}.

\item
We consider a mildly weakened cut-elimination, called ``lazy'',  that  faithfully represents the mechanism of linear erasure and  duplication discussed in~\cite{mairson2003computational}, and we  identify a set of derivations in $\mathsf{LEM}$
that rewrite to cut-free 
ones under that lazy cut-elimination in a cubic number of steps 
(Section~\ref{Cut elimination, complexity, and subject reduction for IMLL2^shpos}).  
Moreover, we show the Subject reduction of $\mathsf{LEM}$ 
(Section~\ref{sec: subject reduction}).

\item 
We prove that the cut-elimination of $\mathsf{IMLL}_2$ can simulate the 
one of $\mathsf{LEM}$ at a cost which can be exponential in the size of 
the given derivation of $\mathsf{LEM}$ 
(Section~\ref{sec: the expressiveness of the system}).
So,  $\mathsf{LEM}$ can speed up the cut-elimination of $\mathsf{IMLL}_2$,  meaning that it compresses in smaller derivations what can be algorithmically expressed in $\mathsf{IMLL}_2$.

\item Hence, we explore the algorithmic expressiveness of $\mathsf{LEM}$
(Section~\ref{section:The expressiveness of IMLLshpos2 and applications}):
\begin{enumerate}
	\item 
	Both $\mathsf{LEM}$ and $\mathsf{IMLL}_2$ can represent boolean circuits.	
	However, the copying mechanism, directly available in $\mathsf{LEM}$, makes the encoding of the fan-out of the nodes of the circuit essentially natural, facilitating the modularity and the readability of the encoding itself.
	Moreover, the erasure in $\mathsf{LEM}$ avoids to accumulate garbage when evaluating a circuit represented by a derivation of $\mathsf{LEM}$, unlike in other proposals.
	\item 
	We show that numerals, structurally related to Church ones, 
	exist in $\mathsf{LEM}$. Their type is
	$ (\shpos \forall\alpha.(\alpha\multimap\alpha) ) \multimap\forall\alpha.(\alpha\multimap\alpha) $ that forbids
	iterations longer than the complexity of the lazy cut-elimination.  Remarkably, the numerals in $\mathsf{LEM}$ admit successor and addition that work as expected, thanks to the Subject reduction. 

    \item 
	Finally, we show that Hereditarily Finite Permutations, which form a group inside the  linear
	$ \lambda $-calculus, inhabit a simple generalization 
	of the here above type of numerals, so possibly connecting 
	$\mathsf{LEM}$ with reversible computations.
\end{enumerate}
The above contributions follow from a fully detailed, and not
at all obvious, technical reworking of Mairson\&Terui's \cite{mairson2003computational} work. 
We propose it as a solid base to further investigations concerning duplication and erasure   in a purely linear setting.
\end{enumerate}
Section~\ref{sec: background} is about (formal) preliminaries. Section~\ref{sec: duplication and erasure in a linear setting} introduces the motivating background and 
Section~\ref{sec:  the system shpos, with, oplus  IMLL2 cut elimination and bound} formally defines $\mathsf{LEM}$.

\paragraph{Acknowledgments}
We are indebted to the anonymous reviewers for their patience and  constructive attitude with which they red and commented on previous versions of this work.
\section{Preliminaries} \label{sec: background}
\subsection{The linear $\lambda$-calculus}

We assume the reader to be familiar with standard $\lambda$-calculus and related concepts like:
(i) the set $ FV(M) $ of the free variables of the $ \lambda $-term $ M $, 
(ii) the meta-level substitution $M[N/x]$ that replaces 
the $ \lambda $-term $ N $ for every free occurrence of the variable $ x $
in $ M $, 
(iii) the contexts $ \mathcal{C}[] $, i.e.~$ \lambda $-terms with a place-holder (hole) $ [] $ 
 that may capture free variables of a $ \lambda $-term plugged into $ [] $,
 (iv) the $\alpha$-equivalence ($=_\alpha$),
(v) the $\beta$-reduction $(\lambda x.M)N\rightarrow_\beta M[N/x]$, 
(vi) the $\eta$-reduction $\lambda x.Mx\rightarrow_\eta M$ that can apply 
if $ x $ is not free in $ M $. Both $\rightarrow_\beta$ and $\rightarrow_\eta$ are considered contextually closed.
%We assume the reader to be familiar with standard $\lambda$-calculus and related concepts like:
%(i) the set $ FV(M) $ of the free variables of the $ \lambda $-term $ M $, 
%(ii) the meta-level substitution $M[N/x]$ that replaces 
%the $ \lambda $-term $ N $ for every free occurrence of the variable $ x $
%in $ M $, 
%(iii) the $\alpha$-equivalence ($=_\alpha$),
%(iv) the $\beta$-reduction $(\lambda x.M)N\rightarrow_\beta M[N/x]$, 
%(v) the $\eta$-reduction $\lambda x.Mx\rightarrow_\eta M$ that can apply 
%if $ x $ is not free in $ M $,
%(vi) the contexts $ \mathcal{C}[] $, i.e.
%$ \lambda $-terms with a place-holder (hole) $ [] $ 
% that may capture free variables of a $ \lambda $-term plugged into $ [] $.

%By $\rightarrow_\beta^+$ we denote the transitive and contextual closure of the $ \beta $-reduction, by $\rightarrow_\beta^*$ its 
%reflexive, transitive, and contextual closure and, finally, simply by $=$
%its reflexive, transitive, symmetric and contextual closure.

%Also, 
%by $\rightarrow_\eta^+$ we denote the transitive and contextual closure 
%of the $ \eta $-reduction, by $\rightarrow_\eta^*$ its 
%reflexive, transitive, and contextual closure and, finally, by $=_\eta$
%its reflexive, transitive, symmetric and contextual closure.

By $\rightarrow_\beta^*$ we denote the reflexive and transitive closure of the $ \beta $-reduction, and by  $=_{\beta}$  its reflexive, symmetric and transitive closure.

Also, 
by $\rightarrow_\eta^*$ we denote the reflexive and transitive closure of the $ \eta $-reduction, and by  $=_{\eta}$  its reflexive, symmetric and transitive closure.

Finally, by $\rightarrow_{\beta \eta}$ we denote $\rightarrow_\beta \cup \rightarrow_{\eta}$, and by $\rightarrow^*_{\beta \eta}$ we denote its reflexive and transitive closure.

A $\lambda$-term is in $\beta$-\textit{normal form}, or simply 
($ \beta $-)\emph{normal},
whenever no $ \beta $-reduction applies to it. 
A $\lambda$-term is in $\eta$-\textit{normal form}, or simply \emph{$ \eta $-normal} if no $\eta$-reduction applies to it. Finally, a $\lambda$-term is in $\beta\eta$-\textit{normal form}, or simply  $ \beta \eta $-\emph{normal}, whenever no $ \beta\eta $-reduction applies to it.

A $\lambda$-term is \textit{closed} if $FV(M)=\emptyset$.  

The \textit{size} $\vert M \vert $ of $M$ is the number of nodes in its syntax tree. 

The \emph{linear} $\lambda$-calculus is the $\lambda$-calculus restricted to \emph{linear} $\lambda$-terms: 

\begin{defn}[Linear $\lambda$-terms] 
\label{defn:Linear lambda-terms}
A $\lambda$-term $M$ is \textit{linear} if all of its free variables occur once in it and every proper sub-term $ \lambda x.M' $ of $M$ is such that  $ x $ occurs in  $ M' $ and $M' $ is linear.
\qed
\end{defn}
\noindent
For example, 
$I\triangleq \lambda x.x$ and 
$C\triangleq\lambda x. \lambda y. \lambda z. xzy$ are linear, 
while $K \triangleq \lambda x. \lambda y. x$ and 
$S\triangleq\lambda x. \lambda y. \lambda z. xz(yz) $ are not.

\vspace{\baselineskip}
\noindent
To our purposes, we shall adopt the following notion of value:
\begin{defn}[Values]
\label{defn:Values among linear lambda-terms}
A \emph{value} is every linear $ \lambda $-term which is both 
($ \beta $-)\emph{normal} and \emph{closed}. 
\qed
\end{defn}
\noindent
We shall generally use $ V $ and $ U $ to range over values.

\begin{fact}[Stability]
\label{fact:Stability}
Linear $\lambda$-terms are stable under $\beta$-reduction, 
i.e.~$M$ linear and $M \rightarrow_\beta N $ imply $N$ is linear. 
Analogously, linear $\lambda$-terms are stable under $\eta$-red\-uct\-ion, i.e.~$M$ linear and $M \rightarrow_\eta N$ imply $N$ is linear. In both cases,   $FV(N)= FV(M)$.
\qed
\end{fact}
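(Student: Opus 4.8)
The plan is to isolate a \emph{linear substitution lemma} and then propagate it through the contextual closure of the two reductions. I will use two facts that follow directly from the definition of linearity: every subterm of a linear $\lambda$-term is again linear; and if $M_1M_2$ is linear then $M_1,M_2$ are linear with $FV(M_1)\cap FV(M_2)=\emptyset$, while if $\lambda y.M_1$ is a linear proper subterm then $M_1$ is linear and $y$ occurs in $M_1$. The substitution lemma I want is: if $M,N$ are linear, $x$ occurs free exactly once in $M$, the bound variables of $M$ avoid $FV(N)$ (which can always be arranged by $\alpha$-conversion), and $(FV(M)\setminus\{x\})\cap FV(N)=\emptyset$, then $M[N/x]$ is linear and $FV(M[N/x])=(FV(M)\setminus\{x\})\cup FV(N)$.

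I would prove the lemma by induction on $M$. The case $M=x$ is immediate, and $M=y\neq x$ cannot occur. For $M=\lambda y.M_1$, the hypotheses give $y\neq x$ (as $x$ is free in $M$) and $y\notin FV(N)$, hence $M[N/x]=\lambda y.(M_1[N/x])$; the hypotheses restrict to $M_1$, so the induction hypothesis yields $M_1[N/x]$ linear with the stated free-variable set, $y$ still occurs in it, and recomputing $FV$ closes the case. For $M=M_1M_2$, exactly one of $M_1,M_2$ contains the unique free occurrence of $x$, say $M_1$; the induction hypothesis applies to $M_1[N/x]$, and the disjointness hypotheses together with $FV(M_1)\cap FV(M_2)=\emptyset$ ensure that $FV(M_1[N/x])$ and $FV(M_2)$ stay disjoint, so $M_1[N/x]\,M_2$ is linear with the expected free variables.

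Stability under $\rightarrow_\beta$ then follows by induction on $M$, exploiting that $\rightarrow_\beta$ is contextually closed. The only substantial case is the root contraction $M=(\lambda x.P)Q\rightarrow_\beta P[Q/x]$: since $(\lambda x.P)Q$ is linear, $\lambda x.P$ and $Q$ are linear with disjoint free variables, and $\lambda x.P$ being a proper abstraction subterm forces $x$ to occur free exactly once in $P$ with $FV(P)\setminus\{x\}=FV(\lambda x.P)$ disjoint from $FV(Q)$; after renaming the bound variables of $P$ away from $FV(Q)$, the substitution lemma gives $P[Q/x]$ linear with $FV(P[Q/x])=(FV(P)\setminus\{x\})\cup FV(Q)=FV((\lambda x.P)Q)$. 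The inductive cases $M=\lambda y.M_1$ and $M=M_1M_2$ are routine: the induction hypothesis produces a linear contractum \emph{with the same free variables}, and since linearity only constrains the number of occurrences of each free variable, reassembling $M$ preserves it, as do the free-variable equalities. Stability under $\rightarrow_\eta$ is the same argument with a trivial root case: $M=\lambda x.Px\rightarrow_\eta P$ with $x\notin FV(P)$, where $P$ is linear as a subterm of a linear term and $FV(\lambda x.Px)=(FV(P)\cup\{x\})\setminus\{x\}=FV(P)$. In all cases $FV(N)=FV(M)$, which also delivers the last claim of the statement, and the result for $\rightarrow_{\beta\eta}$ and its iterated closures follows by iterating.

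I expect the only real friction to be the free-variable bookkeeping in the substitution lemma, i.e.\ checking that after substituting $N$ for the single occurrence of $x$ \emph{every} free variable still occurs exactly once; this is exactly what the disjointness side-conditions buy, and those conditions are automatically satisfied by any $\beta$-redex sitting inside a linear term, which is why the lemma suffices. A secondary, purely cosmetic point is the asymmetry in the definition of linearity between a term and its proper subterms when the term is an abstraction — this never interferes, because a contracted $\beta$-redex is an application and so the abstraction it exposes is a genuine proper subterm.
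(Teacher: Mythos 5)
The paper states this Fact with no proof at all (it is asserted with an immediate \qed as a standard property of the linear $\lambda$-calculus), so there is nothing to diverge from: your argument --- a linear substitution lemma whose side conditions are exactly those supplied by a $\beta$-redex inside a linear term, followed by propagation through the contextual closure of $\rightarrow_\beta$ and $\rightarrow_\eta$ --- is the standard proof and is correct. You also correctly identify and defuse the one delicate point, namely that Definition~\ref{defn:Linear lambda-terms} constrains only \emph{proper} abstraction subterms, which could in principle let a contractum of the form $\lambda z.R$ with $z\notin FV(R)$ slip through when it is reassembled as a proper subterm; your observation that any such abstraction traces back to an abstraction that was already a proper subterm before the reduction (and hence had $z\in FV(R)$) closes that gap.
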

\noindent
Finally, we shall write $M \circ N$ in place of $\lambda z. M(Nz)$.

\begin{figure}[ht]
	\begin{mathpar}
		\inferrule*[Right= $ax$]
		{\\}
		{x: A \vdash x:A} 
		\and
		\inferrule*[Right= $cut$]
		{\Gamma \vdash N: A \\ \Delta, x:A \vdash M:C}
		{\Gamma, \Delta \vdash M[N/x]:C}
		\\
		\inferrule*[Right= $\multimap$R]
		{\Gamma, x:A \vdash M:B}
		{\Gamma \vdash \lambda x. M : A \multimap B} 
		\and
		\inferrule*[Right= $\multimap$L]
		{\Gamma \vdash N: A \\ \Delta, x:B \vdash M: C}
		{\Gamma, \Delta, y: A \multimap B \vdash M[yN/x]: C}
		\\
		\inferrule*[Right=$\forall$R]
		{\Gamma \vdash M: A\langle \gamma/\alpha \rangle 
		 \\ \gamma \not \in FV(\operatorname{rng}(\Gamma))}
		{\Gamma \vdash M: \forall \alpha.A} 
		\and
		\inferrule*[Right= $\forall$L]
		{\Gamma, x: A\langle B/ \alpha \rangle \vdash M:C}
		{\Gamma, x: \forall \alpha. A \vdash M:C}
	\end{mathpar}
	\caption{$\mathsf{IMLL}_2$ as a type-assignment system.}
	\label{figure:IMLL2} %% DON'T MOVE
\end{figure}
%We refer to~\cite{barendregt1984lambda} for more details.

\subsection{The systems $\mathsf{IMLL}_2$ and $\mathsf{IMLL}$} \label{subsec: IMLL2}
We assume familiarity with basic proof-theoretical notions and with Linear Logic (see~\cite{girard1987linear,troelstra2000basic}.)
\emph{Second-order Intuitionistic Multiplicative Linear Logic} ($\mathsf{IMLL}_2$), seen as a type-assignment for the linear $
 \lambda $-calculus, is in Figure~\ref{figure:IMLL2}, where, we remark, the only logical operators are the universal quantifier ``$ \forall$'' and the
	linear implication ``$ \multimap $''. $\mathsf{IMLL}_2$ derives \emph{judgments} $\Gamma \vdash M: A$, i.e.~a \emph{type} $ A $ for the linear $ \lambda $-term $ M $ from the \emph{context} $\Gamma$.
%
%A \textit{term} of $\mathsf{IMLL}_2$ is either a variable $x$, or an application $MN$, where $M$ and $N$ are terms such that $FV(M)\cap FV(N)= \emptyset$, or an abstraction $\lambda x.M$, where $M$ is a term and $x \in FV(M)$. Terms are considered up to $\alpha$-equivalence and the variable conventions are adopted.
%
A \textit{type} is a (type) variable $\alpha$, or an \emph{implication} $A \multimap B$, or a \emph{universal quantification} $\forall \alpha. A$, where
$A$ and $B$ are types. 
The set of free type variables of $ A $ is $ FV(A) $. If $FV(A)= \emptyset$, then $ A $ is \textit{closed}. If $FV(A)= \lbrace \alpha_1, \ldots, \alpha_n \rbrace$, then  \textit{a closure} $\overline{A}$ of $A$ is $\forall \alpha_1. \cdots. \forall \alpha_n. A$, not necessarily linked to a specific order of $ \alpha_1,\ldots,\alpha_n $.
The standard meta-level substitution of a type $ B $ for every free occurrence of $ \alpha $ in $ A $ is $A \langle B/ \alpha \rangle$. 
The \textit{size} $\vert A \vert$ of the type $A$ is the number of nodes in its syntax tree. 
A \textit{context} $\Gamma$ has form $x_1: A_1, \ldots, x_n:A_n$, with $ n\geq 0 $, i.e.~it is a finite multiset of \textit{assumptions} $x: A$, where $ x $ is a
 $ \lambda $-variable. 
The \emph{domain} $\operatorname{dom}(\Gamma)$ of $ \Gamma $ is 
$\{x_1, \ldots, x_n\}$ and its range $ \operatorname{rng}(\Gamma) $
is $\{A_1, \ldots, A_n\}$.
The size $\vert \Gamma \vert$ of $ \Gamma $ is $\sum^n_{i=1}\vert A_i \vert$.  
Typically, names for contexts are $\Gamma, \Delta$ or $\Sigma$.

Since $ \textsf{IMLL}_2 $ gives types to linear $ \lambda $-terms,  
$\multimap$L is necessarily subject to the \emph{linearity constraint} 
$\operatorname{dom}(\Gamma) \cap \operatorname{dom}(\Delta)= \emptyset$. 
We range over the derivations of $\mathsf{IMLL}_2$ by $\mathcal{D}$. 
The \textit{size} $\vert \mathcal{D} \vert$ of $\mathcal{D}$ is the number of the rule instances that $\mathcal{D}$ contains. 
We say that $\Gamma \vdash M: B$ is \textit{derivable} if a derivation $\mathcal{D}$ exists that concludes with the judgment $ \Gamma \vdash M:B$, and we also say that $\mathcal{D}$ is a derivation of $ \Gamma \vdash M:B$. In that case we write $\mathcal{D}\triangleleft \Gamma \vdash M: B$ saying that 
$M$ is an \textit{inhabitant} of $B$ or that $B$ is \textit{inhabited} by $M$ 
from $\Gamma$. 
%If $\Gamma= \emptyset$ (i.e. $FV(M)=\emptyset$) we simply say that $M$ is an inhabitant of $B$ (or that $B$ is inhabited).  
%\\
The cut-elimination steps for $\mathsf{IMLL}_2$ are standard and both
cut-elimination and confluence hold for it \cite{troelstra2000basic}. 
%\GLx{A \textit{cut-free} derivation $\mathcal{D}$, that we denote as 
%$ \mathcal{D}^{cf} $, does not contain occurrences of $cut$. If 
%$\mathcal{D}\triangleleft \Gamma \vdash M: A$ is cut-free, then $M$ is in 
%normal form. }{non serve più}

\textit{Propositional Intuitionistic Multiplicative Linear Logic} ($\mathsf{IMLL}$) is $\mathsf{IMLL}_2$ without $\forall$R and $\forall$L. 
From Hindley \cite{hindley1989bck}, we recall that $\mathsf{IMLL}$, 
thus $\mathsf{IMLL}_2$, gives a type to every linear $\lambda$-term. 
The converse holds as well, due to the above \textit{linearity constraint} 
on $\multimap$L, so the class of linear $\lambda$-terms is exactly the one 
of all typable $ \lambda $-terms in $\mathsf{IMLL}_2$. 
It follows that second-order does not allow to type more terms but it is nevertheless useful to assign uniform types to structurally related 
$ \lambda $-terms.

We conclude by recalling standard definitions of types in $\mathsf{IMLL}_2$:

\begin{defn}[Basic datatypes]
	\label{eqn:datatype unity and product}
The \emph{unity} type is $ \textbf{1} \triangleq \forall \alpha. (\alpha \multimap \alpha) $ with constructor $ I \triangleq \lambda x.x $, i.e.~the identity,
	and destructor $ \mathtt{let}\ M \mathtt{\ be \ }I \mathtt{\ in \ }N \triangleq MN $;

The \emph{tensor product} type
	$ A \otimes B \triangleq 
	\forall \alpha. (A \multimap B \multimap \alpha) \multimap \alpha $ 
	with constructor 
	$  \langle M, N \rangle \triangleq \lambda z. z\,M\,N $
	and destructor $ \mathtt{let}\ M \mathtt{\ be \ }x,y \mathtt{\ in \ }N
	\triangleq M(\lambda x. \lambda y. N) $.
	
Both binary tensor product and pair extend to their obvious $n$-ary versions
$A^n= \underbrace{A \otimes \ldots\otimes A}_{n}$ and $M^n \triangleq  \langle \underbrace{ M, \ldots, M}_{n}\rangle$.
\qed
\end{defn}
\begin{rem}
	Every occurrence of unity, ($ n $-ary) tensor and $ n $-tuple in the coming sections will be taken from Definition~\ref{eqn:datatype unity and product}.
\end{rem}
\noindent	
Finally, Definition~\ref{eqn:datatype unity and product} talks about \emph{datatypes} because, by introducing a specific syntax for constructors and destructors, we implicitly adopt a pattern matching mechanism
to operate on $ \lambda $-terms typed with those types.
\section{Duplication and erasure for the linear $\lambda$-calculus}
\label{sec: duplication and erasure in a linear setting}
As a motivational background we discuss  erasure and duplication in the linear $\lambda$-calculus both in an untyped and in a type-assignment setting.

\subsection{The untyped setting}
The linear $\lambda$-calculus forbids any form of \emph{direct} duplication 
of $ \lambda $-terms, by means of multiple occurrences of the same variable, 
or of erasure, by omitting occurrences of bound variables in a $ \lambda $-term. 
Nevertheless, erasure and duplication can be simulated. 
Concerning the former, a first approach has been developed by Klop \cite{klop1980combinatory}, and can be called   \enquote{erasure by garbage collection}.  It consists on   accumulating unwanted data during computation in place of erasing it. 
For example, $K'=\lambda xy.\langle x,y \rangle$ represents the classical 
$K= \lambda xy.x$, the second component of $\langle x,y \rangle$ 
being garbage. Another approach is by Mackie,  and can be called 
\enquote{erasure by data consumption} \cite{Mackie2018}. It involves a step-wise erasure 
process that proceeds by $\beta$-reduction, according to the following definition: 

\begin{defn}[Erasability]
\label{defn: erasure} 
A linear $\lambda$-term $M$ is \emph{erasable} 
if $\mathcal{C}[M]\rightarrow_\beta^* I$, for some  context $\mathcal{C}[]$ 
such that $ \mathcal{C}[M] $ is linear.
\qed 
\end{defn}
\begin{exmp}
The context $\mathcal{C}[]=(\lambda z.[])III$ erases 
$\lambda xy.zxy$ because, filling 
$ [] $ by $\lambda xy.zxy$, we obtain a closed linear $\lambda$-term that reduces to $I$.
\qed
\end{exmp}
\noindent
In~\cite{mairsonlinear}, Mackie proves that all closed linear $\lambda$-terms can be erased by means of very simple contexts.
\begin{lem}[\cite{mairsonlinear}]
\label{lem: linear terms are solvable} 
Let $ M $ be any closed linear $\lambda$-term. Then there exists $n \geq 0$ such that $M I\overset{n }{\ldots}I\rightarrow_\beta^* I$.
\end{lem}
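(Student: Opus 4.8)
The plan is to proceed by induction on the structure of the closed linear $\lambda$-term $M$, strengthening the statement so that the inductive hypothesis is usable. A bare induction on $M$ will not close, because sub-terms of $M$ need not be closed: abstractions bind variables, so when we peel off a $\lambda$ we are left with a term having one free variable. The natural strengthening is: \emph{for every linear $\lambda$-term $M$ (not necessarily closed), if $FV(M)=\{x_1,\dots,x_k\}$, then there are $n_0,n_1,\dots,n_k \geq 0$ such that}
\[
M\,x_1 I\overset{n_1}{\ldots}I\, x_2 I \overset{n_2}{\ldots} I \cdots x_k I \overset{n_k}{\ldots} I\, I \overset{n_0}{\ldots} I \;\rightarrow_\beta^* I ,
\]
or some equivalent bookkeeping device (one may instead phrase it as: feeding each free variable an appropriate "eraser" argument built from $I$'s, together with a trailing block of $I$'s, reduces $M$ to $I$). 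Since $M$ linear forces every bound variable to occur exactly once and every free variable to occur exactly once, the syntax is tightly constrained, and this is what makes the induction go through.

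First I would record the shape of a normal-form-free analysis: any linear $\lambda$-term is either a variable $x$, an abstraction $\lambda x.M'$ with $x\in FV(M')$ and $M'$ linear, or an application $M_1 M_2$ with $FV(M_1)\cap FV(M_2)=\emptyset$ and both $M_i$ linear. The variable case is immediate ($x$ with $k=1$: just $x:=I$, $n_0=0$). For the abstraction case $M=\lambda x.M'$: by IH applied to $M'$ (which has one more free variable, namely $x$), there is a sequence of $I$-arguments and trailing $I$'s driving $M'[\text{substitutions}]$ to $I$; since $(\lambda x.M')$ applied to the designated eraser for $x$ $\beta$-reduces to $M'$ with that substitution performed, we simply prepend that argument, and the count increases by one $\beta$-step. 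For the application case $M=M_1M_2$, one uses linearity: the free variables split disjointly between $M_1$ and $M_2$, and one has to thread the arguments so that $M_1$ consumes its portion while leaving something that can still absorb what $M_2$ produces. This is the delicate combinatorial point.

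The main obstacle is precisely the application case: after supplying $M_1$ with erasers for its free variables and enough $I$'s, the IH gives $M_1(\cdots)\rightarrow_\beta^* I$, but then $I M_2 \rightarrow_\beta M_2$, and we still must erase $M_2$ — which is fine by a second use of the IH, but one must check that the \emph{reduction order} is legitimate (reductions inside $M_1$ and inside $M_2$ do not interfere, which holds because their free variables are disjoint and $\beta$-reduction is confluent and contextually closed) and that the final tally of $I$'s is some finite $n$. A clean way to organize this is to prove the strengthened statement with an explicit "erasing substitution" $\sigma$ assigning to each free variable a term of the form $\lambda y_1\dots y_j.\,\langle\text{something}\rangle$ or more simply iterated-$I$ erasers, and to carry along the invariant that applying $\sigma$ and then a suitable block of $I$'s normalizes to $I$; confluence (cited as holding for the linear $\lambda$-calculus) guarantees we may always choose the reduction sequence that first works inside sub-terms and then contracts the head redex. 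Once the strengthened claim is established, the lemma follows by taking $M$ closed, so $k=0$ and the statement reads $M I \overset{n}{\ldots} I \rightarrow_\beta^* I$ for some $n\geq 0$, as required.
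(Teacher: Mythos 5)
The paper does not prove this lemma itself (it is imported from Mairson/Mackie), so your attempt can only be judged against the standard argument; unfortunately, as outlined it has a real gap at the application case. Your strengthened induction hypothesis gives you, for $M=M_1M_2$ with disjoint free variables, that $M_1\sigma_1\, I\overset{n_1}{\ldots}I\rightarrow_\beta^* I$ and $M_2\sigma_2\, I\overset{n_2}{\ldots}I\rightarrow_\beta^* I$. But in the term you must erase, namely $(M_1M_2)\sigma\, I\overset{n}{\ldots}I = (M_1\sigma_1)(M_2\sigma_2)\,I\overset{n}{\ldots}I$, the subterm $M_2\sigma_2$ sits in the \emph{first} argument position of $M_1\sigma_1$, whereas the inductive hypothesis only tells you what $M_1\sigma_1$ does when its first argument is $I$. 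Your sketch ("the IH gives $M_1(\cdots)\rightarrow_\beta^* I$, but then $IM_2\rightarrow_\beta M_2$, and we still must erase $M_2$") tacitly reorders the arguments so that $M_2$ arrives only after $M_1$ has been collapsed to $I$; that establishes $M_1\,I\overset{n_1}{\ldots}I\,M_2\,I\overset{n_2}{\ldots}I\rightarrow_\beta^* I$, which is a statement about a different term. Confluence does not repair this: the problem is not the order of reduction but that the hypothesis is about the wrong redex. (A symptom of the same issue is that $M_1\sigma_1$ need not reduce to $I$ at all, e.g.\ $\lambda u.\lambda v.vu$ is closed, linear and normal.) Also, the displayed strengthening $M\,x_1 I\overset{n_1}{\ldots}I\cdots$ duplicates each free variable and so is not even a linear term, though you flag it as provisional.

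The standard repair is to change the induction measure. First normalize: for linear terms every $\beta$-step strictly decreases the size, so $M$ has a ($\beta$-)normal form, and one argues by induction on the size of that normal form. A closed linear normal form is either $I$ or has the shape $\lambda x_1\ldots\lambda x_k.\,x_j P_1\cdots P_m$ with $k\geq 1$, $m\geq 1$, each $x_i$ occurring exactly once. Applying $k$ copies of $I$ yields $I\hat P_1\cdots\hat P_m\rightarrow_\beta \hat P_1\hat P_2\cdots\hat P_m$, a closed linear term whose normal form is strictly smaller (the head $\beta$-step on $I$ drops the size below that of the original normal form, and linear reduction can only shrink it further), so the induction closes and the total number of appended $I$'s is finite. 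If you want to keep a structural induction, you would need an induction hypothesis strong enough to say what $M_1$ does when applied to an \emph{arbitrary} closed linear argument followed by $I$'s, which is essentially the full statement again; the size-of-normal-form induction avoids that circularity.
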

\noindent
The above result is closely related to solvability (see~\cite{barendregt1984lambda}): 
``\textit{A $\lambda$-term $M$ in the standard $\lambda$-calculus is said \textit{solvable} if, for some $n$, there exist $\lambda$-terms 
$N_1, \ldots, N_n$ such that $M N_1 \ldots N_n =_\beta I$.}''
Lemma~\ref{lem: linear terms are solvable} states that every closed linear $\lambda$-term is solvable by linear contexts. 
\par 
In fact, the notion of erasability can be addressed in a more general setting.
\begin{defn}[Erasable sets]\label{defn: erasable sets} Let $X$ be a set of  linear $\lambda$-terms. We say that $X$ is an \textit{erasable set} if a linear $\lambda$-term $\mathtt{E}_X$ exists such that $\mathtt{E}_X \, M \rightarrow^*_{\beta \eta } I$, for all $M \in X$. We call $\mathtt{E}_X$ \textit{eraser} of $X$.
\end{defn}
\begin{prop} \label{prop: erasable if and only if closed} A finite set  $X$ of  linear $\lambda$-terms is  erasable  if and only if all terms in $X$ are closed. 
\end{prop}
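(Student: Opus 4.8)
The plan is to prove the two directions separately, the "only if" being the routine one and the "if" being where the real work lies. For the \emph{only if} direction, suppose $X$ is erasable with eraser $\mathtt{E}_X$, and suppose toward a contradiction that some $M\in X$ has a free variable $x$. Since $\mathtt{E}_X$ is itself a linear $\lambda$-term, $\mathtt{E}_X M$ is a linear $\lambda$-term having $x$ among its free variables; by the Stability fact (Fact~\ref{fact:Stability}), every term in the reduction sequence $\mathtt{E}_X M \rightarrow^*_{\beta\eta} I$ keeps exactly the same free variables, so $I$ would have $x$ free — but $I=\lambda x.x$ is closed, a contradiction. Hence every $M\in X$ is closed.

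For the \emph{if} direction, assume $X=\{M_1,\dots,M_k\}$ is a finite set of \emph{closed} linear $\lambda$-terms; I must build a single linear eraser $\mathtt{E}_X$ that sends each $M_i$ to $I$ modulo $\beta\eta$. The first step is to invoke Lemma~\ref{lem: linear terms are solvable}: for each $i$ there is an $n_i\geq 0$ with $M_i\,\underbrace{I\cdots I}_{n_i}\rightarrow_\beta^* I$. Set $n=\max_i n_i$. The idea is then to feed each $M_i$ exactly $n$ copies of $I$; the surplus $n-n_i$ applications are harmless because, after $M_i\,I^{n_i}$ has reduced to $I$, the remaining arguments give $I\,I\cdots I\rightarrow_\beta^* I$. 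So the naive candidate is $\mathtt{E}_X \triangleq \lambda m.\,m\,\underbrace{I\cdots I}_{n}$, i.e.\ $\lambda m.\,mI\overset{n}{\cdots}I$, which is linear (the bound variable $m$ occurs exactly once and each $I$ is linear and closed), and satisfies $\mathtt{E}_X M_i \rightarrow_\beta^* I$ for every $i$. Note this actually yields something slightly stronger than asked — a \emph{uniform} eraser that works with plain $\beta$-reduction and does not even need finiteness of $X$ beyond the boundedness of the $n_i$; but finiteness guarantees $n=\max_i n_i$ is well defined, which is exactly the role the hypothesis plays.

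The main obstacle — and the reason the statement is not entirely trivial — is justifying the "surplus arguments are harmless" step rigorously, i.e.\ that applying a closed linear term that already reduces to $I$ to further copies of $I$ still reduces to $I$. Concretely one needs: if $N$ is linear and $N\rightarrow_\beta^* I$, then $N\,I \rightarrow_\beta^* I$ (and iterate). This follows because $\beta$-reduction is contextually closed, so $N\,I \rightarrow_\beta^* I\,I = (\lambda x.x)I \rightarrow_\beta I$, using Stability (Fact~\ref{fact:Stability}) to know all intermediate terms stay linear and the redex is genuine. A secondary, purely bookkeeping point is checking linearity of $\mathtt{E}_X$ against Definition~\ref{defn:Linear lambda-terms}: the only bound variable $m$ occurs once in the body $m I\cdots I$, the body is linear because application of linear terms with disjoint (here, empty) free-variable sets is linear, and each displayed $I$ is linear. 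With these two observations in place the construction closes the argument, and one remarks that $\mathtt{E}_X$ in fact works up to $\rightarrow_\beta^*$, a fortiori up to $\rightarrow^*_{\beta\eta}$ as required by Definition~\ref{defn: erasable sets}.
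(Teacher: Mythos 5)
Your proof is correct and follows essentially the same route as the paper: the only-if direction uses Fact~\ref{fact:Stability} (preservation of free variables under reduction) exactly as in the paper, and the if direction builds the same uniform eraser $\lambda x.\,xI\overset{n}{\cdots}I$ with $n$ the maximum of the arities supplied by Lemma~\ref{lem: linear terms are solvable}. The only difference is that you spell out why the surplus copies of $I$ are harmless, a step the paper leaves implicit.
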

\begin{proof}
Let $X$ be a finite set of linear $\lambda$-terms. To prove the left-to-right direction,  suppose  $X$ is erasable.   By definition, there exists a linear $\lambda$-term $\mathtt{E}_X$ such that $\mathtt{E}_X \, M \rightarrow_{\beta\eta}^* I$, for all $M \in X$. Since $ I $ is closed, by Fact~\ref{fact:Stability} each $ M\in X $ must be closed too. Let us now suppose that all terms in $X$ are closed, and let $M_1, \ldots, M_n$ be such terms.  By Lemma~\ref{lem: linear terms are solvable}, for every $i \leq n$ there exists a $k_i\geq 0$ such that $M_i I\overset{k_i }{\ldots}I\rightarrow_\beta^* I$.  It suffices to  set $\mathtt{E}_{X}\triangleq \lambda x. x I\overset{k }{\ldots}I $, where $k= \max_{i=1}^n k_i$. 
\end{proof}
\noindent
Recall from Definition~\ref{eqn:datatype unity and product} that  $\langle M, N \rangle\triangleq \lambda z.z MN$. In the same spirit of Definition~\ref{defn: erasable sets}, we now investigate duplicability in the linear $\lambda$-calculus. 
\begin{defn}[Duplicable sets]\label{defn: duplicable sets} Let $X$ be a  set of   linear $\lambda$-terms. We say that $X$ is a \textit{duplicable set} if a linear $\lambda$-term $\mathtt{D}_X$ exists such that $\mathtt{D}_X \, M \rightarrow^*_{\beta \eta} \langle M,  M \rangle$ and $FV(\mathtt{D}_X)\cap FV(M) =\emptyset$, for all $M \in X$. We call $\mathtt{D}_X$ \textit{duplicator} of $X$.
\end{defn}
\begin{prop}\label{prop: duplicable implies closed} If a finite set  $X$ of  linear $\lambda$-terms is  duplicable  then all terms in $X$ are closed. 
\end{prop}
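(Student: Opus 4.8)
The plan is to mimic the structure of the proof of the left-to-right direction of Proposition~\ref{prop: erasable if and only if closed}: assume $X=\{M_1,\dots,M_n\}$ is duplicable, so that a single linear $\lambda$-term $\mathtt{D}_X$ exists with $\mathtt{D}_X\,M_i\rightarrow^*_{\beta\eta}\langle M_i,M_i\rangle$ and $FV(\mathtt{D}_X)\cap FV(M_i)=\emptyset$ for each $i$, and then show each $M_i$ must be closed. First I would fix an arbitrary $M\in X$ and reduce to analysing the single term $\mathtt{D}_X\,M$. Because $\mathtt{D}_X$ is linear and $FV(\mathtt{D}_X)\cap FV(M)=\emptyset$, the application $\mathtt{D}_X\,M$ fails to be linear only in that a variable of $M$ may occur twice; but in fact $\mathtt{D}_X\,M$ is a legitimate $\lambda$-term and we can track free variables through reduction using Fact~\ref{fact:Stability}-style reasoning once we pass to a linear term.

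The key observation is the free-variable bookkeeping. Expanding the target, $\langle M,M\rangle = \lambda z.\,z\,M\,M$, so $FV(\langle M,M\rangle)=FV(M)$. For $\beta$-reduction in the ordinary (not necessarily linear) $\lambda$-calculus we always have $FV(N')\subseteq FV(N)$ when $N\rightarrow_\beta N'$, and likewise for $\eta$; hence $FV(\langle M,M\rangle)\subseteq FV(\mathtt{D}_X\,M)=FV(\mathtt{D}_X)\cup FV(M)$. Combined with $FV(\mathtt{D}_X)\cap FV(M)=\emptyset$ this only yields $FV(M)\subseteq FV(\mathtt{D}_X)\cup FV(M)$, which is vacuous, so a bare inclusion argument is not enough — the point is that a free variable of $M$ occurs \emph{twice} in $\langle M,M\rangle$ but only \emph{once} in $\mathtt{D}_X\,M$ (once in the $M$ argument; it cannot appear in $\mathtt{D}_X$). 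So the real argument is a counting/occurrence argument: suppose $x\in FV(M)$. Then $x$ occurs exactly once in $\mathtt{D}_X\,M$, whereas it occurs at least twice in $\langle M,M\rangle=\lambda z.zMM$. Since plain $\beta$- and $\eta$-reduction can never \emph{increase} the number of free occurrences of a variable that is free throughout the reduction (substitution $P[N/y]$ in a $\beta$-step duplicates only occurrences of the \emph{bound} variable $y$, never of a free variable distinct from $y$; $\eta$ only deletes), we get a contradiction: the number of free occurrences of $x$ in $\mathtt{D}_X\,M$ is $1$, in the reduct it is $\geq 2$, impossible. Hence $FV(M)=\emptyset$, i.e.\ $M$ is closed.

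The step I expect to need the most care is making precise the claim that ``the number of free occurrences of a fixed variable $x$ does not increase along $\rightarrow_{\beta\eta}$, provided $x$ stays free.'' This is a routine induction on the reduction, with the critical case being a $\beta$-redex $(\lambda y.P)Q\rightarrow_\beta P[Q/y]$: occurrences of $x$ in the reduct are those already in $P$ (distinct from $y$ since $x\neq y$ as $x$ is free) plus $(\#\text{occurrences of }y\text{ in }P)\times(\#\text{occurrences of }x\text{ in }Q)$; summing over the redex and its context and comparing with the source one sees the count is non-increasing (indeed it can only drop, when $y$ does not occur in $P$, or stay the same). One must also handle $\alpha$-renaming and the contextual closure, but these are standard. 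An alternative, slicker route avoiding occurrence-counting altogether: observe that if $x\in FV(M)$ then, since $\mathtt{D}_X$ is linear and $x\notin FV(\mathtt{D}_X)$, one could substitute two \emph{distinct} fresh closed values for the ``two copies'' — but this requires first knowing $\mathtt{D}_X\,M$ reduces to something of the shape $\lambda z.z\,M\,M$ with genuinely separated copies, which is essentially what we are proving; so I would stick with the direct occurrence-counting argument, which is self-contained and short.
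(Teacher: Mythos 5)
Your overall strategy is the right one and is, at its core, the paper's: the contradiction is between a free variable $x$ of $M$ occurring exactly once in $\mathtt{D}_X\,M$ and at least twice in $\langle M,M\rangle=\lambda z.zMM$. The gap is in the supporting lemma you lean on, namely that in the \emph{ordinary} (not necessarily linear) $\lambda$-calculus the number of free occurrences of a fixed variable never increases along $\rightarrow_{\beta\eta}$. That claim is false, and your own bookkeeping formula shows why: for $(\lambda y.P)Q\rightarrow_\beta P[Q/y]$ the count of $x$ goes from $\#_x(P)+\#_x(Q)$ to $\#_x(P)+\#_y(P)\cdot\#_x(Q)$, which \emph{increases} whenever $y$ occurs more than once in $P$ and $x$ occurs in $Q$; the standard counterexample is $(\lambda y.\,yy)\,x\rightarrow_\beta xx$, where the single free occurrence of $x$ becomes two. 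So ``the count can only drop or stay the same'' does not hold in the generality in which you state and use it.

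The repair is immediate, and it collapses your argument onto the paper's one-line proof. First note that $\mathtt{D}_X\,M$ \emph{is} linear (you hedge on this, but it follows directly: each free variable of $\mathtt{D}_X$ and of $M$ occurs exactly once in the respective term, the two free-variable sets are disjoint by hypothesis, and every proper abstraction inside either term is already linear). By Fact~\ref{fact:Stability}, every $\beta\eta$-reduct of a linear term is linear, so along the entire reduction every abstraction $\lambda y.P$ satisfies $\#_y(P)=1$ and your formula gives \emph{exact} preservation of occurrence counts — or, more directly, $\langle M,M\rangle$ must itself be linear, which is impossible if some $x$ is free in $M$, since $x$ then occurs twice in $\lambda z.zMM$. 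You do gesture at exactly this route early on (``Fact~\ref{fact:Stability}-style reasoning once we pass to a linear term'') but then abandon it for the general-calculus counting claim, which is the one step of your proposal that does not survive scrutiny.
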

\begin{proof}
Let $X$ be a finite set of  linear $\lambda$-terms, and suppose  $X$ is  duplicable.  By definition,  there exists a linear $\lambda$-term $\mathtt{D}_X$ such that $\mathtt{D}_X \, M \rightarrow_{\beta\eta}^* \langle M,  M  \rangle$, for all $M \in X$. Since both $M$ and  $\mathtt{D}_{X}$ are linear $\lambda$-terms, and $FV(\mathtt{D}_X)\cap FV(M) =\emptyset$, we have that $\mathtt{D}_X\, M$ is linear, for all  $M \in X$. If there were a variable occurring free in a term $M\in X$, then it would occur twice in $\langle M, M \rangle$, contradicting  Fact~\ref{fact:Stability}.
\end{proof}
\noindent
We conjecture that the converse holds as well, as long as we restrict to sets of distincts $\beta\eta$-normal forms. Indeed, duplication in a linear setting ultimately relies on the following linear version of the  general separation theorem for the standard $\lambda$-calculus proved by Coppo et al.~\cite{coppo1978semi}:
\begin{conj}[General separation] \label{conj: linear separation} Let $X=\lbrace M_1, \ldots, M_n \rbrace$ be a set of distinct closed linear $\lambda$-terms in  $\beta\eta$-normal form. Then, for all  $N_1, \ldots, N_n$  closed linear $\lambda$-terms, there exists a closed linear $\lambda$-term $F$ such that $F\, M_i =_{\beta \eta}N_i$, $\forall i \leq n$.
\end{conj}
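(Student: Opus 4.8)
The plan is to adapt Coppo--Dezani--Ronchi's semi-separability argument to the linear fragment, exploiting that each $M_i$ is already in $\beta\eta$-normal form and closed, so it has a well-understood head structure. First I would observe that a closed linear $\beta$-normal $\lambda$-term has the shape $\lambda x_1.\cdots\lambda x_k.\, x_j\, P_1\cdots P_m$ where, by linearity, the head variable $x_j$ is one of the binders and each $P_\ell$ is again closed-linear-normal relative to the remaining binders, with the bound variables partitioned (not shared) among the $P_\ell$'s. This gives a notion of Böhm-tree-like skeleton for linear normal forms, and because everything is linear the skeleton is finite and the branching is controlled. The key structural fact to extract is a \emph{linear Böhm-out} lemma: given distinct such $M_1,\dots,M_n$, one can find a finite sequence of ``probing'' arguments (built from projections $\lambda x_1\ldots x_p.x_q$ suitably padded with $I$'s to respect linearity, together with the $n$-ary tuple constructors and destructors of Definition~\ref{eqn:datatype unity and product}) that, when applied to $M_i$, reduces to a canonical tag $\mathbf{e}_i$ distinguishing $i$ from all $j\ne i$.

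Second, I would turn the discrimination into selection. Once there is a context $\mathcal{C}_{\mathrm{probe}}[\,]$ with $\mathcal{C}_{\mathrm{probe}}[M_i]\to_{\beta\eta}^\ast \mathbf{e}_i$ and the $\mathbf{e}_i$ are, say, the linear Church-style projections on an $n$-tuple, I would set $F \triangleq \lambda m.\, \mathcal{C}_{\mathrm{probe}}[m]\,\langle N_1,\dots,N_n\rangle$ (with the destructor of the $n$-ary tensor), so that $F\,M_i \to_{\beta\eta}^\ast N_i$. Linearity of $F$ is where care is needed: $m$ occurs once, the $N_i$ are closed so they contribute no free variables, and the probing arguments must themselves be linear — this is exactly why the padding-with-$I$ trick of Lemma~\ref{lem: linear terms are solvable} and Proposition~\ref{prop: erasable if and only if closed} is needed, to consume the ``wrong'' components $N_j$ ($j\ne i$) rather than discard them, and to absorb any arity mismatch between the skeletons of different $M_i$. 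Since each $N_j$ is closed linear, it is erasable, so there is a uniform number of $I$-arguments that flushes all unused components to $I$; the surviving component $N_i$ is then returned.

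The proof would proceed by induction on the total size of the forest $M_1,\dots,M_n$: if the head binders/head-variable positions already differ for some pair, one probe separates that pair and we recurse on the refined partition of the index set; if all $M_i$ share head $\lambda\vec{x}.x_j(\cdots)$, then they differ in some argument subtree $P_\ell^{(i)}$, and we recurse there after supplying the head with appropriate linear arguments (projections that route into the $\ell$-th argument slot, padding the others). Composing the finitely many probes — using $\circ$ and tupling to carry partial results — yields the global $\mathcal{C}_{\mathrm{probe}}$.

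The main obstacle I expect is the linearity bookkeeping in the inductive step: in the ordinary (non-linear) Böhm-out construction one freely copies and discards subterms, whereas here every argument fed to $M_i$ must be used exactly once and every binder of $M_i$ must genuinely occur, so the ``spectator'' subterms $P_{\ell'}^{(i)}$ ($\ell'\ne\ell$) that we are not currently probing cannot simply be dropped — they must be collapsed to $I$ by closed erasers, which forces one to first argue those spectator subterms are themselves closed (after the relevant binders have been instantiated) so that Lemma~\ref{lem: linear terms are solvable} applies. Handling the case where different $M_i$ have skeletons of different arity/depth, so that a single uniform probe sequence must work for all of them simultaneously, is the delicate point; it is essentially the reason the statement is left as a conjecture rather than proved here, and a fully rigorous treatment would require a linear analogue of the approximation theorem to guarantee that the chosen finite probing depth suffices.
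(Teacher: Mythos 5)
The statement you are addressing is stated in the paper as Conjecture~\ref{conj: linear separation} and is explicitly left open (\enquote{This topic is left to future work}, cf.\ Section~\ref{section:Conclusions}); the paper offers no proof, so there is no proof of the authors' to compare yours against — only the open problem itself. Your proposal is a sensible plan, namely a linear adaptation of the B\"ohm-out technique, but it is not a proof, and you concede as much in your closing paragraph. The load-bearing step, the \enquote{linear B\"ohm-out} lemma, is only asserted, and at least two of its difficulties are genuinely unresolved. First, the classical B\"ohm-out freely uses non-linear selectors that discard spectator arguments; in the linear setting the spectators must be \emph{consumed}, and to apply Lemma~\ref{lem: linear terms are solvable} (via Proposition~\ref{prop: erasable if and only if closed}) you must first show they are closed after instantiating the head binders. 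But the binders of $M_i$ are partitioned among the argument subtrees $P^{(i)}_1,\dots,P^{(i)}_m$, so routing into the $\ell$-th slot leaves the other slots holding open subterms; closing them requires substituting closed terms for their free variables, and distinctness of the $M_i$ does not obviously survive such substitutions — two distinct $M_i$, $M_j$ can collapse to the same residual after instantiation, which breaks your induction on the forest. Second, a single uniform probe sequence must act simultaneously on terms whose head arities and skeleton depths differ, and in a linear calculus over- or under-application cannot be absorbed by dummy discardable arguments; your appeal to a \enquote{linear analogue of the approximation theorem} is an acknowledgment that this is missing, not a resolution of it.

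There is also a structural worry you do not raise: the paper introduces this conjecture precisely in order to obtain duplicators (taking $N_i \triangleq \langle M_i, M_i\rangle$), whereas the classical separation argument leans on copying and discarding subterms during the B\"ohm-out; a linear proof must be shown not to smuggle in the very duplication it is meant to justify. In short, your proposal correctly identifies the shape a proof would likely take and correctly identifies where it gets stuck, but it does not close the gap: the statement remains, as in the paper, a conjecture.
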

\noindent
Now, let $X= \lbrace M_1, \ldots, M_n \rbrace$ be a finite set of distinct closed  linear $\lambda$-terms in $\beta\eta$-normal form.  If Conjecture~\ref{conj: linear separation} were true,  by fixing  $N_i\triangleq \langle M_i, M_i \rangle$ for all $i \leq  n$, there would exists a closed  linear $\lambda$-term  $\mathtt{D}_X$ such that $\mathtt{D}_X \, M_i \rightarrow_{\beta \eta}^* \langle M_i, M_i\rangle$. 
So, we could connect linear erasure and duplication to standard  $\lambda$-calculus notions:
\begin{equation*}
\begin{split}
\textit{solvability} & \textrm{ implies } \textit{linear erasability}\\
\textit{separation} & \textrm{ implies } \textit{linear duplication} 
\enspace .
\end{split}
\end{equation*}
This topic is left to future work (see Section~\ref{section:Conclusions}).

\subsection{The typed setting}
Erasure and duplication are less direct and liberal in 
$ \textsf{IMLL}_2 $ which assigns types to linear $ \lambda $-terms. 
Specifically, it is possible to erase or duplicate all values (Definition~\ref{defn:Values among linear lambda-terms}) of what we call  \enquote{ground type}, i.e.~Mairson\&Terui's  notion of closed $\Pi_1$-type~\cite{mairson2003computational},  whose formal definition will be recalled shortly. A typical example of  ground type in $\mathsf{IMLL}_2$  is the one representing  booleans.
% Specifically, it is possible to erase or duplicate closed and normal inhabitants of a class of  $ \Pi_1 $-types, whose formal definition we shall recall \GLx{in a few}{shortly?}. \GLx{Booleans are an example of $ \Pi_1 $-types.}{Let us start with a typical example }
The standard second-order intuitionistic formulation of booleans (i.e.~$\forall \alpha .\alpha \multimap \alpha \multimap \alpha$)   is 
meaningless for $\mathsf{IMLL}_2$ due to the lack of free weakening. 
Mairson\&Terui \citep{mairson2003computational} define them as:
\begin{align}
&\mathbf{B}\triangleq\forall \alpha. \alpha \multimap \alpha \multimap \alpha \otimes \alpha 
\label{eqn: boolean data type}
&&\mathtt{tt}\triangleq	\lambda x. \lambda y. \langle x,y \rangle 
%\label{eqn: true}
&&\mathtt{ff}\triangleq \lambda x. \lambda y. \langle y,x \rangle 
%\label{eqn: false} \enspace .
\end{align} 
where the values \enquote{truth} \texttt{tt}
and the \enquote{falsity} \texttt{ff} implement the
\enquote{erasure by garbage collection}: the first element of the pair is the \enquote{real} output, while the second one is garbage.
Starting from~(\ref{eqn: boolean data type}), Mairson shows in~\cite{mairsonlinear} that $\mathsf{IMLL}$ is expressive enough to encode boolean functions. Mairson and Terui  reformulate that encoding 
in $\mathsf{IMLL}_2$ in order to prove 
results about the complexity of cut-elimination 
\cite{mairson2003computational}. The advantage of $\mathsf{IMLL}_2$ is to assign uniform types to the $\lambda$-terms representing boolean functions. An \emph{eraser} $\mathtt{E}_{\mathbf{B}}$ and a \emph{duplicator} $\mathtt{D}_{\mathbf{B}}$ are the keys to obtain the encoding:
\begin{align}
\label{eqn: erasure booleans}
\mathtt{E}_{\mathbf{B}}& \triangleq
\lambda z. \mathtt{let\ }zII \mathtt{\ be \ } x,y \mathtt{ \ in \ }(\mathtt{let \ }y \mathtt{ \ be \ } I \mathtt{ \ in \ }x) :\mathbf{B}\multimap \mathbf{1}
\\
\label{eqn: duplication booleans}
\mathtt{D}_{\mathbf{B}}& \triangleq
\lambda z. \pi_1(z\langle \mathtt{tt}, \mathtt{tt}\rangle\langle \mathtt{ff}, \mathtt{ff}\rangle) : \mathbf{B}\multimap \mathbf{B}\otimes \mathbf{B}
\\
\label{eqn: boolean projection}
\pi_1 &\triangleq
\lambda z. \mathtt{let\ }z \mathtt{\ be \ }x,y \mathtt{\ in\ }(\mathtt{let \ }\mathtt{E}_{\mathbf{B}} \, y \mathtt{\ be \ }I \mathtt{\ in \ }x) :
(\mathbf{B}\otimes \mathbf{B}) \multimap\mathbf{B}
\end{align}
with $ \mathbf{B} $ as in \eqref{eqn: boolean data type} and $\pi_1$ the linear $\lambda$-term projecting the first element of a pair.\\
 Switching to type-assignment setting we get uniform copying and erasing mechanisms of the whole \textit{class} of values of a given ground type. Note that the type-theoretical constraints let the erasure of a typed linear $\lambda$-term make use of something more than mere stacks of identities as in Lemma~\ref{lem: linear terms are solvable}. Also, note that \textit{both} the possible results $\langle \mathtt{tt}, \mathtt{tt} \rangle$ 
and $\langle \mathtt{ff}, \mathtt{ff} \rangle$ of duplications are built-in 
components of  $ \texttt{D}_{\textbf{B}} $.
In accordance with the given input, $\mathtt{D}_{\mathbf{B}}$  selects 
the right pair representing the result by \textit{erasing} the 
unwanted one. Such a \enquote{linear} form of \emph{duplication by selection and erasure} is a step-by-step elimination of useless data until the desired result shows up. 
\par
The analysis of~(\ref{eqn: erasure booleans}) and~(\ref{eqn: duplication booleans}) leads to the following formal notions: 
\begin{defn} [Duplicable and erasable types in $\mathsf{IMLL}_2$]  
\label{defn: duplicable and erasable types} 
Let $A$  be a type in $\mathsf{IMLL}_2$. It is a \textit{duplicable type}  
if a linear $\lambda$-term $\mathtt{D}_A: A \multimap A \otimes A$ exists 
such that $\mathtt{D}_{A} \,V \rightarrow_{\beta \eta}^* 
\langle V, V \rangle$, for every value $V$ of $A$.

Moreover, $A$ is an \textit{erasable type} if a linear $\lambda$-term $\mathtt{E}_A:A \multimap \mathbf{1}$ exists such that
$\mathtt{E}_{A} \, V \rightarrow_{\beta \eta}^* I$,
for every value  $V$ of $A$.

We call $\mathtt{D}_A$ \emph{duplicator} of $A$ and $\mathtt{E}_{A}$ its \emph{eraser}. 
\qed
\end{defn}
\noindent
Duplicators and erasers in Definition~\ref{defn: duplicable and erasable types} apply to values of a given type, i.e.~\textit{closed} and normal inhabitants. This is not a loss of generality because Proposition~\ref{prop: erasable if and only if closed} and Proposition~\ref{prop: duplicable implies closed} say that  only closed terms can be duplicated or erased linearly.

%\vspace{.25\baselineskip}
%\Lux{Mairson\&Terui extend duplication and erasure of $\mathbf{B}$ to a rich class of types:}{}
\begin{defn}[$\Pi_1$, $\Sigma_1$-types \cite{mairson2003computational} and Ground types]
\label{defn:Pi and Sigma types} 
The following mutually defined grammars generate $\Pi_1$ and  $\Sigma_1$-types:
\begin{align*}
\Pi_1&:= \alpha  \ \vert \ \Sigma_1      
\multimap \Pi_1 \ \vert \ \forall \alpha. \Pi_1  \\
%\label{eqn: grammar Pi1}
\Sigma_1&:= \alpha \ \vert \ \Pi_1 
\multimap \Sigma_1 
\end{align*} 
We call \emph{ground types} the \emph{closed} $\Pi_1$-types.
\qed
\end{defn}
\noindent
We note that the universal quantifier $\forall$ occurs only positively 
in a $\Pi_1$-type, hence in ground types.
%\GLx{Intuitively, $\Pi_1$-types represent finite datatypes, so that they can be  viewed as
% ''ground'', and their normal inhabitants stand for values. Moreover, because of 
%Fact~\ref{rem: every normal term has Pi1 type},  every closed and normal linear 
%$\lambda$-term has a closed $\Pi_1$-type, and hence it is value of some datatype. 
%These observations suggest the following formal notions:}{L'osservazione che ho cancellato l'ho spostata dalla sezione successiva, in cui era fuori luogo.}
\par
The booleans $\mathbf{B}$ in~\eqref{eqn: boolean data type}, 
the unit $\mathbf{1}$ and the tensor $A\otimes B$ as in Definition~\ref{eqn:datatype unity and product} are ground types, if $A$ and $B$ are.
In fact, following \cite{mairson2003computational}, tensors and units can 
occur \emph{also} to the left-hand side of a linear implication 
\enquote{$ \multimap $}, even in negative positions. The reason is that 
we can ignore them in practice, thanks to the isomorphisms:
\begin{align*}
((A \otimes B) \multimap C) \multimapboth (A \multimap B \multimap C) 
&&
(\mathbf{1}\multimap C) \multimapboth C
\enspace .
\end{align*} 
Ground types represent \emph{finite} data types, while the values with a ground type represent their data.
\begin{fact}\label{rem: every normal term has Pi1 type} 
Every closed linear $\lambda$-term $ M $ has a ground type.
\qed
\end{fact}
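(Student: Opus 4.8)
The plan is to prove Fact~\ref{rem: every normal term has Pi1 type} by combining the known typability result for linear $\lambda$-terms with a syntactic analysis of the shape of types one can assign to \emph{closed} terms, and then an induction that massages such a type into a $\Pi_1$-type. First I would recall, from Hindley's result cited in the excerpt, that every closed linear $\lambda$-term $M$ is typable in $\mathsf{IMLL}$, hence in $\mathsf{IMLL}_2$; since $M$ is closed, this type can be taken closed as well (quantify over all free type variables of the derived type, using $\forall$R repeatedly, which is legal precisely because the context is empty). So we get $\vdash M : A$ for some closed type $A$. It now suffices to show that $A$ — or rather \emph{some} type inhabited by $M$ — can be chosen to be a $\Pi_1$-type.

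The core of the argument is a structural induction on the $\beta$-normal form of $M$ (which exists and is closed and linear, by Fact~\ref{fact:Stability} and normalization of $\mathsf{IMLL}_2$). A closed linear $\beta$-normal term is either a $\lambda$-abstraction $\lambda x.M'$ or — if it is not an abstraction — a term of the form $x\,N_1\cdots N_k$; but the latter is impossible when the term is closed, since the head variable $x$ would be free. Hence a closed linear $\beta$-normal term is necessarily an abstraction $\lambda x. M'$. Now $M'$ has exactly one free variable, $x$, so we are led to generalize the statement: I would prove, by induction on normal forms, that every linear $\beta$-normal term $M'$ with $FV(M') \subseteq \{x_1,\dots,x_n\}$ admits a typing $x_1 : \Sigma_1^{(1)}, \dots, x_n : \Sigma_1^{(n)} \vdash M' : \Pi_1$ where the $\Sigma_1^{(i)}$ are (closed) $\Sigma_1$-types and the conclusion is a (closed) $\Pi_1$-type. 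The abstraction case $\lambda x. M'$ then builds $\Sigma_1^{(x)} \multimap \Pi_1$, which is again a $\Pi_1$-type by the grammar in Definition~\ref{defn:Pi and Sigma types}; the head-variable case $x_j\,N_1\cdots N_k$ uses the inductive hypotheses on the $N_i$'s (each closed, hence each of ground type $\Pi_1$, which is in particular a $\Sigma_1$-type since $\alpha$ and arrows into $\Sigma_1$ generate it — more precisely one checks every closed $\Pi_1$-type is also usable on the left via the isomorphisms and the inclusion $\Pi_1 \subseteq \Sigma_1$ up to the needed closures) together with $\forall$L-instantiations to give $x_j$ a type of the form $\Sigma_1^{(1)} \multimap \cdots \multimap \Sigma_1^{(k)} \multimap \Pi_1$, again a $\Sigma_1$-type, matching the assumption slot.

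The main obstacle, and the step deserving the most care, is the mutual interplay between the two grammars: one must verify that whenever the induction produces a type for a subterm, that type lands on the correct side of the $\Pi_1/\Sigma_1$ distinction to be plugged into the next constructor, and that the $\forall$L instantiations needed to line up the types of applied subterms never force a non-$\Pi_1$ shape. Concretely, the delicate point is that in $x_j N_1 \cdots N_k$ the variable $x_j$ must be \emph{assumed} with a type whose argument positions are $\Sigma_1$; since each $N_i$ is closed it has ground (i.e. closed $\Pi_1$) type by the inductive hypothesis, and a closed $\Pi_1$-type is in particular a $\Sigma_1$-type (the grammar allows $\Pi_1 \multimap \Sigma_1$, and also $\alpha$; one shows by a short sub-induction that every $\Pi_1$-type is generated by the $\Sigma_1$ grammar too, because $\Pi_1 \subseteq \Sigma_1$ fails literally at the $\forall$ clause but holds after noting we may always re-instantiate quantifiers — this is exactly the kind of routine but not entirely trivial bookkeeping that the paper elsewhere attributes to ``a fully detailed, and not at all obvious, technical reworking''). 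I would isolate that containment as a small preliminary lemma ($\Pi_1$-types are $\Sigma_1$-types, modulo closure/instantiation), after which the main induction goes through smoothly and yields a ground type for $M$ itself since $M = \lambda x.M'$ gives $\vdash M : \Sigma_1 \multimap \Pi_1$, a closed $\Pi_1$-type.
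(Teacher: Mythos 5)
Your proposal replaces a two-line observation with a full induction on normal forms, and in doing so it introduces bookkeeping that does not actually work. The paper's proof is: by Hindley's result $M$ has a type $A$ in propositional $\mathsf{IMLL}$; such a type is \emph{quantifier-free}, and every quantifier-free type is already generated by the $\Pi_1$ grammar (and by the $\Sigma_1$ grammar), since both grammars agree on $\alpha$ and $\multimap$; finally, since $M$ is closed one may apply $\forall$R repeatedly to obtain $\overline{A}=\forall\alpha_1.\cdots.\forall\alpha_n.A$, and the $\Pi_1$ grammar explicitly allows outermost $\forall\alpha.\Pi_1$, so $\overline{A}$ is a ground type. The whole point is that quantifiers are introduced only at the very last step, outermost and positively, which is exactly where the $\Pi_1$ grammar tolerates them. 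Your induction obscures this by closing types of subterms too early and then trying to undo it with $\forall$L.

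Concretely, three steps in your argument fail. First, in the head-variable case $x_j\,N_1\cdots N_k$ the arguments $N_i$ are in general \emph{not} closed (in $\lambda x.\lambda y.\,xy$ the argument $y$ is a free variable), so you cannot invoke "each $N_i$ is closed, hence of ground type"; you must use the generalized inductive hypothesis with an open context, at which point the appeal to ground types evaporates. Second, you require the context to consist of \emph{closed} $\Sigma_1$-types, but no closed $\Sigma_1$-type exists: unfolding $\Sigma_1 := \alpha \mid \Pi_1\multimap\Sigma_1$ shows every $\Sigma_1$-type ends in a free type variable and the grammar has no $\forall$ clause to bind it. Third, the preliminary lemma you propose ("every closed $\Pi_1$-type is a $\Sigma_1$-type modulo closure/instantiation") is both false as a literal containment (as you note) and unnecessary: if you stay quantifier-free throughout the induction, as Hindley's theorem already lets you do, the $\Pi_1$/$\Sigma_1$ distinction is vacuous until the final closure. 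I would discard the induction entirely and use the quantifier-freeness of $\mathsf{IMLL}$ types directly.
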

\begin{proof}
Every closed  linear $\lambda$-term $M$  is typable  in $\mathsf{IMLL}$ (see~\cite{hindley1989bck}). Types in $\mathsf{IMLL}$ are  quantifier-free instances
of  $\Pi_1$-types. Hence, $M$ has also a $\Pi_1$-type $A$ in $\mathsf{IMLL}_2$. 
Let $FV(A)= \lbrace\alpha_1, \ldots, \alpha_n\rbrace $. Since $M$ inhabits $A$, it also inhabits 
$\overline{A} =\forall \alpha_1.\cdots.\forall \alpha_n. A$, which  is a 
closed $\Pi_1$-type, i.e.~a ground type in $\mathsf{IMLL}_2$.
\end{proof}
\noindent
The class of ground types is  a subset of both the classes of duplicable and erasable types. 
%Before stating the two results, we need the following definition:
%\begin{defn} \label{defn: stripping forall and substitution} Let $A$ be a type in $\mathsf{IMLL}_2$. We define $A^-$ by induction on the complexity of the type:
%\begin{align*}
%\alpha^- &=\alpha \\
%(A \multimap B)^- &= A^- \multimap B^-\\
%(\forall \alpha. A)^-&= A^-[\beta/\alpha] &&\text{$\beta$ fresh variable}.
%\end{align*}
%The notation $A[B]$ denotes the type obtained by replacing $B$ for every free type variable of $A$. Moreover, if $\Gamma= A_1, \ldots, A_n$, then $\Gamma^-$ stands for $A_1^-, \ldots, A^-_n$, and $\Gamma[B]$  stands for $A_1[B], \ldots, A_n[B]$.\qed
%\end{defn}
\begin{thm}[\cite{mairson2003computational}]
\label{thm: pi1 types are erasable}
Every ground type is erasable.
\end{thm}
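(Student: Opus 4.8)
The plan is to prove, by a single \emph{mutual} structural induction on types, a statement that also covers $\Sigma_1$-types. For a type $C$ with $FV(C)=\{\alpha_1,\dots,\alpha_n\}$, write $\widehat{C}\triangleq C\langle\mathbf{1}/\alpha_1\rangle\cdots\langle\mathbf{1}/\alpha_n\rangle$ for the result of collapsing every free type variable of $C$ to the unity type $\mathbf{1}$ of Definition~\ref{eqn:datatype unity and product}; note $\widehat{C}=C$ whenever $C$ is closed. I would establish simultaneously: \emph{(E)} for every $\Pi_1$-type $A$ there is a closed linear $\lambda$-term $e_A$ with $\vdash e_A:\widehat{A}\multimap\mathbf{1}$ and $e_A\,V\rightarrow_\beta^* I$ for every value $V$ of $\widehat{A}$; and \emph{(I)} for every $\Sigma_1$-type $B$ there is a closed linear $\lambda$-term $d_B$ with $\vdash d_B:\widehat{B}$. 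The theorem is then the instance of \emph{(E)} at a ground type $A$, where $\widehat{A}=A$, so $\mathtt{E}_A\triangleq e_A$ meets Definition~\ref{defn: duplicable and erasable types}. The auxiliary claim \emph{(I)} is precisely what makes the recursion close: to erase a value of a type $B\multimap\Pi_1'$ one cannot discard the negative argument of type $B$ — a $\Sigma_1$-type — but must instead \emph{feed} the value a closed dummy of type $\widehat{B}$, and \emph{(I)} supplies exactly that. Thus erasability lives on the positive ($\Pi_1$) occurrences and closed inhabitation (after collapsing variables to $\mathbf{1}$) on the negative ($\Sigma_1$) ones, matching the mutual grammar of Definition~\ref{defn:Pi and Sigma types}.

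The inductive cases follow that grammar and are essentially forced. For \emph{(E)}: if $A=\alpha$ then $\widehat{A}=\mathbf{1}$ and $e_\alpha\triangleq\lambda v.v$ works; if $A=\forall\alpha.\Pi_1'$ then any value of $\widehat{A}$ is also, after instantiating the leading quantifier at $\mathbf{1}$, a value of $\widehat{\Pi_1'}$, so $e_A\triangleq\lambda v.\,e_{\Pi_1'}\,v$ works by the induction hypothesis on $\Pi_1'$; if $A=B\multimap\Pi_1'$ with $B$ a $\Sigma_1$-type, put $e_A\triangleq\lambda v.\,e_{\Pi_1'}(v\,d_B)$, using \emph{(I)} on the smaller type $B$ to get $d_B:\widehat{B}$ and \emph{(E)} on $\Pi_1'$ — here $v\,d_B$ is a closed, typed, hence strongly normalising term whose normal form is a value of $\widehat{\Pi_1'}$, on which $e_{\Pi_1'}$ acts as required. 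For \emph{(I)}: if $B=\alpha$ then $\widehat{B}=\mathbf{1}$ and $d_\alpha\triangleq I$; if $B=A'\multimap B'$ with $A'$ a $\Pi_1$-type, put $d_B\triangleq\lambda z.\,(e_{A'}\,z)\,d_{B'}$, which is linear ($z$ occurs once), well typed ($e_{A'}\,z:\mathbf{1}=\forall\gamma.\gamma\multimap\gamma$ is instantiated at $\widehat{B'}$ and applied to $d_{B'}:\widehat{B'}$), and appeals to \emph{(E)} on $A'$ and \emph{(I)} on $B'$. Every recursive appeal is to a proper subtype, so the mutual induction is well founded; a routine check, using stability of linearity (Fact~\ref{fact:Stability}), confirms the typing and closedness of each constructed term.

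It remains to discharge the reduction obligation in \emph{(E)}, which always bottoms out at the head variable of $A$. Writing $A=\forall\vec{\alpha}.(B_1\multimap\cdots\multimap B_k\multimap\gamma)$ and unfolding, one gets $e_A\,V\rightarrow_\beta^*(\lambda w.w)(V\,d_{B_1}\cdots d_{B_k})$, where $V\,d_{B_1}\cdots d_{B_k}$ is a closed term of type $\widehat{\gamma}=\mathbf{1}$. Since typed $\mathsf{IMLL}_2$-terms normalise and every closed $\beta$-normal linear inhabitant of $\mathbf{1}=\forall\gamma.\gamma\multimap\gamma$ must be $\lambda x.t$ with $x:\gamma\vdash t:\gamma$ for an atomic $\gamma$, forcing $t=x$ and hence the term to be $I$, we obtain $V\,d_{B_1}\cdots d_{B_k}\rightarrow_\beta^* I$ and therefore $e_A\,V\rightarrow_\beta^* I$; this is exactly the mechanism already visible in the concrete eraser $\mathtt{E}_{\mathbf{B}}$ for the booleans. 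The only step that I expect to require genuine insight rather than bookkeeping is setting the induction up in the first place — pairing \emph{(E)} with the dual inhabitation statement \emph{(I)} and threading the $\mathbf{1}$-collapse $\widehat{(\cdot)}$ through both; once that is in place every case is mechanical, and the sole pitfalls are keeping each constructed term linear and checking that the quantifier instantiations align, both of which are straightforward.
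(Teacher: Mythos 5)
Your proposal is correct and follows exactly the paper's route: the paper proves this theorem by the same simultaneous induction on the mutual $\Pi_1$/$\Sigma_1$ grammar, pairing an eraser statement for $\Pi_1$-types with a closed-inhabitant statement for $\Sigma_1$-types after collapsing all free type variables to $\mathbf{1}$ (your $(E)$ and $(I)$ are the paper's (i) and (ii), with $\mathtt{H}_B$ playing the role of your $d_B$). The paper only sketches this, so your filled-in cases and the final normalisation argument for closed inhabitants of $\mathbf{1}$ are a faithful elaboration rather than a departure.
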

\begin{proof} 
The proof follows from proving two statements
by simultaneous induction: (i)  For every $\Pi_1$-type $A$ with free type variables $\alpha_1, \ldots, \alpha_n$ there exists 
a linear $ \lambda $-term $\mathtt{E}_A$  such 
that $ \vdash \mathtt{E}_A: A[\mathbf{1}/\alpha_1, \ldots,\mathbf{1} /\alpha_n] $ $\multimap \mathbf{1}$, and (ii) for every $\Sigma_1$-type $A$ with free type variables $\alpha_1, \ldots, \alpha_n$ there exists a linear $ \lambda $-term $\mathtt{H}_A$ such that 
$ \vdash \mathtt{H}_A:A[\mathbf{1}/\alpha_1, \ldots,\mathbf{1}/ \alpha_n]$.
%If $A= \alpha$, then $A[\mathbf{1}]= \mathbf{1}$. We can set 
%$\mathtt{E}_{A}=\lambda z.z $ and  $h_{A}=I$. 
%If $A=B \multimap C$, then $A[\mathbf{1}]= B[\mathbf{1}]\multimap 
%C[\mathbf{1}]$. 
%If $B \multimap C\in\Pi_1$-type, then $B\in\Sigma_1$-type and $C\in\Pi_1$-type. 
%By induction, $h_{B}$ and $ \mathtt{E}_{C}$ exist, so that
%$\mathtt{E}_{B \multimap C}= \lambda z. \mathtt{E}_{C}(zh_B)$. 
%If $B \multimap C\in\Sigma_1$-type, then $B\in\Pi_1$-type and 
%$C\in\Sigma_1$-type. By induction, the terms $\mathtt{E}_B$ and $h_C$ 
%exist, so that $h_{B \multimap C}=\lambda z. \mathtt{let\ }\mathtt{E}_{B}\, z 
%\mathtt{\ be \ }I \mathtt{\ in\ }h_C$. 
%If $A= \forall \alpha. B$ then $A$ can only be a $\Pi_1$-type.  
%By induction, $\mathtt{E}_{B}$ exist, so that $\mathtt{E}_{\forall \alpha. 
%B}=\lambda z.\mathtt{E}_B \, z$. 
\end{proof}
%In fact, Theorem~\ref{thm: pi1 types are erasable} holds true 
%in~\cite{mairson2003computational} the class of $e\Pi_1$-types, wider than
%$\Pi_1$-types. However, for the sake of uniformity and compactness, 
%we decided to 
%restrict erasability to $\Pi_1$-types only. 

\begin{thm}
\label{thm: pi1 are duplicable}
Every inhabited ground type is duplicable.
\end{thm}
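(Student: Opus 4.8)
The plan is to mimic the proof of Theorem~\ref{thm: pi1 types are erasable}, that is, to prove a stronger pair of statements by simultaneous induction on the structure of $\Pi_1$- and $\Sigma_1$-types, this time producing \emph{duplicators} rather than erasers. Concretely, I would prove: (i) for every $\Pi_1$-type $A$ with $FV(A)=\{\alpha_1,\dots,\alpha_n\}$, if $A$ (suitably closed) is inhabited, then there is a linear $\lambda$-term $\mathtt{D}_A : A[\mathbf B/\alpha_1,\dots,\mathbf B/\alpha_n] \multimap A[\mathbf B/\vec\alpha]\otimes A[\mathbf B/\vec\alpha]$ duplicating every value, and (ii) a dual statement for $\Sigma_1$-types providing the ``consumer''/co-duplicator term that feeds two fresh copies of the needed arguments. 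The substitution of a concrete ground type such as $\mathbf B$ (which, crucially by Theorem~\ref{thm: pi1 types are erasable} and its proof, is itself both duplicable and erasable) for the free variables is what makes the positive/negative occurrences manageable: wherever a variable sits negatively inside a $\Pi_1$-type it must occur inside a $\Sigma_1$ sub-phrase, and there the induction hypothesis (ii) supplies what is needed.

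The base case $A=\alpha$ is handled by instantiating $\alpha$ with $\mathbf B$ and using the explicit duplicator $\mathtt D_{\mathbf B}$ of~\eqref{eqn: duplication booleans}. For $A = \Sigma_1 \multimap \Pi_1$ (a $\Pi_1$-type), given a value $V$ of functional type, the duplicator must return $\langle V,V\rangle$; the idea, exactly as in Mairson\&Terui's ``duplication by selection and erasure'', is that a value of ground type has only finitely many values below it, so one builds $\mathtt D_A$ by case analysis: apply $V$ to cleverly paired arguments built from the co-duplicator of the $\Sigma_1$ premise and the duplicator of the $\Pi_1$ conclusion (obtained by IH), then project out the correctly-correlated pair of results, erasing the mismatched ones with the erasers furnished by Theorem~\ref{thm: pi1 types are erasable}. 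The $\forall\alpha.\Pi_1$ case is essentially transparent: a duplicator for the body, at the instance where $\alpha$ is sent to $\mathbf B$, lifts to one for the quantified type because $\forall$ occurs only positively and the same underlying linear $\lambda$-term works. For (ii), $A=\Pi_1\multimap\Sigma_1$, the co-duplicator takes a pair-like demand on $\Sigma_1$ and, using the IH-(i) duplicator on the $\Pi_1$ argument, distributes a single argument into the two copies.

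I would also record the routine bookkeeping that values of an inhabited ground type remain values after the substitution $[\mathbf B/\vec\alpha]$, and that $\mathtt D_A\,V \rightarrow_{\beta\eta}^* \langle V,V\rangle$ follows by unfolding the pattern-matching macros of Definition~\ref{eqn:datatype unity and product} together with the reduction behaviour of $\mathtt D_{\mathbf B}$, $\mathtt E_{\mathbf B}$, and the IH terms; linearity and disjointness of free variables are preserved by construction because every sub-term introduced is closed. Finally, the ``inhabited'' hypothesis is used to guarantee that the finite set of values of $A$ over which the selection-and-erasure argument ranges is non-empty and, more importantly, that the instantiation producing a \emph{closed} ground type is legitimate.

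The main obstacle, I expect, is the functional case $\Sigma_1\multimap\Pi_1$: getting the ``selection'' combinator right so that, given $V$, the term probes $V$ with paired test arguments and reassembles precisely the pair $\langle V, V\rangle$ while linearly disposing of all the spurious crossed combinations. This is where the interplay between the duplicator of the codomain and the co-duplicator of the domain, threaded through the tensor structure, has to be orchestrated carefully — it is the analogue of, and somewhat more delicate than, the corresponding step in the erasability proof, because here we must \emph{recover} two faithful copies rather than merely collapse to $I$. Everything else is a direct, if tedious, structural induction closely following the template already set by Theorem~\ref{thm: pi1 types are erasable}.
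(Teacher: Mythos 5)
There is a genuine gap, and it sits exactly where you anticipate trouble: the arrow case. Your plan is to obtain $\mathtt{D}_{B\multimap C}$ \emph{compositionally}, from a duplicator for the $\Pi_1$ codomain $C$ and a ``co-duplicator'' for the $\Sigma_1$ domain $B$, by applying the value $V$ to paired test arguments and projecting. But a linear term contains exactly one occurrence of $V$, so $V$ can be applied only once; whatever single result $V\,a$ you obtain, duplicating it with $\mathtt{D}_C$ yields two copies of $V\,a$, not two copies of $V$. The trick that makes $\mathtt{D}_{\mathbf B}=\lambda z.\pi_1(z\langle\mathtt{tt},\mathtt{tt}\rangle\langle\mathtt{ff},\mathtt{ff}\rangle)$ work is special to $\mathbf B$, whose inhabitants merely permute first-order arguments; at higher types (e.g.\ $(\alpha\multimap\alpha)\multimap\alpha$, which is a legitimate $\Pi_1$ shape) instantiating at a tensor square and feeding diagonal pairs does not commute with the action of an arbitrary value, and no induction hypothesis on $B$ and $C$ separately supplies the set of inhabitants of $B\multimap C$ over which ``selection and erasure'' must range. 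In short, duplication cannot be threaded through the type constructors the way erasure can; this is precisely the asymmetry between Theorem~\ref{thm: pi1 types are erasable} and Theorem~\ref{thm: pi1 are duplicable}.

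The missing idea (the paper's route, \ref{sec: the d-soundness theorem DICE}) is to reify $V$ into \emph{data} before duplicating anything. Because a ground type has no negative $\forall$, a cut-free derivation of $\vdash V:A$ uses no $\forall$L, so the $\eta$-long normal form $V_A$ has size at most $\vert A^-\vert$ (Lemma~\ref{lem: pi1type bound}); hence all values of $A$ can be Gödel-coded in $\mathbf B^{s}$ with $s=\mathcal{O}(\vert A^-\vert\cdot\log\vert A^-\vert)$. One then builds $\mathtt{D}_A=\mathtt{dec}^s_A\circ\mathtt{enc}^s_A\circ\mathtt{sub}^s_A$: the type variables are instantiated at $\mathbf B^{s}$ (not $\mathbf B$, since the codes themselves must flow through the variable positions), $\mathtt{sub}^s_A$ and $\mathtt{enc}^s_A$ are defined by the simultaneous $\Pi_1/\Sigma_1$ induction you have in mind and compute $\lceil V_A\rceil:\mathbf B^{s}$ by probing $V$ with a single symbolic argument, and the duplication happens entirely in the monolithic look-up table $\mathtt{dec}^s_A:\mathbf B^{s}\multimap A\otimes A$, which branches on all $2^{s}$ bit strings, returns $\langle V_A,V_A\rangle$ on codes of values, and returns a default pair $\langle U,U\rangle$ otherwise --- this last clause, not non-emptiness of an enumeration inside the arrow case, is where the inhabitation hypothesis is consumed. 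Your structural induction is the right engine for the encoding stage, but without the size bound, the boolean reification, and the global look-up table, the induction has no way to close the functional case.
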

\noindent
Mairson and Terui sketch the proof of Theorem~\ref{thm: pi1 are duplicable} in
\cite{mairson2003computational}.
\ref{sec: the d-soundness theorem DICE}, which we see as an integral and relevant part of this work, develops it in every detail. 
%The neat result is 
%to formalize modal types and the corresponding logical rules that
%identify the logical structure, and the corresponding linear $ \lambda $-syntax, 
%that can be erased or duplicated in a linear setting.
%The system $ \textsf{IMLL}^{\shpos}_2 $ in the coming section is the result.

\section{The system $\mathsf{LEM}$}\label{sec:  the system shpos, with, oplus  IMLL2 cut elimination and bound}
Theorems~\ref{thm: pi1 types are erasable} and~\ref{thm: pi1 are duplicable}  say that the ground types can be weakened and contracted in $\mathsf{IMLL}_2$. We here logically internalize those kinds of weakening a contraction in the deductive system $\mathsf{LEM}$ (Linearly Exponential and Multiplicative). It extends $\mathsf{IMLL}_2$ 
with inference rules for the modality 
``$\shpos$''  that closely recall the exponential rules in Linear Logic.
\begin{defn}[Types of $\mathsf{LEM}$]
\label{defn:Types for IMLL2shpos}
Let $\mathcal{X}$ be a denumerable set of type variables. The following grammar~\eqref{eqn: grammar mathcalAe} generates the \textit{exponential types}, while the grammar~\eqref{eqn: grammar A} generates the \textit{linear types}:
\begin{align}
\sigma := &\ A \ \vert \  \shpos \sigma \label{eqn: grammar mathcalAe}\\
A:= &\ \alpha \ \vert \ \sigma \multimap A \ \vert \ \forall \alpha.A  \label{eqn: grammar A} 
\end{align}
where $\alpha \in \mathcal{X}$ and,  in the last clause of the grammar~\eqref{eqn: grammar mathcalAe}, i.e.~the one introducing $ \shpos \sigma $, \emph{the type $ \sigma$ must be closed and without negative occurrences of $\forall$}. The set of all types generated by the grammar~(\ref{eqn: grammar mathcalAe})   will be denoted $\Theta_\shpos$.   A type is \textit{strictly exponential} if it is of the form $\shpos \sigma$. A \textit{strictly exponential context} is a context containing only  strictly exponential types and, similarly, a  \textit{linear context} contains only linear types. Finally, $A\langle B/\alpha \rangle$ is the standard meta-level substitution of $B$, for every occurrence of $ \alpha$ in $A$. 
\qed
%\begin{align}
%\sigma, \tau::=& \ A \ \vert \ \shpos \sigma
% &&
% (\textrm{exponential})
%   \label{eqn: grammar mathcalAe} 
%\\
%A, B::= & \ \alpha 
%\ \vert \ \sigma \multimap A 
%\ \vert \ \forall \alpha.A   \label{eqn: grammar A}
%&&
%(\textrm{linear})
%\end{align}
%\Lu{generates type structures. Among them we identify
%\textit{linear type structures}, generated by~\eqref{eqn: grammar A},
%and \textit{exponential type structures}, 
%generated by~\eqref{eqn: grammar mathcalAe}.}{}
%\Lu{Among the exponential type structures we also distinguish those ones 
%which are \textit{strictly  exponential}, i.e. with modal form 
%$\shpos \sigma$}{}. 
%\Lu{For any type structure $\sigma$, we denote $\sigma^+$ the type structure
% that we obtain from $\sigma$ by stripping away from it every occurrence of $\shpos$}{}.

%A \textit{$\mathsf{LEM}$-type}, or \textit{type} for short, 
%is a type structure $\sigma$ such that
%$\tau^+$ is a closed $\Pi_1$-type as in Definition~\ref{defn:Pi and Sigma types}, for every occurrence of $\shpos \tau$ in $\sigma$.
%
%%The above terminology and notation extend to contexts.
%\textit{Strictly exponential} contexts contain 
%strictly \Lu{exponential types \sout{structures}}{} only. 
%\textit{Linear contexts} contain \Lu{linear types \sout{structures}}{} only.
%If  $\Gamma$ is $x_1:\sigma_1, \ldots, x_n: \sigma_n$, then
%$\Gamma^+$ is $x_1: \sigma_1^+, \ldots, x_n: \sigma_n^+$. 
%Finally, $A\langle B/\alpha \rangle$ is the standard \Lu{meta-level}{} 
%substitution of $B$, for every occurrence of $ \alpha$ in $A$. 
\end{defn}
\noindent
\begin{rem}
The modality ``$\shpos$'' identifies where the ground types  (Definition~\ref{defn:Pi and Sigma types}) occur in the 
grammars~(\ref{eqn: grammar mathcalAe}) and~(\ref{eqn: grammar A})
 because it applies to closed types that are free from negative occurrences  of $\forall$. 
So, the occurrences of $\shpos \sigma$ identify where contraction and weakening rules can apply in the derivations of $\mathsf{LEM}$.
\qed
\end{rem}
\noindent
Also, we observe that syntactically replacing the Linear Logic modality \enquote{$\oc$} for 
\enquote{$\shpos$} in~(\ref{eqn: grammar mathcalAe}) and~(\ref{eqn: grammar A}) yields  a subset of Gaboardi\&Ronchi's \textit{essential types}~\cite{gaboardi2009light}, introduced to prove Subject reduction in a variant of Soft Linear Logic. Essential types forbid the occurrences of modalities in the right-hand side of an implication, such as in $A \multimap \oc B$.

%Observe that syntactically replacing the Linear Logic modality \enquote{$\oc$} for 
%\enquote{$\shpos$} in~(\ref{eqn: grammar mathcalAe}) and~(\ref{eqn: grammar A}) yields  a subset of Gaboardi\&Ronchi's \textit{essential types}~\cite{gaboardi2009light}, introduced to prove Subject reduction in a variant of Soft Linear Logic. Essential types forbid the occurrences of modalities in the right-hand side of an implication, such as in $A \multimap \oc B$.
%
%We remark that the modality ``$\shpos$'' applies to those closed types that are free from negative occurrences of $\forall$, and internalizes in the grammars~(\ref{eqn: grammar mathcalAe}) and~(\ref{eqn: grammar A}) the notion of  ground type  of Definition~\ref{defn:Pi and Sigma types}. Indeed,  types of the form $\shpos \sigma$ are the only ones that the rules of  $\mathsf{LEM}$ will be allowed to contract and weaken.

$\mathsf{LEM}$ will be defined as  a type-assignment for the term calculus $\Lambda_\shpos$, that is essentially the standard linear $\lambda$-calculus with  explicit and type-dependent constructs for erasure and  duplication, i.e.~$\mathtt{discard}_\sigma$ and $\mathtt{copy}_\sigma ^V$, the latter being also decorated with a value $V$. These new constructs are able to copy and discard values only, i.e.~closed and normal linear $\lambda$-terms. 

\begin{defn}[Terms and reduction of $\mathsf{LEM} $] 
\label{defn:lambda-terms and values of IMLLshpos2} 
Let $\mathcal{V}$ be a denumerable set of variables. The \textit{terms} of $\mathsf{LEM}$ are given by the grammar:
\begin{align}
M, N := \ & x\ 
     \vert\ \lambda x.M\ 
     \vert\ MN\ 
     \vert\ \mathtt{discard}_{\sigma}\, M \mathtt{\ in\ } N\ 
     \vert\ \mathtt{copy}^{V}_{\sigma}\, M \mathtt{\ as\ }x,y 
            \mathtt{\ in\ } N
% \\
%& I  \ \vert \ \mathtt{let\  }M \mathtt{ \   be \  }I \mathtt{ \   in \ }M \ \vert \   \langle M,M
% \rangle   \ \vert \ \mathtt{  let\ }M \mathtt{ \   be \  }x,x \mathtt{ \   in \ }M  \ \vert \   \nonumber  
%\nonumber 
&  \label{eqn: term assignment shposIMLL2}
\end{align} 
where $x, y\in \mathcal{V}$, $ V $ is a value (Definition~\ref{defn:Values among linear lambda-terms}, Section~\ref{sec: background}), and $ \sigma \in \Theta_\shpos$. The set of all terms of $\mathsf{LEM}$  will be denoted $\Lambda_{\shpos}$.
The set of the free variables of a term, and the notion of size are standard for variables, abstractions, and applications. 
The extension to the new constructors are:
\begin{align}
\nonumber
FV(\mathtt{  discard }_{\sigma}\, M\mathtt{ \   in \ }N)= 
& \ FV(M)\cup FV(N)
\\
\nonumber
FV(\mathtt{  copy}^{V}_{\sigma}\, M \mathtt{ \   as \ }x, y \mathtt{ \   in \ }N)= & \ FV(M) \cup (FV(N) \setminus \lbrace x, y \rbrace)
\\
\nonumber
\vert \mathtt{  discard }_{\sigma}\, M\mathtt{ \   in \ }N\vert = & \ \vert M\vert + \vert N\vert +1 
\\
\label{align: size of copy}
\vert \mathtt{  copy}^{V}_{\sigma}\, M \mathtt{ \   as \ }x, y \mathtt{ \   in \ }N\vert = & \ \vert M\vert  + \vert N \vert + \vert V \vert + 1 \enspace .
\end{align}
%We say that a term $M$ is \textit{linear} when every free and bound variables of $M$ occurs exactly once in it. 
A term $M$ in~\eqref{eqn: term assignment shposIMLL2} is \textit{linear} 
if all of its free variables occur once in it and every proper sub-term 
of $M$ with form $ \lambda x.N $  (resp.~$ \mathtt{copy}^{V}_{\sigma}\, M' \mathtt{\ as\ }y,z \mathtt{\ in\ } N $) is such that $ x$ (resp.~$y,z$)  must occur in $ N $ and $ N $ is linear. \textit{Henceforth, we use linear terms only.}
 
The notions of meta-level substitution and context are as usual.

 The \textit{one-step reduction relation} $\rightarrow$ is a binary relation on terms. It is defined by the reduction rules in Figure~\ref{fig: term reduction rules} and by the commuting conversions in Figure~\ref{fig: term commuting conversions}. It applies in any context. Its reflexive and transitive closure is denoted $\rightarrow^*$. A term is said a (or is in) \textit{normal form} if no reduction step applies to it. 
\qed
\end{defn}
\begin{figure}[t]
\centering
\begin{equation}\nonumber
\begin{aligned}
%\scalebox{0.7}{$  \mathtt{let\ }I \mathtt{\ be\ }I \mathtt{\ in\ } M$} & 
%\scalebox{0.7}{$\   \rightarrow \ M $}\\
\scalebox{0.7}{$  (\lambda x. M)N  $}&\scalebox{0.7}{$ \  \rightarrow  \ M[N/x] $}\\
%\scalebox{0.7}{$  \mathtt{let\ }\langle M,N \rangle \mathtt{\ be\ }y,z \mathtt{\ in\ }P $} & 
%\scalebox{0.7}{$ \  \rightarrow \  P[M/x_1,N/x_2] $}\\ 
\scalebox{0.7}{$  \mathtt{discard}_{\sigma}\, V \mathtt{\ in\ }M $} & \scalebox{0.7}{$ \  \rightarrow \ M $}\\
\scalebox{0.7}{$  \mathtt{copy}^{U}_{\sigma}\, V \mathtt{\ as\ } y, z \mathtt{\ in\ }M $} & \scalebox{0.7}{$ \  
	\rightarrow  \ M[V/y, V/z] $}
\end{aligned}
\end{equation}
\caption{Reduction on terms.}
\label{fig: term reduction rules}
\end{figure}

\begin{figure}[t]
	\centering
	\begin{equation}\nonumber
	\begin{aligned}
	\scalebox{0.7}{$(\mathtt{discard}_{\sigma}\, M \mathtt{\ in\ } N)P $}&\scalebox{0.7}{$ \ \rightarrow\ \mathtt{discard}_{\sigma}\, M  \mathtt{\ in\ }(NP) $}\\
    \scalebox{0.7}{$\mathtt{discard}_{\sigma}\, (\mathtt{discard}_{\tau}\, M \mathtt{\ in\ } N) \mathtt{\ in \ }P $}&\scalebox{0.7}{$\  \rightarrow \  \mathtt{discard}_{\tau}\, M  \mathtt{\ in\ }(\mathtt{discard}_{\sigma}\, N \mathtt{\ in\ }P)$}\\
	\scalebox{0.7}{$\mathtt{copy}^{V}_{\sigma}\, (\mathtt{discard}_{\tau}\, M \mathtt{\ in\ } N)\mathtt{\ as\ }y,z \mathtt{\ in\ }P $}& \scalebox{0.7}{$\ \rightarrow \  \mathtt{discard}_{\tau}\, M  \mathtt{\ in\ }(\mathtt{copy}^{V}_{\sigma}\, N \mathtt{\ as\ }y,z \mathtt{\ in\ }P)$}
	\\
	\scalebox{0.7}{$(\mathtt{copy}^{V}_{\sigma}\, M \mathtt{\ as\ }y,z \mathtt{\ in\ }N)P $}&\scalebox{0.7}{$\  \rightarrow \  \mathtt{copy}^{V}_{\sigma}\, M \mathtt{\ as\ }y,z  \mathtt{\ in\ }(NP) $}\\
	\scalebox{0.7}{$\mathtt{discard}_{\sigma}\, (\mathtt{copy}^{V}_{\tau}\, M \mathtt{\ as\ }y,z \mathtt{\ in\ }N) \mathtt{\ in \ }P $}&\scalebox{0.7}{$\ \rightarrow  \ \mathtt{copy}^{V}_{\tau}\, M \mathtt{\ as\ }y,z  \mathtt{\ in\ }(\mathtt{discard}_{\sigma}\, N \mathtt{\ in\ }P)$}\\
	\scalebox{0.7}{$\mathtt{copy}^{U}_{\sigma}\, (\mathtt{copy}^{V}_{\tau}\, M \mathtt{\ as\ }y,z \mathtt{\ in\ }N)\mathtt{\ as\ }y',z' \mathtt{\ in\ }P $}& \scalebox{0.7}{$\ \rightarrow \  \mathtt{copy}^{V}_{\tau}\, M \mathtt{\ as\ } y,z  \mathtt{\ in\ }(\mathtt{copy}^{U}_{\sigma}\, N \mathtt{\ as\ }y',z' \mathtt{\ in\ }P)$}
	\end{aligned}
	\end{equation}
	\caption{Commuting conversions on terms.}
	\label{fig: term commuting conversions}
\end{figure}
\noindent
%\noindent
%\Lux{}{A first informal comment is worth doing on 
%$\mathtt{copy}^{V}_{\sigma}\, M \mathtt{\ as\ }x,y \mathtt{\ in\ } N$.
%The value $ V $ and the closed $ \Pi_1 $-type $ \sigma $ will be related
%by the type-assignment system in Definition~\ref{defn:The type assigment IMLLshpos2} below. Specifically, $ V $ will necessarily
%be a witness of $ \sigma $. By
%default, it will be the term that we can duplicate and replace for
%$ x $ and $ y $ in case the \enquote{\emph{duplication by selecion and erasure}} that 
%$\mathtt{copy}^{V}_{\sigma}\, M \mathtt{\ as\ }x,y \mathtt{\ in\ } N$
%stands for fails to duplicate $ M $. The reason to fail is that,
%concretely, 
%$\mathtt{copy}^{V}_{\sigma}\, M \mathtt{\ as\ }x,y \mathtt{\ in\ } N$
%stands for a look-up table of the pairs of values with type $ \sigma $.
%That table does not necessarily contain (a representative of) the pair $ \langle M, M\rangle $.}

\begin{figure}[htbp]
\begin{mathpar}
\scalebox{0.8}{$
\inferrule*[Right= $ax$]{\\}{x: A \vdash x:A}$} \and
\scalebox{0.8}{$\inferrule*[Right= $cut$]{\Gamma \vdash N: \sigma \\ \Delta, x: \sigma \vdash M:\tau}{\Gamma, \Delta \vdash M[N/x]: \tau}$} \\
\scalebox{0.8}{$\inferrule*[Right= $\multimap$R]{\Gamma, x:\sigma \vdash M:B}{\Gamma \vdash \lambda x. M : \sigma \multimap B }$} \and
\scalebox{0.8}{$\inferrule*[Right= $\multimap$L]{\Gamma \vdash N: \sigma \\ \Delta, x:B  \vdash M: \tau}{\Gamma, \Delta, y: \sigma \multimap B \vdash M[yN/x]: \tau}$}\\
\scalebox{0.8}{$\inferrule*[Right=$\forall$R]{\Gamma \vdash M: A \langle \gamma / \alpha 
\rangle\\ \gamma \not \in FV(\operatorname{rng}(\Gamma))}{\Gamma \vdash M: \forall \alpha.A}$} \and
\scalebox{0.8}{$\inferrule*[Right= $\forall$L]{ \Gamma, x: A\langle B/ \alpha \rangle  \vdash M:\tau}{\Gamma, x: \forall \alpha. A \vdash M:\tau}$}\\
\scalebox{0.8}{$\inferrule*[Right=$p$]{x_1:\shpos \sigma_1, \ldots, x_n:\shpos \sigma_n \vdash M:\sigma }{x_1: \shpos \sigma_1, \ldots, x_n: \shpos \sigma_n \vdash  M: \shpos \sigma }$} \and
\scalebox{0.8}{$\inferrule*[Right=$d$]{\Gamma, x: \sigma \vdash M:\tau}{\Gamma, y: \shpos  \sigma  \vdash M[y/x]:\tau}$}\\
\scalebox{0.8}{$\inferrule*[Right=$w$]{\Gamma \vdash M:\tau}{\Gamma, x: \shpos  \sigma \vdash \mathtt{ discard}_{\sigma}\,   x\mathtt{ \   in \ }M:\tau}$} \and 
\scalebox{0.8}{$\inferrule*[Right=$c$]{\Gamma, y: \shpos  \sigma, z: \shpos \sigma \vdash M:\tau \\ \vdash V: \sigma}{\Gamma, x: \shpos \sigma \vdash \mathtt{ copy} ^{V}_{\sigma}\, x \mathtt{ \   as \ } y,z \mathtt{ \   in \ }M:\tau}$}
\end{mathpar}
\caption{The system $\mathsf{LEM}$.}
\label{fig: the system shpos add IMLL2} 
\end{figure}
\noindent
Both the type and the term annotations in the constructs  $\mathtt{discard}_\sigma$  and $\mathtt{copy}^V_\sigma$ will become meaningful once we introduce the type-assignment system. The value $V$ will be an inhabitant of $\sigma$, a necessary condition in order to faithfully express the mechanism of linear duplication.

The structure of the types in Definition~\ref{defn:Types for IMLL2shpos} drives the definition of $ \mathsf{LEM} $.
\begin{defn}[The system $\mathsf{LEM} $]
\label{defn:The type assigment IMLLshpos2}
It is the type-assignment system for the term calculus $\Lambda_\shpos$ (Definition~\ref{defn:lambda-terms and values of IMLLshpos2}) in Figure~\ref{fig: the system shpos add IMLL2}. 
It extends $\mathsf{IMLL}_2$ with the rules \emph{promotion} $p$, \emph{dereliction} $d$, \emph{weakening} $w$ and \emph{contraction} $c$.
As usual, $\multimap$R, $\forall$R, and $p$ are right rules
while
$\multimap$L, $\forall$L, $d$, $w$, and $c$ are left ones.
\qed
\end{defn}
%\begin{defn}[The type-assignment $\mathsf{LEM} $]
%\label{defn:The type assigment IMLLshpos2}
%It is the system in Figure~\ref{fig: the system shpos add IMLL2}
%that gives types to \emph{terms} as in 
%Definition~\ref{defn:lambda-terms and values of IMLLshpos2}:
%\begin{itemize}
%	\item $ ax $ cannot introduce exponential types like in essential type systems  \cite{gaboardi2009light};
%	\item \emph{promotion} $p$, \emph{dereliction} $d$, 
%	\emph{weakening} $w$ and \emph{contraction} $c$ are reminiscent of the namesake Linear Logic rules 
%	%and strictly extend $\mathsf{IMLL}_2$
%	;
%	\item \emph{contraction} $c$ has two premises, where  the value $V$ 
%%	is an $\eta$-long normal form (see Section~\ref{sec: background}) of 
%	has type $\sigma$, which is forcefully a closed $ \Pi_1 $-type.
%\end{itemize}
%The basic notions relative to $\mathsf{IMLL}_2$ of Section~\ref{sec: background} apply to $\mathsf{LEM}$.
%\qed
%\end{defn}
\noindent
First, we observe that $ax$ cannot introduce exponential types, like
in the \emph{essential types} of the type systems in \cite{gaboardi2009light}. This is the base for proving:
\begin{prop}[Exponential context from exponential conclusion]
\label{prop: exponential context} 
If $\mathcal{D} \triangleleft \Gamma \vdash M: \shpos \sigma$ is a derivation in 
$\mathsf{LEM}$, then $\Gamma$ is a strictly exponential context.
\end{prop}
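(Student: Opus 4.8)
The natural approach is induction on the structure (or size) of the derivation $\mathcal{D}$, reasoning by cases on the last rule applied, using the crucial observation already highlighted in the text: the axiom rule $ax$ can only introduce a \emph{linear} type $A$ (never a strictly exponential $\shpos\sigma$), since the grammar of linear types $A$ in Definition~\ref{defn:Types for IMLL2shpos} has no production of the form $\shpos(\cdots)$. Hence if $\mathcal{D}$ concludes with $\Gamma \vdash M : \shpos\sigma$, its last rule cannot be $ax$. The plan is to show that the only rules that can produce a conclusion whose subject type is strictly exponential are $p$ and $cut$, and in each case the context is forced to be strictly exponential.

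First I would dispose of the rules that simply cannot yield a conclusion of the form $\Gamma \vdash M : \shpos\sigma$: the right rules $\multimap$R and $\forall$R produce types of the form $\sigma\multimap B$ and $\forall\alpha.A$ respectively, which are linear types, not strictly exponential ones; similarly $ax$ is excluded as noted. That leaves $cut$, $\multimap$L, $\forall$L, $p$, $d$, $w$, $c$ as the possible last rules. For $p$, the conclusion is $x_1:\shpos\sigma_1,\ldots,x_n:\shpos\sigma_n \vdash M:\shpos\sigma$, so the context is strictly exponential by construction — this is the base case of the argument. For the left rules $\multimap$L, $\forall$L, $d$, $w$, $c$, the conclusion type is inherited unchanged from a premise (it is the $\tau$ in each rule, and here $\tau = \shpos\sigma$), so the premise has a strictly exponential conclusion type and the inductive hypothesis applies to the corresponding sub-derivation; then I check that the rule, applied to a strictly exponential context, yields a strictly exponential context. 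For $w$ and $c$ this is immediate since the principal formula introduced/modified is itself $\shpos(\cdot)$ and the rest of the context is the (strictly exponential, by IH) context of the premise. For $d$: the premise is $\Gamma,x:\sigma'\vdash M:\shpos\sigma$; by IH the premise context $\Gamma,x:\sigma'$ is strictly exponential, so in particular $\sigma'$ is strictly exponential, and the conclusion context $\Gamma,y:\shpos\sigma'$ is strictly exponential as well. For $\multimap$L: the right premise is $\Delta,x:B\vdash M:\shpos\sigma$; by IH $\Delta,x:B$ is strictly exponential, but $B$ cannot be strictly exponential (it is a linear type by the grammar of $\sigma\multimap B$), contradiction — so this case does not actually arise. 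Similarly $\forall$L cannot arise, because its premise has context $\Gamma,x:A\langle B/\alpha\rangle$ with $A\langle B/\alpha\rangle$ a linear type, again contradicting that the premise context be strictly exponential. For $cut$: the right premise is $\Delta,x:\sigma'\vdash M':\tau$ with $\tau=\shpos\sigma$; by IH $\Delta,x:\sigma'$ is strictly exponential, so $\sigma'=\shpos\sigma''$ is strictly exponential, and then the left premise is $\Gamma'\vdash N:\shpos\sigma''$, to which IH applies, giving $\Gamma'$ strictly exponential; the conclusion context $\Gamma',\Delta$ is then strictly exponential.

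The main subtlety — and the one point deserving care rather than being purely routine — is making precise the claim that ``$B$ in $\sigma\multimap B$ and $A\langle B/\alpha\rangle$ in $\forall\alpha.A$ are never strictly exponential.'' This is where the asymmetry of the grammar in Definition~\ref{defn:Types for IMLL2shpos} does the work: linear types $A$ are generated by $A ::= \alpha \mid \sigma\multimap A \mid \forall\alpha.A$, so the codomain of an implication and the body of a quantifier are again linear types, and a linear type is never of the form $\shpos\sigma$. One should also note that substitution $A\langle B/\alpha\rangle$ of a (closed, $\forall$-negative-free) type into a linear type stays within the linear types, so the $\forall$L case is handled correctly; this is a straightforward grammar-preservation lemma that can be stated inline. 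With these observations in place, the induction goes through cleanly, and I expect no genuine obstacle — the proposition is essentially a syntactic invariant read off from the shape of the rules and the stratification of types into linear versus exponential.
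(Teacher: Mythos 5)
Your proof is correct and takes the same route as the paper, which simply states that the proposition follows by structural induction on $\mathcal{D}$; your case analysis (in particular, ruling out $ax$, $\multimap$L and $\forall$L via the linear/exponential stratification of the type grammar, and threading the inductive hypothesis through $cut$ via the cut formula) is exactly the intended argument, just spelled out in full.
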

\begin{proof}
	By structural induction on the derivation $\mathcal{D}$.
\end{proof}
\noindent
Also, we observe that $p$, $d$, $w$, $c$ of $\mathsf{LEM}$ are reminiscent of the namesake Linear Logic exponential rules, but they only apply to types $ \shpos\sigma $ that~\eqref{eqn: grammar mathcalAe}  (Definition~\ref{defn:Types for IMLL2shpos}) generates, i.e.~closed types with no negative occurrences of universal quantifiers.  
The rule $c$ has one premise more than the contraction rule in Linear Logic. This premise \enquote{witnesses} that the type $\sigma$ we want to contract is inhabited by at least one value $V$. This is because  $ c $ expresses the mechanism of linear contraction discussed in the previous section for $\mathsf{IMLL}_2$.
As we shall see in Theorem~\ref{thm: translations for LAML}, the term $\mathtt{copy}^V_\sigma$ that $c$ introduces is a (very) compact notation for duplicators of ground types, 
whose detailed description is in~\ref{sec: the d-soundness theorem DICE}. Roughly, the  duplicator of a ground type $A$ is a  linear $\lambda$-term that, taking a value $U$ of type $A$ as argument,  implements the following three main operations:
\begin{enumerate}
	\item \label{enumerate: duplication step 1}
	expand $ U $ to its $ \eta $-long normal form 
	$ U_{A} $ 
	whose dimension is bounded by the dimension of the type $A$. 
	This is why using a modality to identify types for which 
	this can be done is so important;
	
	\item \label{enumerate: duplication step 2}
    compile $ U_{A} $ to a linear $ \lambda $-term
    $ \lceil U_{A} \rceil $ which encodes $ U_{A} $ as 
    a boolean tuple;
	
	\item \label{enumerate: duplication step 3}
    copy and decode $ \lceil U_{A} \rceil $, obtaining the duplication 
    $ \langle U_{A}, U_{A} \rangle $ of $ U_{A} $ as final result.  This duplication 
    is by means of the term $\mathtt{dec}^s_{A}$ in~\ref{subsection: dec and DBn}.   It nests a series of  \texttt{if}-\texttt{then}-\texttt{else} constructs which is a look-up table, possibly quite big, that stores all the pairs of normal inhabitants of $A$.
    Each of them represents a possible outcome of the duplication.
    Given a boolean tuple $ \lceil U_{A} \rceil $  in input, the nested \texttt{if}-\texttt{then}-\texttt{else} select the corresponding pair 
    $ \langle U_{A}, U_{A} \rangle $, erasing all the remaining \enquote{candidates}. The inhabitation condition for $A$ stated in Theorem~\ref{thm: pi1 are duplicable} ensures that the default pair $ \langle V, V \rangle $ exists as a sort of \enquote{exception}. We ``throw'' it each time the boolean tuple that $\mathtt{dec}^s_{A}$ receives as input does not encode any term.
\end{enumerate}
\noindent
Point~\ref{enumerate: duplication step 3} is the one implementing Mairson\&Terui's \enquote{\emph{duplication by selection and erasure}} discussed in Section~\ref{sec: duplication and erasure in a linear setting}. It involves the component of the duplicator which is exponential in the size of $A$.  Therefore, as we shall see in Theorem~\ref{thm: exponential compression},  the construct $\mathtt{copy}_\sigma^V$ exponentially compresses the linear duplication mechanism encoded in a duplicator.

We conclude this section by commenting about how \enquote{$ \shpos $} and \textsf{LL}'s \enquote{$\oc$} differ. Intuitively, the latter allows to duplicate or erase logical structure, or terms, \emph{at once}, which is the standard way to computationally interpret contraction and weakening of a logical system.
The modality \enquote{$ \shpos $} identifies duplication and erasure processes with a more \emph{constructive} nature. The duplication proceeds step-by-step among a whole set of possible choices in order to identify those ones that cannot contain the copies of the term we are interested to duplicate, until it eventually reaches what it searches. Then, it exploits erasure. Erasing means eroding step-by-step a derivation or a term, according to the type that drives its construction.
%%%%%%%%%%%%%%%%%%%%%%%%%%%%%%%%%%%%%%%%%%%

\section{Basic computational properties of $\mathsf{LEM}$}
\label{sec: basic computational properties of IMLLshpos}
%By looking at $\mathsf{LEM}$ as a pure logical system, its 
%cut-elimination would 
%be the one for $\mathsf{LL}$, once replaced \enquote{!} for 
%\enquote{$ \shpos $}. 
%However, $\mathsf{LEM}$ is a type-assignment for terms whose 
%reduction rules (see  Figure~\ref{fig: term reduction rules})  
%perform duplication and erasure for values only. Consequently,   
%cut-elimination steps exist that do not correspond to any given 
%reduction steps on the terms. For example, let:
The observations in the previous sections lead us to set the reduction rules on terms, which allow duplication and erasure for values only, as in Figure~\ref{fig: term reduction rules}.
Those reductions are more restrictive than the cut-elimination
steps that we could perform on $\mathsf{LEM}$ if we look at
it as it was a pure logical system, i.e.~not a type-assignment. Since the cut-elimination of $\mathsf{LEM}$
works as in $\mathsf{LL}$, once replaced \enquote{!} for \enquote{$ \shpos $}, we can observe the effect of moving the $ cut $ in:
\begin{equation}\label{eqn: non linear duplication}
\AxiomC{\vdots}
\noLine
\UnaryInfC{$y: \shpos A, z: \shpos \mathbf{1} \vdash yz:  \mathbf{1}$}
\RightLabel{$p$}
\UnaryInfC{$y: \shpos A, z: \shpos \mathbf{1} \vdash yz: \shpos  \mathbf{1}$}
\AxiomC{$\vdots$}
\noLine
\UnaryInfC{$\Gamma, x_1: \shpos \mathbf{1}, x_2: \shpos \mathbf{1} \vdash M:\ \tau$}
\AxiomC{\vdots}
\noLine
\UnaryInfC{$\vdash I: \mathbf{1}$}
\RightLabel{$c$}
\BinaryInfC{$\Gamma, x: \shpos \mathbf{1} \vdash \mathtt{ copy} ^{I}_{\mathbf{1}}\, x \mathtt{ \   as \ } y,z \mathtt{ \   in \ }M: \tau$}
\RightLabel{$cut$}
\BinaryInfC{$\Gamma, y: \shpos A, z: \shpos \mathbf{1} \vdash \mathtt{ copy} ^{I}_{\mathbf{1}}\, yz \mathtt{ \   as \ } y,z \mathtt{ \   in \ }M: \tau$}
\DisplayProof
\end{equation}
upward, in order to eventually eliminate it.
%where $A\triangleq\forall \alpha. 
%(\alpha \multimap \alpha)\multimap (\alpha \multimap \alpha)$. 
The move would require to duplicate the \emph{open} term $yz$, erroneously 
yielding a non linear term. So, at the proof-theoretical level, moves 
of the cut rule exist that cannot correspond to any reduction on terms. 
In order to circumvent the here above misalignment, we proceed as follows:
\begin{itemize}
	\item 
	We define the \emph{lazy cut-elimination steps}. Their introduction rules out any attempt to eliminate instances 
	of cuts like~\eqref{eqn: non linear duplication}.
	The apparent drawback is to transform cuts like~\eqref{eqn: non linear duplication} into \emph{deadlocks}, i.e. into instances of $ cut $
	that we cannot eliminate.
	
	\item
	Deadlocks are not a problem. Once defined \emph{lazy types},
	we can show that a \emph{lazy cut-elimination strategy} exists
		such that it eliminates all the cut rules that may
		occur in a derivation of a lazy type. 
		The cost of the elimination is cubical
		(Theorem~\ref{thm: cut elimination for downarrow IMLL2}).
\end{itemize}
Last, we show that the reduction on terms in Figure~\ref{fig: term reduction rules} and Figure~\ref{fig: term commuting conversions} enjoy Theorem~\ref{thm:Subject Reduction for IMLL2shpos}, i.e.~Subject reduction.

%%%%%%%%%%
\subsection{Cut-elimination and  its cubical complexity}
\label{Cut elimination, complexity, and subject reduction for IMLL2^shpos}
\begin{defn}[The cuts of $\mathsf{LEM} $]
\label{defn: definitions for cut} 
Let $ (X,Y) $ identify an instance:
\begin{equation}\label{eqn: cut elimination format}
\AxiomC{\vdots}
\RightLabel{$X$}
\UnaryInfC{$\Gamma \vdash N:\sigma$}
\AxiomC{\vdots}
\RightLabel{$Y$}
\UnaryInfC{$\Delta, x:\sigma \vdash M: \tau$}
\RightLabel{$ cut $}
\BinaryInfC{$\Gamma,\Delta \vdash M[N/x]: \tau$}
\DisplayProof
\end{equation}
of the rule $ cut $ that occurs in a given derivation 
$ \mathcal{D} $ of 
$\mathsf{LEM}$, where $X$ and $Y$ are two of the rules in Figure~\ref{fig: the system shpos add IMLL2}. 
\emph{Axiom cuts} involve $ ax $, and are of the form $(X,ax)$ or $(ax,Y)$, for some $ X$ and $ Y $. 
\emph{Exponential cuts} are ($p$,$d$), ($p$,$w$), and ($p$,$c$). 
\emph{Principal cuts} are $(\multimap$R$,\multimap$L$)$,
$(\forall$R$,\forall$L$)$ and every exponential cut.
\emph{Symmetric cuts} contain axiom and principal cuts.
Every symmetric cut that is not exponential is \emph{multiplicative}. 
\emph{Commuting cuts} are all the remaining instances of $ cut $, not mentioned here above, $(p, p)$ included, for example.
\par
A \textit{lazy cut} is every instance of the cut~\eqref{eqn: cut elimination format} which is both exponential and such that 
$N$ is a value.
\par
A \emph{deadlock} is every instance of the cut~\eqref{eqn: cut elimination format} which is both exponential and such that $\Gamma\neq \emptyset$. Otherwise, it is \emph{safe}.
\qed
\end{defn}

%Let us recall from Section~\ref{sec: background} that $\mathcal{D}^{cf}$ 
%stands for a cut-free derivation and that 
%$\mathcal{D}^{\Gamma}_\sigma$ is the $\eta$-expansion of a cut-free derivation $\mathcal{D}$.  

\begin{figure}[t]
\begin{equation*}
\scalebox{0.6}{$
\begin{aligned}
\AxiomC{$\mathcal{D}$}
	\noLine
	\UnaryInfC{$\Gamma, x:\sigma \vdash M: A$}
	\RightLabel{$\multimap R$}
	\UnaryInfC{$\Gamma \vdash \lambda x.M: \sigma\multimap A$}
	\AxiomC{$\mathcal{D}'$}
	\noLine
	\UnaryInfC{$\Delta\vdash N:\sigma$}
	\AxiomC{$\mathcal{D}''$}
	\noLine
	\UnaryInfC{$\Theta, z:A  \vdash P:\tau$}
	\RightLabel{$\multimap L$}
	\BinaryInfC{$\Delta,\Theta, y:\sigma\multimap A \vdash P[yN/z]:\tau$}
	\RightLabel{$cut$}
	\BinaryInfC{$\Gamma, \Delta, \Theta \vdash P[(\lambda x.M)N/z]:\tau$}
	\DisplayProof
&\red \ 
\AxiomC{$\mathcal{D}'$}
	\noLine
	\UnaryInfC{$\Delta\vdash N:\sigma$}
	\AxiomC{$\mathcal{D}$}
	\noLine
	\UnaryInfC{$\Gamma, x:\sigma \vdash M:A$}
	\RightLabel{$cut$}
	\BinaryInfC{$\Gamma, \Delta \vdash M[N/x]:A$}
	\AxiomC{$\mathcal{D}''$}
	\noLine
	\UnaryInfC{$\Theta, z:A \vdash P:\tau$}
	\RightLabel{$cut$}
	\BinaryInfC{$\Gamma, \Delta, \Theta \vdash P[M[N/x]/z]:\tau$}
	\DisplayProof\\ \\ 
		\AxiomC{$\mathcal{D}$}
	\noLine
	\UnaryInfC{$\Gamma \vdash M: A\langle\gamma/\alpha\rangle$}
	\RightLabel{$\forall R$}
	\UnaryInfC{$\Gamma \vdash M: \forall\alpha. A$}
	\AxiomC{$\mathcal{D}'$}
	\noLine
	\UnaryInfC{$\Delta, x: A\langle B/\alpha\rangle \vdash N:\tau$}
	\RightLabel{$\forall L$}
	\UnaryInfC{$\Delta, x: \forall\alpha. A \vdash N:\tau$}
	\RightLabel{$cut$}
	\BinaryInfC{$\Gamma,\Delta \vdash N[M/x]:\tau$}
	\DisplayProof
& \red \ 
	\AxiomC{$\mathcal{D} \langle B /\alpha\rangle$}
	\noLine
	\UnaryInfC{$\Gamma\vdash M:A\langle B/\alpha\rangle$}
	\AxiomC{$\mathcal{D}'$}
	\noLine
	\UnaryInfC{$\Delta, x:A\langle B/\alpha\rangle \vdash N:\tau$}
	\RightLabel{$cut$}
	\BinaryInfC{$\Gamma, \Delta \vdash N[M/x]:\tau$}
	\DisplayProof\\ \\ 
		\AxiomC{$\mathcal{D}$}
	\noLine
	\UnaryInfC{$\vdash V: \sigma$}
	\RightLabel{$p$}
	\UnaryInfC{$\vdash V: \shpos \sigma$}
	\AxiomC{$\mathcal{D}'$}
	\noLine
	\UnaryInfC{$\Gamma, x: \sigma \vdash M:\tau$}
	\RightLabel{$d$}
	\UnaryInfC{$\Gamma, y: \shpos \sigma \vdash M[y/x]:\tau$}
	\RightLabel{$cut$}
	\BinaryInfC{$\Gamma \vdash M[ V/x]:\tau$}
	\DisplayProof
& \red \ 
	\AxiomC{$\mathcal{D}$}
	\noLine
	\UnaryInfC{$\vdash V:\sigma$}
	\AxiomC{$\mathcal{D}'$}
	\noLine
	\UnaryInfC{$\Gamma, x:\sigma \vdash M:\tau$}
	\RightLabel{$cut$}
	\BinaryInfC{$\Gamma \vdash M[V/x]:\tau$}
	\DisplayProof\\\\ 
\AxiomC{$\mathcal{D}$}
\noLine
\UnaryInfC{$\vdash V:\sigma$}
\RightLabel{$p$}
\UnaryInfC{$\vdash  V:  \shpos \sigma$}
\AxiomC{$\mathcal{D}'$}
\noLine
\UnaryInfC{$\Delta \vdash M:\tau$}
\RightLabel{$w$}
\UnaryInfC{$\Delta, x: \shpos  \sigma \vdash \mathtt{discard}_{\sigma}\, x \mathtt{\ in\ }M:\tau$}
\RightLabel{$cut$}
\BinaryInfC{$\Delta \vdash \mathtt{discard}_{\sigma}\, V \mathtt{\ in\ }M:\tau$}
\DisplayProof
& \red \ 
\AxiomC{$\mathcal{D}'$}
\noLine
\UnaryInfC{$\Delta \vdash M:\tau$}
\DisplayProof\\ \\
\AxiomC{$\mathcal{D}$}
\noLine
\UnaryInfC{$\vdash V: \sigma$}
\RightLabel{$p$}
\UnaryInfC{$\vdash  V: \shpos \sigma$}
\AxiomC{$\mathcal{D}_1$}
\noLine
\UnaryInfC{$\Delta, y: \shpos \sigma, z:\shpos \sigma \vdash M: \tau$}
\AxiomC{$\mathcal{D}_2$}
\noLine
\UnaryInfC{$\vdash U :\sigma$}
\RightLabel{$c$}
\BinaryInfC{$\Delta, x:\shpos \sigma \vdash \mathtt{copy}^{U }_{\sigma}\,  x \mathtt{\ as\ } y, z \mathtt{\ in\ }M:\tau$}
\RightLabel{$cut$}
\BinaryInfC{$\Delta \vdash \mathtt{copy}^{U} _{\sigma}\, 
V \mathtt{\ as\ } y, z \mathtt{\ in \ }M:\tau$}
\DisplayProof
&\red \ 
\AxiomC{$\mathcal{D}$}
\noLine
\UnaryInfC{$\vdash V : \sigma$}
\RightLabel{$p$}
\UnaryInfC{$\vdash  V  : \shpos \sigma$}
\AxiomC{$\mathcal{D}$}
\noLine
\UnaryInfC{$\vdash V  :\sigma$}
\RightLabel{$p$}
\UnaryInfC{$\vdash  V : \shpos \sigma$}
\AxiomC{$\mathcal{D}_1$}
\noLine
\UnaryInfC{$\Delta, y: \shpos \sigma, z: \shpos \sigma \vdash M:\tau$}
\RightLabel{$cut$}
\BinaryInfC{$\Delta, z: \shpos \sigma \vdash M[V/y]:\tau$}
\RightLabel{$cut$}
\BinaryInfC{$\Delta \vdash M[V/y, V/z]:\tau$}
\DisplayProof
\end{aligned}
$}
\end{equation*}
\caption{\emph{Lazy} cut-elimination rules for the principal cuts $(\multimap$R$,\multimap$L$)$, $(\forall$L$,\forall$R$)$, 
$(p, d)$, $(p, w)$, and $(p, c)$.}
\label{fig: cut elimination exponential} % do no move this label!
\end{figure}

\noindent 
The lazy cut-elimination rules that we introduce here below 
are the standard ones, but restricted to avoid the elimination of non lazy instances of the
exponential cuts $(p, d)$, $(p, w)$ and $(p, c)$. 

\begin{defn}[Lazy cut-elimination rules]
\label{defn:Lazy cut-elimination rules}
Figure~\ref{fig: cut elimination exponential} 
introduces the \emph{lazy cut-elimination rules} for the principal 
cuts.
The elimination rules for commuting and axiom cuts are standard, so we omit
them all from Figure~\ref{fig: cut elimination exponential};
the (possibly) less obvious commuting ones can be recovered 
from the reductions on terms in Figure~\ref{fig: term commuting conversions}.
We remark that the elimination of the principal cuts
$(\forall$R$,\forall$L$)$ and $(p, d)$ does not modify the subject 
of their concluding judgment. So, we call them \textit{insignificant} as every other cut-elimination rule non influencing their concluding subject.
Given a derivation $ \mathcal{D} $, we write $\mathcal{D}\red \mathcal{D}'$ if $\mathcal{D}$ rewrites  to some $\mathcal{D}$ by 
one of the above rules.
\qed
\end{defn}
\noindent
Lazy cut-elimination is a way of preventing the erasure and the duplication of terms that are not values, and hence to restore a correspondence between cut-elimination and term reduction.  However, one can run into derivations containing exponential cuts that will never turn into lazy cuts, like the deadlock in~\eqref{eqn: non linear duplication}. The solution we adopt is to identify a set of judgments whose derivations can be rewritten into cut-free ones by a sequence of lazy cut-elimination steps.

\begin{defn}[Lazy types, lazy judgments and lazy derivations]
\label{defn: lazyness} 
We say that $\sigma$ is a \textit{lazy type} if it contains no \emph{negative} occurrences of $\forall$. 
Also, we say that $x_1:\sigma_1, \ldots, x_n:\sigma_n  \vdash M:\tau$ 
is a \textit{lazy judgment} if $\tau$ is a lazy type and 
$\sigma_1$, \ldots, $\sigma_n$ contain no \emph{positive}  
occurrences of  $\forall$. 
Last, $\mathcal{D}\triangleleft \Gamma \vdash 
M: \tau$ is called a \textit{lazy derivation}  if $\Gamma \vdash M: \tau$ 
is a lazy judgment.
\qed
\end{defn}
\noindent
Lemma~\ref{fact: quantification propagation} and 
\ref{lem: lazy derivation properties} here below, as well as 
Definition~\ref{defn: bound mio} and
\ref{defn:Height of an inference rule},  are the last preliminaries
to show the relevance of lazy cuts that occur in lazy derivations.

\begin{lem} {\ }
\label{fact: quantification propagation}
\begin{enumerate}[(1)]
\item 
\label{enum: shpos is lazy} 
Every type of the form $\shpos \sigma$ is closed and lazy.
\item 
\label{enum: closed implies positive forall} 
Every closed type has at least a positive quantification. 
\item 
\label{enum: rules c,d,w,forallL and some p are not lazy} 
Let $ \rho $ be any instance of 
$\forall \mathrm{L}$, $d$, $w$, $c$, and $ p $, the latter with a
non empty context. The conclusion of $ \rho $ is not lazy.
%The concluding judgment of $\forall \mathrm{L}$, $d$, $w$, and $c$ is
%not lazy. The same holds for $p$ with a non-empty context.
\item 
\label{enum: quantification propagation} 
%For each inference rule in $\mathsf{LEM}$ other than $cut$, if one of 
%its premises is not lazy, then its conclusion is not lazy.
Let $ \rho $ be any instance of $ ax $, $ \multimap\mathrm{R}$, 
$ \multimap\mathrm{L}$, $ \forall\mathrm{R}$, and $ p $, the latter with an empty context. If the conclusion of $ \rho $ is lazy, then, every  premise of $ \rho $ is lazy.
\item 
\label{enum: cut-free and lazy implies all judgements lazy}  
%In every  cut-free and lazy derivation $\mathcal{D}$ of $\mathsf{LEM}$, 
%all judgments that occur in $\mathcal{D}$ are lazy.  
If $\mathcal{D}$ is a cut-free and lazy derivation of $\mathsf{LEM}$, 
then  all its judgments are lazy and no occurrences of $\forall \mathrm{L}$, $d$, $w$, $c$, and $ p $, the latter with a
non empty context, can exist in $ \mathcal{D} $.
\end{enumerate}
\end{lem}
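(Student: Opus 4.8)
The plan is to prove the five items essentially in the stated order, since each later item leans on the earlier ones, and to proceed mostly by inspection of the rules in Figure~\ref{fig: the system shpos add IMLL2} together with the grammar in Definition~\ref{defn:Types for IMLL2shpos}. For item~\eqref{enum: shpos is lazy}: a type $\shpos\sigma$ is closed because the grammar clause introducing $\shpos\sigma$ requires $\sigma$ to be closed; and it is lazy because that same clause requires $\sigma$ to have no negative occurrences of $\forall$, so $\shpos\sigma$ itself has none either (wrapping $\sigma$ in $\shpos$ does not flip polarities). For item~\eqref{enum: closed implies positive forall}: argue by induction on the linear/exponential type grammar. A closed type cannot be a bare variable $\alpha$; if it is $\forall\alpha.A$ we are done (that outermost $\forall$ is positive); if it is $\sigma\multimap A$, then since the whole thing is closed, $A$ is closed (variables free in $A$ are free in $\sigma\multimap A$ unless bound, but there is no binder here), so by induction $A$ has a positive $\forall$, which remains positive in $\sigma\multimap A$; if it is $\shpos\sigma$, then $\sigma$ is closed and by induction has a positive $\forall$, still positive in $\shpos\sigma$.

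For item~\eqref{enum: rules c,d,w,forallL and some p are not lazy}: examine the conclusions rule by rule. The conclusion of $\forall\mathrm L$ has an assumption $x:\forall\alpha.A$; that $\forall$ sits in the antecedent, hence in a \emph{positive} position of an assumption, which a lazy judgment forbids. Wait --- I should be careful: "lazy judgment" forbids positive $\forall$ in the $\sigma_i$ and negative $\forall$ in $\tau$. An assumption $\forall\alpha.A$ has a positive outermost $\forall$, so the judgment is not lazy. For $d$, $w$, $c$ the conclusion carries an assumption of the form $\shpos\sigma$; by item~\eqref{enum: shpos is lazy} $\shpos\sigma$ is closed, so by item~\eqref{enum: closed implies positive forall} it contains a positive $\forall$, which is a positive $\forall$ in an assumption, so the judgment is not lazy. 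For $p$ with nonempty context, the conclusion has an assumption $\shpos\sigma_i$, and the same argument applies. Thus I use \eqref{enum: shpos is lazy} and \eqref{enum: closed implies positive forall} to get \eqref{enum: rules c,d,w,forallL and some p are not lazy}.

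For item~\eqref{enum: quantification propagation}: go through $ax$ (no premises, nothing to check), $\multimap\mathrm R$ (conclusion $\Gamma\vdash\lambda x.M:\sigma\multimap B$ lazy means $\sigma$ has no positive $\forall$ and $B$ no negative $\forall$ and $\mathrm{rng}(\Gamma)$ no positive $\forall$; the premise $\Gamma,x:\sigma\vdash M:B$ then has all antecedent types without positive $\forall$ and $B$ without negative $\forall$, hence lazy), $\multimap\mathrm L$ (check that the polarity bookkeeping across the two premises works out: in $y:\sigma\multimap B$ the subformula $\sigma$ occurs negatively in an assumption, i.e. as a "positive" occurrence for the laziness condition on assumptions --- here I must be precise about the convention, matching it to the definition of lazy judgment; the premise $\Gamma\vdash N:\sigma$ needs $\sigma$ to be a lazy type, and $B$ in the assumption $x:B$ of the other premise needs no positive $\forall$, which follows since $B$ occurs positively in $\sigma\multimap B$ hence with the antecedent's "positive" polarity), $\forall\mathrm R$ (the premise differs only by instantiating the bound variable with a fresh $\gamma$, which does not introduce $\forall$s), and $p$ with empty context (premise $\vdash M:\sigma$, and $\shpos\sigma$ lazy forces $\sigma$ lazy). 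Finally, item~\eqref{enum: cut-free and lazy implies all judgements lazy} follows by top-down induction on $\mathcal D$: the conclusion is lazy by hypothesis; by~\eqref{enum: rules c,d,w,forallL and some p are not lazy} the last rule is none of $\forall\mathrm L,d,w,c$ or $p$-with-nonempty-context, so it is $ax$, $\multimap\mathrm R$, $\multimap\mathrm L$, $\forall\mathrm R$, or $p$-with-empty-context (there is no $cut$ since $\mathcal D$ is cut-free); by~\eqref{enum: quantification propagation} its premises are lazy, so the induction goes through and in particular no forbidden rule occurs anywhere.

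The main obstacle I expect is not any single deep step but getting the \emph{polarity/sign conventions} exactly right in items~\eqref{enum: rules c,d,w,forallL and some p are not lazy}, \eqref{enum: quantification propagation}, and~\eqref{enum: cut-free and lazy implies all judgements lazy}: the definition of lazy judgment mixes "no positive $\forall$ in the antecedent types" with "no negative $\forall$ in the succedent", and one must track how $\multimap\mathrm L$ and $\multimap\mathrm R$ move a subformula between antecedent and succedent while flipping its polarity, so that "a positive occurrence in an assumption" corresponds to a genuinely negative occurrence in the implicational type and vice versa. Once a single consistent sign convention is fixed and the reading of Definition~\ref{defn: lazyness} is pinned down, every case is a short syntactic check; the only genuinely inductive arguments are the routine induction for~\eqref{enum: closed implies positive forall} and the top-down induction for~\eqref{enum: cut-free and lazy implies all judgements lazy}.
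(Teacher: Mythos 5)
Your proposal is correct and follows essentially the same route as the paper: items (1) and (2) from the grammar and a structural induction on types, item (3) by combining (1) and (2) with the observation that $\forall$L and the exponential rules place a type with a positive $\forall$ in the antecedent, item (4) by a rule-by-rule polarity check, and item (5) by induction on the cut-free derivation using (3) and (4). Your treatment of the $\multimap$L polarity bookkeeping fills in detail the paper leaves implicit, but the argument is the same.
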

\begin{proof}
Point~\ref{enum: shpos is lazy} holds by Definition~\ref{defn:Types for IMLL2shpos}. 
Point~\ref{enum: closed implies positive forall} is by a structural induction on types. 
Concerning Point~\ref{enum: rules c,d,w,forallL and some p are not lazy}, the conclusions of  $d$, $w$, $c$, and $ p $ contain $ \shpos\sigma $. This is a closed type,  hence,  by Point~\ref{enum: closed implies positive forall}, such conclusions are not lazy judgments. Moreover, 
$\forall \mathrm{L}$   introduces  a positive occurrence of
$ \forall $ in the context of its conclusion, so that this latter cannot be a lazy judgment. 
Point~\ref{enum: quantification propagation} is a case analysis on every listed inference rule.
%of $\mathsf{LEM}$ different from $cut$.  
%The relevant cases are those of the exponential rules that, by Point~\ref{enum: rules c,d,w,forallL and some p are not lazy}, always yield a non lazy conclusion, except when the rule $p$ has empty context. In this case, by Point~\ref{enum: shpos is lazy}, both the  premise and the conclusion of $p$ are lazy judgments. 
%
As for Point~\ref{enum: cut-free and lazy implies all judgements lazy},
we can proceed by structural induction on $ \mathcal{D} $. By definition,
the conclusion of $ \mathcal{D} $ is a lazy judgment. 
Point~\ref{enum: rules c,d,w,forallL and some p are not lazy} excludes
that one among $\forall \mathrm{L}$, $d$, $w$, $c$, and $ p $ (with a
non empty context) may be the last rule of $ \mathcal{D} $. So, 
only one among $ ax $, $ \multimap\mathrm{R}$, 
$ \multimap\mathrm{L}$, $ \forall\mathrm{R}$, and $ p $ (with an empty context) can be the concluding rule, say $ r $, of $ \mathcal{D} $.
Point~\ref{enum: quantification propagation} implies that 
all the premises of $ r $ are lazy. Hence, we can apply the inductive hypothesis to the derivations of the premises of $ r $  and conclude.
\end{proof}

\begin{defn}[Size of a derivation]
	\label{defn: bound mio}
	The \emph{size} $\vert\mathcal{D}\vert$ of a derivation $\mathcal{D}$ in $\mathsf{LEM}$ is defined by induction:
	\begin{enumerate}
		\item 
		If $\mathcal{D}$ is $ax$ then $\vert\mathcal{D}\vert=1$.
		\item 
		If $\mathcal{D}$ is a derivation $\mathcal{D}'$ that concludes by a
		rule with a single premise, then $\vert\mathcal{D}\vert= \vert\mathcal{D}'\vert+1$.
		\item 
		If $\mathcal{D}$ composes two derivations $\mathcal{D}'$ and $\mathcal{D}'' $ by a rule with two premises, but different from $c$, 
		then $\vert\mathcal{D}\vert= \vert\mathcal{D}'\vert+ \vert\mathcal{D}''\vert+1$. 
		\item 
		\label{enum:on standard dimension of c}
		If $\mathcal{D}$ composes two derivations $\mathcal{D}'$ and $\mathcal{D}'' $ by the rule $c$,  
		then $\vert\mathcal{D}\vert= \vert\mathcal{D}'\vert+ \vert\mathcal{D}''\vert+3$. 
		\qed
	\end{enumerate}
\end{defn}

\begin{rem}
Adding ``3'' instead of the possibly expected ``1'' in the clause~\eqref{enum:on standard dimension of c} of Definition~\ref{defn: bound mio} highlights the non linearity that the instances of $ c $ 
introduce in the course of the lazy cut-elimination on \textsf{LEM} of Definition~\ref{defn:Lazy cut-elimination strategy} below.
\qed
\end{rem}

\begin{lem} 
\label{lem: lazy derivation properties}
Let $\mathcal{D}\triangleleft x_1: \sigma_1, \ldots, x_n: \sigma_n \vdash M: \sigma$ be a cut-free and lazy derivation.
\begin{enumerate}[(1)]
\item 
\label{enum: cut free and lazy implies linear normal form} 
$M$ is a linear $\lambda$-term in normal form.
\item  
\label{enum: size term bounded size types LEML} 
$\vert M \vert \leq \sum_{i=1}^n \vert \sigma_i \vert +  \vert \sigma \vert $.
\item 
\label{enum: relation lazy cut free derivation and term} 
$\vert \mathcal{D} \vert= \vert M \vert +k$, where $k$  is the number of $\forall$ and $\shpos$ occurring in   $\sigma_1, \ldots, \sigma_n, \sigma$.
\item 
\label{enum: lazy cut free derivations are smaller if terms are smaller} 
If $\mathcal{D}'\triangleleft x_1: \sigma_1, \ldots, x_n: \sigma_n \vdash N: \sigma$ is  a lazy and cut-free derivation, then $\vert N \vert \leq \vert M \vert$ implies $\vert \mathcal{D}'\vert \leq \vert \mathcal{D} \vert $.
\item 
\label{enum: finite number of normal forms} 
The set of values with lazy type $ \sigma $ is finite.
\end{enumerate}
\end{lem}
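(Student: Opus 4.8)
The plan is to prove the five items together by structural induction on the cut-free lazy derivation $\mathcal{D}$, since they are mutually supporting: the bound on $\vert M\vert$ in item~(2) feeds the comparison in item~(4), and items~(2) and~(4) together with item~(3) give the finiteness in item~(5). By Lemma~\ref{fact: quantification propagation}, item~(\ref{enum: cut-free and lazy implies all judgements lazy}), the only rules that can occur in $\mathcal{D}$ are $ax$, $\multimap$R, $\multimap$L, $\forall$R, and $p$ with an empty context; moreover every judgment occurring in $\mathcal{D}$ is lazy. This drastically cuts down the case analysis: no $d$, $w$, $c$, or $\forall$L, and $p$ only in the degenerate shape $\vdash V:\sigma \Rightarrow \vdash V:\shpos\sigma$.

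For item~(1), stability of linear terms under the typing rules is routine — $ax$ gives a variable, $\multimap$R an abstraction, $\multimap$L a term of the form $M[yN/x]$, and the linearity constraint on $\multimap$L guarantees no variable is duplicated; normality follows because the only way to create a $\beta$-redex would be to apply an abstraction to an argument, which none of these rules does (in particular $\multimap$L places the bound variable $y$, not a $\lambda$, in function position). For item~(2) I would track, in the induction, the inequality $\vert M\vert \le \vert\Gamma\vert + \vert\sigma\vert$: the $ax$ case is $1 \le \vert A\vert + \vert A\vert$; $\multimap$R adds one node to the term and one node ($\multimap$) to the type moved from context to conclusion, so the inequality is preserved; $\multimap$L is the delicate arithmetic case, where $\vert M[yN/x]\vert = \vert M\vert + \vert N\vert + 1$ must be bounded using the two inductive hypotheses plus $\vert \sigma\multimap B\vert = \vert\sigma\vert+\vert B\vert+1$; $\forall$R changes neither term nor (the multiset of) types in a way that breaks the bound; and the $p$ case with empty context is immediate since $\vert V\vert$ is unchanged and $\vert\shpos\sigma\vert > \vert\sigma\vert$. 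Item~(3) is the cleanest: by Lemma~\ref{lem: lazy derivation properties}, item~(\ref{enum: cut free and lazy implies linear normal form}), $M$ is a normal linear $\lambda$-term, and its syntax tree has one node per rule of $\mathcal{D}$ except that $\forall$R and $p$ (empty context) do not touch the term — each such rule instead consumes exactly one $\forall$ or one $\shpos$ occurrence appearing in the types of the end-judgment, whence $\vert\mathcal{D}\vert = \vert M\vert + k$; I would make this precise by also showing, in the induction, that $k$ equals the number of $\forall,\shpos$ in the types of the conclusion.

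Item~(4) follows from item~(3): if $\mathcal{D}$ and $\mathcal{D}'$ have the \emph{same} end-context and end-type, then the corresponding count $k$ is identical for both, so $\vert\mathcal{D}'\vert = \vert N\vert + k \le \vert M\vert + k = \vert\mathcal{D}\vert$. Finally, item~(5): a value of lazy type $\sigma$ is by Fact~\ref{rem: every normal term has Pi1 type} (or just by $\mathsf{IMLL}$-typability) a closed normal linear $\lambda$-term, so it is the subject of some cut-free lazy derivation $\mathcal{D}\triangleleft\ \vdash M:\sigma$ (cut-elimination for $\mathsf{LEM}$, together with the fact that the empty context stays empty, yields such a derivation); by item~(2) every such $M$ satisfies $\vert M\vert \le \vert\sigma\vert$, and there are only finitely many linear $\lambda$-terms (up to $\alpha$-equivalence) of bounded size, hence finitely many values of type $\sigma$. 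The main obstacle I anticipate is the $\multimap$L case of item~(2): one must be careful that the context $\Gamma\multimap B$-assumption is being moved correctly between the two premises and the conclusion, and that the substitution $M[yN/x]$ is genuinely size-additive (which uses linearity — $x$ occurs exactly once in $M$, so no blow-up), so the bookkeeping $\vert\Gamma\vert+\vert\Delta\vert+\vert\sigma\multimap B\vert$ on the conclusion side matches $(\vert\Gamma\vert+\vert\sigma\vert) + (\vert\Delta\vert+\vert B\vert) + 1$ coming from the two inductive hypotheses. A secondary subtlety is that items~(1)–(4) are really statements about \emph{all} judgments in $\mathcal{D}$, not just its conclusion — which is exactly why Lemma~\ref{fact: quantification propagation}, item~(\ref{enum: quantification propagation}), propagating laziness to premises, is needed to run the induction at all.
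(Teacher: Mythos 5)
Your proposal is correct and follows essentially the same route as the paper: invoke Lemma~\ref{fact: quantification propagation}.\ref{enum: cut-free and lazy implies all judgements lazy} to restrict the possible rules to $ax$, $\multimap$R, $\multimap$L, $\forall$R and $p$ with empty context, prove items (1)--(3) by structural induction on $\mathcal{D}$, and obtain (4) as a corollary of (3) and (5) as a corollary of (2). The paper states this only as a sketch, and your filled-in details (in particular the size bookkeeping for $\multimap$L and the observation that a lazy axiom can introduce no $\forall$ or $\shpos$) are the right ones.
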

\begin{proof}
The assumptions on $ \mathcal{D}$ allow to apply Lemma~\ref{fact: quantification propagation}.\ref{enum: cut-free and lazy implies all judgements lazy}
which implies that every judgment in $\mathcal{D}$ is lazy and
free of $\forall$L, $d$, $w$, $c$ and $ p $ (with a non empty context). 
Hence, we can prove Points~\ref{enum: cut free and lazy implies linear normal form}-\ref{enum: relation lazy cut free derivation and term} by induction on the structure of $\mathcal{D}$. Point~\ref{enum: lazy cut free derivations are smaller if terms are smaller} is a corollary of
Point~\ref{enum: relation lazy cut free derivation and term}. Point~\ref{enum: finite number of normal forms} is a corollary of 
Point~\ref{enum: size term bounded size types LEML}.
\end{proof}

\begin{defn}[Height of an inference rule]
\label{defn:Height of an inference rule}
Let $\mathcal{D}\triangleleft \Gamma\vdash M:\sigma$ be a derivation and 
$ r $ a rule instance in it. The \emph{height of $r$}, written $ h(r) $,  is the number of rule instances from the conclusion 
$ \Gamma\vdash M:\sigma $ of $ \mathcal{D} $ upward to the conclusion of $ r $. The \textit{height of $\mathcal{D}$}, written $h(\mathcal{D})$, is the largest $h(r)$ among its rule instances.
\qed
\end{defn}

\noindent
Lemma~\ref{lem: existence of the safe exponential cut} 
and~\ref{lem: above the cut} assure that we can eliminate exponential lazy cuts from a lazy derivation.

\begin{lem}[Existence of a lazy $ cut $] 
\label{lem: existence of the safe exponential cut} 
Let $\mathcal{D}$ be a lazy derivation with only exponential cuts in it.
At least one of those cuts is \emph{safe}.
\end{lem}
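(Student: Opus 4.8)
The goal is to show that a lazy derivation whose only cuts are exponential must contain at least one \emph{safe} exponential cut, i.e.\ one whose left premise has an empty context. The natural strategy is a proof by contradiction combined with a minimality/extremal argument on the position of cuts inside $\mathcal{D}$. So, first I would assume toward a contradiction that \emph{every} exponential cut in $\mathcal{D}$ is a deadlock, meaning each has a non-empty context $\Gamma$ on the left premise. Among all the exponential cuts of $\mathcal{D}$, pick one, call it $\kappa$, of \emph{maximal height} $h(\kappa)$ (equivalently, one occurring \enquote{highest} in the derivation tree, with no further cut strictly above it along the relevant branch). The left premise of $\kappa$ is the conclusion of a rule $p$ (since exponential cuts are $(p,d)$, $(p,w)$, $(p,c)$), and by our assumption this $p$-instance has a non-empty context.

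The key step is then to trace upward from that $p$-instance. By Proposition~\ref{prop: exponential context} the context of the conclusion of this $p$ is strictly exponential, so it consists of assumptions of the form $x_i:\shpos\sigma_i$. Now I want to argue that, above a $p$ with non-empty context, and given that no cut occurs there (by maximality of $h(\kappa)$ — any cut above would itself be exponential, hence available as a candidate of greater height, contradicting the choice), the only rules that can produce such a strictly-exponential hypothesis are $d$, $w$, $c$, or another $p$ — all of which, by Lemma~\ref{fact: quantification propagation}.\ref{enum: rules c,d,w,forallL and some p are not lazy}, have non-lazy conclusions. More precisely, the subderivation rooted at the left premise of $\kappa$ is cut-free (by maximality) but is \emph{not} lazy, because its conclusion $x_1:\shpos\sigma_1,\ldots,x_n:\shpos\sigma_n\vdash V:\shpos\sigma$ with $n\geq 1$ carries the closed type $\shpos\sigma_1$ in the context, which by Lemma~\ref{fact: quantification propagation}.\ref{enum: closed implies positive forall} has a positive quantification, violating the laziness requirement on contexts. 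This is fine in isolation, but I must derive a contradiction with the global laziness of $\mathcal{D}$: the point is that laziness must be \emph{inherited} upward along the branch connecting the root of $\mathcal{D}$ to the conclusion of $\kappa$.

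The crux is therefore the following propagation claim: if we descend from $\kappa$ toward the root of $\mathcal{D}$, each rule we pass through is one of $ax$, $\multimap$R, $\multimap$L, $\forall$R, $p$-with-empty-context, or $cut$ — because any $\forall$L, $d$, $w$, $c$, or $p$-with-non-empty-context on that branch would, by Lemma~\ref{fact: quantification propagation}.\ref{enum: rules c,d,w,forallL and some p are not lazy}, have a non-lazy conclusion, and then I must propagate this failure of laziness all the way down to the root. For the non-$cut$ rules this is the contrapositive of Lemma~\ref{fact: quantification propagation}.\ref{enum: quantification propagation} (laziness of conclusion forces laziness of premises, hence a non-lazy premise forces a non-lazy conclusion). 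The delicate case is when the branch passes through a $cut$: that cut is exponential by hypothesis, so it is itself $(p,d)$, $(p,w)$, or $(p,c)$, and I would need a small lemma analogous to Lemma~\ref{fact: quantification propagation}.\ref{enum: quantification propagation} saying that if the conclusion of such a $cut$ is lazy then so is the relevant premise (the one on the branch). For $(p,d)$ the conclusion subject is $M[y/x]$ cut against $V$, types are $\Gamma\vdash V:\shpos\sigma$ on the left and $\Gamma$ may be non-empty — here precisely is where I expect the argument to bite: the cut's conclusion inherits the non-empty exponential context $\Gamma$ from the left premise (it is $\Gamma,\Delta\vdash\ldots$), so it is already non-lazy, and I chase that downward.

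I expect the \textbf{main obstacle} to be organizing this \enquote{non-laziness propagates downward past exponential cuts} argument cleanly: I need that the non-empty strictly-exponential context of the offending $p$ (or of the deadlocked cut) survives as a non-empty strictly-exponential — hence non-lazy — context in every judgment below it, until it reaches the root, contradicting the assumption that $\mathcal{D}$ is lazy. The reason it survives is that contexts are only ever \emph{merged} (by $cut$, $\multimap$L) or have hypotheses \emph{added} (by $d$, $w$, $c$, $\forall$L) or \emph{discharged} by $\multimap$R and $\forall$R — but $\multimap$R and $\forall$R only discharge/quantify, they never remove an exponential hypothesis from a context of one of their premises in a way that could launder away a $\shpos$-typed variable, since $ax$ cannot introduce exponential types and a $\shpos\sigma_i$ hypothesis can only disappear via $\multimap$R (turning $\Gamma,x:\shpos\sigma\vdash M:B$ into $\Gamma\vdash\lambda x.M:\shpos\sigma\multimap B$), which however makes the conclusion type carry $\shpos\sigma$ in a negative position — and $\shpos\sigma$ being closed contains a positive $\forall$, so the conclusion is still non-lazy. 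Carefully enumerating which rule can eliminate the witness of non-laziness, and checking that in every such case non-laziness merely migrates (from context to succedent, or stays in context) rather than vanishing, is the technical heart; once that bookkeeping is in place, the contradiction is immediate and the lemma follows.
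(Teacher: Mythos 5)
Your proof is correct: the core of the argument — assume every exponential cut is a deadlock, observe that the left premise of such a cut is a judgment $\Delta\vdash N:\shpos\sigma$ with $\Delta\neq\emptyset$ strictly exponential (Proposition~\ref{prop: exponential context}), hence non-lazy by Lemma~\ref{fact: quantification propagation}.\ref{enum: shpos is lazy} and \ref{fact: quantification propagation}.\ref{enum: closed implies positive forall}, and then propagate non-laziness down to the root via the contrapositive of Lemma~\ref{fact: quantification propagation}.\ref{enum: quantification propagation} (plus \ref{fact: quantification propagation}.\ref{enum: rules c,d,w,forallL and some p are not lazy} for the rules whose conclusions are never lazy) — is exactly the paper's. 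The one genuine divergence is your extremal choice: you pick a deadlocked cut of \emph{maximal} height, whereas the paper picks one of \emph{minimal} height. The paper's choice guarantees that the path from the chosen cut down to the root contains no further cuts, so the downward propagation needs nothing beyond the already-proved Lemma~\ref{fact: quantification propagation}; your choice forces you to handle intermediate cuts on that path, which you do correctly (each such cut is itself a deadlock by the contradiction hypothesis, so its conclusion carries a non-empty strictly exponential context and is non-lazy outright), but this is an extra case the minimal-height choice avoids. Relatedly, the portion of your plan devoted to tracing \emph{upward} and establishing that the subderivation above the left premise is cut-free is not needed for this lemma: maximal height and cut-freeness above the $p$ are precisely what the \emph{next} lemma (Lemma~\ref{lem: above the cut}) uses to conclude that $N$ is a value, and the paper keeps the two concerns separate by using the opposite extremal choice here. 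Your careful check that $\multimap$R cannot ``launder away'' the non-lazy witness (the positive $\forall$ inside $\shpos\sigma$ reappearing as a negative occurrence in $\shpos\sigma\multimap B$) is sound, but it is already subsumed by the case analysis behind Lemma~\ref{fact: quantification propagation}.\ref{enum: quantification propagation}, so you need not redo it.
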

\begin{proof}
Let $\Gamma \vdash M: \tau$ be the conclusion of $ \mathcal{D} $.
By contradiction, let us suppose that every occurrence of (exponential)
$ cut $ in $ \mathcal{D} $ is a deadlock.
At least one of them, say $ c_m $, has minimal height $ h(c_m) $, i.e.~no other $ cut $ occurs in the sequence of rule instances, say $ r_1,\ldots r_n $, from the conclusion of $ c_m $ down to the
one of $ \mathcal{D} $.
Since $ c_m $ is a deadlock, its leftmost premise has form $ \Delta \vdash N:\shpos\sigma $, 
where $ \Delta\neq \emptyset $. 
By Proposition~\ref{prop: exponential context}, $\Delta$ is strictly exponential and the whole $ \Delta \vdash N:\shpos\sigma $ is a non lazy
judgment by Lemma~\ref{fact: quantification propagation}.\ref{enum: shpos is lazy} and Lemma~\ref{fact: quantification propagation}.\ref{enum: closed implies positive forall}. 
The contraposition of
Lemma~\ref{fact: quantification propagation}.\ref{enum: quantification propagation} 
implies that the non lazy judgment in $ c_m $ can only be transformed to 
a non lazy judgment by every $ r_i $, with $ 1\leq i\leq n $, letting the 
conclusion of $ \mathcal{D} $ non lazy, so
contradicting the assumption.
Hence, $ c_m $ must be safe.
\end{proof}

\begin{lem}[Eliminating a lazy $ cut $]
\label{lem: above the cut}
Let $\mathcal{D}$ be a lazy derivation with only exponential cuts in it.
A lazy derivation $\mathcal{D}^*$ exists such that 
both $\mathcal{D} \red \mathcal{D}^*$, by reducing a lazy cut, and 
$\vert\mathcal{D}^*\vert < \vert\mathcal{D}\vert$.
\end{lem}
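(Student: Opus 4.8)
The plan is to combine Lemma~\ref{lem: existence of the safe exponential cut} with a case analysis of the lazy cut-elimination rules of Figure~\ref{fig: cut elimination exponential} restricted to exponential cuts, checking in each case that the step stays inside the class of lazy derivations and strictly decreases the size. First I would invoke Lemma~\ref{lem: existence of the safe exponential cut} to obtain a safe exponential cut $c_s$ in $\mathcal{D}$; being safe, its leftmost premise has empty context, so by Proposition~\ref{prop: exponential context} it has the form $\vdash V:\shpos\sigma$ and must be concluded by a $p$-rule whose own premise is $\vdash V:\sigma$. Since $\mathcal{D}$ has only exponential cuts, $c_s$ is one of $(p,d)$, $(p,w)$, or $(p,c)$, and moreover it is a \emph{lazy} cut: its leftmost premise $\vdash V:\shpos\sigma$ has a closed type inhabited by the normal closed term $V$, i.e.\ $V$ is a value in the sense of Definition~\ref{defn:Values among linear lambda-terms}, so one of the three lazy cut-elimination rules of Figure~\ref{fig: cut elimination exponential} applies at $c_s$. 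Performing that step yields $\mathcal{D}^*$ with $\mathcal{D}\red\mathcal{D}^*$.

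Next I would verify laziness is preserved. The conclusion judgment of $\mathcal{D}^*$ is literally the same judgment as the conclusion of $\mathcal{D}$ (the $(p,d)$ step does not even change the subject, and $(p,w)$, $(p,c)$ change only the subject, not the types), hence $\mathcal{D}^*\triangleleft\Gamma\vdash M':\tau$ with $\Gamma\vdash\cdot:\tau$ still a lazy judgment; so $\mathcal{D}^*$ is a lazy derivation by Definition~\ref{defn: lazyness}. (If one worries about intermediate judgments, note that the rule $p$ on an empty context and the rules $d,w,c$ acting on the discharged $\shpos\sigma$ were the only places where non-lazy judgments could sit, and the reduction removes exactly the offending $p$/$d$/$w$/$c$ pair.)

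Then I would check $\vert\mathcal{D}^*\vert<\vert\mathcal{D}\vert$ using the size measure of Definition~\ref{defn: bound mio}, going case by case on Figure~\ref{fig: cut elimination exponential}. For $(p,d)$: the redex replaces a $p$, a $d$, and a $cut$ (three rule instances, plus the subderivations $\mathcal{D}$ of $\vdash V:\sigma$ and $\mathcal{D}'$) by a single $cut$ over the same two subderivations, so the size drops by $2$. For $(p,w)$: the redex is a $p$, a $w$, a $cut$, and the subderivation $\mathcal{D}$ of $\vdash V:\sigma$ (whose size is $\ge 1$), all replaced by just $\mathcal{D}'$, so the size drops by at least $\vert\mathcal{D}\vert_{\text{of }V}+3\ge 4$. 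For $(p,c)$: on the left the redex is a $p$, a $c$ (which by clause~\eqref{enum:on standard dimension of c} of Definition~\ref{defn: bound mio} counts as $3$), a $cut$, the two copies are unnecessary to count since we compare totals; the reduct has two $p$'s, two $cut$'s, one extra copy of $\mathcal{D}$ (the derivation of $\vdash V:\sigma$), and drops the subderivation $\mathcal{D}_2$ of $\vdash U:\sigma$. Writing $a=\vert\mathcal{D}\vert_V$, $d_1=\vert\mathcal{D}_1\vert$, $a_2=\vert\mathcal{D}_2\vert$, the left side has size $a+(d_1+a_2+3)+1 = a+d_1+a_2+4$ wait—one must also add the outer $cut$, giving $a+d_1+a_2+3+1+1 = a+d_1+a_2+5$; the reduct has size $(a+1)+(a+1)+d_1+1+1 = 2a+d_1+4$. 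The strict decrease $2a+d_1+4 < a+d_1+a_2+5$ reduces to $a < a_2+1$, i.e.\ $a\le a_2$. This is \emph{not} automatic, and here is the main obstacle: the lazy cut-elimination rule for $(p,c)$ in Figure~\ref{fig: cut elimination exponential} duplicates the derivation $\mathcal{D}$ of the contracted value $V$, and to get a net decrease one needs that $\mathcal{D}$ is no larger than the witness derivation $\mathcal{D}_2$ of $\vdash U:\sigma$. I expect this is forced by the fact that the cut being eliminated is \emph{lazy}, so $V$ is itself a value of the lazy type $\sigma$; by Lemma~\ref{lem: lazy derivation properties}.\ref{enum: lazy cut free derivations are smaller if terms are smaller} together with Lemma~\ref{lem: lazy derivation properties}.\ref{enum: size term bounded size types LEML}, all cut-free lazy derivations of closed values of $\sigma$ have size bounded by a function of $\vert\sigma\vert$ alone, and one arranges (or may already assume, by a prior normalization of the witness derivations carried in $c$) that $\mathcal{D}_2$ is chosen of maximal such size, or alternatively one strengthens the statement to carry an invariant bounding $\vert\mathcal{D}\vert_V$ by $\vert\mathcal{D}_2\vert$ for every instance of $c$. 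Resolving this bookkeeping — i.e.\ pinning down exactly which witness derivation sits in the $c$-rule and why copying $V$'s derivation cannot blow up the size — is the crux; the remaining cases $(p,d)$ and $(p,w)$, and the preservation of laziness, are routine.
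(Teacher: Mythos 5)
Your overall strategy is the paper's, but there is a genuine gap at the very first step: a \emph{safe} exponential cut is not automatically a \emph{lazy} one. Safety only says that the left premise of the cut has an empty context; it does not say that its subject is normal. If the subderivation $\mathcal{D}'\triangleleft\ \vdash N:\sigma$ sitting above the $p$ still contains exponential cuts, then $N$ may contain $\mathtt{discard}$/$\mathtt{copy}$ redexes or $\beta$-redexes, so it need not be a value and none of the exponential rules of Figure~\ref{fig: cut elimination exponential} applies to that cut. The paper closes this by reducing the safe cut of \emph{maximal height}: if $\mathcal{D}'$ contained any cut it would, being lazy (because $\shpos\sigma$ is a lazy type, Lemma~\ref{fact: quantification propagation}.\ref{enum: shpos is lazy}) and containing only exponential cuts, contain a \emph{safe} one by Lemma~\ref{lem: existence of the safe exponential cut}, contradicting maximality; hence $\mathcal{D}'$ is cut-free, and Lemma~\ref{lem: lazy derivation properties}.\ref{enum: cut free and lazy implies linear normal form} gives that $N$ is a closed normal linear term, i.e.\ a value. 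You never say which safe cut you reduce, and without the maximality argument your assertion that ``$V$ is a value'' is unjustified.

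On the $(p,c)$ case your arithmetic is correct and you have located the crux ($a\le a_2$ in your notation: the derivation of $\vdash V:\sigma$ must be no larger than the witness derivation of $\vdash U:\sigma$), but you leave it as bookkeeping to be resolved. The paper's resolution is exactly the first alternative you float: by Lemma~\ref{lem: lazy derivation properties}.\ref{enum: finite number of normal forms} the values of type $\sigma$ form a finite set, so one may assume without loss of generality that the witness $U$ carried by the $c$ rule has largest size among them; then $\vert V\vert\le\vert U\vert$, and Lemma~\ref{lem: lazy derivation properties}.\ref{enum: lazy cut free derivations are smaller if terms are smaller} turns this into the required inequality between the two derivations. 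Note that this last step again requires both derivations to be lazy and \emph{cut-free}, which is once more supplied by the maximal-height choice, since the witness subderivation also sits strictly above the chosen cut. So the single missing idea --- reducing the safe cut of maximal height --- is what simultaneously makes the chosen cut lazy and makes your size estimate go through; with it added, your proof coincides with the paper's.
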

\begin{proof}
Lemma~\ref{lem: existence of the safe exponential cut}
implies that $ \mathcal{D} $ contains at least an exponential cut which is  safe. Let us take $(p, X)$ with maximal height $ h((p, X)) $ among those safe  instances of $ cut $. 
So, if $ (p, X) $ has form:
\begin{equation}
\label{eqn: above the cut}
\AxiomC{$\mathcal{D}'$}
\noLine
\UnaryInfC{$\vdash N: \sigma$}
\RightLabel{$p$}
\UnaryInfC{$\vdash  N: \shpos \sigma$}
\AxiomC{\vdots}
\RightLabel{$X$}
\UnaryInfC{$\Delta, x:\shpos \sigma \vdash M:\tau$}
\RightLabel{$cut$ \enspace ,}
\BinaryInfC{$ \Delta \vdash M[N/x]:\tau$}
\DisplayProof
\end{equation}
\noindent
then $\mathcal{D}'$ is a lazy derivation because $\shpos \sigma$ is a lazy type by  Lemma~\ref{fact: quantification propagation}.\ref{enum: shpos is lazy}.  Since $\mathcal{D}'$ is lazy and can only contain exponential cuts, by Lemma~\ref{lem: existence of the safe exponential cut} and by maximality of  $ h((p, X)) $,   it is forcefully cut-free.
So, by Lemma~\ref{lem: lazy derivation properties}.\ref{enum: cut free and lazy implies linear normal form}, we have that $ N $ is a value, 
i.e.~$(p, X)$ is lazy and we can reduce it to obtain $ \mathcal{D}^*$. 
If $ X $ in~\eqref{eqn: above the cut} is $d$ or $w$, then it is 
simple to show
that $\vert\mathcal{D}^*\vert<\vert\mathcal{D}\vert$. 
Let $ X $ be $ c $. Then, \eqref{eqn: above the cut} is:
\begin{equation}
\label{eqn:above the cut (p,c)}
\AxiomC{$\mathcal{D}'$}
\noLine
\UnaryInfC{$\vdash V: \sigma$}
\RightLabel{$p$}
\UnaryInfC{$\vdash  V: \shpos \sigma$}
\AxiomC{$\mathcal{D}''$}
\noLine
\UnaryInfC{$\Delta, y: \shpos \sigma, z:\shpos \sigma \vdash M': \tau$}
\AxiomC{$\mathcal{D}'''$}
\noLine
\UnaryInfC{$\vdash U: \sigma$}
\RightLabel{$c$}
\BinaryInfC{$\Delta, x:\shpos \sigma \vdash \mathtt{copy}^{U}_{\sigma}\,  x \mathtt{\ as\ }y,z \mathtt{\ in\ }M':\tau$}
\RightLabel{$cut$ \enspace ,}
\BinaryInfC{$\Delta \vdash \mathtt{copy}^{U} _{\sigma}\,   V \mathtt{\ as\ }y,z \mathtt{\ in \ }M':\tau$}
\DisplayProof
\end{equation}
with $\mathcal{D}'''$ lazy and cut-free for the same reasons as $ \mathcal{D}' $ is. So, \eqref{eqn:above the cut (p,c)} can reduce to:
\begin{equation}
\label{eqn:reducing above the cut (p,c)}
\AxiomC{$\mathcal{D}'$}
\noLine
\UnaryInfC{$\vdash V : \sigma$}
\RightLabel{$p$}
\UnaryInfC{$\vdash  V  : \shpos \sigma$}
\AxiomC{$\mathcal{D}'$}
\noLine
\UnaryInfC{$\vdash V  :\sigma$}
\RightLabel{$p$}
\UnaryInfC{$\vdash  V : \shpos \sigma$}
\AxiomC{$\mathcal{D}''$}
\noLine
\UnaryInfC{$\Delta, y: \shpos \sigma, z: \shpos \sigma \vdash M':\tau$}
\RightLabel{$cut$ \enspace .}
\BinaryInfC{$\Delta, z: \shpos \sigma \vdash M'[V/y]:\tau$}
\RightLabel{$cut$}
\BinaryInfC{$\Delta \vdash M'[V/y, V/z]:\tau$}
\DisplayProof
\end{equation}
\noindent
By Lemma~\ref{lem: lazy derivation properties}.\ref{enum: finite number of normal forms}, we can safely  assume that $U$ is a value with largest size among values of type $\sigma$.
So, Lemma~\ref{lem: lazy derivation properties}.\ref{enum: size term bounded size types LEML} implies $\vert V \vert \leq \vert U \vert$, 
from which $\vert \mathcal{D}' \vert \leq \vert \mathcal{D}''' \vert$,
by Lemma~\ref{lem: lazy derivation properties}.\ref{enum: lazy cut free derivations are smaller if terms are smaller}.
By applying Definition~\ref{defn: bound mio}.\ref{enum:on standard dimension of c} to 
\eqref{eqn:above the cut (p,c)} and
\eqref{eqn:reducing above the cut (p,c)}, we have $\vert \mathcal{D}^* \vert < \vert \mathcal{D} \vert$.
\end{proof}

\begin{defn}[Lazy cut-elimination strategy]
\label{defn:Lazy cut-elimination strategy}
Let $ \mathcal{D} $ be a lazy derivation of $ \textsf{LEM} $.
Let a \emph{round} on $ \mathcal{D} $ be defined as follows:
\begin{enumerate}[\{1\}]
	\item \label{enum: round comm}  
	Let eliminate all the commuting instances of $ cut $.
	\item \label{enum: round sym} 
	If any instance of $ cut $ remains, it is necessarily symmetric.
	Let reduce a multiplicative cut, if any.
	Otherwise, let reduce a lazy exponential cut, if any.
\end{enumerate}
The \emph{lazy cut-elimination strategy} iterates 
rounds, starting from $ \mathcal{D} $, until instances of $ cut $
exist in the obtained derivation.
\qed
\end{defn}

\begin{thm}[Lazy cut-elimination has a cubic bound]
\label{thm: cut elimination for downarrow IMLL2}
Let $\mathcal{D}$ be a lazy derivation. 
The lazy cut-elimination can reduce $\mathcal{D}$ to a cut-free $\mathcal{D}^*$ in $\mathcal{O}(\vert \mathcal{D} \vert^3)$ steps.
\end{thm}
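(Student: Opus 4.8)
The plan is to analyze the cost of the lazy cut-elimination strategy of Definition~\ref{defn:Lazy cut-elimination strategy} by bounding, on the one hand, the number of rounds and, on the other hand, the cost of a single round, both in terms of $\vert\mathcal{D}\vert$. First I would establish that a round that falls into case~\{\ref{enum: round sym}\} strictly decreases a suitable measure: reducing a multiplicative cut (axiom, $(\multimap\mathrm{R},\multimap\mathrm{L})$, $(\forall\mathrm{R},\forall\mathrm{L})$) and reducing a lazy exponential cut $(p,d)$, $(p,w)$, $(p,c)$ each either removes a cut outright or, in the $(p,c)$ case, replaces one cut by two while \emph{duplicating a cut-free value derivation}. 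The key input here is Lemma~\ref{lem: above the cut} and, crucially, its proof technique: by always picking the safe $(p,X)$ of \emph{maximal} height, the left premise of the eliminated exponential cut is forced to be cut-free, and Lemma~\ref{lem: lazy derivation properties}(\ref{enum: size term bounded size types LEML},\ref{enum: lazy cut free derivations are smaller if terms are smaller},\ref{enum: finite number of normal forms}) together with the ``$+3$'' convention of Definition~\ref{defn: bound mio}(\ref{enum:on standard dimension of c}) guarantees $\vert\mathcal{D}^*\vert<\vert\mathcal{D}\vert$. So every step of kind~\{\ref{enum: round sym}\} strictly decreases $\vert\mathcal{D}\vert$, hence there are at most $\vert\mathcal{D}\vert$ such steps in total, i.e.\ at most $\vert\mathcal{D}\vert$ rounds.

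Next I would handle case~\{\ref{enum: round comm}\}: the commuting-cut phase. The point is that commuting conversions permute a $cut$ upward past the rule producing its left premise without ever duplicating a subderivation (each commuting rule, including the $(p,p)$ case, is size-non-increasing up to an additive constant, and strictly decreases the sum of the heights of the remaining cut instances, or equivalently a lexicographic measure on $(\text{number of cuts},\ \sum \text{heights of cuts})$). Hence the commuting phase of a single round terminates in $\mathcal{O}(\vert\mathcal{D}\vert^2)$ steps on a derivation of size $\mathcal{O}(\vert\mathcal{D}\vert)$, and does not increase the size beyond $\mathcal{O}(\vert\mathcal{D}\vert)$. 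I would also record the invariant, via Proposition~\ref{prop: exponential context} and Lemma~\ref{fact: quantification propagation}, that laziness is preserved throughout, so that Lemma~\ref{lem: above the cut} keeps applying after each round and the derivation size never exceeds the initial $\vert\mathcal{D}\vert$ (the $(p,c)$ step is the only one that copies, and it is charged against the ``$+3$'' so as to keep the global size bounded by $\vert\mathcal{D}\vert$).

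Putting the pieces together: the strategy runs at most $\mathcal{O}(\vert\mathcal{D}\vert)$ rounds; each round first does $\mathcal{O}(\vert\mathcal{D}\vert^2)$ commuting steps (on derivations whose size stays $\mathcal{O}(\vert\mathcal{D}\vert)$) and then one symmetric/lazy step; and after all commuting cuts are gone, Lemma~\ref{lem: existence of the safe exponential cut} ensures that if any exponential cut remains there is a safe, hence ultimately lazy, one to reduce, so the process does not get stuck before reaching a cut-free derivation. Multiplying the round count by the per-round cost yields the $\mathcal{O}(\vert\mathcal{D}\vert^3)$ bound.

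The main obstacle I anticipate is the bookkeeping that keeps the \emph{size of the running derivation} bounded by (a constant times) the \emph{original} $\vert\mathcal{D}\vert$ across all rounds, since the $(p,c)$ lazy step genuinely duplicates a subderivation $\mathcal{D}'$: one must argue, as in Lemma~\ref{lem: above the cut}, that the duplicated $\mathcal{D}'$ is cut-free and no larger than the $\mathcal{D}'''$ it replaces (using that $U$ can be taken of maximal size among values of the relevant lazy type, which is a finite set by Lemma~\ref{lem: lazy derivation properties}(\ref{enum: finite number of normal forms})), so that the global size measure of Definition~\ref{defn: bound mio} does not grow. Getting this amortized accounting exactly right — so that ``number of rounds'' $\times$ ``size during a round'' $\times$ ``commuting steps per unit size'' collapses to a clean cubic rather than a quartic — is the delicate part; everything else is a routine induction on derivations using Lemma~\ref{fact: quantification propagation} and Lemma~\ref{lem: lazy derivation properties}.
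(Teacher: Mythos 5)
Your proposal is correct and follows essentially the same route as the paper's proof: the number of rounds is bounded by $\vert\mathcal{D}\vert$ because each symmetric/lazy step strictly decreases the size measure of Definition~\ref{defn: bound mio} (via Lemma~\ref{lem: above the cut} and the ``$+3$'' accounting for $c$), each commuting phase costs $\mathcal{O}(\vert\mathcal{D}\vert^2)$ steps while leaving the size unchanged, and the product gives the cubic bound. The amortization worry you flag is resolved exactly as you suggest, and as the paper does, by observing that the running size is monotonically non-increasing across rounds, so every per-round quadratic bound is taken against the original $\vert\mathcal{D}\vert$.
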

\begin{proof}
Let $H(\mathcal{D})$ be the sum of the heights $h(\mathcal{D}')$ 
of all  sub-derivations $\mathcal{D}'$ of 
$\mathcal{D}$ whose conclusion is an instance of $ cut $.
We proceed by induction on the lexicographically order of the pairs 
$\langle\vert \mathcal{D}\vert , H(\mathcal{D}) \rangle$. 
%
%\Lux{We are now able to show that, by applying the strategy in 
%Definition~\ref{defn:Lazy cut-elimination strategy}, all cuts eliminated 
%are lazy and that, repeating Points~\ref{enum: round comm} and~\ref{enum: round sym} of Definition~\ref{defn:Lazy cut-elimination strategy}, 
%eventually leads to a cut-free derivation. }{}
%
To show that the lazy cut-elimination strategy in Definition~\ref{defn:Lazy cut-elimination strategy} terminates, we
start by applying a round to $ \mathcal{D} $, using 
step~\ref{enum: round comm}.
%
%First notice that every 
%commuting and multiplicative cut is lazy by definition. 
%
%\Lux{As for Point~\ref{enum: round comm}, C}{} 
Every commuting cut-elimination step  just moves an instance of $ cut $ upward, strictly decreasing $H(\mathcal{D})$ and leaving
$ \vert \mathcal{D}\vert $ unaltered. 
Let us continue by applying
	step~\ref{enum: round sym} of the round.
As usual, $ \vert \mathcal{D}\vert $ shrinks when eliminating a multiplicative cut.
If only exponential instances of $ cut $ remain,  by Lemma~\ref{lem: above the cut} we can rewrite 
$ \mathcal{D} $ to $ \mathcal{D}' $ by reducing a lazy exponential cut in such a way that 
$ \vert \mathcal{D}'\vert  < \vert \mathcal{D} \vert $. Therefore,  the lazy cut-elimination strategy terminates with a cut-free derivation $ \mathcal{D}^* $.
\par
We now exhibit a bound on the number of 
cut-elimination steps from $ \mathcal{D} $ to $ \mathcal{D}^* $.
Generally speaking, we can represent a lazy strategy as:
\begin{align}
\label{align:cut-elimination bound diagram}
\mathcal{D}
= \mathcal{D}_0
\underbrace{\longrightarrow}_{cc_0}
\mathcal{D}'_0
\red
\mathcal{D}_1
\,\cdots \underbrace{\longrightarrow}_{cc_i}
\mathcal{D}'_{i}
\red
\mathcal{D}_{i+1}
\underbrace{\longrightarrow}_{cc_{i+1}}\cdots\,
\mathcal{D}'_{n-1}
\red
\mathcal{D}_n
\underbrace{\longrightarrow}_{cc_{n}}
\mathcal{D}'_{n}
= \mathcal{D}^*
\enspace ,
\end{align}
where every $ cc_j $ denotes the number of 
commuting cuts applied from derivation $ \mathcal{D}_{j} $ to derivation
$ \mathcal{D}'_{j} $, for every $ 0\leq j\leq n $.
%(We recall that the possibly less obvious commuting cuts can be 
%recovered from Figure~\ref{fig: term commuting conversions}.)
A bound on every $ cc_j $ is $ \vert\mathcal{D}_{j}\vert^2 $ because every
instance of rule in $ \mathcal{D}_{j} $ can, in principle, be commuted with every other. 
The first part of the proof implies
$ \vert\mathcal{D}_{j}\vert = \vert\mathcal{D}'_{j}\vert $,
for every $ 0\leq j\leq n $.
Lemma~\ref{lem: above the cut} implies
$ \vert\mathcal{D}'_{j}\vert > \vert\mathcal{D}_{j+1}\vert $,
for every $ 0\leq j\leq n-1 $.
So, $ n\leq \vert\mathcal{D}\vert $ and the total number of cut-elimination steps in 
\eqref{align:cut-elimination bound diagram} is 
$ O(\vert\mathcal{D}\vert\cdot\vert\mathcal{D}\vert^2) $.
\end{proof}

\begin{rem}
The cubic bound on the lazy strategy keeps holding also in the case we apply the \emph{lazy} cut-elimination to \emph{non lazy} derivations. Of course, in that case, deadlocks may remain in the final derivation where no instance of $cut$ can be further eliminated.
\qed
\end{rem}

%Let us call \textit{normal} a derivation that only contains non-lazy cuts. Theorem~\ref{thm: cut elimination for downarrow IMLL2} implies the following: 
%\begin{cor}[Normalization has a cubic bound]
%Let $\mathcal{D}$ be \emph{any} derivation. 
%Then $\mathcal{D}$ reduces by lazy cut-elimination to a normal derivation $\mathcal{D}^*$ in $\mathcal{O}(\vert \mathcal{D} \vert^3)$ steps.
%\qed
%\end{cor}
%\begin{proof}
%By applying the Definition~\ref{defn:Lazy cut-elimination strategy} in which the existence of the safe cut in Point~\ref{enum: round sym} is no longer assured.
%\end{proof}
%\noindent
%We conclude by recalling what the \textit{cut-elimination problem} ($\mathsf{CEP}$) is for a logical system. It asks to decide whether two derivations reduce to the same cut-free derivation. Mairson shows that $\mathsf{CEP}$ is $\mathsf{PTIME}$-complete for $\mathsf{IMLL}_2$  \cite{mairson2003computational}.
%\begin{cor}[]
%\label{thm:CEP PTIME-complete IMLL2shpos}
%$\mathsf{CEP}$ for $\mathsf{LEM}$ is $\mathsf{PTIME}$-complete.
%\end{cor}
%\begin{proof}
%By  Theorem~\ref{thm: cut elimination for downarrow IMLL2},  the cut-elimination problem for  $\mathsf{LEM}$ is  in  $\mathsf{PTIME}$. Moreover, since $\mathsf{LEM}$ properly contains $\mathsf{IMLL}_2$,  the completeness part of the proof is immediate. 
%\end{proof}

%%%%%%
\subsection{Subject reduction theorem}
\label{sec: subject reduction}
%It is a  well-known fact that the  subject reduction property may fail when a type assignment systems for $\lambda$-calculus has been derived  from Linear Logic. Since modalities are the source of the problem, one way out is by refining the grammar of types in such a way that the unwanted types are rule out from the system, as for the \textit{essential types}   in~\cite{gaboardi2009light}. Adopting this approach we developed $\mathsf{IMLL}^\shpos _2$ as an \textit{essential type system}, and this turns out to be the right step to get a subject reduction property.  \\
%
The proof of the Subject reduction requires some typical preliminaries.
\begin{lem}[Substitution]
\label{lem: substitution type variables} 
If $\Gamma \vdash M: \tau$ then $\Gamma[A/ \alpha] \vdash M: \tau[A/ \alpha]$, for every \emph{linear} type $ A $.
\end{lem}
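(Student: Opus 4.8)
The plan is to argue by structural induction on the derivation $\mathcal{D}$ of $\Gamma\vdash M:\tau$, showing in each case that applying $[A/\alpha]$ to every judgment of $\mathcal{D}$ yields again a legal derivation of $\mathsf{LEM}$ with conclusion $\Gamma[A/\alpha]\vdash M:\tau[A/\alpha]$ (the term $M$ is untouched, so there is no term-level bookkeeping). Two preliminary observations make the routine cases genuinely routine. First, if $\shpos\sigma$ is a strictly exponential type then $\sigma$ is closed by Definition~\ref{defn:Types for IMLL2shpos}, so $\alpha\notin FV(\sigma)$ and $(\shpos\sigma)[A/\alpha]=\shpos\sigma$; hence the substitution is the identity on every strictly exponential formula, on every strictly exponential context, and in particular on the premise of $p$ and on the distinguished premise $\vdash V:\sigma$ of $c$. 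Second, since $A$ is a \emph{linear} type, replacing $\alpha$ by $A$ inside a type of $\Theta_\shpos$ again produces a type of $\Theta_\shpos$: the only positions of the grammars~\eqref{eqn: grammar mathcalAe}--\eqref{eqn: grammar A} in which a type variable may occur are the linear-type positions (the right-hand side of $\multimap$, the body of $\forall$, and the types inside a context), and these accept an arbitrary linear type. This is exactly where the hypothesis ``$A$ linear'' is needed: an exponential replacement could break well-formedness.

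Then I would go through the last rule of $\mathcal{D}$. The cases $ax$, $cut$, $\multimap\mathrm{R}$, $\multimap\mathrm{L}$ are immediate: apply the induction hypothesis to the premise derivation(s) and re-apply the same rule, using that $[A/\alpha]$ commutes with $\multimap$ and with context union. For $\forall\mathrm{R}$ — whose bound variable I rename to $\alpha'$ and whose eigenvariable I call $\gamma$, so as not to clash with the substituted $\alpha$ — first $\alpha$-convert so that $\gamma\neq\alpha$, $\gamma\notin FV(A)$, $\alpha'\neq\alpha$ and $\alpha'\notin FV(A)$; the induction hypothesis on the premise then gives $\Gamma[A/\alpha]\vdash M:(A'\langle\gamma/\alpha'\rangle)[A/\alpha]$, which by the standard commutation of type substitutions equals $\Gamma[A/\alpha]\vdash M:(A'[A/\alpha])\langle\gamma/\alpha'\rangle$; the eigenvariable side condition still holds because $FV(\operatorname{rng}(\Gamma[A/\alpha]))\subseteq FV(\operatorname{rng}(\Gamma))\cup FV(A)$ omits $\gamma$, so $\forall\mathrm{R}$ applies and yields $\Gamma[A/\alpha]\vdash M:\forall\alpha'.(A'[A/\alpha])=(\forall\alpha'.A')[A/\alpha]$. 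The case $\forall\mathrm{L}$ is symmetric: after renaming the bound variable, use $(A'\langle B/\alpha'\rangle)[A/\alpha]=(A'[A/\alpha])\langle B[A/\alpha]/\alpha'\rangle$ and re-apply $\forall\mathrm{L}$ instantiating with $B[A/\alpha]$, which by the first observation is of the same kind (linear, resp. strictly exponential) as $B$, so the rule still applies. Finally, for $p$, $d$, $w$, $c$ the first observation shows that $[A/\alpha]$ is the identity on all the $\shpos$-formulas occurring in the rule, so after invoking the induction hypothesis on the premises (which affects only the genuinely linear parts of the contexts and the conclusion) I re-apply the same structural rule verbatim; the decorations $\mathtt{discard}_\sigma$ and $\mathtt{copy}^V_\sigma$ are unchanged because $\sigma$ and $V$ are unchanged.

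The only genuinely delicate point — hence the one I would actually spell out — is the capture-avoidance management in $\forall\mathrm{R}$ and $\forall\mathrm{L}$ together with the verification that every rewritten judgment is still well formed, i.e.\ that each type obtained still lies in $\Theta_\shpos$ and each context still respects the linearity constraint; both rest on $A$ being linear and on performing the obvious $\alpha$-conversions \emph{before} pushing the substitution through the quantifier rules. Everything else is a mechanical re-application of the rule at hand.
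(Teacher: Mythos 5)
Your proof is correct; the paper actually states this lemma without any proof, and the argument you give — structural induction on the derivation, with the two key observations that $[A/\alpha]$ is the identity on every strictly exponential formula (since $\shpos\sigma$ forces $\sigma$ closed) and that substituting a \emph{linear} $A$ keeps every type inside the grammar of Definition~\ref{defn:Types for IMLL2shpos} — is exactly the intended one. The eigenvariable renaming for $\forall$R is indeed the only delicate point, and you handle it in the standard way.
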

\begin{lem}[Generation] \label{lem: generation}  {\ }  
\begin{enumerate}[(1)]
\item \label{lem:generation 1} 
If $\mathcal{D} \triangleleft \Gamma \vdash \lambda x. M:\tau$, then 
$\tau= \shpos^n   \forall \vec{ \alpha}. (\sigma \multimap A)$, where 
$\shpos^n\triangleq \shpos \overset{n}{\ldots}\shpos$ and 
$\vec{\alpha}= \alpha_1,\ldots, \alpha_m$, for some $n\geq0$ and $m\geq 0$.
\item  \label{lem:generation 5} 
If $\mathcal{D} \triangleleft \Delta, x: \forall \alpha.A \vdash P:\tau$, then an instance $ r $ of $\forall\mathrm{L}$ exists in $\mathcal{D}$ 
with 
%minimal height $ h(r) $ and 
conclusion $\Delta', x: \forall \alpha.A \vdash P':\tau'$,
for some $\Delta', P'$ and $\tau'$. 
I.e., $ r $ introduces $x:\forall \alpha.A$.
\item  \label{lem:generation 6}
If $\mathcal{D} \triangleleft \Delta, x: \sigma \multimap B 
    \vdash P[xN/y]: \tau$, 
then an instance $ r $ of $\multimap\mathrm{L}$ 
exists in $\mathcal{D}$ with
%minimal height $ h(r) $ and 
conclusion $\Delta', x: \sigma \multimap B \vdash P'[xN'/y] : \tau'$,
for some $\Delta', P', N'$ and $\tau'$.
I.e., $ r $ introduces $x:\sigma \multimap B$.
\item  \label{lem:generation 2} 
If $\mathcal{D} \triangleleft \Gamma \vdash \lambda x. M : \forall  \alpha. A$, then a derivation $\mathcal{D}'$ exists
which is $\mathcal{D}$ with some rule permuted in order to obtain an instance of $\forall\mathrm{R}$ as last rule of $\mathcal{D}'$.
\item  \label{lem:generation 3} 
If $\mathcal{D}\triangleleft \Gamma \vdash \lambda x. P: \sigma \multimap B$, then a derivation $\mathcal{D}'$ exists
which is $\mathcal{D}$ with some rule permuted in order to obtain an instance of $\multimap\mathrm{R}$ as last rule of $\mathcal{D}'$.
\item  \label{lem:generation 7}
If $\mathcal{D} \triangleleft \Delta, x: \shpos \sigma \vdash P[xN/y] : 
\tau$, then an instance $ r $ of $ d $ exists in $\mathcal{D}$ 
with 
%minimal height $ h(r) $ and 
conclusion  $\Delta', x: \shpos \sigma \vdash P'[xN'/y]:\tau'$,
for some $\Delta', P', N'$ and $\tau'$.
I.e.,  $d$ introduces $x:\shpos  \sigma$.
\item \label{lem:generation 4} 
If $\mathcal{D}  \triangleleft \Gamma \vdash M: \shpos \sigma$, then a derivation $\mathcal{D}'$ exists
which is $\mathcal{D}$ with some rule permuted in order to get an instance of $p$ as last rule of $\mathcal{D}'$.
\item \label{lem:generation 8}
If $\mathcal{D} \triangleleft \Delta, x: \shpos  \sigma \vdash 
\mathtt{discard}_{\sigma}\, x \mathtt{\ in \ }P: \tau$, then
an instance $ r $ of $ w $ exists in $\mathcal{D}$ with 
%minimal height $ h(r) $ and 
conclusion 
$ \Delta', x: \shpos\sigma 
  \vdash 
  \mathtt{discard}_{\sigma}\, x \mathtt{\ in \ }P': \tau' $,
  for some $ \Delta', P' $ and $ \tau' $.
I.e., $r$ introduces $x:\shpos \sigma$.
\item \label{lem:generation 9}
If $\mathcal{D} \triangleleft \Delta, x: \shpos \sigma \vdash 
	\mathtt{copy}_{\sigma}^{U} \, x  \mathtt{\ as \ } x_1, x_2 \mathtt{\ in \ }P: \tau$, then
	an instance $ r $ of $ c $ exists in $\mathcal{D}$ with 
%	minimal height $ h(r) $ and 
	conclusion 
	$\Delta', x: \shpos \sigma \vdash 
	\mathtt{copy}_{\sigma}^{U} \, x  \mathtt{\ as \ } 
	x_1, x_2 \mathtt{\ in \ }P': \tau'$,
	for some $ \Delta', P' $ and $ \tau' $.
	I.e., $r$ introduces $x:\shpos \sigma$.
\end{enumerate}
\end{lem}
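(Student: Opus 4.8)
The plan is to establish the nine items by induction on the structure of $\mathcal{D}$, after splitting them into two families. Items \ref{lem:generation 1}, \ref{lem:generation 5}, \ref{lem:generation 6}, \ref{lem:generation 7}, \ref{lem:generation 8}, \ref{lem:generation 9} are pure \emph{existence} statements — they say that the conclusion type has a prescribed shape, or that a rule of a prescribed shape occurs inside $\mathcal{D}$ — and need no reorganisation of the derivation. Items \ref{lem:generation 2}, \ref{lem:generation 3}, \ref{lem:generation 4} are \emph{canonical form} statements: they ask for a derivation obtained from $\mathcal{D}$ by permuting one rule instance downwards until it is the last rule. I would dispatch the first family first, because the second feeds on the structural information it yields (notably on item \ref{lem:generation 1}).

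For item \ref{lem:generation 1}, induct on $\mathcal{D}$ and inspect the last rule. The rules $w$ and $c$ are impossible, their subjects being $\mathtt{discard}$- resp. $\mathtt{copy}$-headed rather than a $\lambda$-abstraction, and $ax$ is impossible since its subject is a variable; $\multimap\mathrm{R}$ gives the claim with $n=0$ and $\vec\alpha$ empty; $\forall\mathrm{R}$ (resp. $p$) prepends a quantifier (resp. a $\shpos$) to the decomposition returned by the induction hypothesis; and for $\multimap\mathrm{L}$, $\forall\mathrm{L}$, $d$, $cut$ the conclusion type equals the type of the premise still carrying the abstraction in its subject, so the induction hypothesis applies verbatim. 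For the remaining members of the first family I would trace the distinguished left occurrence — $x:\forall\alpha.A$, an application $x\,N$ with $x:\sigma\multimap B$, $x:\shpos\sigma$, a $\mathtt{discard}_\sigma\,x$, a $\mathtt{copy}^U_\sigma\,x$ — upwards through $\mathcal{D}$; concretely, induct on $\mathcal{D}$ and observe that the syntax-directed shape of the subject together with the type of the assumption forces the last rule either to be exactly the sought rule (one then checks its conclusion has the stated form) or to pass/split that occurrence into a premise that still satisfies the hypothesis, to which the induction hypothesis applies. The base case $ax$ is vacuous for items \ref{lem:generation 6}, \ref{lem:generation 7}, \ref{lem:generation 8}, \ref{lem:generation 9} — its subject is a variable and it cannot carry a $\shpos$-type — leaving only the degenerate axiom case of item \ref{lem:generation 5}, handled on the spot.

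For items \ref{lem:generation 2}, \ref{lem:generation 3}, \ref{lem:generation 4} I would first fix the shape of the conclusion type and context, using item \ref{lem:generation 1} (for \ref{lem:generation 2}, \ref{lem:generation 3}) or Proposition \ref{prop: exponential context} (for \ref{lem:generation 4}); then locate, by tracing the outermost connective of the conclusion type upwards, the rule that introduces it — necessarily $\forall\mathrm{R}$, $\multimap\mathrm{R}$, or $p$, since $ax$ introduces no $\shpos$-type and the $\lambda$-headed subject excludes $ax$ for \ref{lem:generation 2}, \ref{lem:generation 3} — and, noting that this rule lies on the thread leading down to the conclusion, permute it downwards one step at a time until it becomes the last rule, organised as an inner induction on the height of that rule instance. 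The bulk of the work, and the part I expect to be the real obstacle, is checking these local permutations in the presence of the exponential rules $p$, $d$, $w$, $c$: when the introduction rule is $\forall\mathrm{R}$ and its context grows as it descends, its eigenvariable condition must be preserved, which is exactly where closedness of every $\shpos$-type (Lemma \ref{fact: quantification propagation}.\ref{enum: shpos is lazy}) is used — together with $\alpha$-renaming the eigenvariable before permuting past $cut$, $\multimap\mathrm{L}$, or $c$; the two-premise rule $c$ must be permuted through its left premise only, leaving the value-witness premise untouched; and one must verify that no configuration forcing a non-$\shpos$ formula into the context of $p$ ever occurs on the traced thread, which is where the grammar of Definition \ref{defn:Types for IMLL2shpos} — codomains of $\multimap$ and conclusions of the $\forall$-rules are linear types — together with Proposition \ref{prop: exponential context} come in. The remaining permutations are a mechanical case analysis over pairs of rules, with the same subject preserved throughout.
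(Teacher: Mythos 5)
Your blueprint is correct and is, in substance, the proof the paper intends: the paper itself gives no details, saying only that one adapts Gaboardi and Ronchi's generation lemma for essential types to $\mathsf{LEM}$ and that the promotion item rests on Proposition~\ref{prop: exponential context} --- the same induction-plus-permutation argument, and the same key dependency, that you spell out. The one spot where your sketch is slightly too glib is the $cut$ case of the first item when the right premise's subject is exactly the cut variable, so that the abstraction lives in the \emph{left} premise and is typed with the cut formula rather than with $\tau$; there ``the IH applies verbatim'' fails literally, and you need the easy auxiliary observation that a derivation of $x:\sigma\vdash x:\tau$ only instantiates quantifiers and adds or strips $\shpos$'s and $\forall$'s, which preserves the shape $\shpos^{n}\forall\vec{\alpha}.(\sigma'\multimap A)$.
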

\begin{proof}
We can adapt the proof by 
Gaboardi\&Ronchi in~\cite{gaboardi2009light} to $\mathsf{LEM}$
because the types in Definition~\ref{defn:Types for IMLL2shpos}
are a sub-set of Gaboardi\&Ronchi's \emph{essential types}.
In particular, Point\!~\textit{\ref{lem:generation 4}} relies on
Proposition~\ref{prop: exponential context}.
\end{proof}

\begin{thm}[Subject reduction] 
\label{thm:Subject Reduction for IMLL2shpos} 
%\GL{By denoting $\rightarrow_{\beta, c}$ the reflexive, transitive and contextual closure of the rules in Figure~\ref{fig: term reduction rules},}
 If $\Gamma 
\vdash M: \tau$  and $M \rightarrow M^\prime$, then $\Gamma 
\vdash M^\prime: \tau$.
\end{thm}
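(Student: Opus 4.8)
The plan is to argue by induction on a derivation $\mathcal{D} \triangleleft \Gamma \vdash M : \tau$, with a case analysis on its last rule $\rho$. Three ingredients recur: Lemma~\ref{lem: substitution type variables}, to reconcile $\beta$-reduction with the quantifier rules (it supplies the $\langle B / \alpha \rangle$-instantiation that the $(\forall\mathrm{R},\forall\mathrm{L})$-cut of Figure~\ref{fig: cut elimination exponential} needs); the observation that $cut$ \emph{is} a typed meta-level substitution, so from $\Delta, x : \sigma \vdash M_1 : \tau_1$ and $\Sigma \vdash N_1 : \sigma$ with disjoint domains one immediately gets $\Delta, \Sigma \vdash M_1[N_1/x] : \tau_1$; and the Generation Lemma~\ref{lem: generation}, needed only for the critical case below.

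In the easy cases $\rho$ does not create the redex contracted by $M \to M'$. If $\rho = ax$, no reduction applies. If $\rho \in \{\forall\mathrm{R},\forall\mathrm{L},p,d\}$ the conclusion's subject equals the premise's subject, up to the injective renaming $[y/x]$ in the case of $d$: any reduction of it is (a renaming of) a reduction of the premise's subject, so we recurse there and re-apply $\rho$, which is legitimate since $\Gamma$ and $\tau$ are untouched --- in particular the eigenvariable condition of $\forall\mathrm{R}$ still holds and, for $p$, the premise stays well-formed by Proposition~\ref{prop: exponential context}. If $\rho \in \{\multimap\mathrm{R},w,c\}$ the conclusion's subject has an outermost shape --- $\lambda z.(-)$, or $\mathtt{discard}_{\sigma}\, x \mathtt{\ in\ } (-)$, or $\mathtt{copy}^{V}_{\sigma}\, x \mathtt{\ as\ } y, z \mathtt{\ in\ } (-)$ with $x$ a \emph{variable} and, for $c$, a \emph{value} $V$ in normal form --- matching no left-hand side of Figure~\ref{fig: term reduction rules} or Figure~\ref{fig: term commuting conversions}; hence the contracted redex is strictly inside a subterm typed by a premise of $\rho$, and again we recurse and re-apply $\rho$, carrying along the discarded/copied variable and, for $c$, the witness $\vdash V : \sigma$ verbatim. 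Finally, if $\rho = \multimap\mathrm{L}$ the conclusion's subject $M_1[y N / x]$ contains the \emph{inert} application $y N$, whose head is a variable; substituting it for the linear variable $x$ creates no redex, so the contracted redex is inside $N$ or inside $M_1$ and we recurse on the appropriate premise.

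The one genuinely non-trivial case is $\rho = cut$, with $\Gamma, \Delta \vdash M_1[N/x] : \tau$ inferred by subderivations $\mathcal{D}_N \triangleleft \Gamma \vdash N : \sigma$ and $\mathcal{D}_{M_1} \triangleleft \Delta, x : \sigma \vdash M_1 : \tau$; by linearity $x$ occurs \emph{exactly once} in $M_1$. If the contracted redex is inside $N$, or is inside $M_1$ at a position disjoint from that occurrence --- which includes the sub-case where $x$ merely sits inside a redex of $M_1$, e.g.\ $M_1 = \mathcal{C}[(\lambda z.R)\,x]$, already a reduction of $M_1$ --- then the step transports to a reduction of $N$ (resp.\ $M_1$) and we recurse on the left (resp.\ right) premise, re-applying $cut$. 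Otherwise the redex is created \emph{at the seam}, by plugging $N$ into the unique $x$-slot of $M_1$, which by linearity is one of $(-)\,Q$, $\mathtt{discard}_{\sigma}\,(-)\mathtt{\ in\ } R$, $\mathtt{copy}^{U}_{\sigma}\,(-)\mathtt{\ as\ } y, z \mathtt{\ in\ } R$, or is a position where a $\mathtt{discard}/\mathtt{copy}$ head of $N$ can float out of the $M_1$-context by a commuting conversion. The plan is then uniform: (i) by the Generation Lemma, permute $\mathcal{D}_N$ so that it ends with the right-introduction of $\sigma$ --- $\multimap\mathrm{R}$ by Lemma~\ref{lem: generation}.\ref{lem:generation 3} if $N$ is an abstraction, $p$ by Lemma~\ref{lem: generation}.\ref{lem:generation 4} if $N$ is a value of a $\shpos$-type --- and locate inside $\mathcal{D}_{M_1}$ the left rule that introduced $x$ ($\multimap\mathrm{L}$ by Lemma~\ref{lem: generation}.\ref{lem:generation 6}, $w$ by Lemma~\ref{lem: generation}.\ref{lem:generation 8}, or $c$ by Lemma~\ref{lem: generation}.\ref{lem:generation 9}); (ii) commute the $cut$ upward through $\mathcal{D}_{M_1}$ along that occurrence of $x$ --- each such commutation leaves both the conclusion and the subject unchanged --- until it reaches that rule; (iii) the resulting cut is then a principal cut $(\multimap\mathrm{R},\multimap\mathrm{L})$, $(p,w)$ or $(p,c)$, or a commuting cut of the form $(w,-)$ or $(c,-)$; firing the matching rule of Figure~\ref{fig: cut elimination exponential}, respectively pushing the $w$ or $c$ instance downward, replaces exactly the seam subterm by its contractum --- $(\lambda z.P)Q$ by $P[Q/z]$, or $\mathtt{discard}_{\sigma}\, V \mathtt{\ in\ } R$ by $R$, or $\mathtt{copy}^{U}_{\sigma}\, V \mathtt{\ as\ } y, z \mathtt{\ in\ } R$ by $R[V/y, V/z]$ --- while leaving $\Gamma, \Delta$ and $\tau$ fixed, which is the contraction $M \to M'$.

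I expect the seam analysis of the $cut$ case to be the main obstacle, and within it the commuting-conversion seams: when $N$ has a $\mathtt{discard}_{\tau}$ or $\mathtt{copy}^{V}_{\tau}$ head and $x$ occurs in $M_1$ in an application or in a nested erasure/duplication slot, one must check that, once the $cut$ has been commuted up to the rule introducing $x$, the remaining (commuting) cut is still eliminable and that its net effect on the subject is precisely a rewriting of Figure~\ref{fig: term commuting conversions} --- noting that a single term step may be realised by a \emph{short sequence} of conclusion-preserving derivation rewrites, which is harmless since Subject reduction only asks for the \emph{existence} of some derivation of $\Gamma \vdash M' : \tau$. With this bookkeeping in place, every reduction rule of $\mathsf{LEM}$ is matched by a conclusion-preserving rewriting of the typing derivation, which proves the theorem; since Lemma~\ref{lem: generation} is the transcription to $\mathsf{LEM}$ of Gaboardi\&Ronchi's Generation Lemma, the argument runs parallel to theirs.
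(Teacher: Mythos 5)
Your proposal is correct and follows essentially the same route as the paper: structural induction on the typing derivation, with Lemma~\ref{lem: substitution type variables} handling the quantifier instantiation and the Generation Lemma~\ref{lem: generation} used in the crucial case where the contracted redex is created at the seam of a $cut$, so that the derivation can be restructured (cut pushed up to the rule introducing the cut variable, principal step fired, lower rules re-applied unchanged). Your phrasing in terms of commuting the cut upward and then firing the rules of Figure~\ref{fig: cut elimination exponential} is just a procedural rendering of the paper's explicit reconstruction via Lemma~\ref{lem: generation}.\textit{\ref{lem:generation 2}}--\textit{\ref{lem:generation 9}}, including the subject-preserving $(p,d)$ and $(\forall\mathrm{R},\forall\mathrm{L})$ layers.
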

\begin{proof}
We proceed by structural induction on $ \mathcal{D} $.
%By induction on  $\triangleleft\Gamma \vdash M: 
%\tau$, we show that a derivation of $\Gamma \vdash M^\prime:\tau$ 
The crucial case of $M \rightarrow M^\prime$
(Figure~\ref{fig: term reduction rules}) is when
$(\lambda x. P)Q$ exists in $M$ and $\mathcal{D}$ contains:
\begin{prooftree}
\AxiomC{$\mathcal{D}' \triangleleft \Delta \vdash \lambda x. P: \sigma$}
\AxiomC{$\mathcal{D}'' \triangleleft \Sigma, y: \sigma \vdash N[ yQ/z ]:\tau$}
\RightLabel{$ cut $ \enspace .}
\BinaryInfC{$\mathcal{D} \triangleleft \Delta, \Sigma \vdash N[(\lambda x.P)Q/z]:\tau$}
\end{prooftree}
Lemma~\ref{lem: generation}.\textit{\ref{lem:generation 1}} implies that  $\sigma= \shpos^n \forall \vec{ \alpha}. (\sigma_1  \multimap C)$, where $\shpos^n\triangleq \shpos \overset{n}{\ldots}\shpos$ and $\vec{\alpha}= \alpha_1,\ldots, \alpha_m$, for some $n\geq0$ and $m\geq 0$. 
Lemma~\ref{lem: generation}.\textit{\ref{lem:generation 5}},  
\ref{lem: generation}.\textit{\ref{lem:generation 6}}, 
and \ref{lem: generation}.\textit{\ref{lem:generation 7}}, 
imply that $\mathcal{D}''$ has form: 
\begin{equation}
\label{eqn:conclusion of derivation}
\AxiomC{\vdots}
\noLine
\UnaryInfC{$ \Sigma_1'' \vdash Q'':\sigma_1'$}
\AxiomC{$\vdots$}
\noLine
\UnaryInfC{$\Sigma_2'', z: C' \vdash N'': \tau''$}
\RightLabel{$\multimap$L}
\BinaryInfC{$\Sigma''_1, \Sigma''_2, y''':\sigma_1' \multimap C' \vdash N''[y'''Q''/z]: \tau''$}
\noLine
\UnaryInfC{$\vdots $}
\noLine
\UnaryInfC{$\Sigma', y'': \forall \vec{\alpha}. (\sigma_1 \multimap C )\vdash N'[y''Q'/z]: \tau'$}
\RightLabel{$d$}
\UnaryInfC{$\Sigma', y': \shpos \forall \vec{\alpha}.( \sigma_1 \multimap C) \vdash N'[y'Q'/z]: \tau'$}
\noLine
\UnaryInfC{$\vdots $}
\noLine
\UnaryInfC{$\Sigma, y:\shpos ^n \forall \vec{\alpha}. (\sigma_1 \multimap C) \vdash N[yQ/z]: \tau$}
\DisplayProof
\end{equation}
where $\sigma_1'= \sigma_1[A_1/\alpha_1, \ldots, A_m /\alpha_m]$ and $C'=C[A_1/\alpha_1, \ldots, A_m /\alpha_m]$, for some $\Sigma', \Sigma''_1, \Sigma''_2,y', y'',y''', N', N'', Q',Q'',  \tau',\tau'',  A_1, \ldots, A_m$.
Lemma~\ref{lem: generation}.\textit{\ref{lem:generation 2}},  
\ref{lem: generation}.\textit{\ref{lem:generation 3}} and 
\ref{lem: generation}.\textit{\ref{lem:generation 4}} imply that, permuting some of its rules,
$\mathcal{D}'$ can be reorganized as:
\begin{prooftree}
\AxiomC{$\vdots$}
\noLine
\UnaryInfC{$\Delta, x:\sigma_1 \vdash P: C$}
\RightLabel{$\multimap$R}
\UnaryInfC{$\Delta \vdash \lambda x. P:\sigma_1 \multimap C$}
\RightLabel{$\forall$R  \enspace ,}
\doubleLine
\UnaryInfC{$\Delta \vdash \lambda x.P: \forall \vec{\alpha}. (\sigma_1 \multimap C)$}
\doubleLine
\RightLabel{$p$}
\UnaryInfC{$\Delta \vdash \lambda x.P: \shpos^n \forall \vec{\alpha}. (\sigma_1 \multimap C)$}
\end{prooftree}
where the concluding instances of $ p $ are necessary if $n>0$ and
are legally introduced because $\Delta$ is strictly exponential as consequence of Proposition~\ref{prop: exponential context}
that we can apply to the judgment beacause $ \shpos\sigma $ is
strictly exponential as well.
Moreover, Lemma~\ref{lem: substitution type variables} assures that a derivation of $\Delta, x:\sigma_1' \vdash P: C'$ exists
because $\alpha_1, \ldots, \alpha_m$ are not free in $\Delta$. Therefore:
\begin{prooftree}
\AxiomC{$\vdots$}
\noLine
\UnaryInfC{$\Sigma''_1 \vdash Q'': \sigma'_1$}
\AxiomC{$\vdots$}
\noLine
\UnaryInfC{$\Delta, x:\sigma'_1 \vdash P: C'$}
\RightLabel{$cut$}
\BinaryInfC{$\Delta, \Sigma''_1 \vdash P[Q''/x]: C'$}
\AxiomC{$\vdots$}
\noLine
\UnaryInfC{$\Sigma''_2, z:C' \vdash N'': \tau''$}
\RightLabel{$cut$ \enspace .}
\BinaryInfC{$\Delta, \Sigma''_1, \Sigma''_2 \vdash N''[P[Q''/x]/z]: \tau''$}
\noLine
\UnaryInfC{$\vdots$}
\noLine
\UnaryInfC{$\Delta, \Sigma \vdash N[P[Q/x]/z]: \tau$}
\end{prooftree}
\noindent
which concludes with the same rules as in~\eqref{eqn:conclusion of derivation}. A similar proof exists, which relies on
Lemma~\ref{lem: generation}.\textit{\ref{lem:generation 8}}, 
or Lemma~\ref{lem: generation}.\textit{\ref{lem:generation 9}},
when reducing
$\mathtt{discard}_{\sigma}$ $V\mathtt{\ in\ }$ $M$, or 
$\mathtt{copy}^{U}_{\sigma}\, V \mathtt{\ as\ }y,z \mathtt{\ in\ }M$. All the remaining cases are straightforward. 
\end{proof}
\section{Translation of $\mathsf{LEM}$ into $\mathsf{IMLL}_2$ and exponential compression}\label{sec: the expressiveness of the system}
The system $\mathsf{LEM}$ provides a logical status to copying 
and erasing operations that exist in $\mathsf{IMLL}_2$. In what follows, we 
show that a translation $(\_)^\bullet$ from $\mathsf{LEM}$ into $\mathsf{IMLL}_2$ exists 
that \enquote{unpacks} both the constructs $\mathtt{discard}_{\sigma}$ and  
$\mathtt{copy}_{\sigma}^{V}$ by turning them into, respectively, an  eraser 
and a duplicator of ground types. Then,  we show that the reduction steps in 
Figure~\ref{fig: term reduction rules} and the commuting conversions in 
Figure~\ref{fig: term commuting conversions} can be simulated by the 
$\beta \eta$-reduction of the linear $\lambda$-calculus, as long as we 
restrict to terms of $\Lambda_\shpos$ typable in $\mathsf{LEM}$. 
Last, we discuss the complexity of the translation, and we prove that 
every term typable in $\mathsf{LEM}$ is mapped to a linear 
$\lambda$-term whose size can be is exponential in the one of the original 
term.

 We start with defining the translation from $\mathsf{LEM}$ to $\mathsf{IMLL}_2$.
\begin{defn} [From $ \mathsf{LEM} $ to $ \mathsf{IMLL}_2 $]
\label{defn: translation IMLL2 shpos with oplus into IMLL2} 
The map  
$(\_)^\bullet: \mathsf{LEM} \longrightarrow\mathsf{IMLL}_2$ 
translates a derivation 
$\mathcal{D} \triangleleft\Gamma \vdash_{\mathsf{LEM}}  M: \tau $ into a derivation $\mathcal{D}^\bullet \triangleleft \Gamma^\bullet \vdash_{\mathsf{IMLL}_2} M^\bullet :\tau^\bullet$:
\begin{enumerate}
\item For all types $\sigma \in \Theta_\shpos$:
\begin{align*}
\alpha^\bullet &\triangleq \alpha\\
(\tau \multimap A)^\bullet &\triangleq \tau^\bullet \multimap A^\bullet \\
(\forall \alpha . A)^\bullet &\triangleq \forall \alpha. A^\bullet \\
(\shpos \tau)^\bullet & \triangleq \tau^\bullet \enspace .
\end{align*}
\item For all contexts $\Gamma= x_1: \sigma_1, \ldots, x_n: \sigma_n$,  we set $\Gamma^\bullet\triangleq  x_1: \sigma_1 ^\bullet, \ldots, x_n: \sigma_n^\bullet$;
\item
\label{enum: translation on terms} 
For all typable terms $ M\in \Lambda_\shpos$:
\begin{align*}
x^\bullet  &\triangleq x
\\
(\lambda x. P )^\bullet &\triangleq \lambda x. P^\bullet 
\\
(PQ)^\bullet &\triangleq P^\bullet Q^\bullet
\\
(\mathtt{discard}_{\sigma}\, P \mathtt{\ in\ }Q)^\bullet
&\triangleq
  \mathtt{let\ } \mathtt{E}_{\sigma^\bullet}\, P^\bullet \mathtt{\ be\ }I \mathtt{\ in\ }Q^\bullet 
\\
(\mathtt{copy}^{V}_{\sigma}\, P \mathtt{\ as\ }x_1,x_2 \mathtt{\ in\ 
}Q)^\bullet &\triangleq \mathtt{let\ }\mathtt{D}^{V^\bullet} _{\sigma^\bullet}\, P^\bullet 
\mathtt{\ be\ } x_1,x_2 \mathtt{\ in \ }Q^\bullet \enspace ,
\end{align*}
where $\mathtt{E}_{\sigma^\bullet}$ and $\mathtt{D}^{V^\bullet} _{\sigma^\bullet}$ (see Remark~\ref{rem: duplicator} in~\ref{sec: the d-soundness theorem DICE}) are the eraser and the duplicator of $\sigma^\bullet$ which is both
ground, because $\sigma$ is closed and with no negative occurrences of $\forall$, 
and inhabited by $V^\bullet$. 

\item 
The definition of $ (\_)^\bullet $ extends to any derivation 
$\mathcal{D}\triangleleft\Gamma\vdash M:\sigma$ of $\mathsf{LEM}$
in the obvious way, following
the structure of $ M^{\bullet} $. 
Figure \ref{fig: translation inference rules} collects the most interesting cases.
\qed
\end{enumerate}   
\end{defn}

\begin{rem}
Both $\mathtt{E}_{\sigma^\bullet}$ and 
$\mathtt{D}^{V^\bullet} _{\sigma^\bullet}$ in 
point~\ref{enum: translation on terms} of 
Definition~\ref{defn: translation IMLL2 shpos with oplus into IMLL2} here above exist by Theorem~\ref{thm: pi1 types are erasable} and Theorem~\ref{thm: pi1 are duplicable}.
\qed
\end{rem}

\begin{figure}[t]
\centering
\begin{equation}
\begin{aligned}\nonumber
\left(
\scalebox{0.6}{
\AxiomC{$\mathcal{D}$}
\noLine
\UnaryInfC{$x_1: \shpos \sigma_1, \ldots, x_n: \shpos \sigma_n \vdash M: \sigma$}
\RightLabel{$p$}
\UnaryInfC{$x_1: \shpos \sigma_1, \ldots, x_n: \shpos \sigma_n \vdash M: \shpos \sigma$}
\DisplayProof } 
\right)^\bullet
\triangleq &
\left(
\scalebox{0.6}{
\AxiomC{$\mathcal{D}$}
\noLine
\UnaryInfC{$x_1: \shpos \sigma_1, \ldots, x_n: \shpos \sigma_n \vdash M :\sigma$}
\DisplayProof } 
\right)^\bullet
\\
\\
\left(
\scalebox{0.6}{
\AxiomC{$\mathcal{D}$}
\noLine
\UnaryInfC{$\Gamma, x: \sigma \vdash M:\tau$}
\RightLabel{$d$}
\UnaryInfC{$\Gamma, y: \shpos \sigma \vdash M[y/x]: \tau$}
\DisplayProof} 
\right)^\bullet
\triangleq &
\scalebox{0.6}{
\AxiomC{}
\RightLabel{ax}
\UnaryInfC{$  y:\sigma^\bullet \vdash y: \sigma^\bullet$}
\def\extraVskip{0pt}
\AxiomC{$\left( \begin{aligned}\begin{gathered} \mathcal{D} \\ \Gamma, x: \sigma \vdash M: \tau \end{gathered}\end{aligned}\right)^\bullet$}
\noLine
\UnaryInfC{\hspace{2.5cm}}
\def\extraVskip{3pt}
\RightLabel{$cut$}
\BinaryInfC{$\Gamma^\bullet, y: \sigma^\bullet \vdash M^\bullet  [y/x]:\tau^\bullet$}
\DisplayProof} 
\\
\\
\left(
\scalebox{0.6}{
\AxiomC{$\mathcal{D}$}
\noLine
\UnaryInfC{$\Gamma \vdash M:\tau$}
\RightLabel{$w$}
\UnaryInfC{$\Gamma , x: \shpos \sigma \vdash \mathtt{discard }_{\sigma}\, x \mathtt{\  in\  }M:\tau$}
\DisplayProof}
\right)^\bullet
\triangleq&
\scalebox{0.6}{
\AxiomC{$\vdots$}
\noLine
\UnaryInfC{$x: \sigma^\bullet \vdash \mathtt{E}_{\sigma^\bullet}\, x: \mathbf{1}$}
\AxiomC{$ \left( \begin{aligned}\begin{gathered}  \mathcal{D} \\ \Gamma \vdash M :\tau \end{gathered}\end{aligned} \right)^\bullet$}
\noLine
\UnaryInfC{\vdots}
\noLine
\UnaryInfC{$\Gamma^\bullet, y: \mathbf{1} \vdash \mathtt{let\ }y \mathtt{\ be\ }I \mathtt{\ in\ }M^\bullet :\tau^\bullet$}
\RightLabel{$cut$}
\BinaryInfC{$\Gamma^\bullet, x:\sigma^\bullet  \vdash \mathtt{let\ }\mathtt{E}_{\sigma^\bullet}\, x \mathtt{\ be\ }I \mathtt{\ in\ }M^\bullet :\tau^\bullet$}
\DisplayProof}
\\
\\
\left( 
\scalebox{0.6}{
\AxiomC{$\mathcal{D}_1$}
\noLine
\UnaryInfC{$\Gamma, x_1: \shpos \sigma, x_2: \shpos \sigma \vdash M: \tau$}
\AxiomC{$\mathcal{D}_2$}
\noLine
\UnaryInfC{$\vdash V: \sigma$}
\RightLabel{$c$}
\BinaryInfC{$\Gamma, x: \shpos \sigma \vdash \mathtt{copy}^{V}_{\sigma} \, x \mathtt{\ as \ } x_1, x_2 \mathtt{ \ in\ } M: \tau$}
\DisplayProof}
\right)^\bullet
\triangleq & 
\scalebox{0.6}{
\def\defaultHypSeparation{\hskip 0.4cm}
\def\ScoreOverhang{1pt}
\AxiomC{$\mathcal{D}_2^\bullet$}
\noLine
\UnaryInfC{$\vdash V^\bullet : \sigma^\bullet$}
\noLine
\UnaryInfC{$\vdots$}
\noLine
\UnaryInfC{$x:\sigma^\bullet \vdash \mathtt{D}^{V^\bullet } _{\sigma^\bullet} \, x: \sigma^\bullet \otimes \sigma^\bullet$}
\AxiomC{$\left( \begin{aligned} \begin{gathered}\mathcal{D}_1 \\   \Gamma, x_1: \shpos \sigma, x_2: \shpos \sigma  \vdash M :  \tau \end{gathered}\end{aligned} \right)^\bullet$}
\noLine
\UnaryInfC{\vdots}
\noLine
\UnaryInfC{$\Gamma^\bullet, y :\sigma^\bullet \otimes \sigma^\bullet \vdash  \mathtt{let\ }y \mathtt{\ be\ }x_1,x_2 \mathtt{\ in\ }M^\bullet: \tau^\bullet$}
\RightLabel{$cut$}
\BinaryInfC{$\Gamma^\bullet, x: \sigma^\bullet \vdash \mathtt{let\ }\mathtt{D}_{\sigma^\bullet}^{V^\bullet} \, x \mathtt{\ be\ }x_1,x_2 \mathtt{\ in\ }M^\bullet :\tau^\bullet$}
\DisplayProof}
\end{aligned}
\end{equation}
\caption{The translation of the rules $p$, $d$,  $w$ and $c$.}
\label{fig: translation inference rules} % do not move this label from here!!!
\end{figure}

The simulation theorem requires some preliminaries.

\begin{lem} \label{lem: translation on values}For every typable value $V$:
\begin{enumerate}[(1)]
\item \label{enum: 1 translation on values} $V^\bullet =V$.
\item \label{enum: 2 translation on values} $V$ has type $\sigma$ if and only if $V^\bullet$ has type $\sigma^\bullet$.
\end{enumerate}
\end{lem}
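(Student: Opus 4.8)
The plan is to prove both statements by a simultaneous induction on the structure of the value $V$, exploiting the fact that $V$ is a closed linear $\lambda$-term in normal form and hence has a very constrained shape. First I would observe that for point~\eqref{enum: 1 translation on values} the only place where $V^\bullet$ could differ from $V$ is if $V$ contains an occurrence of $\mathtt{discard}_\sigma$ or $\mathtt{copy}^U_\sigma$; since $V$ is a value in the sense of Definition~\ref{defn:Values among linear lambda-terms}, i.e. a value among ordinary \emph{linear} $\lambda$-terms, it already lies in $\Lambda$ and contains no such constructs. Actually, to be careful, I would note that a value of $\mathsf{LEM}$ in the relevant sense should be a $\beta$-normal closed term of $\Lambda_\shpos$; the argument is then that a \emph{typable} closed normal term cannot contain $\mathtt{discard}$ or $\mathtt{copy}$, because by the generation Lemma~\ref{lem: generation} (specifically parts~\eqref{lem:generation 8} and~\eqref{lem:generation 9}) any such construct forces a $w$ or $c$ rule introducing a strictly exponential assumption $x:\shpos\sigma$, contradicting closedness — a closed term has no free variables for such an assumption to bind, and normality prevents these subterms from being created and consumed. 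Hence $V^\bullet = V$ by a trivial structural recursion through the clauses $x^\bullet = x$, $(\lambda x.P)^\bullet = \lambda x.P^\bullet$, $(PQ)^\bullet = P^\bullet Q^\bullet$.

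For point~\eqref{enum: 2 translation on values}, the left-to-right direction is immediate: if $V$ has type $\sigma$ via a derivation $\mathcal{D}$ in $\mathsf{LEM}$, then $\mathcal{D}^\bullet$ is a derivation in $\mathsf{IMLL}_2$ of $V^\bullet : \sigma^\bullet$ by Definition~\ref{defn: translation IMLL2 shpos with oplus into IMLL2}, and $V^\bullet = V$ by part~\eqref{enum: 1 translation on values}. The converse is the slightly more delicate direction: suppose $V^\bullet = V$ has type $\sigma^\bullet$ in $\mathsf{IMLL}_2$. Since $V$ is a value, it is a closed linear $\lambda$-term, so by Fact~\ref{rem: every normal term has Pi1 type} it has some ground type $A$ in $\mathsf{IMLL}_2$, and the type $\sigma^\bullet$ it receives is then $A$ up to the usual flexibility. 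I would then argue that whenever $V$ has type $\sigma^\bullet$ in $\mathsf{IMLL}_2$ and $\sigma \in \Theta_\shpos$ is such that $(\cdot)^\bullet$ erases its $\shpos$'s, one can reconstruct a $\mathsf{LEM}$ derivation of $V:\sigma$: the $\mathsf{IMLL}_2$ derivation of $V:\sigma^\bullet$ uses only the fragment $ax$, $\multimap$R, $\multimap$L, $\forall$R, $\forall$L (since $V$ is closed, in fact a normal form, its derivation can be taken $\multimap$R/$\forall$R-guided and cut-free), and each such rule lifts verbatim to $\mathsf{LEM}$; the outer $\shpos$'s of $\sigma$, if any, are restored by applications of the promotion rule $p$, which is legal precisely because $V$ is closed (empty, hence strictly exponential, context) and $\sigma$ under the $\shpos$ is closed with no negative $\forall$ by Definition~\ref{defn:Types for IMLL2shpos}.

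The main obstacle I anticipate is making precise the bijection between $\mathsf{LEM}$ types $\sigma$ with $\sigma^\bullet = A$ and the possible $\mathsf{IMLL}_2$ typings of $V$, i.e. handling the $\shpos$-prefix correctly and checking that the side condition of the $p$ rule (closedness and absence of negative $\forall$) is met so that the promotion steps in the reconstruction are valid; one must also be slightly careful that the statement ``$V$ has type $\sigma$'' should be read as ``is derivable in the respective system'', and invoke that $\mathsf{IMLL}_2$ is a subsystem of $\mathsf{LEM}$ so that the $\mathsf{IMLL}_2$ derivation of $V:\sigma^\bullet$ is already a $\mathsf{LEM}$ derivation of the same judgment, after which a finite number of $p$-rules yields $V:\sigma$. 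With this the induction closes and both points follow.
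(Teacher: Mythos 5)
The paper itself dismisses this lemma with a one-line appeal to the definition of $(\_)^\bullet$, and the core of your argument is in the same spirit: since Definition~2 defines values as closed normal \emph{linear $\lambda$-terms} (not terms of $\Lambda_\shpos$), $V$ contains no $\mathtt{discard}$ or $\mathtt{copy}$, so the translation acts as the identity on it and point~(1) is immediate; the left-to-right half of point~(2) then follows from the action of $(\_)^\bullet$ on derivations. That much is fine.

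Two things in your write-up do not hold up, however. First, your hedged claim that a \emph{typable closed normal term of $\Lambda_\shpos$} cannot contain $\mathtt{discard}$ or $\mathtt{copy}$ is false: $\lambda x.\,\mathtt{discard}_{\mathbf{B}}\,x\ \mathtt{in}\ I$ (the term $\mathtt{out}^0$ of Figure~\ref{fig: enc boolean functions and fanout}) is closed, normal and typable with type $\shpos\mathbf{B}\multimap\mathbf{1}$; the strictly exponential assumption introduced by $w$ is simply discharged afterwards by $\multimap$R, so closedness is no obstruction. This does not damage the lemma only because Definition~2 already forces values to be plain linear $\lambda$-terms, so the hedge is moot. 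Second, in the converse direction of point~(2) you propose to lift the $\mathsf{IMLL}_2$ derivation of $V:\sigma^\bullet$ to $\mathsf{LEM}$ and then \enquote{restore the outer $\shpos$'s} with promotion. But $(\_)^\bullet$ erases \emph{every} occurrence of $\shpos$ in $\sigma$, including those occurring to the left of implications (e.g.\ $\sigma=\shpos\mathbf{B}\multimap\mathbf{B}$ with $\sigma^\bullet=\mathbf{B}\multimap\mathbf{B}$); restoring those requires inserting instances of the \emph{dereliction} rule $d$ on the corresponding assumptions before they are abstracted, possibly at arbitrary depth, and your sketch never mentions $d$ at all. You correctly flag this decoration of nested $\shpos$'s as \enquote{the main obstacle}, but you do not actually discharge it, so the induction you announce does not close as written; it does go through once one adds $d$-steps for each $\shpos$ occurring in argument position of $\sigma$, using the fact that the grammar of Definition~\ref{defn:Types for IMLL2shpos} forbids $\shpos$ on the right of $\multimap$.
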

\begin{proof}
Straightforward consequence of Definition~\ref{defn: translation IMLL2 shpos with oplus into IMLL2}.
\end{proof}

\begin{lem} \label{lem: bullet commutes well with substitution in LAML} For all terms $M, N \in \Lambda_{\shpos}$ typable in $\mathsf{LEM}$, $M^\bullet [N^\bullet/x]= (M[N/x])^\bullet$.
\end{lem}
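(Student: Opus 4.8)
The plan is to prove the statement $M^\bullet[N^\bullet/x] = (M[N/x])^\bullet$ by structural induction on the term $M$, following the clauses defining $(\_)^\bullet$ on terms in point~\ref{enum: translation on terms} of Definition~\ref{defn: translation IMLL2 shpos with oplus into IMLL2}. The base case and the standard $\lambda$-calculus cases are immediate: if $M = x$, then $M^\bullet[N^\bullet/x] = x[N^\bullet/x] = N^\bullet = (x[N/x])^\bullet$; if $M = y \neq x$, both sides equal $y$; the cases $M = \lambda y.P$ and $M = PQ$ follow directly from the inductive hypothesis together with $(\lambda y.P)^\bullet = \lambda y.P^\bullet$ and $(PQ)^\bullet = P^\bullet Q^\bullet$, using the usual Barendregt-style variable convention so that $y \notin FV(N) \cup \{x\}$.

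The interesting cases are the two new constructs. For $M = \mathtt{discard}_\sigma\, P \mathtt{\ in\ } Q$, we have by definition $M^\bullet = \mathtt{let\ } \mathtt{E}_{\sigma^\bullet}\, P^\bullet \mathtt{\ be\ } I \mathtt{\ in\ } Q^\bullet$, which unfolds (via Definition~\ref{eqn:datatype unity and product}) to an ordinary linear $\lambda$-term built from $P^\bullet$, $Q^\bullet$, $\mathtt{E}_{\sigma^\bullet}$ and $I$. Substituting $N^\bullet$ for $x$ distributes over this term; since $\mathtt{E}_{\sigma^\bullet}$ is a \emph{closed} linear $\lambda$-term (it inhabits $\sigma^\bullet \multimap \mathbf{1}$ with empty context by Theorem~\ref{thm: pi1 types are erasable}) and $I$ is closed, the substitution only touches $P^\bullet$ and $Q^\bullet$, yielding $\mathtt{let\ } \mathtt{E}_{\sigma^\bullet}\, (P^\bullet[N^\bullet/x]) \mathtt{\ be\ } I \mathtt{\ in\ } (Q^\bullet[N^\bullet/x])$. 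Applying the inductive hypothesis to $P$ and to $Q$ and then re-folding the definition of $(\_)^\bullet$ on $\mathtt{discard}$ gives $(M[N/x])^\bullet$, noting that $(\mathtt{discard}_\sigma\, P \mathtt{\ in\ } Q)[N/x] = \mathtt{discard}_\sigma\, (P[N/x]) \mathtt{\ in\ } (Q[N/x])$ since $x \notin \{\}$ causes no capture. The case $M = \mathtt{copy}^V_\sigma\, P \mathtt{\ as\ } x_1,x_2 \mathtt{\ in\ } Q$ is entirely analogous: $M^\bullet = \mathtt{let\ } \mathtt{D}^{V^\bullet}_{\sigma^\bullet}\, P^\bullet \mathtt{\ be\ } x_1,x_2 \mathtt{\ in\ } Q^\bullet$, the duplicator $\mathtt{D}^{V^\bullet}_{\sigma^\bullet}$ is a closed linear $\lambda$-term (it inhabits $\sigma^\bullet \multimap \sigma^\bullet \otimes \sigma^\bullet$ with empty context by Theorem~\ref{thm: pi1 are duplicable}), and $V^\bullet$ is closed by Lemma~\ref{lem: translation on values}.\ref{enum: 1 translation on values} (since $V$ is a value, hence closed, $V^\bullet = V$ is closed), so the substitution again only reaches $P^\bullet$ and $Q^\bullet$; the inductive hypotheses for $P$ and $Q$ and the variable convention $x_1,x_2 \notin FV(N) \cup \{x\}$ close the case.

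The main subtlety — though not really an obstacle — is making sure that substitution does not disturb the auxiliary closed subterms $\mathtt{E}_{\sigma^\bullet}$, $\mathtt{D}^{V^\bullet}_{\sigma^\bullet}$, $I$, and $V^\bullet$ introduced by the translation, which is exactly why one needs that these are \emph{closed}: this is guaranteed by Theorems~\ref{thm: pi1 types are erasable} and~\ref{thm: pi1 are duplicable} (the eraser and duplicator are derived with empty context), by the fact that $\sigma$ is closed (part of the well-formedness condition on $\shpos\sigma$ in Definition~\ref{defn:Types for IMLL2shpos}), and by $V$ being a value. One also needs the fact that $\sigma^\bullet$ is a ground type — which holds because the $(\_)^\bullet$-translation erases $\shpos$ and preserves the positivity/closedness constraints, as already remarked after Definition~\ref{defn: translation IMLL2 shpos with oplus into IMLL2} — so that these erasers and duplicators exist at all. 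With those points in place the induction is routine, and I would present it compactly, spelling out the $\mathtt{copy}$ case in full and noting the $\mathtt{discard}$ case as strictly simpler.
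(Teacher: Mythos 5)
Your proof is correct and follows exactly the route the paper takes: the paper's own proof is just the one-line remark that one proceeds ``by a standard structural induction on $M$'', and your write-up is that induction with the details filled in (in particular the observation that the closed auxiliary terms $\mathtt{E}_{\sigma^\bullet}$, $\mathtt{D}^{V^\bullet}_{\sigma^\bullet}$, $I$ and $V^\bullet$ are untouched by the substitution, which is the only point worth spelling out).
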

\begin{proof}
We can proceed by a standard structural induction on $ M $.
\end{proof}
\noindent
We now show that every reduction on terms typable in $\mathsf{LEM}$ can be simulated in the linear $\lambda$-calculus by means of the $\beta \eta$-reduction relation.  We recall that
Subject reduction holds on every typable $M \in \Lambda_\shpos$
(Theorem~\ref{thm:Subject Reduction for IMLL2shpos}). 
Moreover, every linear $\lambda$-term that has a type in $\mathsf{IMLL}$, has one in $\mathsf{IMLL}_2$ (see~\cite{hindley1989bck}). So, we state the simulation theorem 
for terms, rather than the related derivations.
\begin{thm}[Simulation]
\label{thm: translations for LAML} Let $\mathcal{D}\triangleleft \Gamma \vdash M: \sigma$ be a derivation in  $\mathsf{LEM}$. If $M_1 \rightarrow M_2$ then $M_1^\bullet\rightarrow^*_{\beta \eta}M_2^\bullet$:
\begin{center}
\begin{tikzcd}
M_1\arrow[d, dashed] \arrow[r]& M_2 \arrow[d,dashed] \\
M_1^\bullet \arrow[r, "*"  pos=1, "\beta \eta " ' pos=1.05] & M_2^{  \bullet}
\end{tikzcd}
\enspace.
\end{center}
\end{thm}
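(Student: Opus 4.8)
The plan is to proceed by structural induction on the derivation $\mathcal{D}$, following the shape of $M_1$ and the reduction rule applied to obtain $M_2$. Since $\rightarrow$ is contextually closed, most cases are immediate from the induction hypothesis once we know $(\_)^\bullet$ commutes with contexts; the real content is in the base cases where $M_1 \rightarrow M_2$ is a root reduction, i.e.\ one of the three rules in Figure~\ref{fig: term reduction rules} or one of the six commuting conversions in Figure~\ref{fig: term commuting conversions}. I will treat these as the heart of the argument and the contextual/inductive closure as routine.

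\textbf{Base cases: the reduction rules.} For $(\lambda x.P)Q \rightarrow P[Q/x]$ the translation gives $((\lambda x.P)Q)^\bullet = (\lambda x.P^\bullet)Q^\bullet \rightarrow_\beta P^\bullet[Q^\bullet/x]$, which equals $(P[Q/x])^\bullet$ by Lemma~\ref{lem: bullet commutes well with substitution in LAML}; so one $\beta$-step suffices. For $\mathtt{discard}_\sigma\, V \mathtt{\ in\ } M \rightarrow M$, the left-hand side translates to $\mathtt{let\ }\mathtt{E}_{\sigma^\bullet} V^\bullet \mathtt{\ be\ }I\mathtt{\ in\ }M^\bullet = (\lambda w.\,w\,)\,(\mathtt{E}_{\sigma^\bullet} V^\bullet)\ldots$ — unfolding the $\mathtt{let}$-$\mathtt{be}$-$I$-$\mathtt{in}$ notation from Definition~\ref{eqn:datatype unity and product}, this is $(\mathtt{E}_{\sigma^\bullet} V^\bullet) M^\bullet$ up to the identity-destructor. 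Here I invoke that $\sigma^\bullet$ is a ground type inhabited by $V^\bullet$ (Definition~\ref{defn: translation IMLL2 shpos with oplus into IMLL2}), that $V^\bullet = V$ is a value of $\sigma^\bullet$ (Lemma~\ref{lem: translation on values}), and hence by the defining property of the eraser (Definition~\ref{defn: duplicable and erasable types}, Theorem~\ref{thm: pi1 types are erasable}) we get $\mathtt{E}_{\sigma^\bullet} V^\bullet \rightarrow^*_{\beta\eta} I$, so the whole term $\beta\eta$-reduces to $I\,M^\bullet$ — wait, more precisely to $(\mathtt{let\ }I\mathtt{\ be\ }I\mathtt{\ in\ }M^\bullet) = I\,M^\bullet \rightarrow_\beta M^\bullet$. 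Symmetrically, for $\mathtt{copy}^U_\sigma\, V \mathtt{\ as\ }y,z\mathtt{\ in\ }M \rightarrow M[V/y,V/z]$, the translation is $\mathtt{let\ }\mathtt{D}^{V^\bullet}_{\sigma^\bullet} V^\bullet \mathtt{\ be\ }x_1,x_2\mathtt{\ in\ }M^\bullet$; since $V^\bullet$ is a value of the ground type $\sigma^\bullet$, the duplicator property (Definition~\ref{defn: duplicable and erasable types}, Theorem~\ref{thm: pi1 are duplicable}) gives $\mathtt{D}^{V^\bullet}_{\sigma^\bullet} V^\bullet \rightarrow^*_{\beta\eta} \langle V^\bullet, V^\bullet\rangle$, and the tensor-destructor then yields $M^\bullet[V^\bullet/x_1, V^\bullet/x_2]$, which equals $(M[V/y,V/z])^\bullet$ again by Lemma~\ref{lem: bullet commutes well with substitution in LAML} (after renaming) together with $V^\bullet = V$.

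\textbf{Base cases: commuting conversions.} Each of the six commuting conversions in Figure~\ref{fig: term commuting conversions} translates, on both sides, to a term built from the $\mathtt{let}$-$\mathtt{be}$-$I$-$\mathtt{in}$ and $\mathtt{let}$-$\mathtt{be}$-$x,y$-$\mathtt{in}$ sugar wrapped around $\mathtt{E}_{\sigma^\bullet}$, $\mathtt{D}^{V^\bullet}_{\sigma^\bullet}$ and the subterm translations. For example, $(\mathtt{discard}_\sigma\, M\mathtt{\ in\ }N)P$ becomes $(\mathtt{let\ }\mathtt{E}_{\sigma^\bullet}M^\bullet\mathtt{\ be\ }I\mathtt{\ in\ }N^\bullet)P^\bullet$ while the reduct $\mathtt{discard}_\sigma\, M\mathtt{\ in\ }(NP)$ becomes $\mathtt{let\ }\mathtt{E}_{\sigma^\bullet}M^\bullet\mathtt{\ be\ }I\mathtt{\ in\ }(N^\bullet P^\bullet)$; expanding $\mathtt{let\ }L\mathtt{\ be\ }I\mathtt{\ in\ }R \triangleq L R$ (or the $\lambda$-encoding), these two terms are either literally equal or differ by a handful of $\beta$/$\eta$-steps that push the application inside — a routine unfolding in each of the six cases. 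I would verify this bookkeeping once and cite "$\beta\eta$-reduction closed under contexts" to fill the remaining slots.

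\textbf{Main obstacle.} The one point requiring care is that the eraser and duplicator only satisfy their defining reductions \emph{on values} of the ground type, so I must make sure that wherever a root $\mathtt{discard}$ or $\mathtt{copy}$ reduction fires in $M_1$ the term plugged in is genuinely a value of $\sigma$ — but this is exactly the side condition built into the reduction rules of Figure~\ref{fig: term reduction rules} (the subject of the redex is a value $V$), and the type $\sigma$ is the one annotating the construct, which by the grammar of Definition~\ref{defn:Types for IMLL2shpos} is closed with no negative $\forall$, hence $\sigma^\bullet$ is ground; combined with Lemma~\ref{lem: translation on values} this closes the gap. A secondary subtlety is that the simulation produces $\rightarrow^*_{\beta\eta}$ (possibly zero steps, as in the insignificant cases) rather than a single step, which is why the theorem is stated with the reflexive–transitive closure; the induction goes through without needing a measure, since we never need the simulation to be strictly decreasing.
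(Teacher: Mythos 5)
Your proposal is correct and follows essentially the same route as the paper's proof: structural induction on the term, with the substitution lemma (Lemma~\ref{lem: bullet commutes well with substitution in LAML}) handling the $\beta$-case and the eraser/duplicator properties (Theorems~\ref{thm: pi1 types are erasable} and~\ref{thm: pi1 are duplicable}, via Lemma~\ref{lem: translation on values} to see that $V^\bullet=V$ is a value of the ground type $\sigma^\bullet$) handling the $\mathtt{discard}$ and $\mathtt{copy}$ redexes. The paper dismisses the commuting conversions and contextual closure as straightforward, exactly as you do.
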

\begin{proof} We can proceed by structural induction on $M_1$.
One of the most interesting cases is when
$M_1$ is $ (\lambda x. P)Q$ and $M_2= P[Q/x]$. 
Lemma~\ref{lem: bullet commutes well with substitution in LAML}
implies $((\lambda y. P)Q)^\bullet= (\lambda y. P^\bullet)Q^\bullet \rightarrow_\beta P^\bullet[Q^\bullet/x] =(P[Q/x])^\bullet$. 
If $M_1$ is $\mathtt{discard}_{\sigma}\, V \mathtt{\ in\ }N$ and
$M_2$ is $N$, then $V$ is a value of type $\sigma$. By Lemma~\ref{lem: translation on values}.\ref{enum: 2 translation on values}, $V^\bullet$ is a value of $\sigma^\bullet$.  Hence:
\allowdisplaybreaks
\begin{align*}
(\mathtt{discard}_{\sigma}\, V \mathtt{\ in\ }N)^\bullet &
\triangleq \mathtt{let\ } \mathtt{E}_{\sigma^\bullet}\, V^\bullet \mathtt{\ be\ }I \mathtt{\ in\ }N^\bullet
\rightarrow^*_\beta  N^\bullet
\end{align*}
\noindent
by Theorem~\ref{thm: pi1 types are erasable}.
If $M_1$ is $\mathtt{copy}^{U}_{\sigma}\, V \mathtt{\ as\ }x_1,x_2 \mathtt{\ in\ }N$ and $M_2$ is $N[V/x_1, V/x_2]$, 
then $U$ and $V$ are both values of type $\sigma$. By Lemma~\ref{lem: translation on values}.\ref{enum: 2 translation on values}, $U^\bullet$ and $V^\bullet$ are both values of type $\sigma^\bullet$. 
Hence:
\allowdisplaybreaks
\begin{align*}
(\mathtt{copy}^{U}_{\sigma}\, V \mathtt{\ as\ }x_1,x_2 \mathtt{\ in\ 
}N)^\bullet&\triangleq  \mathtt{let\ }\mathtt{D}^{U^\bullet } _{\sigma^\bullet}\, V^\bullet
\mathtt{\ be\ } x_1,x_2 \mathtt{\ in \ }N^\bullet \\
&\rightarrow^*_{\beta \eta}  \mathtt{let\ }\langle V^\bullet, V^\bullet \rangle \mathtt{\ be\ } x_1,x_2 \mathtt{\ in \ }N^\bullet  &&\text{Theorem}~\ref{thm: pi1 are duplicable} \\
&\rightarrow^*_\beta N^\bullet[V^\bullet/x_1, V^\bullet/x_2]\\
&\triangleq (N^\bullet[V^\bullet/x_1]) [ V^\bullet/x_2]\\
&=    ((N[V/x_1]) [ V/x_2])^\bullet &&\text{Lemma}~\ref{lem: bullet commutes well with substitution in LAML}\\
&\triangleq (N[V/x_1, V/x_2])^\bullet   \enspace .
\end{align*}
\end{proof}
\noindent
We conclude by estimating the impact of the translation  on the size of terms  produced by  \enquote{unpacking} the constructs $\mathtt{discard}_{\sigma}$ and $\mathtt{copy}^V_{\sigma}$. 
We need to bound the dimension of erasers and duplicators with ground type $A$. We rely on the map $(\_ )^-$ 
(Definition~\ref{defn: stripping forall} 
in~\ref{sec: the d-soundness theorem DICE})
that, intuitively, strips every occurrence of $\forall$ away from a
given type (Remark~\ref{rem: duplicator} in~\ref{sec: the d-soundness theorem DICE}.)
%Moreover, recall the notation of Remark~\ref{rem: duplicator} 
%in~\ref{sec: the d-soundness theorem DICE}.

\begin{lem}[Size of duplicators and erasers] 
\label{lem: size of duplicator}  For every ground type $A$:
\begin{enumerate}[(1)]
\item  \label{enum: size of eraser}$\vert \mathtt{E}_{A} \vert \in \mathcal{O}(\vert A^- \vert)$.
\item\label{enum: size of duplicator} If $V$ is a value of $A$, then  $\vert \mathtt{D}_{A}^{V} \vert \in \mathcal{O}(  2^{\vert A^- \vert^2})$.
\end{enumerate}
\end{lem}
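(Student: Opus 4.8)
The plan is to trace through the explicit constructions of $\mathtt{E}_A$ and $\mathtt{D}_A^V$ --- recalled in the three-step description following Definition~\ref{defn:The type assigment IMLLshpos2} and carried out in full in~\ref{sec: the d-soundness theorem DICE} --- and to tally, at each stage, the term constructors they introduce, expressing the totals through the stripping map $(\_)^-$ of Definition~\ref{defn: stripping forall}, which deletes every $\forall$-node from a type. Since Theorem~\ref{thm: pi1 types are erasable} and Theorem~\ref{thm: pi1 are duplicable} build $\mathtt{E}_A$ and $\mathtt{D}_A^V$ by recursion on $A$ with its quantifiers instantiated by $\mathbf{1}$, both sizes are governed by the quantifier-free skeleton of $A$, i.e.\ by $\vert A^-\vert$.

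For Point~\ref{enum: size of eraser} I would run the simultaneous induction on $\Pi_1$- and $\Sigma_1$-types used in the proof of Theorem~\ref{thm: pi1 types are erasable}, strengthening its statement to $\vert\mathtt{E}_A\vert\in\mathcal{O}(\vert A^-\vert)$ for $\Pi_1$-types and $\vert\mathtt{H}_A\vert\in\mathcal{O}(\vert A^-\vert)$ for $\Sigma_1$-types. In each grammar clause the construction wraps a constant number of abstractions, applications and $\mathtt{let}$-destructors around the (co-)erasers of the immediate sub-types; the $\forall$-clause leaves the term unchanged, because the bound variable is instantiated by $\mathbf{1}$, whose eraser and canonical inhabitant $I$ have constant size. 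Hence the size telescopes to a constant times the number of $\multimap$-nodes of $A$, and since $(\_)^-$ deletes only the $\forall$-nodes, this is $\mathcal{O}(\vert A^-\vert)$.

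For Point~\ref{enum: size of duplicator} I would split $\vert\mathtt{D}_A^V\vert$ along Steps~\ref{enumerate: duplication step 1}--\ref{enumerate: duplication step 3}. The $\eta$-expansion of a value of $A$ to $U_A$ (Step~\ref{enumerate: duplication step 1}) and its compilation to the boolean tuple $\lceil U_A\rceil$ (Step~\ref{enumerate: duplication step 2}) each cost a polynomial in $\vert A^-\vert$, because the $\eta$-long normal forms of $A$ have size $\mathcal{O}(\vert A^-\vert)$. The exponential contribution is the look-up table $\mathtt{dec}^s_A$ of Step~\ref{enumerate: duplication step 3}: it is a nested $\texttt{if}$-$\texttt{then}$-$\texttt{else}$ with one branch per $\beta\eta$-normal inhabitant of $A$, each branch carrying a guard on boolean tuples of length $\mathcal{O}(\vert A^-\vert)$ and a payload $\langle U_A,U_A\rangle$ of size $\mathcal{O}(\vert A^-\vert)$, together with the default branch $\langle V,V\rangle$ supplied by Theorem~\ref{thm: pi1 are duplicable}. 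Thus $\vert\mathtt{dec}^s_A\vert$, and with it $\vert\mathtt{D}_A^V\vert$, is $\mathcal{O}(\vert A^-\vert)$ times the number $N_A$ of $\beta\eta$-normal inhabitants of $A$, so the claim reduces to $N_A\in\mathcal{O}(2^{\vert A^-\vert^2})$.

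The main obstacle is precisely this combinatorial bound on $N_A$. I would derive it from the size estimates on cut-free lazy derivations: a ground type is closed and lazy, so every $\beta\eta$-normal inhabitant of $A$ is a linear $\lambda$-term in normal form whose size is $\mathcal{O}(\vert A^-\vert)$ --- the crude bound $\mathcal{O}(\vert A\vert)$ of Lemma~\ref{lem: lazy derivation properties}.\ref{enum: size term bounded size types LEML} being sharpened to $\mathcal{O}(\vert A^-\vert)$ by the analysis of $\eta$-long normal forms in~\ref{sec: the d-soundness theorem DICE} --- and the number of distinct variables occurring in it is bounded by its size. Counting linear $\lambda$-terms with at most $k=\mathcal{O}(\vert A^-\vert)$ nodes over at most $k$ variables gives $k^{\mathcal{O}(k)}=2^{\mathcal{O}(k\log k)}$, which is dominated by $2^{\vert A^-\vert^2}$. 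Assembling the three estimates yields $\vert\mathtt{D}_A^V\vert\in\mathcal{O}(2^{\vert A^-\vert^2})$, completing the plan.
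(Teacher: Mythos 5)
Your overall strategy coincides with the paper's: Point~\ref{enum: size of eraser} by inspecting the inductive construction of $\mathtt{E}_A$, and Point~\ref{enum: size of duplicator} by decomposing $\mathtt{D}_A^V$ into $\mathtt{dec}^s_A\circ\mathtt{enc}^s_A\circ\mathtt{sub}^s_A$ and locating the exponential blow-up in a hard-coded look-up table. However, your accounting of the middle component is wrong. You claim that Step~\ref{enumerate: duplication step 2} (the compiler $\mathtt{enc}^s_A$) costs only a polynomial in $\vert A^-\vert$; in fact the paper explicitly singles out $\mathtt{enc}^s_A$, alongside $\mathtt{dec}^s_A$, as one of the two components whose size is \emph{not} linear in $\vert A^-\vert$. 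The reason is that $\mathtt{enc}^s_A$ (Definition~\ref{defn: enc}) is built from $\mathcal{O}(\vert A^-\vert)$ occurrences of $\mathtt{abs}^s$ and $\mathtt{app}^s$, and each of these (Lemma~\ref{lem: existence abs app}) is itself a doubly nested \texttt{if}-\texttt{then}-\texttt{else} enumerating all $2^s$ boolean tuples of length $s$ at each level --- an exponentially large term, not a polynomial one. You appear to have conflated the cost of the \emph{operation} performed (producing a tuple of size $s$) with the size of the \emph{term} implementing it, which is what the lemma measures. The final bound survives only because $s\cdot 2^{\mathcal{O}(s)}$ with $s=\mathcal{O}(\vert A^-\vert\cdot\log\vert A^-\vert)$ still sits inside $\mathcal{O}(2^{\vert A^-\vert^2})$, but as written your tally is incorrect.

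A second, less damaging inaccuracy: you describe $\mathtt{dec}^s_A$ as having ``one branch per $\beta\eta$-normal inhabitant of $A$'' and therefore reduce the claim to a combinatorial bound on the number $N_A$ of such inhabitants. The actual term (Definition~\ref{defn: dec}, via the generalized selection of Definition~\ref{defn: generalized selection}) branches on every bit of the input, so it has exactly $2^s$ branches --- one per boolean tuple of length $s$, most of which carry the default pair $\langle U,U\rangle$. No separate counting lemma for $N_A$ is needed or used; the bound $2^s=2^{\mathcal{O}(\vert A^-\vert\log\vert A^-\vert)}$ is immediate from the construction. Your combinatorial estimate $N_A\le k^{\mathcal{O}(k)}$ happens to land on the same asymptotics, so the conclusion is not endangered, but it bounds the wrong quantity: a selection mechanism indexed by an arbitrary finite set of normal forms is not something the calculus provides, which is precisely why the construction must enumerate all $2^s$ tuples.
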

\begin{proof}
Point~\ref{enum: size of eraser} is straightforward by looking at the proof of Theorem~\ref{thm: pi1 types are erasable}. Concerning Point~\ref{enum: size of duplicator}, from 
\ref{sec: the d-soundness theorem DICE} we know that $\mathtt{D}_{A}^{V}$ is
$ \mathtt{dec}^s_{A} \circ  
\mathtt{enc}_{A}^s  \circ
\mathtt{sub}^s_{A} $,
where  $s= \mathcal{O}(\vert  A^- \vert \, \cdot \, \log \vert A^- \vert )$. The components of $\mathtt{D}_{A}^{V}$ with a
size not linear in $\vert A^- \vert $ are $\mathtt{dec}^s_{A}$ and  $\mathtt{enc}^s_{A}$. The $ \lambda $-term $\mathtt{dec}^s_{A}$ (see point~\ref{enumerate: duplication step 3} in Section~\ref{sec:  the system shpos, with, oplus  IMLL2 cut elimination and bound}) nests occurrences of
\texttt{if}-\texttt{then}-\texttt{else} each containing $2^{s}$ pairs of normal inhabitants of $A$, every of which,
by Lemma~\ref{lem: pi1type bound}, has size bounded by 
$\vert A^- \vert$. 
Similarly, $\mathtt{enc}^s_{A}$ alternates instances of
$\lambda$-terms $\mathtt{abs}^s$ and $\mathtt{app}^s$ which, 
again, nest occurrences of \texttt{if}-\texttt{then}-\texttt{else} 
every one with $2^{s}$ instances of boolean strings of size $s$. 
The overall complexity of $\mathtt{D}_{A}^{V}$ is $\mathcal{O}(s \cdot 2^{s})= \mathcal{O}(2^{\vert A^-  \vert^2})$. 
\end{proof}
\noindent

\begin{thm}[Exponential Compression]
\label{thm: exponential compression} 
Let  $\mathcal{D} \triangleleft \Gamma \vdash M: \sigma$ be a derivation in 
$\mathsf{LEM}$. Then,  
$\vert M^\bullet \vert = \mathcal{O}(2^{\vert M \vert^k})$, 
for some $k \geq 1$.
\end{thm}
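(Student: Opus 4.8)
The statement bounds the size of the $\mathsf{IMLL}_2$-term $M^\bullet$ obtained by unpacking the $\mathtt{discard}$ and $\mathtt{copy}$ constructs in $M$. The plan is to proceed by structural induction on $M$ (equivalently, on the derivation $\mathcal{D}$), tracking simultaneously a bound on $\vert M^\bullet\vert$ and on the sizes of the types $\sigma$ occurring as annotations of $\mathtt{discard}_\sigma$ and $\mathtt{copy}^V_\sigma$ inside $M$. The cases $x$, $\lambda x.P$, $PQ$ are trivial: $(\_)^\bullet$ acts as a homomorphism there, so the bound propagates additively with a constant overhead. The two interesting cases are $\mathtt{discard}_\sigma\,P\mathtt{\ in\ }N$ and $\mathtt{copy}^V_\sigma\,P\mathtt{\ as\ }x_1,x_2\mathtt{\ in\ }N$, where the translation inserts the eraser $\mathtt{E}_{\sigma^\bullet}$ and the duplicator $\mathtt{D}^{V^\bullet}_{\sigma^\bullet}$. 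By Lemma~\ref{lem: size of duplicator}, $\vert\mathtt{E}_{\sigma^\bullet}\vert\in\mathcal{O}(\vert(\sigma^\bullet)^-\vert)$ and $\vert\mathtt{D}^{V^\bullet}_{\sigma^\bullet}\vert\in\mathcal{O}(2^{\vert(\sigma^\bullet)^-\vert^2})$, so the size blow-up at each such node is single-exponential in the size of the annotating type.

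The key auxiliary fact I would establish is that the annotating types $\sigma$ appearing in a \emph{typable} term $M$ have size bounded by a function of $\vert M\vert$ — indeed, they can be read off the derivation $\mathcal{D}$, and since $\sigma^\bullet$ and $(\sigma^\bullet)^-$ are obtained from $\sigma$ by erasing $\shpos$ and $\forall$ (hence $\vert(\sigma^\bullet)^-\vert\le\vert\sigma\vert$), it suffices to bound $\vert\sigma\vert$. Here one must be a little careful: a single $\mathtt{copy}^V_\sigma$ or $\mathtt{discard}_\sigma$ node contributes $\vert V\vert$ or a constant to $\vert M\vert$ via the size clauses in~\eqref{align: size of copy}, so $\vert\sigma\vert$ need not be literally $\le\vert M\vert$; but by inspection of the typing rules $c$, $w$, $d$, $p$, the type $\sigma$ is a subformula of a type appearing in the derivation, and one can argue — using e.g. that closed $\Pi_1$-types driving these constructs are determined up to the structure already present — that $\vert\sigma\vert$ is polynomially (in fact linearly, after taking a global maximum) bounded in $\vert M\vert$. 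Writing $m=\vert M\vert$ and $B$ for a uniform bound on the sizes of all type annotations occurring in $M$, with $B=\mathcal{O}(m)$ (or $\mathcal{O}(m^c)$ for some constant $c$), each unpacked $\mathtt{copy}$ node contributes $\mathcal{O}(2^{B^2})=\mathcal{O}(2^{m^k})$ for a suitable $k$.

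With this in hand the induction closes easily: $M^\bullet$ has at most $m$ nodes coming from the homomorphic part, and at each of the $\le m$ nodes that were $\mathtt{discard}$ or $\mathtt{copy}$ constructs we splice in a subterm of size $\mathcal{O}(2^{m^k})$; since the translation of $\mathtt{copy}^V_\sigma\,P\mathtt{\ as\ }x_1,x_2\mathtt{\ in\ }Q$ is $\mathtt{let\ }\mathtt{D}^{V^\bullet}_{\sigma^\bullet}\,P^\bullet\mathtt{\ be\ }x_1,x_2\mathtt{\ in\ }Q^\bullet$, which also copies $V^\bullet=V$ of size $\le m$ inside the duplicator but not outside, the total is $\sum$ of at most $m$ terms each $\mathcal{O}(2^{m^k})$ plus the polynomial homomorphic overhead, i.e.\ $\mathcal{O}(m\cdot 2^{m^k})=\mathcal{O}(2^{m^{k'}})$ for a slightly larger $k'\ge1$. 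One should also note that $\mathtt{E}_{\sigma^\bullet}$, being only linear in $\vert(\sigma^\bullet)^-\vert$, is absorbed into the same bound. Renaming $k'$ to $k$ yields $\vert M^\bullet\vert=\mathcal{O}(2^{\vert M\vert^k})$.

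**Main obstacle.** The delicate point is not the exponential arithmetic — that is routine given Lemma~\ref{lem: size of duplicator} — but rather pinning down precisely why the type annotations $\sigma$ in a typable $M$ are bounded in terms of $\vert M\vert$. Unlike in a Church-style calculus, the derivation $\mathcal{D}$ is extra data, and in principle one could imagine huge ground types annotating a tiny $\mathtt{discard}$. The resolution is that $\mathsf{LEM}$ is a type-\emph{assignment} over a fixed term, and the theorem is stated about a specific derivation $\mathcal{D}$; so the honest reading is that the bound's constant (and the exponent $k$) may depend on $\mathcal{D}$ through the maximal annotation size, or — if one wants a bound purely in $\vert M\vert$ — one invokes the fact that the constructs $\mathtt{copy}^V_\sigma$, $\mathtt{discard}_\sigma$ carry $\sigma$ and $V$ syntactically, and the size clause~\eqref{align: size of copy} already charges $\vert V\vert$ (with $V\!:\!\sigma$, hence $\vert\sigma\vert$ comparable to $\vert V\vert$ for ground types by Lemma~\ref{lem: pi1type bound}) against $\vert M\vert$. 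I would make this explicit as a short lemma (``annotation types are linearly bounded in term size'') before running the induction, since it is the one genuinely non-obvious ingredient.
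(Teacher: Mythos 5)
Your proposal follows essentially the same route as the paper's proof: structural induction on $M$, with the $\mathtt{copy}$ case as the crux, Lemma~\ref{lem: size of duplicator} to bound the duplicator, and the observation that the size clause~\eqref{align: size of copy} already charges $\vert V\vert$ against $\vert M\vert$ so that the annotation is controlled by the term size. The one point where you are imprecise is the claim that $\vert(\sigma^\bullet)^-\vert$ is ``comparable to $\vert V\vert$ by Lemma~\ref{lem: pi1type bound}'': that lemma bounds $\vert A^-\vert$ by twice the size of the \emph{$\eta$-long normal form} of an inhabitant, not of an arbitrary value (e.g.\ a small non-$\eta$-expanded value of a large ground type breaks the inequality). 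The paper closes exactly this gap by invoking Lemma~\ref{lem: lazy derivation properties}.\ref{enum: finite number of normal forms} to assume w.l.o.g.\ that $V$ is a value of largest size of type $\sigma$, which forces $V$ to coincide with its $\eta$-long form and yields $\vert A^-\vert\leq 2\vert V\vert$; you would need to add this step (which you half-anticipate in your ``main obstacle'' discussion) to make the argument go through.
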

\begin{proof} The proof is by structural induction on $M$. The  interesting case is when $M$ is 
$\mathtt{copy}^{V}_{\sigma}\, P \mathtt{\ as\ }x_1,x_2 \mathtt{\ in\ }Q$. 
Since $M$ is typable,  $V$ has type $\sigma$, that is  closed and free from negative occurrences of $\forall$, hence lazy. 
By Lemma~\ref{lem: lazy derivation properties}.\ref{enum: finite number of normal forms}, it is safe to assume that $V$ is a value with largest size among values of type $\sigma$.
By Lemma~\ref{lem: translation on values}, $V=V^\bullet$ is also the largest value of type $\sigma^\bullet$ in $\mathsf{IMLL_2}$. Finally, by Lemma~\ref{lem: pi1type bound}, this implies that $V$  is a $\eta$-long normal form of type $\sigma^\bullet$.  Now, by using  Definition~\ref{eqn:datatype unity and product}, we have   $\vert M^\bullet \vert= \vert \mathtt{let\ }\mathtt{D}^{V} _{\sigma^\bullet}\, P^\bullet \mathtt{\ be\ } x_1,x_2 \mathtt{\ in \ }Q^\bullet\vert = \vert \mathtt{D}^{V} _{\sigma^\bullet}\vert + \vert  P^\bullet \vert+ \vert Q^\bullet \vert +4$.  On the one hand, by induction hypothesis, we obtain  $\vert   P^\bullet \vert = \mathcal{O}(2^{\vert P \vert^{k'}})$ and  $\vert   Q^\bullet \vert = \mathcal{O}(2^{\vert Q \vert^{k''}})$, for some $k', k'' \geq 1$.  On the other hand, by applying both Lemma~\ref{lem: size of duplicator} and  Lemma~\ref{lem: pi1type bound},  we have $\vert \mathtt{D}^{V} _{\sigma^\bullet}\vert=\mathcal{O}(2^{\vert A^- \vert^2}) =\mathcal{O}(2^{2 \cdot \vert V \vert^2})$. 
%
%Then, we have:
%\allowdisplaybreaks
%\begin{align*}
%\vert M^\bullet \vert&= \vert \mathtt{let\ }\mathtt{D}^{V} _{\sigma^\bullet}\, P^\bullet \mathtt{\ be\ } x_1,x_2 \mathtt{\ in \ }Q^\bullet\vert \\
%&=\vert (\mathtt{D}^{V} _{\sigma^\bullet}\, P^\bullet)(\lambda x_1 . \lambda x_2. Q^\bullet)\vert&& 
%%%%%%%%%%%%%%%%%
%\text{Definition}~\ref{eqn:datatype unity and product}  
%\\
%%%%%%%%%%%%%%%%%%%%%%%%%%%%%%%%%%%%%%%%%%%%
%&=  \vert \mathtt{D}^{V} _{\sigma^\bullet}\vert + \vert  P^\bullet \vert+ \vert Q^\bullet \vert +4\\ 
%&\leq  \vert \mathtt{D}^{V} _{\sigma^\bullet}\vert + c' \cdot 2^{\vert P \vert^{k'}}+ c'' \cdot 2^{\vert Q \vert^{k''}}+4 &&\text{induction hypothesis}\\
%&\leq  c \cdot 2^{\vert A^- \vert^2}+ c' \cdot 2^{\vert P \vert^{k'}}+ c'' \cdot 2^{\vert Q \vert^{k''}} +4&&\text{Lemma}~\ref{lem: size of duplicator}\\
%&\leq  c \cdot 2^{2 \cdot \vert V \vert^2}+ c' \cdot 2^{\vert P \vert^{k'}}+ c'' \cdot 2^{\vert Q \vert^{k''}} +4&&\text{Lemma}~\ref{lem: pi1type bound}\enspace .
%\end{align*}
Therefore, there exists $k\geq 1$ such that $\vert M^\bullet \vert = \mathcal{O}(2^{(\vert V \vert + \vert P \vert + \vert Q \vert +1)^k})=\mathcal{O}(2^{\vert M \vert^k })$.
\end{proof}
%\begin{prop}[Exponential Compression] For every derivation $\mathcal{D}$ in $\mathsf{LEM}$, $\vert \mathcal{D}^\bullet \vert= \mathcal{O}(2^{\vert \mathcal{D} \vert^2})$.
%	\qed
%\end{prop}
%\begin{proof}
%The proof is by induction on the dimension of $ \mathcal{D}^\bullet $, proceeding by cases on the last rule of $ \mathcal{D} $. The interesting case is with $ c $ as last rule of $ \mathcal{D} $. 
%The corresponding $ \mathcal{D}^\bullet $ is the one in~\eqref{equation: interesting case derivation blow up}. 
%Then:
%\begin{align*}
%\vert \mathcal{D}^\bullet \vert &= \vert \mathcal{D} ^{\prime \bullet} \vert + \vert  \overline{\mathcal{D}} \vert +2\\
%&= \mathcal{O}(2^{\vert \mathcal{D}' \vert^2}) + \vert  \overline{\mathcal{D}} \vert +2 &&\text{by induction hypothesis}\\
%&= \mathcal{O}(2^{\vert \mathcal{D}' \vert^2})+   \mathcal{O}(\vert \mathtt{D}_{\mathtt{A}}^{V_{\mathtt{A}}} \vert)   +2\\
%&= \mathcal{O}(2^{\vert \mathcal{D}' \vert^2})+  
%\mathcal{O}(2^{\vert \sigma^+ \vert^2}) +2 &&\text{by Lemma~\ref{lem: size of duplicator}} \\
%&= \mathcal{O}(2^{\vert \mathcal{D}' \vert^2})+  \mathcal{O}(2^{2 \cdot \vert (\mathcal{D}''_{\sigma})^\bullet \vert^2}) +2 && \text{by Lemma~\ref{lem: eta expansion double the size of types}} \text{ since }(\mathcal{D}''_{\sigma})^\bullet= \mathcal{D}^{\prime \prime \bullet}_{\sigma^+}  \\
%&= \mathcal{O}(2^{\vert \mathcal{D}' \vert^2})+  \mathcal{O}(2^{2 \cdot \vert (\mathcal{D}''_{\sigma}) \vert^2}) +2  &&\text{the rule }c \text{ cannot occur in }\mathcal{D}''_\sigma \\
%&=\mathcal{O}(2^{\vert \mathcal{D}\vert^2}) \enspace .
%\end{align*}
%\end{proof}
\section{The expressiveness of $\mathsf{LEM}$ and applications}
\label{section:The expressiveness of IMLLshpos2 and applications}
Theorem~\ref{thm: translations for LAML} says that $\mathsf{LEM}$  is not  \enquote{algorithmically} more expressive than  $ \mathsf{IMLL}_2 $. Nonetheless, terms with type in  $\mathsf{LEM}$,  and their evaluation mechanisms, exponentially compress the corresponding linear $ \lambda $-terms and evaluations in $ \mathsf{IMLL}_2 $ (Theorem~\ref{thm: exponential compression}).  The goal of this section is to explore  the benefits of this compression.

%The duplication and erasure of values makes it sensible to
%code both boolean circuits, that require to share or get rid of conclusions of the nodes, and iterators quite close to Church numerals.

\begin{figure}[ht]
\centering
 \subfigure[\scriptsize{From left,  input,  internal,  fan-out, and  output nodes.}	\label{figure:node of boolean circuits} ]{
\scalebox{.65}{
 \begin{tikzpicture}[
 node/.style={circle,minimum size=1pt,fill=white},
 node distance=0.25cm
 ]

 \node[circle, inner sep =1pt, draw](op){\scriptsize{$\textsl{op}^n$}};
 \node[circle]  at  ($(op)+(0, +1)$)(ab){\scriptsize{\ldots $n\geq 0$\ldots}} ;
   \coordinate[right = of ab](abri){} ;
   \coordinate[left = of ab](able){} ;
   \draw[->](able)to[out=270, in=135](op);
      \draw[->](abri)to[out=270, in=45](op);
       \draw[->](op)--($(op)+(0, -1)$) ;

   %%%%%%%%%%%%%%%%%%%%%
 \node[node,  draw, left=3cm  of op](in){\scriptsize{$\textrm{x}$}};
   \draw[->](in)--($(in)+(0, -1)$) ;  
   %%%%%%%%%%%%%%%%%%%%% 
   \node[circle, inner sep= 1pt, draw, right=3cm of op, fill=black](fan){};
  \node[node]  at ($(fan)+(0, -1)$)(befan){\scriptsize{\ldots $n\geq 0$\ldots}} ;
   \node[circle] at  ($(fan)+(0, +1)$)(abfan){} ;
   \coordinate[right = of befan](berifan){} ;
   \coordinate[left = of befan](belefan){} ;
     \draw[->, thick](fan)to[out=225, in=90](belefan);
      \draw[->,thick](fan)to[out=315, in=90](berifan);
         \draw[->](abfan)to[out=270, in=90](fan);        
         %%%%%%%%%%%%%%%%%%%       
   \node[node,  draw, right=4cm  of belefan](out){\scriptsize{$\mathrm{y}$}};
   \draw[<-](out)--($(out)+(0, 1)$) ;     
 \end{tikzpicture}
 }}

\subfigure[\scriptsize{The 2-bits full-adder boolean circuit}\label{figure:2-bits full adder}]{
\scalebox{.65}{
 \begin{tikzpicture}[
 auto=left,
 empty/.style = {minimum size=0cm,fill=white},
 el/.style = {inner sep=2pt, align=left, sloped},
 node/.style={circle,minimum size=.9cm,fill=white},
 sqnode/.style={rectangle,inner sep=2pt,fill=white},
 node distance=1cm
 ]
 \draw(0, 1)node(q){};
 \node[node,draw            ] (b1)  {$ \mathrm{x}_1 $};
 \node[node,draw,below=of b1] (b2)  {$ \mathrm{x}_2 $};
 \node[node,draw,below=of b2] (cin) {$ \mathrm{x}_{\textrm{in}} $};
 
 \node[circle, fill=black, inner sep=1pt, right=of b1, draw ](fop1)  {};
 \node[circle, fill=black, inner sep=1pt, right=of b2, draw ](fop2){};

 \node[node,draw,right=2cm of b1] (xor1)  {$\bigoplus$};
 \node[circle, fill=black, inner sep=1pt,  right=of xor1] (foxor1) {};
 \node[node,draw,right=2cm of b2] (and1)  {$\bigwedge$};
 
 \node[node,draw,right= of foxor1] (xor2) {$\bigoplus$};
 \draw ($ (xor2)+(0,-3.825) $) node[node,draw] (and2) {$\bigwedge$};
  \node[circle, fill=black, inner sep=1pt, draw, left= of and2] (focin) {};
  
  \node[node, below=of xor2](preor){};
 \node [node,draw, right= of preor] (or) {$\bigvee$ };

 \node[node, right=of xor2, draw] (output1){$ \mathrm{y}_1 $};
 \node[node, right=of or, draw]  (output2){$ \mathrm{y}_2 $};
 \draw (xor2) to[out=0, in=180]node[above]{\scriptsize{$s$}} (output1);
  \draw (or) to[out=0, in=180]node[above]{\scriptsize{$y_{\text{out}}$}} (output2);
 \draw ($ (output1)+(1.75,0) $) node[sqnode] (s) {\scriptsize{$(x'_1\oplus x'_2)\oplus y'$}};
 \draw ($ (output2)+(2.50,0) $) 
   node[sqnode] (s) {\scriptsize{$(x''_1\wedge x''_2)\vee((x'_1\oplus x'_2)\wedge y'')$}};

 \draw (b1)   to[out=0,in=180]node[above]{\scriptsize{$x_1$}} (fop1);
 \draw (b2)   to[out=0,in=180]node[above]{\scriptsize{$x_2$}} (fop2);
 \draw (cin)  to[out=0,in=180]node [above]{\scriptsize{$y_{\text{in}}$}}(focin);
 
 \draw[thick](fop1)  to[out=0,in=180] node[above  ] {\scriptsize{$ x'_1  $}} (xor1);
 \draw[thick](fop1)  to[out=0,in=135] node[below,pos=0.2] {\scriptsize{$ x''_1\ \ $}} (and1);
 
 \draw[ thick] (fop2)  to[out=0,in=225] node[above,pos=0.2] {\scriptsize{$ x'_2\ \ $}} (xor1);
 \draw[ thick] (fop2)  to[out=0,in=180] node[below        ] {\scriptsize{$ x''_2 $}} (and1);
 
 \draw[ thick] (focin) to[out=0,in=225] node[above, pos=0.1] {\scriptsize{$ y'\ \ $}} (xor2);
 \draw[ thick] (focin) to[out=0,in=180] node[below] {\scriptsize{$ y''$}} (and2);

 \draw (xor1)   to[out=  0,in=180]node[above]{\scriptsize{$z_1$}}                                     (foxor1);
 \draw[ thick] (foxor1) to[out=  0,in=180] node[above]       {\scriptsize{$ z_1' $}}  (xor2);
 \draw[ thick] (foxor1) to[out=0,in=135] node[below,pos=0.1] {\scriptsize{$z''_1 \ \ $}} (and2);

 \draw (and1) to[out=0,in=180] node[above,pos=0.1] {\scriptsize{$ z_2 $}} (or);
 \draw (and2) to[out=0,in=225] node[below, right] {\scriptsize{$z_3 $}} (or);
 \end{tikzpicture}
} % scalebox
}
\subfigure[\scriptsize{The 3-bits majority boolean circuit}
\label{figure:maj3}]{
\scalebox{.65}{
\begin{tikzpicture}[
auto=left,
empty/.style = {minimum size=0cm,fill=white},
el/.style = {inner sep=2pt, align=left, sloped},
node/.style={circle,minimum size=.9cm,fill=white},
sqnode/.style={rectangle,inner sep=2pt,fill=white},
node distance=1cm
]
% \coordinate[right=of b1 ] (fop1)  {};
 \draw(0, 1)node(q){};
\node[node,draw            ] (b1)  {$ \mathrm{x}_1 $};
\node[node,draw,below=of b1] (b2)  {$ \mathrm{x}_2 $};

\node[circle, fill=black, inner sep=1pt, right=of b1, draw] (fob1){}; 
\node[circle, fill=black, inner sep=1pt, right=of b2, draw ] (fob2){};

\node[node,draw,right=of fob1] (o1)  {$\bigvee$};
\node[node,draw,right=of fob2] (a1)  {$\bigwedge$};
\node[node,draw,below=of b2  ] (b3)  {$ \mathrm{x}_3 $};

\coordinate[right=of a1] (foa1);

\node[node,draw,right=of foa1] (o2)  {$\bigvee$};
\node[node,draw,below=of o2] (a2)  {$\bigwedge$};

\node[circle, fill=black, inner sep=1pt, right=of a1, draw ] (foa1){};
\node[node, below= of a1](beforefob3){};
\node[circle, fill=black, inner sep=1pt, right= of beforefob3, draw] (fob3){};
\node[circle, fill=black, inner sep=1pt, right=of o2, draw ]   (foo2){};

\node[node,draw,right=of foo2] (a3)  {$\bigwedge$};
\node[node,draw,above=of a3] (o3)  {$\bigvee$};
\node[circle, fill=black, inner sep=1pt, left= of o3, draw] (foo1){};
\draw ($ (o3)+(2,-1) $) node[node,draw] (a) {$\bigwedge$};

\node[node, right = of a,draw ]  (output1){$\mathrm{y_1}$};
\node[node, right=of a2, draw] (output2){$\mathrm{y_2}$};
\draw (a) to[out=0, in=180]node[above]{\scriptsize{$m$}} (output1);
 \draw (a2) to[out=0, in=180]node[above]{\scriptsize{$g$}} (output2);
\draw ($(output1)+(2,0)$)node[sqnode] (grabage) {\scriptsize{$ \textsl{maj}_3(x_1',x_2',x_3')$}};
\draw ($(output2)+(2,0)$)node[sqnode] (out) {\scriptsize{$ \textsl{min}_3\{x_1'',x_2'',x_3''\} $}};

\draw (b1)   to[out=  0,in=180]node[above]{\scriptsize{$x_1$}} (fob1);
\draw[thick] (fob1) to[out=  0,in=180] node[above] { \scriptsize{$x_1'$}} (o1);
\draw[thick] (fob1) to[out=  0,in=135] node[below, pos=0.2] { \scriptsize{$x_1''\ \ $}} (a1);
\draw (b2)  to[out=  0,in=180] node[above]{\scriptsize{$x_2$}}(fob2);
\draw[thick] (fob2)  to[out= 0,in=225] node[above, pos=0.2] {\scriptsize{ $x_2'\ \ \ $}} (o1);
\draw[thick] (fob2)  to[out= 0,in=180] node[below] { \scriptsize{$x_2''$}}(a1);

\draw (o1)  to[out=  0,in=180] (foo1);
\draw (a1)  to[out=  0,in=180]node[above]{\scriptsize{$y_2$}} (foa1);
\draw[thick] (foa1)  to[out=0,in=135] node[below, pos=0.2] {\scriptsize{$ y_2''\ \ $}} (a2);
\draw [thick](foa1)  to[out=0,in=180] node[above] {\scriptsize{$ y_2'$}} (o2);
\draw (b3)  to[out= 0,in=180]node[above]{\scriptsize{$x_3$}} (fob3);
\draw[thick] (fob3)  to[out=  0,in=225] node[above, pos=0.2] {\scriptsize{$ x_3'\ \ \ $}}  (o2);
\draw[thick] (fob3)  to[out=  0,in=180] node[below] {\scriptsize{$ x_3''$}} (a2);

\draw (o1)  to[out= 0,in=180]node[above]{\scriptsize{$y_1$}} (foo1);
\draw[thick] (foo1)  to[out= 0,in=180] node[above] {\scriptsize{$ y_1'$}} (o3);
\draw[thick] (foo1)  to[out= 0,in=135] node[below, pos=0.2] {\scriptsize{$ y_1''\ \ $}} (a3);
\draw (o2)  to[out=  0,in=180]node[above]{\scriptsize{$y_3$}} (foo2);
\draw[thick] (foo2)  to[out=  0,in=225] node[above, pos=0.2] {\scriptsize{ $y_3'\ \ \  $}}  (o3);
\draw[thick] (foo2)  to[out=  0,in=180] node[below] {\scriptsize{$ y_3''$}} (a3);

\draw (o3)  to[out=  0,in=135]node[above, right]{\scriptsize{$\ z_1$}} (a);
\draw (a3)  to[out=  0,in=225]node[below, right]{\scriptsize{$\ z_2$}} (a);
\end{tikzpicture}
} % scalebox	
}
\caption{Nodes of boolean circuits and some examples. 
Writing, for example,
$ \textsl{maj}_3\{x_1',x_2',x_3'\} $ in place of 
$ \textsl{maj}_3\{x_1,x_2,x_3\} $ would be equivalent. 
The current notation just highlights which is the component of the fan-out nodes that an output depends on.}
\label{fig: some examples of boolean circuits}
 \end{figure}
 
\begin{figure}
\scalebox{0.85}{
\bgroup
\def\arraystretch{1.5}%
\begin{tabular}{l|ll} 
 $\neg$ & $\mathtt{not}  \triangleq\lambda b. \lambda x. \lambda y. b yx $&$:\mathbf{B} \multimap\mathbf{B}$
 \\ \hline
$\wedge^0$   &$\mathtt{and}^0 \triangleq \lambda x.\lambda y. \langle x,y \rangle$ &$:\mathbf{B}$
\\
$\wedge^1$   &$\mathtt{and}^1\triangleq I$ &$:\mathbf{B}\multimap \mathbf{B}$
\\
$\wedge^2$&  $ \mathtt{and}^2  \triangleq 
\lambda x_1. \lambda x_2.   \pi_1(x_1 x_2\, \mathtt{ff})$ &$:
\mathbf{B}\multimap  \mathbf{B} \multimap \mathbf{B}$
\\
$\wedge^{n+2} $ &  $\mathtt{and}^{n+2}\triangleq \lambda x_1\ldots  x_{n+1}x_{n+2}. \mathtt{and}^2\, (\mathtt{and}^{n+1}\, x_1\,   \ldots\, x_{n+1})\, x_{n+2}$ &$:\mathbf{B}\multimap \overset{n+2}{\ldots}  \multimap \mathbf{B}\multimap\mathbf{B}$
\\ \hline
$\vee^0$ &$ \mathtt{or}^0 \triangleq \lambda x.\lambda y. \langle y,x \rangle$ &$:\mathbf{B}$\\
$\vee^1$   &$\mathtt{or}^1 \triangleq  I $ &$:\mathbf{B}\multimap \mathbf{B}$
\\
$\vee^2$ & $\mathtt{or}^2  \triangleq \lambda x_1. \lambda x_2. \pi_1(x_1 \mathtt{tt}\, x_2)$ &$:
\mathbf{B}\multimap \mathbf{B} \multimap \mathbf{B}$
\\
$\vee^{n+2} $ &  $\mathtt{or}^{n+2}\triangleq \lambda x_1\ldots  x_{n+1}x_{n+2}. \mathtt{or}^2\, (\mathtt{or}^{n+1}\, x_1\,   \ldots\, x_{n+1})\, x_{n+2}$ &$: \mathbf{B}\multimap \overset{n+2}{\ldots}  \multimap \mathbf{B}\multimap\mathbf{B}$\\ \hline
\textsl{fo}$^0$& $\mathtt{out}^0\triangleq \lambda x. \mathtt{discard}_{\mathbf{B}}\, x \mathtt{ \ in \ } I$ &$: \shpos \mathbf{B}\multimap \mathbf{1}$\\
\textsl{fo}$^1$& $\mathtt{out}^1\triangleq  I$ &$: \shpos \mathbf{B}\multimap \mathbf{B}$ \\
\textsl{fo}$^2$& $\mathtt{out}^2\triangleq \lambda x. 
\texttt{copy}_{\mathbf{B}}^{\mathtt{tt}}\, x\ \texttt{as}\ x_1,x_2\ \texttt{in}\ \langle x_1, x_2\rangle$ &$:\shpos \mathbf{B}\multimap \shpos \mathbf{B}\otimes  \shpos \mathbf{B}$ \\
\textsl{fo}$^{n+2}$& $\mathtt{out}^{n+2}\triangleq \lambda x. 
\texttt{copy}_{\mathbf{B}}^{\mathtt{tt}}\, x\ \texttt{as}\ x_1,x_2\ \texttt{in}\ 
\langle 
\mathtt{out}^{n+1}\, x_1,
x_2
\rangle $ &$: \shpos \mathbf{B}\multimap \shpos \mathbf{B} \otimes \overset{n+2}{\ldots} \otimes \shpos \mathbf{B}$ \\
\end{tabular}
\egroup 
}
\caption{Encoding of boolean functions and fan-out.}
\label{fig: enc boolean functions and fanout}
\end{figure}

\subsection{Boolean circuits in $\mathsf{LEM}$}
\label{sec: encoding boolean circuits}
We encode boolean circuits as terms of $\mathsf{LEM}$
(Definition~\ref{defn: from boolean circuits to terms}) and 
we prove a simulation result 
(Proposition~\ref{prop: simulation for boolean circuits}).

The encoding is inspired by Mairson\&Terui~\cite{mairsonlinear}. Other encodings of the boolean circuits have been shown in  Terui~\cite{DBLP:conf/lics/Terui04}, 
Mogbil\&Rahli~\cite{DBLP:conf/lfcs/MogbilR07} and 
Aubert~\cite{DBLP:journals/corr/abs-1201-1120} by considering  the  \textit{unbounded} proof-nets for the multiplicative fragment \textsf{MLL} of Linear Logic. Unbounded proof-nets are an efficient language able to express $n$-ary tensor products  by single nodes and to characterize parallel computational complexity classes such as \textsf{NC}, \textsf{AC}, and $\textsf{P}/_{\operatorname{poly}}$. The contribution of this work to these encodings is in the use of   \texttt{copy} and \texttt{discard}  to directly express the fan-out nodes that  allow a more compact and  modular representation of circuits. In particular, as compared to~\cite{DBLP:conf/lics/Terui04,DBLP:conf/lfcs/MogbilR07,DBLP:journals/corr/abs-1201-1120}, our encoding is able to  get rid of the garbage  that accumulates in the course of the simulation.

%\vspace{\baselineskip}
We start by briefly recalling the basics of boolean circuits from 
Vollmer~\cite{Vollmer:1999:ICC:520668}.

\begin{defn}[Boolean circuits] 
A \textit{boolean circuit} $ C $ is a finite, directed and acyclic graph with $n$ \emph{input nodes}, $m$ \emph{output  nodes}, \emph{internal nodes} and \textit{fan-out nodes} as in Figure~\ref{figure:node of boolean circuits}. 
The incoming (resp.~outgoing) edges of a node are \textit{premises} 
(resp.~\textit{conclusions}). The \textit{fan-in} of an internal node is the number of its premises. 
Labels for the $n$ input nodes of $ C $ are  $\mathrm{x}_1, \ldots, \mathrm{x}_n$ and those ones for the $m$ outputs are $\mathrm{y}_1, \ldots, \mathrm{y}_m$. 
Each internal node with fan-in $n \geq 0$ has an $n$-ary boolean function $\textsl{op}^n$ as its label, provided that if $n=0$, then $\textsl{op}^n$ is a boolean constant in $\lbrace 0,1 \rbrace$. 
The fan-out nodes have no label. 
Input and internal nodes are \textit{logical nodes} and their conclusions are \textit{logical edges}. If $\nu$ and $\nu'$ are logical nodes, then $\nu'$ is a \textit{successor} of $\nu$ if a directed path from $\nu$ to $\nu'$ exists which crosses no logical node.
The \emph{size} $ |C| $ of $C$ is the number of nodes. Its \emph{depth} $ \delta(C) $ is the length of the longest path from an input node to a output node. 
A \textit{basis} $\mathcal{B}$ is a set of boolean functions. A boolean circuit $C$ is \textit{over a basis} $\mathcal{B}$  if the label of every of its internal nodes belong to $\mathcal{B}$ only. The standard unbounded fan-in basis is $\mathcal{B}_1= \lbrace \neg, (\wedge^n)_{n \in \mathbb{N}},  (\vee^n)_{n \in \mathbb{N}}\rbrace$. 
\qed
\end{defn}
\noindent
When representing boolean circuits as terms we label edges by $\lambda$-variables, we omit their orientation, we assume that every fan-out always has a logical edge as its premise and we draw non-logical edges, i.e.~conclusions of fan-out nodes, as thick lines.
Figures~\ref{figure:2-bits full adder} and~\ref{figure:maj3} are examples.
A 2-bits full-adder is in the first one. 
It takes two bits $ x_1, x_2 $ and a carrier $ y_{\textrm{in}} $ as inputs.
Its outputs are the sum $ s = (x'_1\oplus x'_2)\oplus y' $ and the 
carrier $ y_{\text{out}} = (x''_1\wedge x''_2)\vee((x'_1\oplus x'_2)\wedge y'')$, where $ \oplus $ is the exclusive or that we can
obtain by the functionally complete functions in $ \mathcal{B}_1 $.
Figure~\ref{figure:maj3} is the 3-bits majority function $ \textsl{maj}_3(x_1,x_2,x_3)$. 
It serially composes three occurrences of the boolean circuit that switches two inputs $ x_1 $ and $ x_2 $ in order to put the greatest on the topmost output and the smallest on the bottommost one, under the convention that $ 0$ is smaller than $1$.
%:
%\begin{center}
%\scalebox{.65}{
%\begin{tikzpicture}[	auto=left,
%empty/.style = {minimum size=0cm,fill=white},
%el/.style = {inner sep=2pt, align=left, sloped},
%node/.style={circle,minimum size=.9cm,fill=white},
%sqnode/.style={rectangle,inner sep=2pt,fill=white},
%node distance=0.7cm
%]
%	\node[node, draw            ] (b1)  {$ \mathrm{x}_1 $};
%	\node[node, draw,below=of b1] (b2)  {$ \mathrm{x}_2$};
%	
%	\node[circle, fill=black, inner sep=1pt,right=of b1, draw] (fob1){};
%	\node[circle, fill=black, inner sep=1pt, draw, right=of b2] (fob2){};
%		
%	\node[node, draw,right=of fob1] (or)  {$\bigvee$};
%	\node[node, draw,right=of fob2] (and) {$\bigwedge$};
%	
%	\draw ($ (or)+(1.6,0) $) node[sqnode] (cout)     
%      {\scriptsize{$ \textsl{max}\{x'_1,x'_2\}$}};
%	\draw ($ (and)+(1.5,0) $) node[sqnode] (s) 
%	  {\scriptsize{$\textsl{min}\{x''_1,x''_2\}$}};
%	
%	\draw (b1)  to[out=0,in=180] node[above]{\scriptsize{$x_1$}}(fob1);
%	\draw (b2)  to[out=0,in=180] node[above]{\scriptsize{$x_2$}}(fob2);
%	
%	\draw[thick] (fob1)  to[out=0,in=180] node[above]{\scriptsize{$x'_1$}}(or);
%    \draw[thick] (fob1)  to[out=0,in=135] node[left]{\scriptsize{$x''_1$}}(and);
%	\draw[thick] (fob2)  to[out=0,in=225] node[left]{\scriptsize{$x'_2$}}(or);
%	\draw[thick] (fob2)  to[out=0,in=180] node[below]{\scriptsize{$x''_2$}}(and);
%\end{tikzpicture} % tikz
%} % scalebox	
%\end{center}
%This kind of circuit is the building block of the Switching Networks \cite{Batcher:1968:SNA:1468075.1468121}.
So, the 3-bits majority circuit first sorts its input bits and then checks if the topmost two, i.e. the majority, are both set to $1$.
The lowermost output is garbage. 
%Let us call it 
%$ g $. 
%We can hide $ g $ by adding the sub-circuit $ 1 \vee  g$ and 
%send its output to a 3-bits $\wedge$ replacing the current 
%rightmost binary node in Figure~\ref{figure:maj3}.

Translating boolean circuits as terms of $\mathsf{LEM}$ requires to encode the boolean functions in $\mathcal{B}_1$ and the fan-out nodes. 
Figure~\ref{fig: enc boolean functions and fanout} reports them, where $\mathtt{tt}$ and $\mathtt{ff}$ encode the boolean values in~\eqref{eqn: boolean data type}, and $\pi_1$ is the projection in~\eqref{eqn: boolean projection}. 
As a typographical convention, $\mathtt{i}\in \lbrace \mathtt{tt}, \mathtt{ff} \rbrace$ will code the boolean constant $i \in \lbrace 0,1 \rbrace$, 
and $\mathtt{op}^n$ the $n$-ary boolean function $\textsl{op}^n$, according to Figure~\ref{fig: enc boolean functions and fanout}. 
We shorten $\mathtt{and}^0$, $\mathtt{or}^0$,  $\mathtt{and}^2$, $\mathtt{or}^2$, and $\mathtt{out}^2$ as $\mathtt{tt}$, $\mathtt{ff}$, $\mathtt{and}$, $\mathtt{or}$, and $\mathtt{out}$, respectively. The encoding of  the binary exclusive or $\oplus$ is $\mathtt{xor}$.
%The fan-out duplicates the possible outcomes associated with a node during the circuit evaluation. For example, a node  $\wedge^m$ with fan-in $m$ and fan-out $n$ produces $ n $ copies of the conjunction of the $ m $ values given as input. The fan-out will be represented in system $\mathsf{IMLL}^\shpos _2$ by  means of the $\lambda$-term $\mathtt{out}^n$, which uses the  $\texttt{copy}$ construction. \\

We recall that boolean circuits are a model of parallel computation, while the $\lambda$-calculus models sequential computations. Mapping the former into the latter requires some technicalities. The notion of \emph{level} allows to topologically sort the structure of the boolean circuits in order to preserve the node dependencies:
\begin{defn}[Level] The \textit{level} $l$ of a logical node $\nu$ in a boolean circuit $C$ is:
\begin{enumerate}
\item $ 0 $ if $\nu$ has no  successors, and
\item $\max \lbrace l_1, \ldots, l_k  \rbrace +1$ if $\nu$ has  successors $\nu_1, \ldots, \nu_k$ with levels $l_1, \ldots, l_k$.
\end{enumerate}
The \textit{level} of a logical edge is the level of the logical node it is the conclusion of. 
%The \textit{level} of a boolean circuit is the greatest $l\geq 0$ 
%such that $l$ is the level of one of its logical nodes.
The \textit{level} of a boolean circuit is the greatest level of its logical nodes.
\qed
\end{defn}
We define a level-by-level translation of unbounded fan-in boolean circuits over $\mathcal{B}_1$ into  terms typable in $\mathsf{LEM}$ taking inspiration from Schubert~\cite{schubert2001complexity}:
\begin{defn}[From boolean circuits to terms]\label{defn: from boolean circuits to terms}
Let $C$ be a boolean circuit with $n$ inputs and $m$ outputs.  We define the term $\mathtt{level}^l_C$ by induction on $l-1$:
\begin{enumerate}
\item $\mathtt{level}_{C}^{-1} \triangleq \langle x_1, \ldots, x_n  \rangle$, where $x_1, \ldots, x_n$ are  the variables labelling the logical edges of level $0$.
\item $\mathtt{level}_{C}^{l} \triangleq (\lambda x_1\ldots x_n x_{n+1}\ldots x_{m}. \mathtt{let}\, (\mathtt{out}^{k_{1}}\, x_{1}) \ \mathtt{\ be \ }y^1_1, \ldots, y^1_{k_1} \mathtt{ \ in \ } \ldots$\\
$\mathtt{let}\, (\mathtt{out}^{k_{n}}\, x_{n})  \mathtt{\ be \ }y^n_1, \ldots, y^n_{k_n} \mathtt{ \ in \ }  \mathtt{level}_{C}^{l-1})\, B_1 \ldots B_m$, where:
\begin{enumerate}
\item $x_1,\ldots ,x_n,x_{n+1},\ldots, x_{m}$ are the variables labelling the logical edges of level  $ l $, 
\item  for all $1 \leq j \leq n$,   $x_j$ is the premise of a fan-out node with conclusions labelled with   $y^j_1, \ldots, y^j_{k_j}$ (see Figure~\ref{fig: level fan out e internal nodes}).
\begin{figure}
\centering
\begin{tikzpicture}[node/.style={circle,minimum size=1pt,fill=white},
 node distance=0.25cm]
  \node[circle, inner sep= 1pt, draw, fill=black](fan){};
  \node[node]  at ($(fan)+(0, -0.50)$)(befan){\ldots} ;
   \node[circle] at  ($(fan)+(0, +0.50)$)(abfan){} ;
   \coordinate[right = of befan](berifan){} ;
   \coordinate[left = of befan](belefan){} ;
     \draw[-, thick](fan)to[out=225, in=90]node[left]{\scriptsize{$y^j_1$}\ \ }(belefan);
      \draw[-,thick](fan)to[out=315, in=90]node[right]{\ \scriptsize{$y^j_{k_j}$}}(berifan);
         \draw[-](abfan)to[out=270, in=90]node[right]{\scriptsize{$x_j$}}(fan);   

 \node[circle, inner sep =1pt, draw, right=3cm of fan](op){\scalebox{0.5}{$\textsl{op}^h$}};
 \node[circle]  at  ($(op)+(0, +0.50)$)(ab){\scriptsize{\ldots}} ;
   \coordinate[right = of ab](abri){} ;
   \coordinate[left = of ab](able){} ;
   \draw[-](able)to[out=270, in=135]node[left]{\scriptsize{$z_1$}}(op);
      \draw[-](abri)to[out=270, in=45]node[right]{\scriptsize{$\, z_h$}}(op);
       \draw[-](op)--node[right]{\scriptsize{$x_i$}} ($(op)+(0, -0.50)$) ;
\end{tikzpicture}
\caption{From left, a fan-out node and an internal node.}
\label{fig: level fan out e internal nodes}
\end{figure}
\item for all $1 \leq i \leq m$,  if $x_i$ is the variable labeling the conclusion of an internal node $\textsl{op}^h$  with premises labeled  by $z_1,  \ldots,  z_h$, respectively (see Figure~\ref{fig: level fan out e internal nodes}), then $B_i\triangleq \mathtt{op}^h\, z_1\,  \ldots\,  z_h$. If $x_i$ is the variable labelling the conclusion of an input node  then $B_i \triangleq x_i$.
\end{enumerate}
\end{enumerate}
Last, if the input nodes have conclusions labeled by 
$x_1, \ldots, x_{n}$, respectively, and if $C$ has level $l$,
then we define  
$\lambda(C) \triangleq  \lambda x. \mathtt{let}\ x \mathtt{  \ be \ }x_1, \ldots, x_{n} \mathtt{ \ in \ }  \mathtt{level}_C^l$.
\qed
\end{defn}
\begin{exmp}[$2$-bits full adder] The level-by-level translation of the boolean circuit $C$  in Figure~\ref{figure:2-bits full adder} is the following:
\label{exmp:2-bits full-adder}
\allowdisplaybreaks
\begin{align*}
\mathtt{level}_{C}^{-1}&\triangleq \langle s, y_{\text{out}} \rangle\\
\mathtt{level}_{C}^0&\triangleq (\lambda s. \lambda y_{\text{out}}. \mathtt{level}^{-1}_C)(\mathtt{xor}\, z'_1\, y')(\mathtt{or}\, z_2\, z_3)\\
\mathtt{level}_{C}^1&\triangleq (\lambda z_2. \lambda z_3. \mathtt{level}_{C}^0)(\mathtt{and}\, x''_1\, x''_2)(\mathtt{and}\, z''_1\, y'') \\
\mathtt{level}_{C}^2&\triangleq(\lambda z_1.\lambda y_{\text{in}}. \mathtt{let}\,(\mathtt{out}\, z_1)\  \mathtt{be}\ z'_1, z''_1 \ \mathtt{ \ in \ }\\
&
\phantom{\ \triangleq\ \qquad }
\mathtt{let}\,(\mathtt{out}\, y_{\text{in}}) \mathtt{ \ be\  }y', y'' \mathtt{ \ in \ }\mathtt{level}_C ^1     )(\mathtt{xor}\, x'_1\, x'_2)\, y_{\text{in}}\\
\mathtt{level}_{C}^3&\triangleq (\lambda x_1. \lambda x_2.  \mathtt{let}\,(\mathtt{out}\, x_1) \mathtt{ \ be\  }x_1', x_1'' \mathtt{ \ in \ } \\
&
\phantom{\ \triangleq\ \qquad }
\mathtt{let}\,(\mathtt{out}\, x_2) \mathtt{ \ be\  }x_2', x_2'' \mathtt{ \ in \ } \mathtt{level}_C^2)\, x_1\, x_2 
\end{align*}
where we set $\lambda(C)\triangleq \lambda x. \mathtt{let }\, x \mathtt{\ be \ }  x_1,  x_2,  y_{\text{in}} \mathtt{ \ in \ } \mathtt{level}_C ^3$ which reduces to:
\begin{align*}
&\lambda x. \mathtt{let }\, x \mathtt{\ be \ }  x_1,  x_2,  y_{\text{in}} \mathtt{ \ in \ } (\mathtt{let}\, (\mathtt{out}\, x_1) \mathtt{\ be \ }x'_1, x''_1 \mathtt{ \ in \ }\\
&\mathtt{let}\, (\mathtt{out} \, x_2) \mathtt{\ be \ }x'_2, x''_2 \mathtt{ \ in \ }( \mathtt{let}\, (\mathtt{out}\, y_{\text{in}})\mathtt{\ be \ }y', y'' \mathtt{ \ in \ }\\
&\mathtt{let}\,(\mathtt{out}\, (\mathtt{xor}\, x'_1\, x'_2)) \mathtt{ \ be \ } z'_1, z''_2  \mathtt{ \ in\  } \langle  \mathtt{xor}\, z'_1\, y', \mathtt{or}\, (\mathtt{and}\, x''_1\, x''_2)(\mathtt{and}\, z''_1\, y'') \rangle \  ))
\enspace .
\qed
\end{align*}
\end{exmp}
\begin{exmp}[$3$-bits majority]  The level-by-level translation of the boolean circuit  $C$ in Figure~\ref{figure:maj3} is the following:
\label{exmp:maj3}
\allowdisplaybreaks
\begin{align*}
\mathtt{level}_C^{-1}&\triangleq \langle m, g \rangle           
\\
\mathtt{level}_C^{0}&\triangleq  (\lambda m. \lambda g. \mathtt{level}_C^{-1})(\mathtt{and}\, z_1\, z_2)(\mathtt{and}\, y''_2\, x''_3)\\
\mathtt{level}_C^{1}&\triangleq  (\lambda z_1. \lambda z_2. \mathtt{level}_C^0)(\mathtt{or}\, y'_1\, y'_3)(\mathtt{and}\, y''_1\, y''_3)\\
\mathtt{level}_C^{2}&\triangleq (\lambda y_1. \lambda y_3. \mathtt{let}\, (\mathtt{out}\, y_1)\mathtt{\ be \   } y'_1, y''_1 \mathtt{ \ in \ }\\
& 
\phantom{\ \triangleq \quad}
\mathtt{let}\, (\mathtt{out} \, y_3)\mathtt{ \ be \ }y'_3, y''_3 \mathtt{ \ in \ }\mathtt{level}_C^1          )(\mathtt{or}\, x'_1\, x'_2 )(\mathtt{or}\, y'_2\, x'_3)\\
\mathtt{level}_C^3&\triangleq (\lambda y_2.\lambda x_3. \mathtt{let}\, (\mathtt{out}\, y_2)\mathtt{\ be \ }y'_2, y''_2 \mathtt{ \ in \  }\\
&  
\phantom{\ \triangleq\quad}
\mathtt{let}(\mathtt{out} \, x_3)\mathtt{\ be \ }x'_3, x''_3 \mathtt{ \ in \ }\mathtt{level}_C^{2}                  )(\mathtt{and}\, x''_1\, x''_2 )\, x_3\\
\mathtt{level}_C^4&\triangleq (\lambda x_1. \lambda x_2.  \mathtt{let}\, (\mathtt{out}\, x_1)\mathtt{\ be \ }x'_1, x''_1 \mathtt{ \ in \  }\\
& 
\phantom{\ \triangleq \quad}
\mathtt{let}\, (\mathtt{out}\, x_2)\mathtt{\ be \ }x'_2, x''_2 \mathtt{ \ in \  }\mathtt{level}_C^{3}      )\, x_1\, x_2
\end{align*}
where we set $\lambda(C)\triangleq \lambda x.  \mathtt{let \ }x \mathtt{\ be \ }x_1, x_2, x_3 \mathtt{ \ in \ }\mathtt{level}_C^4$  which reduces to:
\begin{align*}
&\lambda x.  \mathtt{let \ }x \mathtt{\ be \ }x_1, x_2, x_3 \mathtt{ \ in \ } \mathtt{let}\, (\mathtt{out}\, x_1)\mathtt{\ be \ }x'_1,x''_1 \mathtt{ \ in \ }\\
&\mathtt{let}\,(\mathtt{out} \, x_2 )\mathtt{\ be \ }x'_2, x''_2 \mathtt{\ in\ } ( \mathtt{let}\,(\mathtt{out}\, x_3)\mathtt{\ be \ }x'_3, x''_3 \mathtt{ \ in \ }\\
&\mathtt{let}\,(\mathtt{out}\,(\mathtt{or}\, x'_1\, x'_2))\mathtt{\ be \ }y'_1, y''_1 \mathtt{ \ in \ } (
 \mathtt{let}\,(\mathtt{out}\,(\mathtt{and}\, x''_1\, x''_2)) \mathtt{\ be \ }y'_2, y''_2 \mathtt{ \ in \ }\\
& \mathtt{let }\,(\mathtt{out}\,(\mathtt{or}\, y'_2\, x'_3))\mathtt{\ be \ }y'_3, y''_3 \mathtt{ \ in \ }
\langle \mathtt{ and }\,(\mathtt{or}\, y'_1\, y'_3)(\mathtt{and}\, \, y''_1\, y''_3) , \mathtt{and}\,y''_2\, x''_3 \rangle \ ))
\enspace . \qed
\end{align*}
\end{exmp}

\noindent
The size of the term coding an internal node depends on its fan-in.
Likewise, the size of the term coding a fan-out node depends on the number of conclusions. The size of the circuit bounds both values.  Moreover, by Theorem~\ref{thm:Subject Reduction for IMLL2shpos}, reducing a typable term yields a typable term. These observations imply:
%Moreover, for all  $(i_1, \ldots, i_n) \in \lbrace 0, 1 \rbrace^n$, the term  $\lambda(C) \, \langle \mathtt{i}_1, \dots, \mathtt{i}_n \rangle$  has lazy type $ \mathbf{B}\otimes \ldots \otimes  \mathbf{B}$ (see Definition~\ref{defn: lazyness}.) So, every derivation associated with  $\lambda(C) \, \langle \mathtt{i}_1, \dots, \mathtt{i}_n \rangle$  is lazy. By Theorem~\ref{thm: cut elimination for downarrow IMLL2},  the cut-elimination produces a cut-free derivation of this types. 
\begin{prop}[Simulation of circuit evaluation]\label{prop: simulation for boolean circuits}
If $ C$ is an unbounded fan-in boolean circuit over $\mathcal{B}_1$ with $n$ inputs and $m$ outputs then $\lambda(C)$  is such that:
\begin{enumerate}
\item  its size  is $ O(|C|) $, 
\item it has type $(\shpos \mathbf{B}\otimes \overset{n}{\ldots}\otimes \shpos \mathbf{B})\multimap ( \mathbf{B}\otimes \overset{m}{\ldots}\otimes  \mathbf{B})$ in $\mathsf{LEM}$, and
\item for all $(i_1, \ldots, i_n) \in \lbrace 0, 1 \rbrace^n$, the evaluation of  $C$  on input $(i_1, \ldots, i_n)$ outputs the tuple $(i'_1, \ldots, i'_{m})\in \lbrace 0, 1 \rbrace^m$ iff  $\lambda(C) \, \langle \mathtt{i}_1, \dots, \mathtt{i}_n \rangle  \rightarrow^*\langle \mathtt{i}'_1, \ldots, \mathtt{i}'_m\rangle$.
\end{enumerate}
\end{prop}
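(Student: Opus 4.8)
The plan is to prove the three items essentially in order, since each is progressively more substantive. For item~(1), I would argue by structural induction on the circuit's level decomposition (Definition~\ref{defn: from boolean circuits to terms}). Each term $B_i$ coding an internal node $\textsl{op}^h$ has size $O(h)$ by the encoding of $\mathtt{op}^h$ in Figure~\ref{fig: enc boolean functions and fanout} (the $n$-ary gates are built by iterated binary composition, so linear in the fan-in); each $\mathtt{out}^{k_j}$ coding a fan-out node has size $O(k_j)$ for the same reason; and the $\mathtt{let}$-bindings and the $\lambda$-abstractions at each level contribute a number of nodes proportional to the number of edges entering that level. Summing over all levels, the total count of logical nodes, fan-out nodes, and edges is $O(|C|)$ because a circuit is a finite dag and every edge is counted at most a constant number of times across the level-by-level translation; hence $\vert\lambda(C)\vert = O(|C|)$.

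For item~(2), I would give a typing derivation in $\mathsf{LEM}$ by the same induction on levels. The base case $\mathtt{level}^{-1}_C = \langle x_1,\ldots,x_n\rangle$ types as a tensor $\mathbf{B}\otimes\cdots\otimes\mathbf{B}$ in the context $x_1:\mathbf{B},\ldots,x_n:\mathbf{B}$ (Definition~\ref{eqn:datatype unity and product}). For the inductive step, the typings in Figure~\ref{fig: enc boolean functions and fanout} give $\mathtt{op}^h : \mathbf{B}\multimap\cdots\multimap\mathbf{B}$ and $\mathtt{out}^{k} : \shpos\mathbf{B}\multimap\shpos\mathbf{B}\otimes\cdots\otimes\shpos\mathbf{B}$ (the latter using the $c$ rule with witness $\mathtt{tt}:\mathbf{B}$, since $\mathbf{B}$ is a ground type and hence $\shpos\mathbf{B}$ is well-formed); combining these with the tensor destructor ($\mathtt{let}\ \cdot\ \mathtt{be}\ x,y\ \mathtt{in}\ \cdot$) and applications, each $\mathtt{level}^l_C$ gets the appropriate product type, where the variables at level $l$ that feed fan-out nodes carry type $\shpos\mathbf{B}$ and those feeding internal nodes directly carry type $\mathbf{B}$ after a $d$ rule. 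Finally $\lambda(C)$ abstracts over the $n$ input variables, packaged as $\shpos\mathbf{B}\otimes\cdots\otimes\shpos\mathbf{B}$, and returns $\mathbf{B}\otimes\cdots\otimes\mathbf{B}$. One must check the linearity constraint on $\multimap$L holds throughout; this is exactly why every fan-out node is encoded explicitly, so that each variable occurs once syntactically.

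For item~(3), I would prove correctness of the simulation by induction on the level of $C$, showing that feeding $\langle \mathtt{i}_1,\ldots,\mathtt{i}_n\rangle$ into $\lambda(C)$ and reducing propagates, level by level, the boolean values carried along the logical edges: after processing levels down to $l$, the term reduces to $\mathtt{level}^{l-1}_C$ with each free variable of level-$l$ logical edges substituted by $\mathtt{tt}$ or $\mathtt{ff}$ according to the circuit's evaluation. The key reduction facts are: (i) $\mathtt{copy}^{\mathtt{tt}}_{\mathbf{B}}\,V\,\mathtt{as}\,x_1,x_2\,\mathtt{in}\,N \to N[V/x_1,V/x_2]$ when $V\in\{\mathtt{tt},\mathtt{ff}\}$ is a value (Figure~\ref{fig: term reduction rules}), so $\mathtt{out}^k$ correctly duplicates a boolean value; and (ii) each $\mathtt{op}^h\,\mathtt{i}_1\cdots\mathtt{i}_h \to^* \mathtt{op}^h(i_1,\ldots,i_h)$, which is a finite case-check on the definitions in Figure~\ref{fig: enc boolean functions and fanout} (e.g.\ $\mathtt{and}\,\mathtt{tt}\,\mathtt{tt}\to^*\mathtt{tt}$, etc., using the behaviour of $\pi_1$ from~\eqref{eqn: boolean projection} and $\mathtt{E}_{\mathbf{B}}$ from~\eqref{eqn: erasure booleans}). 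Subject reduction (Theorem~\ref{thm:Subject Reduction for IMLL2shpos}) guarantees that intermediate terms stay typable, so the values appearing really are $\mathtt{tt}/\mathtt{ff}$; the ``only if'' direction follows because $\to^*$ is confluent enough on these typed terms (all redexes here are on closed boolean values) that the normal form is unique, matching the circuit's output.

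The main obstacle I expect is item~(3): one has to be careful that the reduction of $\lambda(C)\,\langle\mathtt{i}_1,\ldots,\mathtt{i}_n\rangle$ can actually be \emph{organized} level by level — that is, that the commuting conversions of Figure~\ref{fig: term commuting conversions} and the order of firing $\mathtt{let}$/$\beta$/$\mathtt{copy}$ redexes do not obstruct reaching the substituted form of $\mathtt{level}^{l-1}_C$ — and that the boolean-gate lemma $\mathtt{op}^h\,\mathtt{i}_1\cdots\mathtt{i}_h\to^*\mathtt{op}^h(i_1,\ldots,i_h)$ is stated and verified cleanly for the inductively-defined $n$-ary gates, including the interplay with $\pi_1$, $\mathtt{E}_{\mathbf{B}}$, and the garbage component of the pairs $\langle x,y\rangle$ in $\mathtt{tt},\mathtt{ff}$. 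Everything else is bookkeeping that follows the pattern already used in Mairson\&Terui and in Schubert's level translation.
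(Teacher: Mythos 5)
Your proposal is correct and takes essentially the same approach as the paper: the paper offers no formal proof at all, only the observations that the fan-in of internal nodes and the number of conclusions of fan-out nodes are bounded by $\vert C\vert$ and that Subject reduction keeps intermediate terms typable, and your level-by-level induction (mirroring Definition~\ref{defn: from boolean circuits to terms}) is precisely the argument those observations are meant to summarize. Your sketch is in fact more detailed than the paper's, spelling out the dereliction/promotion bookkeeping for item (2) and the gate and fan-out reduction lemmas for item (3).
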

\noindent
It should not be surprising that the translation cannot preserve the depth of a given circuit,  since $ \mathsf{LEM}$ has only \emph{binary} logical operators. That is why we use nested instances 
\enquote{\texttt{let}}
(Definition~\eqref{eqn:datatype unity and product}) to access single elements of
$A_1\otimes\ldots\otimes A_n $.  We could preserve the depth by extending \textsf{LEM} with unbounded tensor products as done, for example, in~\cite{DBLP:conf/lics/Terui04} for the multiplicative fragment of linear logic $\mathsf{MLL}$.

\subsection{Numerals in $ \mathsf{LEM} $}
\label{section:Church numerals}
We introduce  a class $ \mathcal{N} $ of terms in $\mathsf{LEM}$,   called  \emph{numerals}, that represent natural numbers.
%By the Subject reduction 
%(Theorem~\ref{thm:Subject Reduction for IMLL2shpos}),
%if we ignore the underlying derivations of $\mathsf{LEM}$, we give
We give a successor \texttt{S} and an addition \texttt{A} on numerals, both typable in $\mathsf{LEM}$,  and we show  that they  
behave as expected using  Subject reduction. Moreover, the numerals can operate as
iterators on a class of terms in $\mathsf{LEM}$ that form a group with respect to the application.

\begin{defn}[Terms and types for $ \mathcal{N} $]
\label{defn:Terms and types for mathcalN}
Let us recall that $\mathbf{1}$ is $\forall \alpha. \alpha \multimap \alpha$ and $I$ is $\lambda x.x$ (Section~\ref{sec: background}).
The numerals of $ \mathcal{N} $ have form:
\begin{align}
\nonumber
\overline{0} &\triangleq 
 \lambda fx. \texttt{discard}_{\mathbf{1}}\, f\, \texttt{in}\, x: \mathbb{N} 
\\
\nonumber
\overline{1} &\triangleq
 \lambda fx. fx: \mathbb{N}
\\
\overline{n+2} &\triangleq
 \lambda fx. \texttt{copy}_{\mathbf{1}}^{I}\, f\, \texttt{as}\,
             f_1\ldots f_n\,\texttt{in}\, f_1(\ldots(f_n\, x)\ldots): \mathbb{N}
\enspace ,
\label{eqn: infinito}
\end{align}
\noindent
where, for any $M$,
$ \texttt{copy}_{\mathbf{1}}^{I}\, f_0\, \texttt{as}\, f_1\ldots f_n\,\texttt{in}\ M $ in~\eqref{eqn: infinito} stands for:
\begin{align*}
\texttt{copy}_{\mathbf{1}}^{I}\, f_0\, \texttt{as}\, f_1, f'_2 \,\texttt{in}\,
 (\texttt{copy}_{\mathbf{1}}^{I}\, f'_2\, \texttt{as}\, f_2, f'_3 \,\texttt{in}\, \ldots
 (\texttt{copy}_{\mathbf{1}}^{I}\, f'_{n-1}\, \texttt{as}\, f_{n-1}, f_n \,\texttt{in}\, M)\ldots)
 \enspace ,
\end{align*}
\noindent
and $ \mathbb{N} \triangleq \mathbf{N}[\mathbf{1}/\alpha] $ with
$ \mathbf{N} \triangleq
(\shpos \alpha) \multimap \alpha $.
In order to identify terms that represent the same natural number, 
we take numerals up to the following equivalences:
\begin{align*}
&\texttt{copy}_{\boldmath{1}}^{I}\, f\, \texttt{as}\, f_1, f_2 \,\texttt{in}\, 
(\texttt{copy}_{\boldmath{1}}^{I}\, f_2\, \texttt{as}\, f_3, f_4 \,\texttt{in}\, M)
\\
&
\ \,
\qquad
\qquad
\qquad
\qquad
\qquad
= \texttt{copy}_{\boldmath{1}}^{I}\, f\, \texttt{as}\, f_2, f_4 \,\texttt{in}\, (
 \texttt{copy}_{\boldmath{1}}^{I}\, f_2\, \texttt{as}\, f_1, f_3 \,\texttt{in}\, M)
\\
&f (\texttt{copy}_{\boldmath{1}}^{I}\, f'\, \texttt{as}\, g, h \ \texttt{in}\, M)
=
\texttt{copy}_{\boldmath{1}}^{I}\, f'\, \texttt{as}\, g, h \,\texttt{in}\, f\,M
\enspace .
\qquad 
\qquad 
\qquad 
\qquad 
\qed
\end{align*}
\end{defn}
The elements of $ \mathcal{N} $ are the analogue of the Church numerals. Let us compare $\mathbb{N}$ to the type $\mathbf{int}$ of the Church numerals in Linear Logic: 
\begin{equation*}
\begin{split}
\mathbf{int}&\triangleq \forall \alpha .(\oc (\alpha \multimap \alpha)\multimap (\alpha \multimap \alpha))\\
\mathbb{N} &\triangleq (\shpos \forall \alpha .(\alpha \multimap \alpha))\multimap \forall \alpha . (\alpha \multimap \alpha) \enspace .
\end{split}
\end{equation*}
In the former the universal quantification is in positive position, while in the latter it  occurs on both sides of the main implication. This is because we can apply the modality $\shpos$  only to  ground types, which are closed. Also, observe that the lack of an external quantifier in $\mathbb{N}$ limits  the use of numerals as iterators.

The analogy with the Church numerals can be pushed further by defining a successor $\texttt{S}$ and an addition $\texttt{A}$:
\begin{align*}
\texttt{S} & \triangleq
\lambda nfx.
\texttt{copy}_{\boldmath{1}}^{I}\,f\,\texttt{as}\, f_1, f_2 \,\texttt{in}\, f_1(n f_2 x): \mathbb{N}\multimap\mathbb{N}
\\
\texttt{A} & \triangleq
\lambda mnfx.
\texttt{copy}_{\boldmath{1}}^{I}\,f\,\texttt{as}\, f_1, f_2 \,\texttt{in}\, 
m f_1(n f_2 x): \mathbb{N}\multimap \mathbb{N}\multimap\mathbb{N}
\enspace .
\end{align*}
\noindent
Sticking to the computational behaviour of the terms
(Figures~\ref{fig: term reduction rules} 
and~\ref{fig: term commuting conversions}), not of the underlying derivations, 
the Subject reduction 
(Theorem~\ref{thm:Subject Reduction for IMLL2shpos}) implies:

\begin{prop} \label{prop: succ add}For all $n , m\geq 0$, $ \mathtt{S}\,\overline{n} \rightarrow^* \overline{n+1} $ and $ \mathtt{A}\,\overline{m}\,\overline{n}\rightarrow^* \overline{m+n} $.
\end{prop}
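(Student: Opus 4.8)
The plan is to unfold the reduction of $\mathtt{S}\,\overline{n}$ and $\mathtt{A}\,\overline{m}\,\overline{n}$ using the rules of Figure~\ref{fig: term reduction rules} and the commuting conversions of Figure~\ref{fig: term commuting conversions}, treating $n$ (and $m$) by the same three-way case split that appears in Definition~\ref{defn:Terms and types for mathcalN}: $n=0$, $n=1$, and $n\geq 2$. First I would compute $\mathtt{S}\,\overline{0}$. Here $\mathtt{S}\,\overline{0}\rightarrow^*\lambda fx.\texttt{copy}_{\mathbf{1}}^{I}\,f\,\texttt{as}\,f_1,f_2\,\texttt{in}\,f_1(\overline{0}\,f_2\,x)$, and since $\overline{0}\,f_2\,x$ first discards $f_2$ and returns $x$ — using the $\mathtt{discard}$ reduction rule — and then the remaining $\texttt{copy}$ on the variable $f$ cannot fire (its scrutinee is a variable, not a value), we must push it outward by the commuting conversions until we reach $\lambda fx.\texttt{copy}_{\mathbf{1}}^{I}\,f\,\texttt{as}\,f_1,f_2\,\texttt{in}\,\texttt{discard}_{\mathbf{1}}\,f_2\,\texttt{in}\,f_1x$. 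Up to the equivalences on numerals declared in Definition~\ref{defn:Terms and types for mathcalN} — in particular the second equivalence that lets one slide a $\texttt{copy}$ past an application, and implicitly the pattern that a $\texttt{copy}$ immediately followed by a $\texttt{discard}$ of one of its outputs behaves like the identity on $f$ — this is exactly $\overline{1}=\lambda fx.fx$. The cases $n=1$ and $n\geq 2$ for $\mathtt{S}$ are analogous: for $n=1$ we get $\lambda fx.\texttt{copy}_{\mathbf{1}}^{I}\,f\,\texttt{as}\,f_1,f_2\,\texttt{in}\,f_1(f_2x)$, which is $\overline{2}$ by definition; for $n\geq 2$ we obtain a $\texttt{copy}$ of $f$ into $f_1$ and $f_2$, with $\overline{n}\,f_2\,x$ itself unfolding into an $(n-1)$-fold nested $\texttt{copy}$ of $f_2$, and after propagating the outer $\texttt{copy}$ inward (using the $(\texttt{copy},\texttt{copy})$ commuting conversion) and re-associating by the first numeral equivalence, we land on the canonical form of $\overline{n+1}$, namely an $(n+1)$-fold $\texttt{copy}$ of $f$ feeding $f_1(\cdots(f_{n+1}x)\cdots)$.

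Second, for addition I would do the same bookkeeping with two numeral arguments. $\mathtt{A}\,\overline{m}\,\overline{n}\rightarrow^*\lambda fx.\texttt{copy}_{\mathbf{1}}^{I}\,f\,\texttt{as}\,f_1,f_2\,\texttt{in}\,\overline{m}\,f_1(\overline{n}\,f_2\,x)$. Now $\overline{n}\,f_2\,x$ reduces to an $n$-fold composition of copies of $f_2$ applied to $x$ (with the degenerate cases $n=0$ giving $x$ after a discard, $n=1$ giving $f_2x$), and $\overline{m}$ applied to $f_1$ and that result reduces to an $m$-fold composition of copies of $f_1$. The outer $\texttt{copy}$ of $f$ into $f_1,f_2$ together with the inner copies of $f_1$ and of $f_2$ collapses, after commuting conversions and the re-association equivalence, into a single $(m+n)$-fold $\texttt{copy}$ of $f$ applied as $g_1(\cdots(g_{m+n}x)\cdots)$, which is $\overline{m+n}$ — with the correct reading also when $m+n\in\{0,1\}$ thanks to the discard/identity collapses. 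The fact that we may freely perform all these reductions and conversions on the \emph{terms}, without worrying that we leave the typable fragment, is exactly Subject reduction (Theorem~\ref{thm:Subject Reduction for IMLL2shpos}): $\mathtt{S}\,\overline{n}$ and $\mathtt{A}\,\overline{m}\,\overline{n}$ are typable of type $\mathbb{N}$, hence every term in the reduction sequence is, and in particular the normal form we reach is a legitimate inhabitant of $\mathbb{N}$ that we can identify with $\overline{n+1}$ (resp.\ $\overline{m+n}$) modulo the stated equivalences.

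The main obstacle is purely organizational rather than conceptual: it is the careful management of the commuting conversions and of the numeral equivalences so that a nest of $\texttt{copy}_{\mathbf{1}}^{I}$ constructs produced in a non-canonical association and ordering is genuinely shown equal to the canonical $n$-ary $\texttt{copy}$ abbreviation of Definition~\ref{defn:Terms and types for mathcalN}. Concretely one has to check (i) that a $\texttt{copy}$ whose scrutinee is a bound variable can always be lifted out past the surrounding applications via the conversions of Figure~\ref{fig: term commuting conversions}, (ii) that $\texttt{copy}$ of $f$ into $f_1,f_2$ followed by a $\texttt{discard}$ of, say, $f_2$ acts as a renaming of $f$ to $f_1$ — which follows by reducing against the value $I$ after the enclosing abstraction is applied, or can be added as an evident instance of the numeral equivalences — and (iii) that the two explicit associativity/commutativity equivalences on nested copies suffice to normalize any ordering of the splits. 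I would present the $\mathtt{S}$ computation in full as the representative case and remark that $\mathtt{A}$ is obtained by the same manipulations with one extra numeral argument, leaving the three-way case arithmetic ($0+\!$, $1+\!$, $(n\!+\!2)+\!$ etc.) to the reader.
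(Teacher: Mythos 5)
Your overall strategy---unfold the redexes, push the residual \texttt{copy} constructs around with the commuting conversions and the declared equivalences on numerals, and invoke Subject reduction to stay inside the typable fragment---is exactly the route the paper takes; the paper itself only sketches this by working out $\mathtt{S}\,\overline{2}\rightarrow^*\overline{3}$ in full, and your treatment of the cases $n\geq 1$ reproduces that computation correctly.

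There is, however, a genuine gap in your base case. You assert that $\overline{0}\,f_2\,x$ ``first discards $f_2$ and returns $x$ --- using the $\mathtt{discard}$ reduction rule.'' But the rule in Figure~\ref{fig: term reduction rules} is $\mathtt{discard}_{\sigma}\,V\ \mathtt{in}\ M\rightarrow M$ where $V$ must be a \emph{value}, i.e.\ a closed normal term (Definition~\ref{defn:Values among linear lambda-terms}); the bound variable $f_2$ is not closed, so this step is not licensed. After the two $\beta$-steps one is stuck at $\lambda fx.\,\mathtt{copy}^{I}_{\mathbf{1}}\,f\ \mathtt{as}\ f_1,f_2\ \mathtt{in}\ f_1(\mathtt{discard}_{\mathbf{1}}\,f_2\ \mathtt{in}\ x)$: no commuting conversion of Figure~\ref{fig: term commuting conversions} applies (the $\mathtt{discard}$ sits in argument, not function, position), and neither of the two equivalences of Definition~\ref{defn:Terms and types for mathcalN} identifies this term with $\overline{1}=\lambda fx.fx$. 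Your own workarounds in point~(ii) do not close the hole: ``reducing against the value $I$ after the enclosing abstraction is applied'' only yields an observational equivalence, not the claimed $\rightarrow^*$ on the terms themselves, and ``adding an evident instance of the numeral equivalences'' changes the statement being proved. The same obstruction recurs for $\mathtt{A}$ whenever $m=0$ or $n=0$ (one is left with $\mathtt{discard}_{\mathbf{1}}\,f_1\ \mathtt{in}\ (\cdots)$ over a variable). So either the zero cases must be handled by an explicit extra identification of terms (to be stated up front), or the claim must be weakened there; as written, the reduction sequence you describe does not exist for $n=0$.
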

\noindent
For example, the following reduction is legal:
\begin{align*}
\texttt{S} \,\overline{2} & \triangleq
(\lambda nfx.
\texttt{copy}_{\boldmath{1}}^{I}\,f\,\texttt{as}\, f_1, f_2 \,\texttt{in}\, f_1(n f_2 x))
( \lambda gy. \texttt{copy}_{\mathbf{1}}^{I}\,g\, \texttt{as}\,
             g_1,  g_2\,\texttt{in}\, g_1(g_2\, y))
\\
& \rightarrow
\lambda fx.
\texttt{copy}_{\boldmath{1}}^{I}\,f\,\texttt{as}\, f_1, f_2 \,\texttt{in}\, f_1(( \lambda gy. \texttt{copy}_{\mathbf{1}}^{I}\,g\, \texttt{as}\,
             g_1,  g_2\,\texttt{in}\, g_1(g_2\, y)) f_2 x)
\\
& \rightarrow
\lambda fx.
\texttt{copy}_{\boldmath{1}}^{I}\,f\,\texttt{as}\, f_1, f_2 \,\texttt{in}\, f_1(( \lambda y. \texttt{copy}_{\mathbf{1}}^{I}\,f_2\, \texttt{as}\,
             g_1,  g_2\,\texttt{in}\, g_1(g_2\, y)) x)
\\
& \rightarrow
\lambda fx.
\texttt{copy}_{\boldmath{1}}^{I}\,f\,\texttt{as}\, f_1, f_2 \,\texttt{in}\, f_1( \texttt{copy}_{\mathbf{1}}^{I}\,f_2\, \texttt{as}\,
             g_1,  g_2\,\texttt{in}\, g_1(g_2\, x)) 
      \\
& =
\lambda fx.
\texttt{copy}_{\boldmath{1}}^{I}\,f\,\texttt{as}\, f_1, f_2 \,\texttt{in}\, ( \texttt{copy}_{\mathbf{1}}^{I}\,f_2\, \texttt{as}\,
             g_1,  g_2\,\texttt{in}\, f_1(g_1(g_2\, x)))        
              \triangleq \overline{3}
\end{align*}
%For example, the following reduction is legal:
%\begin{align}
%\nonumber
%\texttt{S} \overline{1} & \triangleq
%(\lambda nfx.
%\texttt{copy}_{\boldmath{1}}^{I}\,f\,\texttt{as}\, f_1, f_2 \,\texttt{in}\, f_1(n f_2 x))
%(\lambda gx. gx)
%\\
%\nonumber
%& \rightarrow
%(\lambda fx.
%\texttt{copy}_{\boldmath{1}}^{I}\,f\,\texttt{as}\, f_1, f_2 \,\texttt{in}\, f_1((\lambda gx. gx) f_2 x))
%\\
%\nonumber
%& \rightarrow
%(\lambda fx.
%\texttt{copy}_{\boldmath{1}}^{I}\,f\,\texttt{as}\, f_1, f_2 \,\texttt{in}\, f_1((\lambda x. f_2 x)  x))
%\\
%\label{align: S1 -> 2}
%& \rightarrow
%\lambda fx.
%\texttt{copy}_{\boldmath{1}}^{I}\,f\,\texttt{as}\, f_1, f_2 \,\texttt{in}\, f_1(f_2 x) \triangleq \overline{2}
%\end{align}
\noindent
Observe that Proposition~\ref{prop: succ add} considers typable terms and term reductions  by exploiting Theorem~\ref{thm:Subject Reduction for IMLL2shpos}. A similar result cannot be restated for the related  derivations and the  lazy-cut elimination  (Definition~\ref{defn:Lazy cut-elimination strategy}).   For example, the here above term $\texttt{S} \,\overline{2}$ has  type $\mathbb{N}$, that is not lazy (Definition~\ref{defn: lazyness}), due to the presence of a universal quantification in negative position.  Indeed, the  lazy cut-elimination strategy of a derivation of $\texttt{S} \,\overline{2}$  runs  into  deadlocks before producing a cut-free derivation of $\overline{3}$. 

%However, the reductions  $ \texttt{S}\,\overline{n} \rightarrow^* \overline{n+1} $ and $ \texttt{A}\,\overline{m}\,\overline{n}\rightarrow^* \overline{m+n} $ are problematic. Let us consider the first one. After one step of reduction we get a new redex  $\overline{n}f_2$, whose $f_2$ has type $\shpos \mathbf{1}$. A derivation of $f_2: \shpos \mathbf{1}  \vdash \overline{n}f_2: \mathbf{1}$  forcefully contains an exponential cut whose first premise is $f_2: \shpos \mathbf{1} \vdash  f_2: \shpos \mathbf{1}$. By Definition~\ref{defn: definitions for cut},
%this instance of cut is a deadlock. Otherwise we could
%non-linearly duplicate $f_2$, like~\eqref{eqn: non linear duplication} illustrates. However, that duplication would operate on a variable.
%So, a way out of the deadlock would be to include:
%\begin{equation}
%\label{quation:relaxing cut elimination}
%\begin{aligned}
%\AxiomC{$x: \shpos \sigma \vdash x: \shpos \sigma$}
%\AxiomC{$\Gamma, y: \shpos \sigma \vdash M: \tau$}
%\RightLabel{$cut$}
%\BinaryInfC{$\Gamma \vdash M[x/y]:\tau$}
%\DisplayProof
%& \qquad \rightarrow& \AxiomC{$\Gamma, x: \shpos \sigma \vdash M[x/y]: \tau$}
%\DisplayProof
%\end{aligned}
%\end{equation}
%\noindent
%in the current set of cut-elimination steps.
%The inclusion of~\eqref{quation:relaxing cut elimination} 
%would have no negative impact on the dimension 
%of $ \vert M[x/y]\vert $, 
%because replacing a variable for a variable.
%So, \eqref{quation:relaxing cut elimination} would allow to
%simulate $\mathtt{S}$ and $\mathtt{A}$.

As far as we could see, the  \enquote{zero-test}, the predecessor  and the subtraction on numerals cannot have type in $\mathsf{LEM}$.  
The problem is the position of the universal quantifiers of $ \mathbb{N} $. Consider, for example, the following predecessor:
\begin{align*}
\texttt{P} & \triangleq
\lambda nsz.n\, S[s]\, B[z]
&&
(\textrm{Predecessor})
%\\
%\texttt{Hp} & \triangleq
%\lambda n.n\, S[\texttt{succ}]\, B[\overline{0}]
%&&
%(\textrm{Higher-order predecessor})
%\\
\\
S[M] & \triangleq
\lambda p.\texttt{let}\ p\ \texttt{be}\ l, r\ \texttt{in}\ 
\langle M, lr \rangle
&&
(\textrm{Step function})
\\
B[N] & \triangleq \langle I, N\rangle
&&
(\textrm{Base function})
\end{align*}
introduced by Roversi \cite{Roversi:1999-CSL}. Giving a type to \texttt{P} would require to substitute $ (\alpha\multimap\alpha)\otimes\alpha$ for $ \alpha $ in $ \mathbb{N} $, as suggested by the application of $ n:\mathbb{N} $ to $ S[s] $. The position of the universal quantifiers in $\mathbb{N}$ forbids it. Were such an instance legal, we could iterate functions, contradicting the cubic bound on the cut-elimination
(Theorem~\ref{thm: cut elimination for downarrow IMLL2}.)

Further, we can generalize $\mathbb{N}$ to
$\mathbf{N}[\overline{(A \multimap A)} /\gamma]$,
where $\overline{(A \multimap A)}$ is the closure of a 
quantifier-free type $A \multimap A$,
and find that \emph{Hereditarily finite permutations} (\textsf{HFP})
by Dezani
\cite{DBLP:journals/tcs/Dezani-Ciancaglini76} 
inhabit $\overline{A \multimap A}$. 
An  \textsf{HFP}  is a $\lambda$-term of the form $ P \triangleq \lambda z x_1 \ldots x_n. z(P_1 x_{\rho(1)})\ldots (P_n x_{\rho(n)}) $,
for some $ n\geq 0 $,  where $\rho \in S_n$ (the symmetric group of $\lbrace 1, \ldots, n\rbrace$)  and $ P_1, \ldots, P_n$ are \textsf{HFP}. The class $\mathcal{H}_{\text{lin}}$ of linear $\lambda$-terms which are \textsf{HFP}  (considered modulo $\beta \eta$-equivalence) forms a group:
\begin{enumerate}
\item The binary operation is $\lambda fgx.f(g\,x)$;
\item The identity is $I$;
\item If $P=\lambda z x_1 \ldots x_n. z(P_1 x_{\rho(1)})\ldots (P_n x_{\rho(n)})$  is in  $\mathcal{H}_{\text{lin}}$, the inductively defined inverse is:
 \[P^{-1} \triangleq \lambda z' x_1 \ldots x_n. z'(P^{-1}_{\rho^{-1}(1)} x_{\rho^{-1}(1)})\ldots (P^{-1}_{\rho^{-1}(n)} x_{\rho^{-1}(n)})\] where, for all $1 \leq i \leq n$,  $\rho^{-1}_i$ is the inverse permutation of $\rho_i$  and $P^{-1}_{i}$ is the inverse of $P_{i}$.
\end{enumerate}
For example, let $ P =\lambda wabc.w(\lambda xy. ayx)(\lambda xy. bxy)c $ which belongs to \textsf{HFP} since
$ \lambda xy. ayx =_{\beta } (\lambda zxy. zyx) a $,
$ \lambda xy. bxy =_{\beta } (\lambda zxy. zxy) b $,
$ c =_{\beta } Ic $, where $ (\lambda zxy. zyx)$, $ (\lambda zxy. zxy) $ and $ I $ are in \textsf{HFP}. Then, $P$ has type
$\forall \alpha_x \alpha_y\alpha_a\alpha_b\alpha_c\alpha. 
A \multimap A$, 
which is a ground type
where $A$ is 
$(\alpha_x\multimap\alpha_y\multimap\alpha_a )
 \multimap 
 (\alpha_x\multimap\alpha_y\multimap\alpha_b ) 
 \multimap \alpha_c 
 \multimap \alpha$.
These observations show an unexpected link between $\mathsf{LEM}$ and reversible computation (see Perumalla \cite{perumalla2013chc} for a thorough introduction) that can well be expressed in terms of monoidal structures where permutations play a central role \cite{DBLP:conf/types/PaoliniPR15, paolini16ictcs, paolini2018ngc}.
%%%%%%%

\section{Conclusions}
\label{section:Conclusions}\noindent
We introduce $\mathsf{LEM}$. It is a type-assignment for the 
linear $ \lambda $-calculus extended with new constructs that can 
duplicate or erase values, i.e.~closed and normal linear 
$ \lambda $-terms. $\mathsf{LEM}$ enjoys a mildly weakened 
cut-elimination, and the Subject reduction. The internalization of the 
mechanism of linear weakening  and contraction by means of modal rules 
allows to exponentially compress derivations of $\mathsf{IMLL}_2$.  
On one side, this enables to represent boolean circuits more compactly, as compared to previous ones, based on the multiplicative fragment of Linear Logic. On the other, $\mathsf{LEM}$ can represent Church-like encoding of the natural numbers, with successor and the addition for them.

We conclude by briefly discussing possible future works.

In Section~\ref{sec: duplication and erasure in a linear setting}, we 
conjecture that a version of the general separation property holds in 
the linear $\lambda$-calculus (Conjecture~\ref{conj: linear separation}.)
Were it true, we could show that a
 duplicator exists for all the finite sets of closed terms in 
$\beta\eta$-normal form, and connect
linear duplication with the standard notion of separation.

In Section~\ref{sec: the system shpos, with, oplus IMLL2 cut elimination and bound},  we design $\mathsf{LEM}$ to express linear weakening and 
contraction in the same spirit as of the exponential rules of  
$\mathsf{LL}$. We are working to push the analogy further,
by formulating a type-assignment that extends
$\mathsf{IMLL}_2$ with \enquote{linear additives}. A candidate rule is:
\begin{prooftree}
\AxiomC{$x_1:A \vdash M_1: A_1 $}
\AxiomC{$x_2: A \vdash M_2: A_2$}
\AxiomC{$ \vdash V:A$}
\RightLabel{$\with$R}
\TrinaryInfC{$x: A \vdash \mathtt{copy}^V_A \, x \mathtt{\ as \ }x_1,x_2 \mathtt{\ in \ } \langle M_1, M_2\rangle : A_1 \with A_2$}
\end{prooftree}
where $V$ is a value and $A, A_1, A_2$ are closed and without negative occurrences of $\forall$. The intuition behind this rule is the one 
we discuss in Section~\ref{sec: the system shpos, with, oplus IMLL2 cut elimination and bound} for the contraction rule of $\mathsf{LEM}$. 
We also claim that the new system  would keep the normalization cost  linear, unlike standard additives (see~\cite{mairson2003computational}). 

Section~\ref{sec: encoding boolean circuits}  presents an encoding of the  boolean circuits not preserving their depth. Moving to unbounded fan-in proof nets for $\mathsf{LEM}$ would improve the correspondence,  where the rules $ p, w, c $ and $ d $ would be expressed  by nodes and boxes. 
Operations on them would compactly perform duplication and get rid of garbage, possibly improving~\cite{DBLP:conf/lics/Terui04,DBLP:conf/lfcs/MogbilR07,DBLP:journals/corr/abs-1201-1120}.  
A reasonable question would then be whether the use of alternative and weaker exponential rules in $\mathsf{LEM}$ could be the right approach to capture circuit complexity classes like $\textsf{NC}, \textsf{AC}$, and $\textsf{P}/_{\operatorname{poly}}$ in analogy with the implicit characterizations of the Polynomial and Elementary-time computational complexity classes by means of Light Logics \cite{lafont2004soft,danos2003linear}.

Section~\ref{section:Church numerals} contributes to the problem of defining numeral systems in linear settings. 
In~\cite{Mackie2018}, Mackie has recently introduced linear variants of numeral systems.
He shows that successor, addition, predecessor, and subtraction have representatives in the linear $\lambda$-calculus. We could not find how giving type in $\mathsf{LEM}$ to  some of  the terms of Mackie's numeral systems.
However, by merging Mackie's encoding and Scott numerals \cite{CF:58}, numeral systems seem to exist which  $ \mathsf{LEM} $ can give a type to. The cost would be to extend $\mathsf{LEM}$ with recursive types, following 
Roversi\&Vercelli \cite{Roversi+Vercelli:2010-DICE10}.

\bibliographystyle{plain}
\section*{\refname}
\bibliography{biblio}
\appendix
\section{The proof of Theorem~\ref{thm: pi1 are duplicable}} 
\label{sec: the d-soundness theorem DICE}
In this section we give a detailed proof of Theorem~\ref{thm: pi1 are duplicable} for $\mathsf{IMLL}_2$, which states that if $A$ is an inhabited ground type, i.e~an inhabited  closed $\Pi_1$-type, then $A$ is also a duplicable type. By Definition~\ref{defn: duplicable and erasable types}, this amounts to show that a linear $\lambda$-term $\mathtt{D}_{A}: A\multimap A \otimes A$ exists such that  $\mathtt{D}_{A}\, V \rightarrow^*_{\beta \eta} \langle V, V \rangle$ holds for every value  $V$ of $A$. We shall construct $ \mathtt{D}_{A} $ as the composition of three linear $\lambda$-terms, as diagrammatically displayed in Figure~\ref{fig:DA diagram}. For each such component we dedicate a specific subsection. For the sake of presentation, in this section we  focus on  terms of $\mathsf{IMLL}_2$ rather than on derivations, so that when we say that a term $M$ has type $A$ with context $\Gamma$, we clearly mean that a derivation $\mathcal{D}$ exists such that $\mathcal{D}\triangleleft \Gamma \vdash M: A$. Moreover, as assumed in Section~\ref{sec: background}, terms are considered modulo $\alpha$-equivalence. 
\begin{figure}
    \begin{tikzpicture}[baseline= (a).base]
    \node[scale=.725] (a) at (0,0){
        \begin{tikzcd}[column sep=scriptsize, row sep=tiny]
        A \ar[rr, "\mathtt{sub}^s_{A}"] 
        && 
        A^-[\mathbf{B}^s] \ar[rr, "\mathtt{enc}^s _{A}"] 
        && 
        \mathbf{B}^s 
        \ar[rr, "\mathtt{dec}^s _{A} "] 
        &&
        A \otimes A
        \\
        V \ar[rr, mapsto]  
        &&  
        V_{A} \ar[rr, mapsto] 
        && 
        \lceil V_{A} \rceil \ar[rr, mapsto]
        &&
        \langle V_{A}, V_{A}  \rangle
        \end{tikzcd}
    };
    \end{tikzpicture}
    \caption{The diagrammatic representation of $ \mathtt{D}_{A}$.}
    \label{fig:DA diagram} % do no move this label from here!!!
\end{figure}
%%%%%%%%%%%%%%%
\subsection{The linear $ \lambda $-term $ \mathtt{sub}^s_{A} $}
Roughly, the $\lambda$-term $\mathtt{sub}^s_{A}$, when applied to a value  $V$ of ground type $A$, produces its $\eta$-long normal form $V_{A}$ whose type is obtained from $A$ as follows: we strip away every occurrence of $\forall$ and we substitute each type variable  with the $s$-ary tensor of boolean datatypes $\mathbf{B}^s=\mathbf{B}\otimes \overset{s}{\ldots}\otimes \mathbf{B}$, for some $s>0$.\\
Before introducing the $\lambda$-term $\mathtt{sub}^s_{A}$, we need the definition of $\eta$-long normal form:
\begin{defn}[$\eta$-long normal forms]\label{defn: eta long nf}
Let $\mathcal{D}\triangleleft \Gamma \vdash M: B$ be cut-free.
We define the $\eta$-\textit{expansion} of $\mathcal{D}$, denoted  $\mathcal{D}^\Gamma_B$,  as the derivation obtained from $\mathcal{D}$ by substituting every occurrence of:
%\centerAlignProof  %Try commenting out all but one of these three lines.
%%\bottomAlignProof 
\begin{prooftree}
\AxiomC{}
\RightLabel{$ax$}
\UnaryInfC{$x: A \vdash x: A$}
%\DisplayProof{}         %% NOTE THE USE OF "{}"
\end{prooftree}
 with a derivation of $ x: A \vdash M':A $, for some $ M' $, whose axioms have form $y: \alpha \vdash y: \alpha$. The  $\eta$-expansion is unique and transforms the $\lambda$-term $M$ to   its \emph{$\eta$-long normal form}, denoted by $M^{\Gamma}_B$ and such that $M^\Gamma_B\rightarrow^*_\eta M$.
If the context $\Gamma$ of an $ \eta $-expanded $ \mathcal{D} $ is 
$x_1: A_1, \ldots, x_n:A$ we may write 
$\mathcal{D}^{A_1, \ldots, A_n}_B$ and $M^{A_1, \ldots, A_n}_B$.
If $\Gamma$ is empty, we feel free to write $\mathcal{D}_B$ and $M_B$.
\end{defn}
\begin{lem}\label{lem: canonical definition for eta long normal forms} Let $\mathcal{D}\triangleleft \Gamma \vdash M: A$ be a cut-free derivation in $\mathsf{IMLL}_2$, and let  $M^\Gamma_A$ denote the $\eta$-long normal form obtained by $\eta$-expanding $\mathcal{D}$. Then:
\begin{enumerate}[1]
\item  \label{eqn: 1 canonical definition for eta long normal forms} If $M=x$, $A= \alpha$, and  $\Gamma= x: \alpha$ then  $x^\alpha_\alpha=x$.
\item \label{eqn: 2 canonical definition for eta long normal forms} If $M=x$, $A= \forall \alpha. B$, and  $\Gamma= x: \forall \alpha. B$ then  $x^{\forall \alpha. B}_{\forall \alpha. B}= x^{B}_B$.
\item  \label{eqn: 3 canonical definition for eta long normal forms} If $M=x$, $A= B \multimap C$, and  $\Gamma= x: B \multimap C$ then  $x^{B \multimap C}_{B \multimap C}=\lambda y.(xy^{B}_B)^C_C$.
\item \label{eqn: 4 canonical definition for eta long normal forms} If $A= \forall \alpha. B$ then $M^{\Gamma}_{\forall \alpha. B}= M^\Gamma_{B\langle \gamma/\alpha \rangle}$, for some $\gamma$. 
\item \label{eqn: 5 canonical definition for eta long normal forms} If $M= \lambda x. N$ and $A= B \multimap C$ then $(\lambda x. N)^{\Gamma}_{B \multimap C}= \lambda x. N^{\Gamma, x: B}_C$.
\item \label{eqn: 7 canonical definition for eta long normal forms} If $M=P[yN/x]$ and $\Gamma=\Delta, \Sigma, y: B \multimap C$, where $P$ has type $A$ with context $\Delta, x:C$  and   $N$ has type $B$ with context $\Sigma$,  then $(P[yN/x])^{\Gamma}_A=P^{\Delta, x: C}_A[yN^\Sigma _B/x]$.
\item \label{eqn: 6 canonical definition for eta long normal forms} If $M=P[yN/x]$ and  $\Gamma= \Gamma', y: \forall \alpha. B$ then we have $(P[yN/x])^{\Gamma', y: \forall \alpha.B}_A=(P[yN/x])^{\Gamma', y: B\langle D/\alpha \rangle}_A$, for some type $D$. 
\end{enumerate}
\end{lem}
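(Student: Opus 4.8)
The plan is to prove the seven equations by reading them off Definition~\ref{defn: eta long nf}, proceeding by induction on the type $A$ for items (1)--(3) and by induction on the cut-free derivation $\mathcal{D}$ for items (4)--(7). First I would record two preliminary facts. (a) The $\eta$-expansion of Definition~\ref{defn: eta long nf} rewrites only the axiom leaves of $\mathcal{D}$, leaving every instance of $\multimap$R, $\multimap$L, $\forall$R, $\forall$L untouched; in particular $\forall$R and $\forall$L carry their subject unchanged, so the resulting $\eta$-long normal form is invariant under the usual permutations that move a $\forall$R instance downward and a $\forall$L instance upward. (b) Using these permutations together with a standard generation argument for cut-free $\mathsf{IMLL}_2$ derivations (cf.\ Lemma~\ref{lem: generation}), I may always normalise $\mathcal{D}$ so that its last rule is dictated by the shape of $\Gamma\vdash M:A$: if $A=\forall\alpha.B$ the last rule can be taken to be $\forall$R; if $\Gamma$ contains an assumption $y:\forall\alpha.B$ there is a $\forall$L introducing it; if $M=\lambda x.N$ the last rule (once the outer $\forall$'s are stripped) is $\multimap$R; and if $M=P[yN/x]$ with $y$ of an arrow type then the corresponding $\multimap$L is present. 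This is what makes the list (1)--(7) exhaustive.

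For (1)--(3), which describe the $\eta$-long form $x^A_A$ of a single variable, I would induct on $A$, directly from the clause of Definition~\ref{defn: eta long nf} that replaces an axiom $x:A\vdash x:A$ by the unique derivation of $x:A\vdash M':A$ all of whose axioms are atomic. If $A=\alpha$, the axiom is already atomic, so $M'=x$, which is (1). If $A=\forall\alpha.B$, the canonical replacement puts a $\forall$L and a $\forall$R around the replacement for $x$ at type $B\langle\gamma/\alpha\rangle$; since neither rule affects the subject, $M'=x^{B\langle\gamma/\alpha\rangle}_{B\langle\gamma/\alpha\rangle}$, and as the term $x^C_C$ depends only on the shape of $C$ this equals $x^B_B$, giving (2). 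If $A=B\multimap C$, the canonical replacement $\eta$-expands the axiom to $\lambda y.\,xy$ and then recursively expands the two sub-occurrences at types $B$ and $C$, yielding $\lambda y.(x\,y^B_B)^C_C$, which is (3). Uniqueness of the $\eta$-expansion, asserted in Definition~\ref{defn: eta long nf}, is what guarantees that these recursive clauses genuinely compute $M'$.

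For (4)--(7) I would run the induction on $\mathcal{D}$, splitting on the normalised last rule. The $\forall$R case gives (4) and the $\forall$L case gives (7): in both, the rule leaves the subject unchanged, so $\eta$-expanding the premise derivation and $\eta$-expanding $\mathcal{D}$ produce the same term. For the $\multimap$R case with conclusion $\Gamma\vdash\lambda x.N:B\multimap C$, the axiom leaves of $\mathcal{D}$ are exactly those of the premise derivation of $\Gamma,x:B\vdash N:C$, so $\eta$-expansion commutes with the abstraction and $(\lambda x.N)^\Gamma_{B\multimap C}=\lambda x.\,N^{\Gamma,x:B}_C$, which is (5). For the $\multimap$L case with premises $\Sigma\vdash N:B$ and $\Delta,x:C\vdash P:A$, the axiom leaves split between the two sub-derivations, so $\eta$-expansion distributes; the point requiring care is that substitution commutes with $\eta$-expansion, i.e.\ $(x^C_C)[\,y\,N^\Sigma_B/x\,]=(y\,N^\Sigma_B)^C_C$, which follows from (1)--(3) by a subsidiary induction on $C$. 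Combining these, $(P[yN/x])^\Gamma_A=P^{\Delta,x:C}_A[\,y\,N^\Sigma_B/x\,]$, which is (6).

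The main obstacle I anticipate is precisely this last case: showing that the globally defined $\eta$-expansion of $\mathcal{D}$ can be recomputed compositionally from the $\eta$-expansions of its immediate sub-derivations, which forces one to track the freshness side condition on $\forall$R and the linearity constraint on $\multimap$L so that neither variable capture nor any non-linear duplication is silently introduced when the substitution $[\,y\,N^\Sigma_B/x\,]$ is pushed through an already $\eta$-expanded $P$. A secondary check is that $M^\Gamma_A$ does not depend on the chosen cut-free $\mathcal{D}$; I would obtain this either from the uniqueness statement in Definition~\ref{defn: eta long nf} or, independently, from confluence of $\beta\eta$ on linear terms together with the fact that distinct $\eta$-long $\beta$-normal forms are never $\beta\eta$-convertible.
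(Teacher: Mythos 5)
Your proposal is correct and follows the same route as the paper, which proves this lemma simply by unfolding Definition~\ref{defn: eta long nf} (the paper's entire proof is ``Just follow the definition of $\eta$-long normal form''). Your elaboration --- inducting on the type for the axiom cases, on the derivation for the remaining ones, and noting that substitution commutes with $\eta$-expansion and that the result is independent of the chosen cut-free derivation --- supplies exactly the details the paper leaves implicit.
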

\begin{proof}
Just follow the definition of $\eta$-long normal form.
\end{proof}

\begin{defn} \label{defn: stripping forall} Let $A$ be a type in $\mathsf{IMLL}_2$. We define $A^-$ by induction on the complexity of the type:
\begin{align*}
\alpha^- &\triangleq \alpha \\
(A \multimap B)^- &\triangleq  A^- \multimap B^-\\
(\forall \alpha. A)^-&\triangleq  A^-\langle \gamma/\alpha \rangle \enspace, 
\end{align*}
where $\gamma$ is taken from the head of an infinite list of fresh type variables. 
The notation $A[B]$ denotes the type obtained by replacing $B$ for every free type variable of $A$. Moreover, if $\Gamma= x_1:A_1, \ldots, x_n:A_n$, then $\Gamma^-$ stands for $x_1:A_1^-, \ldots, x_n:A^-_n$, and $\Gamma[B]$  stands for $x_1:A_1[B], \ldots, x_n:A_n[B]$.\qed
\end{defn}
\begin{defn}[The linear $\lambda$-term $\mathtt{sub}^s_{A}$]\label{defn: sub} Let $s>0$. We define the linear $\lambda$-terms $\mathtt{sub}^s_{A}: A[\mathbf{B}^s]\multimap A^-[\mathbf{B}^s]$, where $A $ is a $ \Pi_1$-type, and $\overline{\mathtt{sub}}^s_{A}: A^-[\mathbf{B}^s] \multimap A[\mathbf{B}^s]$, where $A $ is a  $\Sigma_1$-type, by simultaneous induction on the size of $A$:
\begin{align*}
	&\mathtt{sub}^s_{\alpha}\triangleq  \lambda x.x 
	&&\overline{\mathtt{sub}}^s_{\alpha} \triangleq  \lambda x.x \\
	&\mathtt{sub}^s_{\forall \alpha. B}  \triangleq  \mathtt{sub}^s_{B} 
&&  \\
	&\mathtt{sub}^s_{B \multimap C}  \triangleq  \lambda x.\lambda y.  \mathtt{sub}^s_{C}(x\, (\overline{\mathtt{sub}}^s_{B}\,y)) 
	&&  \overline{\mathtt{sub}}^s_{B \multimap C}
	      \triangleq  
	       \lambda x.\lambda 
	       y.\overline{\mathtt{sub}}^s_C(x\,(\mathtt{sub}^s_{B}\,y)).	
           \qed
\end{align*}
\end{defn}
The following will be used to compact the proof of some of the coming lemmas.
\begin{defn}\label{defn: sub substitution}
Let $s>0$. Let $A$ be a $\Pi_1$-type. Let $\Gamma=x_1:A_1, \ldots, x_n:A_n$ be a context of $\Sigma_1$-types. Let $M$ be an inhabitant of 
$A[\mathbf{B}^s]$ with context $\Gamma[\mathbf{B}^s]$.
Then $M[\Gamma]$ denotes the substitution:
\begin{align*}
&M[\overline{\mathtt{sub}}^s_{A_1}\,x'_1/x_1, \ldots,\overline{\mathtt{sub}}^s_{A_n}\,x'_n/x_n ]
\end{align*} 
for some $x'_1, \ldots, x'_n$.
\qed
\end{defn}

\begin{lem}\label{lem: sub variable to eta long normal form} Let $s>0$ and $z$ be of type $A[\mathbf{B}^s]$.
\begin{enumerate}[(1)]
\item \label{eqn: 1 sub variable to eta long normal form} If $A$ is a $\Pi_1$-type, then $\mathtt{sub}^s_{A}\, z \rightarrow_\beta^* z_{A}^A$.
\item \label{eqn: 2 sub variable to eta long normal form}  If $A$ is a  $\Sigma_1$-type, then $\overline{\mathtt{sub}}^s_A\, z \rightarrow_\beta^* z^A_A$.
\end{enumerate}
\end{lem}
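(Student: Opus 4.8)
The plan is to prove both statements simultaneously by induction on the size of the type $A$, exactly mirroring the mutual recursion in Definition~\ref{defn: sub}. This is forced: the definition of $\mathtt{sub}^s_{A}$ on an implication calls $\overline{\mathtt{sub}}^s_{B}$ on the antecedent, so the two claims are genuinely entangled and neither can be proved in isolation.

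First I would treat the base case $A = \alpha$. Here $\mathbf{B}^s$ is substituted for $\alpha$, so $z$ has type $\mathbf{B}^s$, and $\mathtt{sub}^s_{\alpha} = \overline{\mathtt{sub}}^s_{\alpha} = \lambda x.x$, hence $\mathtt{sub}^s_{\alpha}\,z \rightarrow_\beta z$; and by Lemma~\ref{lem: canonical definition for eta long normal forms}.\ref{eqn: 1 canonical definition for eta long normal forms} (modulo the substitution $\mathbf{B}^s/\alpha$) we have $z^{\mathbf{B}^s}_{\mathbf{B}^s} = z$ — actually here one must be slightly careful, since $\alpha$ is replaced by the compound type $\mathbf{B}^s$, the $\eta$-long form of the variable $z$ at type $\mathbf{B}^s$ is itself a non-trivial $\eta$-expansion; but this is absorbed because $\mathtt{sub}$ is only ever meant to reach the $\eta$-long form \emph{with respect to the structure of $A$}, and at a variable of $A$ there is no further $A$-structure to expand. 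For the quantifier case, $A = \forall\alpha.B$ is a $\Pi_1$-type only, $\mathtt{sub}^s_{\forall\alpha.B} = \mathtt{sub}^s_{B}$, and by the induction hypothesis $\mathtt{sub}^s_{B}\,z \rightarrow_\beta^* z^{B[\mathbf{B}^s/\alpha\text{-stuff}]}_{B}$; then Lemma~\ref{lem: canonical definition for eta long normal forms}.\ref{eqn: 4 canonical definition for eta long normal forms} (which says the $\eta$-long form at $\forall\alpha.B$ equals the one at an instance $B\langle\gamma/\alpha\rangle$) lets us identify this with $z^{\forall\alpha.B}_{\forall\alpha.B}$ up to the renaming bookkeeping captured by $(\_)^-$.

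The implication case is the heart of the argument and where the mutual induction pays off. Take $A = B \multimap C$ with $A$ a $\Pi_1$-type, so $B$ is $\Sigma_1$ and $C$ is $\Pi_1$, both of strictly smaller size. Let $z$ have type $(B\multimap C)[\mathbf{B}^s] = B[\mathbf{B}^s]\multimap C[\mathbf{B}^s]$. Then
\begin{equation*}
\mathtt{sub}^s_{B\multimap C}\,z = (\lambda x.\lambda y.\,\mathtt{sub}^s_{C}(x\,(\overline{\mathtt{sub}}^s_{B}\,y)))\,z \rightarrow_\beta \lambda y.\,\mathtt{sub}^s_{C}(z\,(\overline{\mathtt{sub}}^s_{B}\,y)).
\end{equation*}
Now fix a fresh variable $y$ of type $B[\mathbf{B}^s]$. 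By induction hypothesis~(2) applied to the $\Sigma_1$-type $B$, $\overline{\mathtt{sub}}^s_{B}\,y \rightarrow_\beta^* y^{B}_{B}$, so $z\,(\overline{\mathtt{sub}}^s_{B}\,y) \rightarrow_\beta^* z\,y^{B}_{B}$, which is a term of type $C[\mathbf{B}^s]$. Applying induction hypothesis~(1) to $C$ — here I must invoke a slightly more general form of the claim allowing an arbitrary term (not just a variable) of type $C[\mathbf{B}^s]$ in the argument position, which is where I would appeal to, or prove en route, the substitution-compatibility packaged in Definition~\ref{defn: sub substitution} and Lemma~\ref{lem: canonical definition for eta long normal forms}.\ref{eqn: 7 canonical definition for eta long normal forms} — we get $\mathtt{sub}^s_{C}(z\,y^{B}_{B}) \rightarrow_\beta^* (z\,y^{B}_{B})^{C}_{C}$. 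Hence the whole term reduces to $\lambda y.(z\,y^{B}_{B})^{C}_{C}$, which by Lemma~\ref{lem: canonical definition for eta long normal forms}.\ref{eqn: 3 canonical definition for eta long normal forms} (the clause $x^{B\multimap C}_{B\multimap C} = \lambda y.(x\,y^{B}_{B})^{C}_{C}$) is precisely $z^{B\multimap C}_{B\multimap C}$. The symmetric clause for $\overline{\mathtt{sub}}^s_{B\multimap C}$ on a $\Sigma_1$-type $A = B\multimap C$ (now $B$ is $\Pi_1$, $C$ is $\Sigma_1$) goes through identically with the roles of $\mathtt{sub}$ and $\overline{\mathtt{sub}}$ swapped.

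The main obstacle I anticipate is not the $\beta$-reduction bookkeeping but making the statement about $\eta$-long normal forms precise enough to push the induction through: the naive claim \enquote{$\mathtt{sub}^s_A\,z \rightarrow_\beta^* z^A_A$} only talks about a \emph{variable} $z$, yet the recursive call in the implication case feeds $\mathtt{sub}^s_C$ a compound term $z\,(\overline{\mathtt{sub}}^s_B\,y)$. The fix is to strengthen the induction hypothesis to: for any term $M$ of type $A[\mathbf{B}^s]$ whose own normal form is already $\eta$-long on its \enquote{outside} (or more cleanly, to formulate everything for variables and then use Lemma~\ref{lem: canonical definition for eta long normal forms}.\ref{eqn: 7 canonical definition for eta long normal forms} and the $M[\Gamma]$-substitution notation to transport the variable statement along the application), and to carefully track that $(\overline{\mathtt{sub}}^s_B\,y)$ does land in $\eta$-long form by hypothesis~(2). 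Once the statement is set up so that \enquote{$\eta$-long} is stable under the operations performed, each inductive step is a routine unfolding of one $\beta$-step plus one appeal to the matching clause of Lemma~\ref{lem: canonical definition for eta long normal forms}.
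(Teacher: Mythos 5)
Your proposal is correct and follows essentially the same route as the paper: simultaneous induction on $\vert A \vert$, the same three-case analysis, and the same appeals to Lemma~\ref{lem: canonical definition for eta long normal forms}. The one subtlety you flag --- that the recursive call feeds $\mathtt{sub}^s_{C}$ a compound argument rather than a variable --- is resolved in the paper exactly along the lines you gesture at, by writing $\mathtt{sub}^s_{C}(z(\overline{\mathtt{sub}}^s_{B}\,y))$ as $(\mathtt{sub}^s_{C}\,w)[z(\overline{\mathtt{sub}}^s_{B}\,y)/w]$ for a fresh variable $w$, applying the inductive hypothesis to the variable $w$, and substituting back, using that $\beta$-reduction is stable under substitution.
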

\begin{proof}
We prove both points by simultaneous induction on $\vert A \vert$:
\begin{enumerate}
\item Case $A= \alpha$. Both the statements are straightforward since we have  $z^\alpha _\alpha=z$ by Lemma~\ref{lem: canonical definition for eta long normal forms}.\ref{eqn:  1 canonical definition for eta long normal forms}.
\item Case $A= \forall \alpha. B$. This case applies to point~\ref{eqn: 1 sub variable to eta long normal form} only. By induction hypothesis, for every variable $x$ of type $B[\mathbf{B}^s]$, $\mathtt{sub}^s_{B}\,x \rightarrow_\beta^* x^B_B$. The $\lambda$-term $\mathtt{sub}^s_B$ has type  $B[\mathbf{B}^s]\multimap B^-[\mathbf{B}^s]$, which is equal to $(B\langle \mathbf{B}^s/\alpha\rangle )[\mathbf{B}^s]\multimap (\forall \alpha. B)^-[\mathbf{B}^s]$. Hence, $\mathtt{sub}^s_B$ has also type $(\forall \alpha. B)[\mathbf{B}^s]\multimap (\forall \alpha. B)^-[\mathbf{B}^s]$. Moreover,  by Definition~\ref{defn: sub} we have $\mathtt{sub}^s_{B}= \mathtt{sub}^s_{\forall \alpha. B}$. Therefore, for every variable $z$ of type $(\forall \alpha. B)[\mathbf{B}^s]$ we have  $\mathtt{sub}^s_{\forall \alpha. B}\, z= \mathtt{sub}^s_B\, z \rightarrow_\beta^* z^{B}_B$. But $z^{B}_B=z^{\forall \alpha. B}_{\forall \alpha. B}$ by Lemma~\ref{lem: canonical definition for eta long normal forms}.\ref{eqn:  2 canonical definition for eta long normal forms}.
\item Case $A= B \multimap C$. We prove point~\ref{eqn: 1 sub variable to eta long normal form}   only (point~\ref{eqn: 2 sub variable to eta long normal form} is similar). Let $z$ be of type $(B\multimap C)[\mathbf{B}^s]=B[\mathbf{B}^s]\multimap C[\mathbf{B}^s]$. Then we have
\allowdisplaybreaks
\begin{align*}
\mathtt{sub}^s_{B \multimap C}\, z &=(\lambda x.  \lambda y.  \mathtt{sub}^s_{C}(x(\overline{\mathtt{sub}}^s_{B}\, y)))z &&\text{Definition}~\ref{defn: sub} \\
&\rightarrow_\beta \lambda y.  \mathtt{sub}^s_{C}(z(\overline{\mathtt{sub}}^s_{B}\, y))&&\\ 
&=\lambda y. ( \mathtt{sub}^s_{C}\, w)[z(\overline{\mathtt{sub}}^s_{B}\, y)/w] &&\\
& \rightarrow_\beta^* \lambda y. w^C_C [z (\overline{\mathtt{sub}}^s_{B}\, y)/w]&& \text{induction hyp., point}~\ref{eqn: 1 sub variable to eta long normal form}\\
& \rightarrow_\beta^* \lambda y. w^C_C [z y^B_B/w]&& \text{induction hyp., point}~\ref{eqn: 2 sub variable to eta long normal form}\\
&= \lambda y. (zy^B_B)^C_C &&\\
&= z^{B \multimap C}_{B\multimap C}&& \text{Lemma}~\ref{lem: canonical definition for eta long normal forms}.\ref{eqn: 3 canonical definition for eta long normal forms}.
\end{align*}
\end{enumerate}
\end{proof}
\begin{lem}\label{lem: sub preservation of eta long normal form} Let $s>0$. 
If $z:A[\mathbf{B}^s]$, where $A$ is a  $\Pi_1$-type, then $\mathtt{sub}^s_A\, z^A_A \rightarrow_\beta^* z^A_A $.
\end{lem}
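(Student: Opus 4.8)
The plan is to prove Lemma~\ref{lem: sub preservation of eta long normal form} by induction on $\vert A \vert$, mirroring the structure of Definition~\ref{defn: sub} and reusing Lemma~\ref{lem: canonical definition for eta long normal forms} and Lemma~\ref{lem: sub variable to eta long normal form}. The key observation is that $z^A_A$, being an $\eta$-long normal form, has a very rigid syntactic shape dictated by $A$, so that the $\beta$-reduction of $\mathtt{sub}^s_A$ against it can be computed compositionally.

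First I would treat the base case $A = \alpha$: here $z^\alpha_\alpha = z$ by Lemma~\ref{lem: canonical definition for eta long normal forms}.\ref{eqn: 1 canonical definition for eta long normal forms} and $\mathtt{sub}^s_\alpha = \lambda x.x$, so $\mathtt{sub}^s_\alpha\, z \to_\beta z = z^\alpha_\alpha$ immediately. For $A = \forall\alpha. B$, I would use that $\mathtt{sub}^s_{\forall\alpha.B} = \mathtt{sub}^s_B$ (Definition~\ref{defn: sub}) and that $z^{\forall\alpha.B}_{\forall\alpha.B} = z^B_B$ (Lemma~\ref{lem: canonical definition for eta long normal forms}.\ref{eqn: 2 canonical definition for eta long normal forms}), after checking the typing bookkeeping exactly as in the corresponding case of Lemma~\ref{lem: sub variable to eta long normal form}; the inductive hypothesis applied to $B$ then closes the case. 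The main case is $A = B \multimap C$, where $B$ is a $\Sigma_1$-type and $C$ is a $\Pi_1$-type. Here $z^{B\multimap C}_{B\multimap C} = \lambda y. (z\, y^B_B)^C_C$ by Lemma~\ref{lem: canonical definition for eta long normal forms}.\ref{eqn: 3 canonical definition for eta long normal forms}, and
\[
\mathtt{sub}^s_{B\multimap C}\, z^{B\multimap C}_{B\multimap C}
= (\lambda x.\lambda y.\mathtt{sub}^s_C(x(\overline{\mathtt{sub}}^s_B\, y)))\, z^{B\multimap C}_{B\multimap C}
\to_\beta \lambda y.\, \mathtt{sub}^s_C\big(z^{B\multimap C}_{B\multimap C}(\overline{\mathtt{sub}}^s_B\, y)\big).
\]
Now I would $\beta$-reduce the inner redex $z^{B\multimap C}_{B\multimap C}(\overline{\mathtt{sub}}^s_B\, y) = (\lambda y'. (z\, (y')^B_B)^C_C)(\overline{\mathtt{sub}}^s_B\, y) \to_\beta (z\, (\overline{\mathtt{sub}}^s_B\, y)^B_B)^C_C$, and then apply Lemma~\ref{lem: sub preservation of eta long normal form} for the $\Sigma_1$-analogue — or, more precisely, I expect I will need a symmetric statement $\overline{\mathtt{sub}}^s_B\, y^B_B \to_\beta^* y^B_B$ proved simultaneously, so that $(\overline{\mathtt{sub}}^s_B\, y)^B_B$ reduces appropriately; combined with the induction hypothesis on $C$, the whole body reduces to $(z\, y^B_B)^C_C$, giving $\lambda y. (z\, y^B_B)^C_C = z^{B\multimap C}_{B\multimap C}$.

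The main obstacle I anticipate is exactly this interaction with $\overline{\mathtt{sub}}^s_B$ in the $\multimap$ case: the lemma as stated only concerns $\Pi_1$-types and $\mathtt{sub}$, but its proof naturally requires a companion claim about $\Sigma_1$-types and $\overline{\mathtt{sub}}$, and one must be careful that applying $\overline{\mathtt{sub}}^s_B$ to a variable $y$ (not yet in $\eta$-long form) versus to $y^B_B$ behaves coherently — here Lemma~\ref{lem: sub variable to eta long normal form}.\ref{eqn: 2 sub variable to eta long normal form} says $\overline{\mathtt{sub}}^s_B\, y \to_\beta^* y^B_B$, and since $\eta$-long normal forms are idempotent under $\eta$-expansion, one checks $(y^B_B)^B_B = y^B_B$, so the two routes agree. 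Thus I would actually state and prove the strengthened lemma with both a $\Pi_1$-clause for $\mathtt{sub}$ and a $\Sigma_1$-clause for $\overline{\mathtt{sub}}$, by simultaneous induction on $\vert A\vert$, exactly paralleling Lemma~\ref{lem: sub variable to eta long normal form}; each individual reduction step is routine, and the only real care needed is the typing and the substitution bookkeeping, which the conventions in Definition~\ref{defn: sub substitution} are designed to streamline.
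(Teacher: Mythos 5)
Your proof is correct and follows essentially the same route as the paper's: induction on $\vert A\vert$, with the $\forall$ case collapsing via $\mathtt{sub}^s_{\forall\alpha.B}=\mathtt{sub}^s_{B}$ and the arrow case unfolding $z^{B\multimap C}_{B\multimap C}=\lambda w.(z\,w^{B}_{B})^{C}_{C}$, discharging the $\overline{\mathtt{sub}}^s_{B}$ subterm, and then applying the induction hypothesis to $C$. The only divergence is that the paper does not need the simultaneous $\Sigma_1$-clause you propose: since $\overline{\mathtt{sub}}^s_{B}$ is only ever applied to the fresh bound variable $y$, the already-established Lemma~\ref{lem: sub variable to eta long normal form}.\ref{eqn: 2 sub variable to eta long normal form} (together with the idempotence of $\eta$-expansion that you yourself note) suffices, so a plain induction on $\vert A\vert$ closes the argument.
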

\begin{proof}
 We prove it by induction on $\vert A \vert$:
\begin{enumerate}
\item Case $A= \alpha$. The statement is straightforward since we have $z^\alpha _\alpha=z$ by Lemma~\ref{lem: canonical definition for eta long normal forms}.\ref{eqn: 1 canonical definition for eta long normal forms}
\item Case $A= \forall \alpha. B$.  By Definition~\ref{defn: sub}, $\mathtt{sub}^s_{\forall \alpha. B}= \mathtt{sub}^s_{B}$ and we use the induction hypothesis.
\item Case $A= B \multimap C$. Then we have
\allowdisplaybreaks
\begin{align*}
 \mathtt{sub}^s_{B \multimap C}\, z^{B \multimap C}_{B\multimap C}&= (\lambda x.  \lambda y.  \mathtt{sub}^s_{C}(x(\overline{\mathtt{sub}}^s_{B}\, y)))z^{B \multimap C}_{B\multimap C}&& \text{Definition}~\ref{defn: sub}\\
 &= (\lambda x.  \lambda y.  \mathtt{sub}^s_{C}(x(\overline{\mathtt{sub}}^s_{B}\, y)))( \lambda w. (zw^{B}_B)^C_C)&&\text{Lemma}~\ref{lem: canonical definition for eta long normal forms}.\ref{eqn: 3 canonical definition for eta long normal forms}\\
 &\rightarrow_\beta  \lambda y.  \mathtt{sub}^s_{C}((\lambda w. (zw^{B}_B)^C_C)(\overline{\mathtt{sub}}^s_{B}\, y))\\
  &\rightarrow _\beta \lambda y.  \mathtt{sub}^s_{C} (z(\overline{\mathtt{sub}}^s_{B}\, y)^{B}_B)^C_C \\
   &\rightarrow_\beta^*  \lambda y.  \mathtt{sub}^s_{C} (z y^{B}_B)^C_C&&\text{Lemma}~\ref{lem: sub variable to eta long normal form}.\ref{eqn: 2 sub variable to eta long normal form}\\
    &=  \lambda y.  (\mathtt{sub}^s_{C}\, w^{C}_C)[ z y^{B}_B/w] \\ 
       &\rightarrow_\beta^*  \lambda y.  w^C_C[ z y^{B}_B/w] &&\text{induction hyp.} \\ 
          &=  \lambda y. (zy^B_B)^C_C \\ 
            &=_\alpha z^{B\multimap C}_{B \multimap C}&&\text{Lemma}~\ref{lem: canonical definition for eta long normal forms}.\ref{eqn: 3 canonical definition for eta long normal forms}
\end{align*} 
\end{enumerate}
\end{proof}
\begin{lem}\label{lem: sub} Let $s>0$. Let $A$ be a $\Pi_1$-type, and let  
$\Gamma=x_1:A_1, \ldots, x_n:A_n$ be a context of $\Sigma_1$-types. 
If $\Gamma[\mathbf{B}^s]\vdash M:A[\mathbf{B}^s]$, with $M$ normal, then:
\begin{equation*}
\mathtt{sub}^s_{A}\, M[\Gamma]\rightarrow_\beta^* M^\Gamma _A  \enspace .
\end{equation*}
\end{lem}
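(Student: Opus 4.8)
The plan is to prove Lemma~\ref{lem: sub} by structural induction on the (cut-free, normal) derivation of $\Gamma[\mathbf{B}^s]\vdash M:A[\mathbf{B}^s]$, equivalently by induction on the size of the normal term $M$. Since $A$ is a $\Pi_1$-type and $\Gamma$ a context of $\Sigma_1$-types, the shape of $M$ is constrained: $M$ is either a variable, an abstraction, or of the form $P[yN/x]$ arising from a $\multimap$L applied to a head variable $y:\sigma_1\multimap\dots$ from $\Gamma$, possibly preceded by $\forall$L steps instantiating $y$'s type. The three base/structural cases mirror exactly the clauses of Lemma~\ref{lem: canonical definition for eta long normal forms}, so the bookkeeping of $\eta$-long normal forms is already done for us.

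First I would handle $M=x_i$ a variable. Then $A=A_i$ must be simultaneously a $\Pi_1$- and a $\Sigma_1$-type, so $M[\Gamma]=\overline{\mathtt{sub}}^s_{A_i}\,x'_i$, and $\mathtt{sub}^s_{A}(\overline{\mathtt{sub}}^s_{A_i}\,x'_i)$. Here I would invoke Lemma~\ref{lem: sub variable to eta long normal form}.\ref{eqn: 2 sub variable to eta long normal form} to reduce the inner term to $(x'_i)^{A_i}_{A_i}$, and then Lemma~\ref{lem: sub preservation of eta long normal form} to see that $\mathtt{sub}^s_{A}$ fixes an $\eta$-long normal form, yielding $(x'_i)^{A_i}_{A_i}=x_i{}^\Gamma_{A}$ as required (up to the renaming handled by Definition~\ref{defn: sub substitution}). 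Next, the abstraction case $M=\lambda x.N$ with $A=\sigma\multimap C$ ($\sigma$ a $\Sigma_1$-type, $C$ a $\Pi_1$-type): unfold $\mathtt{sub}^s_{\sigma\multimap C}=\lambda x.\lambda y.\mathtt{sub}^s_C(x(\overline{\mathtt{sub}}^s_\sigma\,y))$ by Definition~\ref{defn: sub}, $\beta$-reduce to $\lambda y.\mathtt{sub}^s_C((\lambda x.N[\Gamma])(\overline{\mathtt{sub}}^s_\sigma\,y))$, $\beta$-reduce again, use Lemma~\ref{lem: sub variable to eta long normal form}.\ref{eqn: 2 sub variable to eta long normal form} on $\overline{\mathtt{sub}}^s_\sigma\,y\rightarrow_\beta^* y^\sigma_\sigma$, and then apply the induction hypothesis to the body $N$ typed in $\Gamma,x:\sigma$ to get $\mathtt{sub}^s_C(N[\Gamma,x:\sigma])\rightarrow_\beta^* N^{\Gamma,x:\sigma}_C$; conclude with Lemma~\ref{lem: canonical definition for eta long normal forms}.\ref{eqn: 5 canonical definition for eta long normal forms}.

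The remaining, and most delicate, case is the application head $M=P[yN/x]$ where $y:\forall\vec\alpha.(\sigma\multimap C)$ lies in $\Gamma$, $N$ has a $\Sigma_1$-type $\sigma'$ in a sub-context $\Sigma$, and $P$ has type $A$ in $\Delta,x:C'$ (with primes denoting the appropriate instantiations of $\vec\alpha$ by tensors of booleans). The key calculation is that $\overline{\mathtt{sub}}^s_{\sigma'\multimap C'}$, when applied to the projected head variable $y'$ coming from $\Gamma$'s substitution, unfolds by Definition~\ref{defn: sub} so that the $\overline{\mathtt{sub}}$ on the argument side becomes a $\mathtt{sub}^s_{\sigma'}$ wrapping $N$ — exactly what the induction hypothesis needs for the $\Pi_1$-argument $N$ (note the polarity swap: arguments of a $\Sigma_1$-head are $\Pi_1$-types) — while the $\overline{\mathtt{sub}}$ on the result side feeds into the recursive call on $P$. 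So I would (i) push $\mathtt{sub}^s_A$ through the $\beta$-redexes created by unfolding $\overline{\mathtt{sub}}^s_{\sigma'\multimap C'}$, (ii) apply the IH to $N$ to get $\mathtt{sub}^s_{\sigma'}(N[\Sigma])\rightarrow_\beta^* N^\Sigma_{\sigma'}$, (iii) apply the IH to $P$ (in context $\Delta,x:C'$) to get $\mathtt{sub}^s_A(P[\Delta,x:C'])\rightarrow_\beta^* P^{\Delta,x:C'}_A$, and (iv) reassemble using Lemma~\ref{lem: canonical definition for eta long normal forms}.\ref{eqn: 7 canonical definition for eta long normal forms} and \ref{lem: canonical definition for eta long normal forms}.\ref{eqn: 6 canonical definition for eta long normal forms} to absorb the $\forall$L instantiations into $D$. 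The main obstacle I anticipate is precisely this polarity juggling: keeping straight which of $\mathtt{sub}$ / $\overline{\mathtt{sub}}$ applies where as the type alternates between $\Pi_1$ and $\Sigma_1$ along the implication, and making sure the $\beta$-reductions coming from unfolding the $\overline{\mathtt{sub}}$ definitions line up cleanly with the substitutions hidden in the $M[\Gamma]$ notation of Definition~\ref{defn: sub substitution} and with the $\eta$-expansion clauses. Everything else is routine $\beta$-reduction bookkeeping, controlled by Lemmas~\ref{lem: sub variable to eta long normal form}, \ref{lem: sub preservation of eta long normal form} and \ref{lem: canonical definition for eta long normal forms}.
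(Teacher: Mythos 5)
Your strategy is the paper's: the same three-way case split on the shape of the normal term, the same appeal to Lemma~\ref{lem: sub variable to eta long normal form} and Lemma~\ref{lem: sub preservation of eta long normal form} in the variable case, the same unfolding of Definition~\ref{defn: sub} followed by the induction hypothesis (via the $M[\Gamma]$ notation of Definition~\ref{defn: sub substitution}) in the abstraction and head-application cases, and the same reassembly through Lemma~\ref{lem: canonical definition for eta long normal forms}. Two remarks. First, you omit the case $A=\forall\alpha.B$, which cannot be skipped: $\Pi_1$-types, and in particular the ground types to which this lemma is ultimately applied, are typically quantified. There $\mathtt{sub}^s_{\forall\alpha.B}=\mathtt{sub}^s_{B}$ and the induction hypothesis must be invoked on the \emph{same} term $M$ at the smaller type $B$, so ``induction on the size of $M$'' does not literally go through; the paper induces on $\vert M\vert$ plus the number of quantifiers occurring in $A_1,\ldots,A_n,A$ (your alternative phrasing, structural induction on the derivation, also works, but you must commit to a measure under which the $\forall$R-peeling step is legitimate). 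Second, your concern about $\forall$L instantiations of the head variable is unfounded, and indeed $y:\forall\vec{\alpha}.(\sigma\multimap C)$ is not a legal context type here: the grammar $\Sigma_1:=\alpha\ \vert\ \Pi_1\multimap\Sigma_1$ has no quantifier clause, so the head variable's type has outermost form $B\multimap C$ with $B$ a $\Pi_1$-type and $C$ a $\Sigma_1$-type, clause~\ref{eqn: 6 canonical definition for eta long normal forms} of Lemma~\ref{lem: canonical definition for eta long normal forms} is never needed, and the polarity alternation you single out as the main obstacle is exactly the benign swap you describe (the argument $N$ gets $\mathtt{sub}^s_{B}$, the continuation gets $\overline{\mathtt{sub}}^s_{C}$). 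With these two repairs your argument coincides with the paper's proof.
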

\begin{proof}
Let $Q_{\Gamma, A}$ be the number of universal quantifications in $A_1, \ldots, A_n, A$. We prove the result by induction on $\vert M \vert+ Q_{\Gamma, A}$. If $M=z$ then $\Gamma= z:A$ and $\mathtt{sub}^s_A\, M[\Gamma]= \mathtt{sub}^s_{A}(\overline{\mathtt{sub}}^s_{A}\, z)$. By point~\ref{eqn: 2 sub variable to eta long normal form} of Lemma~\ref{lem: sub variable to eta long normal form} and by Lemma~\ref{lem: sub preservation of eta long normal form} we have $\mathtt{sub}^s_A( \overline{\mathtt{sub}}^s_A \, z) \rightarrow_\beta^*  \mathtt{sub}^s_A \, z^A_A \rightarrow_\beta^* z^A_A$. If $M= \lambda z.N$ then we have two cases depending on the type of $M$:
\begin{enumerate}
\item Case $A= \forall \alpha. B$.   The $\lambda$-term $\mathtt{sub}^s_B$ has type $B[\mathbf{B}^s]\multimap B^-[\mathbf{B}^s]$, which is equal to $(B\langle \mathbf{B}^s/\alpha\rangle )[\mathbf{B}^s] \multimap (\forall \alpha. B)^-[\mathbf{B}^s]$, so that $\mathtt{sub}^s_B$ has also type $(\forall \alpha. B)[\mathbf{B}^s]\multimap (\forall \alpha. B)^-[\mathbf{B}^s]$. By Definition~\ref{defn: sub} we have $\mathtt{sub}^s_{B}= \mathtt{sub}^s_{\forall \alpha. B}$. By using the induction hypothesis, for every $M$ of type $(\forall \alpha. B)[\mathbf{B}^s]$ with context $\Gamma[\mathbf{B}^s]$, we have $\mathtt{sub}^s_{\forall \alpha. B}\, M[\Gamma]= \mathtt{sub}^s_B\, M[\Gamma]  \rightarrow_\beta^* M^{\Gamma}_{B}$. Moreover, by Lemma~\ref{lem: canonical definition for eta long normal forms}.\ref{eqn: 4 canonical definition for eta long normal forms}, $M^{\Gamma}_{B}= M^\Gamma_{\forall \alpha. A}$.
\item Case $A= B \multimap C$. Then we have:
\allowdisplaybreaks
\begin{align*}
\mathtt{sub}^s_{B \multimap C}\, M[\Gamma]&=(\lambda x.\lambda y.  \mathtt{sub}^s_{C}(x\, (\overline{\mathtt{sub}}^s_{B}\,y)))(\lambda z. N)[\Gamma]&&\text{Definition}~\ref{defn: sub}\\
&\rightarrow_\beta \lambda y.  \mathtt{sub}^s_{C}((\lambda z.N)[\Gamma] \, (\overline{\mathtt{sub}}^s_{B}\,y))\\
&\rightarrow_\beta  \lambda y.  \mathtt{sub}^s_{C}((N[\Gamma])[\overline{\mathtt{sub}}^s_{B}\,y/z])\\
&=  \lambda y.  \mathtt{sub}^s_{C}(N[\Gamma, y:B] )&&\text{Definition}~\ref{defn: sub substitution}\\
&\rightarrow_\beta^*  \lambda y.N^{\Gamma,y: B}_C&&\text{induction hyp.}\\
&=(\lambda y.N)^{\Gamma}_{B \multimap C}&&\text{Lemma}~\ref{lem: canonical definition for eta long normal forms}.\ref{eqn: 5 canonical definition for eta long normal forms}\\
&=_{\alpha} M^\Gamma_{B \multimap C}.
\end{align*}
\end{enumerate}
If $M=P[zN/w]$ then the type of $z$ cannot have an outermost universal quantification, because $\Gamma$ is a context of $\Sigma_1$-types. So $z$ has type of the form $B \multimap C$ in $\Gamma$.  Let $\Gamma'$ and $\Gamma''$ be contexts  such that $\Gamma= \Gamma', \Gamma'', z: B \multimap C$, $\operatorname{dom}(\Gamma')=FV(P)$, and $\operatorname{dom}(\Gamma'')=FV(N)$. Then we have:
\allowdisplaybreaks
\begin{align*}
\mathtt{sub}^s _{A}\, M[\Gamma]&= \mathtt{sub}^s_{A}(P[zN/w])[\Gamma]\\
&= \mathtt{sub}^s_{A}(P[\Gamma']  [(zN)[\Gamma'', z: B \multimap C]/w])\\
&=\mathtt{sub}^s_A (P[\Gamma'][(\overline{\mathtt{sub}}^s_{B \multimap C}\, z)(N[\Gamma''])/w])\\	    
&\rightarrow_\beta^* \mathtt{sub}^s_A (P[\Gamma'][ \overline{\mathtt{sub}}^s_C(z \,(\mathtt{sub}^s_{B}\,(N[\Gamma''])))/w])&&\text{Definition}~\ref{defn: sub}\\	       
&\rightarrow_\beta^* \mathtt{sub}^s_A (P[\Gamma'][ \overline{\mathtt{sub}}^s_C(z N^{\Gamma''}_B)/w] )&&\text{induction hyp.}\\	
&= (\mathtt{sub}^s_A \, P[\Gamma'][\overline{\mathtt{sub}}^s_C\, w/w] ) [z N^{\Gamma''}_B/w]\\	
&= (\mathtt{sub}^s_A \, P[\Gamma', w: C])[z N^{\Gamma''}_B/w]&&\text{Definition}~\ref{defn: sub substitution}\\
&\rightarrow_\beta^* P^{\Gamma', w: C}_A [z N^{\Gamma''}_B/w]&&\text{induction hyp.}\\	
&=(P[zN/w])^{\Gamma}_A &&\text{Lemma}~\ref{lem: canonical definition for eta long normal forms}.\ref{eqn: 7 canonical definition for eta long normal forms}
\end{align*}
\end{proof}

%%%%%%%%%%%%
\subsection{The linear $\lambda$-term $ \mathtt{enc}^s_{A} $}
One missing ingredient  in the previous subsection is the value of $ s $, which is fixed to some strictly positive integer.  To determine $s$ we need the following property:
\begin{lem} \label{lem: pi1type bound} For every cut-free derivation $\mathcal{D}\triangleleft \Gamma \vdash M: B$ in $\mathsf{IMLL}_2$ which does not contain applications of $\forall$L, the following inequations hold:
\begin{equation}\label{eqn: 1 inequation}
\vert M \vert  \leq  \vert M^{\Gamma}_B \vert \leq \vert \Gamma^-  \vert + \vert B^- \vert \leq 2 \cdot \vert M_B^{\Gamma} \vert \enspace ,
\end{equation}
where $(\_ )^-$ is as in Definition~\ref{defn: stripping forall}, and $M^\Gamma_A$ is as in Definition~\ref{defn: eta long nf}.
\end{lem}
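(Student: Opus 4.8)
The statement is an inequality chain relating the size of a term $M$, the size of its $\eta$-long normal form $M^\Gamma_B$, and the combined sizes of the stripped types $\Gamma^-$ and $B^-$. I would prove it by induction on the structure of the cut-free derivation $\mathcal{D}$, using the fact that $\mathcal{D}$ contains no instance of $\forall\mathrm{L}$; this means the only rules available are $ax$, $\multimap\mathrm{R}$, $\multimap\mathrm{L}$, and $\forall\mathrm{R}$. The first inequality $\vert M\vert\le\vert M^\Gamma_B\vert$ is immediate from the fact that $\eta$-expansion only adds nodes to the syntax tree, so the real content is the middle inequality $\vert M^\Gamma_B\vert\le\vert\Gamma^-\vert+\vert B^-\vert$ together with the last one $\vert\Gamma^-\vert+\vert B^-\vert\le 2\cdot\vert M^\Gamma_B\vert$.

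First I would record the shape of $\eta$-long normal forms, using Lemma~\ref{lem: canonical definition for eta long normal forms}: an $\eta$-long normal form at an arrow type $B\multimap C$ is a $\lambda$-abstraction $\lambda y.(\ldots)$; at a quantified type $\forall\alpha.B$ the $\eta$-expansion descends through the quantifier without changing the term; and the leaves are applications of a head variable to $\eta$-long arguments, bottoming out at atomic type. The key quantitative observation is that, when $\mathcal{D}$ has no $\forall\mathrm{L}$, every $\eta$-expanded $ax$-leaf $x:\alpha\vdash x:\alpha$ contributes exactly one node, and this node is matched by the single variable-occurrence $\alpha$ in the stripped type. More generally, $\forall$-stripping and $\eta$-expansion are "synchronized": stripping $\forall$ off a type does not change $\vert A^-\vert$ relative to the structure that $\eta$-expansion sees, so one can set up a bijection-like correspondence between the nodes of $M^\Gamma_B$ and the nodes of the multiset of types $\Gamma^-, B^-$, where each arrow $\multimap$ in a type on the right of the turnstile corresponds to an abstraction node and each arrow in a context type corresponds to an application node, while atomic leaves correspond to variable occurrences. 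The factor of $2$ in the upper bound $\vert\Gamma^-\vert+\vert B^-\vert\le 2\cdot\vert M^\Gamma_B\vert$ absorbs the slack between "one $\multimap$-node in a type" and "one application-or-abstraction node plus possibly a bound-variable node in the term".

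The induction itself would go rule by rule. For $ax$ (necessarily at an atomic type by the no-$\forall\mathrm{L}$ hypothesis together with how $\eta$-expansion is defined — or, if $ax$ is at a composite type, the $\eta$-expansion replaces it, so one argues on the expanded derivation), all three quantities are small constants and the inequalities hold directly. For $\forall\mathrm{R}$, neither $M$, nor $M^\Gamma_B$, nor the stripped types change size (stripping commutes with adding an outer $\forall$, and $\eta$-expansion passes through it by Lemma~\ref{lem: canonical definition for eta long normal forms}.\ref{eqn: 4 canonical definition for eta long normal forms}), so the claim is inherited verbatim from the premise. For $\multimap\mathrm{R}$ with conclusion $\Gamma\vdash\lambda x.M:\sigma\multimap B$ from premise $\Gamma,x:\sigma\vdash M:B$: on the term side $\vert(\lambda x.M)^\Gamma_{\sigma\multimap B}\vert=\vert M^{\Gamma,x:\sigma}_B\vert+1$ (one new $\lambda$-node, using Lemma~\ref{lem: canonical definition for eta long normal forms}.\ref{eqn: 5 canonical definition for eta long normal forms}), while on the type side $\vert(\sigma\multimap B)^-\vert=\vert\sigma^-\vert+\vert B^-\vert+1$, so both sides of the middle and right inequalities grow compatibly. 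For $\multimap\mathrm{L}$ with conclusion $\Gamma,\Delta,y:\sigma\multimap B\vdash M[yN/x]:C$, I would use Lemma~\ref{lem: canonical definition for eta long normal forms}.\ref{eqn: 7 canonical definition for eta long normal forms} to write $(M[yN/x])^{\Gamma,\Delta,y:\sigma\multimap B}_C = M^{\Gamma,x:B}_C[yN^\Delta_\sigma/x]$, split the size as $\vert M^{\Gamma,x:B}_C\vert+\vert N^\Delta_\sigma\vert$ (plus a bounded number of nodes for the application and the head variable), and add the two instances of the inductive hypothesis; on the type side $\vert(\sigma\multimap B)^-\vert=\vert\sigma^-\vert+\vert B^-\vert+1$ again accounts for the extra arrow.

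\textbf{Main obstacle.} The delicate point is bookkeeping the constants so that the \emph{lower} bound $\vert\Gamma^-\vert+\vert B^-\vert\le 2\cdot\vert M^\Gamma_B\vert$ survives the induction — in particular handling the case where the context $\Gamma$ contains assumptions that are never used as heads (impossible here by linearity, which is why the linearity constraint and the absence of $\forall\mathrm{L}$ are both essential), and making sure the $\multimap\mathrm{L}$ step does not lose a factor. I would therefore be careful to prove the slightly stronger two-sided statement as a single simultaneous induction rather than the three inequalities separately, so that the inductive hypothesis is strong enough at each node; the factor $2$ is exactly the amount of slack needed to keep the lower bound stable under the $+1$'s introduced by $\multimap\mathrm{R}$ and $\multimap\mathrm{L}$, and checking this is the one place where a genuine (if short) calculation is unavoidable.
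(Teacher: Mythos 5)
Your proposal is correct and follows essentially the same route as the paper: both arguments induct on the $\eta$-expanded derivation (whose only rules are $ax$ at atomic type, $\multimap$R, $\multimap$L and $\forall$R), observe that the base case gives $1\le 2\le 2$, and check that each $\multimap$ rule increases the term size and the stripped-type size by exactly one while $\forall$R changes neither, so the two-sided inequality is preserved. The paper's version is just a terser statement of the same bookkeeping you describe.
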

\begin{proof}
The inequation $\vert M \vert  \leq  \vert M^{\Gamma}_B \vert$ is by definition of $\eta$-long normal form. Now, let $\mathcal{D}^{\Gamma}_B$ be the $\eta$-expansion of $\mathcal{D}$, so that $\mathcal{D}^{\Gamma}_B \triangleleft \Gamma \vdash M^{\Gamma}_B: B$. We prove the remaining two inequations by induction on $\mathcal{D}^{\Gamma}_B$. If it is an axiom then, by definition of $\eta$-expansion, it must be of the form $x: \alpha \vdash x:\alpha$, where   $M^{\Gamma}_B=x$. Hence, $\vert x \vert \leq 2 \cdot \vert \alpha \vert \leq  2 \cdot \vert x \vert$. Both the rules $\multimap$R and $\multimap$L,  increase by one the overall size of the types in a judgment and of the corresponding term, so the inequalities still hold. Last, the rules for $\forall$ do not affect the size of both  $\Gamma^-, B^-$ and  $M^{\Gamma}_B$.
\end{proof}
\noindent
Notice that Lemma \ref{lem: pi1type bound} does not hold in general whenever $\mathcal{D}$ contains instances of the inference rule $\forall$L,   since one can exploit the inference rule $\forall$L to \enquote{compress} the size of a type. \\
Now, consider a cut-free derivation $	\mathcal{D}\triangleleft \vdash M:A$, where $A$ is  a ground type. Since negative occurrences of $\forall$ are not allowed in $A$, $\mathcal{D}$ contains no application of $\forall$L and,  by Lemma~\ref{lem: pi1type bound}, this implies  that $\vert M \vert \leq \vert A^-\vert$. This limits the number of variables a generic inhabitant of $A$ has, so that we can safely say that the variables of $ M $ must certainly belong to a fixed set $\lbrace \mathrm{x}_1, \ldots, \mathrm{x}_{\vert A^- \vert} \rbrace$. The next step is to show that we can encode every normal form as a tuple of booleans, i.e.~as elements in $ \mathbf{B}^s $ with a sufficiently large $ s $. Actually, we are interested in $\eta$-long normal forms only, due to the way the linear $\lambda$-term $\mathtt{sub}^s_{A}$ acts on inhabitants of $A$ as shown in the previous subsection. So, given a ground type $ A$, we can represent the $\eta$-long normal forms of type $A $ with tuples of type $ \mathbf{B}^{\mathcal{O}(\vert A^-\vert \, \cdot \, \log \vert A^-\vert)} $, since each such linear $\lambda$-term has at most $\vert A^- \vert$ symbols, each one encoded using around $\log \vert A^- \vert$ bits. By setting $s=c \cdot (\vert A^-\vert \, \cdot \,  \log \vert A^- \vert)$  for some $c>0$ large enough, there must exist a coding function $\lceil \_ \rceil: \Lambda_s \longrightarrow \mathbf{B}^s$, where  $\Lambda_{s} $ is the set of all normal linear $\lambda$-terms having size bounded by $s$. The role of the $\lambda$-term $\mathtt{enc}^s_{A}$ is to internalize the coding function   $\lceil \_ \rceil$ in $\mathsf{IMLL}_2$ as far as the $\eta$-long normal forms of a fixed type $A$ are concerned. 

The coming Lemma~\ref{lem: existence abs app}   relies on an 
iterated selection mechanism, i.e.~a nested  \texttt{if}-\texttt{then}-\texttt{else} construction. In order to define selection, we first we need to extend the projection in~\eqref{eqn: boolean projection} (Section~\ref{sec: duplication and erasure in a linear setting}).

\begin{defn}[Generalized projection]\label{defn: generalized projection}
Let  $A$  be a ground type. For all $k \geq 0$ and $\vec{m}=m_1, \ldots, m_ k \geq 0$,  the  linear $\lambda$-term $\pi^{\vec{m}}_1$  is defined below:
\begin{equation*}
\pi_1^{\vec{m} } \triangleq \begin{cases}    
\lambda z. \mathtt{let\ }z \mathtt{\ be\ }x,y \mathtt{\ in\ 
}(\mathtt{let\ } \mathtt{E}_{A}\,y \mathtt{\ be\ }I \mathtt{\ in\ }x) &\text{if }k=0\\ 
 \lambda z. \mathtt{let\ }z \mathtt{\ be\ }x,y \mathtt{\ in\ }(\mathtt{let\ } \mathtt{E}_{A}\,(y\, \mathtt{tt}^{m_1}  \ldots \mathtt{tt}^{m_k}) \mathtt{\ be\ }I \mathtt{\ in\ }x) &\text{if }k>0
 \end{cases} 
\end{equation*}
with type $B \otimes B \multimap B$, where $B \triangleq \mathbf{B}^{m_1} \multimap \ldots \multimap  \mathbf{B}^{m_k} \multimap A$. When $k=0$ we simply  write $\pi_1$ in place of $\pi_1^{\vec{m}}$, whose type is $A \otimes A \multimap A$.
\end{defn}

\begin{defn}[Generalized selection] \label{defn: generalized selection} Let  $A$ be a ground type and let $M_{\mathtt{tt}^n}$, $ M_{\langle \mathtt{tt}^{n-1},\mathtt{ff}\rangle}$, \ldots, $M_{\langle \mathtt{tt},\mathtt{ff}^{n-1}\rangle}$, $M_{\mathtt{ff}^n}$ be (not necessarily distinct) normal inhabitants of $\mathbf{B}^{m_1} \multimap \ldots \multimap   \mathbf{B}^{m_k} \multimap A$, for some   $n \geq 1$,  $k \geq 0 $, and  $\vec{m}=m_1, \ldots, m_k \geq 0$.  We define the linear $\lambda$-term:
\begin{equation}\label{eqn: generalized selection}
\mathtt{if \ }x \mathtt{\ then \ }[M_{\mathtt{tt}^n},  M_{\langle \mathtt{tt}^{n-1}, \mathtt{ff}\rangle}, \ldots, M_{\langle \mathtt{tt},\mathtt{ff}^{n-1}\rangle},M_{\mathtt{ff}^n} ]^{\vec{m} }
\end{equation}
with type $ \mathbf{B}^{m_1} \multimap \ldots \multimap \mathbf{B}^{m_k} \multimap A$ and context $x:\mathbf{B}^n$ by induction on $n$:
\begin{itemize}
\item $n=1$: $ \mathtt{if \ }x \mathtt{\ then \ }[M_{\mathtt{tt}},M_{\mathtt{ff}} ]^{\vec{m} } \triangleq  \pi_1^{\vec{m}} (x\, M_{\mathtt{tt}} \, M_{\mathtt{ff}})$.
\item $n>1$:  $ \mathtt{if \ }x \mathtt{\ then \ }[M_{\mathtt{tt}^n},  M_{\langle \mathtt{tt}^{n-1}, \mathtt{ff}\rangle}, \ldots, M_{\langle \mathtt{tt},\mathtt{ff}^{n-1}\rangle},M_{\mathtt{ff}^n} ]^{\vec{m}} \triangleq $
\begin{align*}
&\mathtt{ let \ }x \mathtt{ \ be \ } x_1, x_2  \mathtt{\ in\ } (\mathtt{if\ }x_2 \mathtt{\ then \ } \\
& \big[ (\lambda y_1. \mathtt{if \ }y_1 \mathtt{\ then \ } [P_{\mathtt{tt}^{n-1}},  P_{\langle \mathtt{tt}^{n-2}, \mathtt{ff}\rangle}, \ldots, P_{\langle \mathtt{tt},\mathtt{ff}^{n-2}\rangle},P_{\mathtt{ff}^{n-1}}]^{\vec{m}}),\\
&\phantom{\big[}(\lambda y_2. \mathtt{if \ }y_2 \mathtt{\ then \ } [Q_{\mathtt{tt}^{n-1}},  Q_{\langle \mathtt{tt}^{n-2}, \mathtt{ff}\rangle}, \ldots, Q_{\langle \mathtt{tt},\mathtt{ff}^{n-2 }\rangle},Q_{\mathtt{ff}^{n-1}}]^{\vec{m} })  \big]^{n-1, \vec{m}}) \, x_1 
\end{align*}
where, $\pi_1^{\vec{m}}$ is as in Definition~\ref{defn: generalized projection} and, for every $n$-tuple  $\langle \mathtt{b}_1, \ldots, \mathtt{b}_n \rangle$ of booleans, 
$ P_{\langle \mathtt{b}_1, \ldots, \mathtt{b}_n \rangle}\triangleq M_{\langle \langle \mathtt{b}_1, \ldots, \mathtt{b}_n \rangle, \mathtt{tt} \rangle}$, $
Q_{\langle \mathtt{b}_1, \ldots, \mathtt{b}_n \rangle}\triangleq M_{\langle  \langle \mathtt{b}_1, \ldots, \mathtt{b}_n \rangle,\mathtt{ff} \rangle}$.
\end{itemize}
when $k=0$ we feel free of ruling out the apex $\vec{m}$ in~\eqref{eqn: generalized selection}.
\end{defn}
\begin{lem} Let  $A$ be a ground type and let $M_{\mathtt{tt}^n}$, $ M_{\langle \mathtt{tt}^{n-1},\mathtt{ff}\rangle}$, \ldots, $M_{\langle \mathtt{tt},\mathtt{ff}^{n-1}\rangle}$, $M_{\mathtt{ff}^n}$ be (not necessarily distinct) normal inhabitants of $\mathbf{B}^{m_1} \multimap \ldots \multimap   \mathbf{B}^{m_k} \multimap A$, for some   $n \geq 1$,  $k \geq 0 $, and  $\vec{m}=m_1, \ldots, m_k \geq 0$.  
For every $n$-tuple of booleans  $\langle \mathtt{b}_1, \ldots, \mathtt{b}_n \rangle$ it holds that:
\begin{equation*}
\mathtt{if \ } \langle \mathtt{b}_1, \ldots, \mathtt{b}_n \rangle \mathtt{\ then \ }(M_{\mathtt{tt}^n},  M_{\langle \mathtt{tt}^{n-1}, \mathtt{ff}\rangle}, \ldots, M_{\langle \mathtt{tt},\mathtt{ff}^{n-1}\rangle},M_{\mathtt{ff}^n} ) \rightarrow_\beta^* M_{\langle \mathtt{b}_1, \ldots, \mathtt{b}_n \rangle}\enspace .
\end{equation*}
\end{lem}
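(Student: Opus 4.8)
The plan is to prove the statement by induction on $n$, following exactly the recursive structure of Definition~\ref{defn: generalized selection}. The key observation is that the booleans $\mathtt{tt}$ and $\mathtt{ff}$ are defined in~\eqref{eqn: boolean data type} so that $\mathtt{tt}\,M\,N \rightarrow_\beta^* \langle M, N\rangle$ and $\mathtt{ff}\,M\,N \rightarrow_\beta^* \langle N, M\rangle$, and that $\pi_1^{\vec m}$ is designed (Definition~\ref{defn: generalized projection}) so that $\pi_1^{\vec m}\langle M, N\rangle \rightarrow_\beta^* M$ whenever $M, N$ have the appropriate type $B = \mathbf{B}^{m_1}\multimap\cdots\multimap\mathbf{B}^{m_k}\multimap A$; this last fact uses that $A$ is a ground type, hence erasable (Theorem~\ref{thm: pi1 types are erasable}), so that $\mathtt{E}_A$ applied to the discarded component reduces to $I$ and gets consumed by the surrounding $\mathtt{let}\ \cdot\ \mathtt{be}\ I\ \mathtt{in}\ \cdot$. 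I would isolate this as a preliminary remark: for all well-typed $M, N$ of type $B$, $\pi_1^{\vec m}(\mathtt{b}\, M\, N) \rightarrow_\beta^* M$ if $\mathtt{b} = \mathtt{tt}$ and $\rightarrow_\beta^* N$ if $\mathtt{b} = \mathtt{ff}$.

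For the base case $n = 1$, a tuple $\langle \mathtt{b}_1\rangle$ is just $\mathtt{b}_1 \in \{\mathtt{tt}, \mathtt{ff}\}$, and $\mathtt{if}\ \mathtt{b}_1\ \mathtt{then}\ [M_{\mathtt{tt}}, M_{\mathtt{ff}}]^{\vec m} \triangleq \pi_1^{\vec m}(\mathtt{b}_1\, M_{\mathtt{tt}}\, M_{\mathtt{ff}})$, which reduces to $M_{\mathtt{tt}}$ if $\mathtt{b}_1 = \mathtt{tt}$ and to $M_{\mathtt{ff}}$ if $\mathtt{b}_1 = \mathtt{ff}$ by the preliminary remark; in both cases this is $M_{\langle \mathtt{b}_1\rangle}$. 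For the inductive step $n > 1$, write the input tuple as $\langle \mathtt{b}_1, \ldots, \mathtt{b}_n\rangle = \langle \langle \mathtt{b}_1, \ldots, \mathtt{b}_{n-1}\rangle, \mathtt{b}_n\rangle$ (recall $\langle M_1, \ldots, M_n\rangle$ unfolds right-associatively per Definition~\ref{eqn:datatype unity and product}, so the outermost pairing splits off $\mathtt{b}_n$). Unfolding the definition, $\mathtt{let}\ \langle\langle\mathtt{b}_1,\ldots,\mathtt{b}_{n-1}\rangle,\mathtt{b}_n\rangle\ \mathtt{be}\ x_1, x_2\ \mathtt{in}\ \cdots$ binds $x_1 \mapsto \langle \mathtt{b}_1,\ldots,\mathtt{b}_{n-1}\rangle$ and $x_2 \mapsto \mathtt{b}_n$. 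The body then applies $\mathtt{if}\ x_2\ \mathtt{then}\ [\,(\lambda y_1. \ldots), (\lambda y_2. \ldots)\,]^{n-1,\vec m}$ to $x_1$; by the base case ($n=1$ with apex $n-1,\vec m$), this $\mathtt{if}$ selects $\lambda y_1.(\ldots P \ldots)$ when $\mathtt{b}_n = \mathtt{tt}$ and $\lambda y_2.(\ldots Q \ldots)$ when $\mathtt{b}_n = \mathtt{ff}$. Applying the selected abstraction to $x_1 = \langle \mathtt{b}_1,\ldots,\mathtt{b}_{n-1}\rangle$ gives $\mathtt{if}\ \langle \mathtt{b}_1,\ldots,\mathtt{b}_{n-1}\rangle\ \mathtt{then}\ [P_{\mathtt{tt}^{n-1}}, \ldots, P_{\mathtt{ff}^{n-1}}]^{\vec m}$ (resp.\ with $Q$), which by the induction hypothesis reduces to $P_{\langle \mathtt{b}_1,\ldots,\mathtt{b}_{n-1}\rangle}$ (resp.\ $Q_{\langle \mathtt{b}_1,\ldots,\mathtt{b}_{n-1}\rangle}$). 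By the definitions $P_{\langle \mathtt{b}_1,\ldots,\mathtt{b}_{n-1}\rangle} = M_{\langle\langle\mathtt{b}_1,\ldots,\mathtt{b}_{n-1}\rangle,\mathtt{tt}\rangle} = M_{\langle\mathtt{b}_1,\ldots,\mathtt{b}_n\rangle}$ and similarly for $Q$, concluding the step.

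The main obstacle I anticipate is bookkeeping the types carefully so that each intermediate $\mathtt{if}$-$\mathtt{then}$-$\mathtt{else}$ and each $\pi_1^{\vec m}$ instance is applied at the right type: in the inductive step the inner selections produce terms of type $\mathbf{B}^{m_1}\multimap\cdots\multimap\mathbf{B}^{m_k}\multimap A$, while the outer selection at apex $n-1,\vec m$ must treat these as the $A$-like payload, i.e.\ the roles of $A$ and $B$ shift by one level of currying. One must check that this is consistent with Definition~\ref{defn: generalized projection}, where $\pi_1^{\vec m}$ is polymorphic enough to be used at $B\otimes B \multimap B$ for $B = \mathbf{B}^{m_1}\multimap\cdots\multimap\mathbf{B}^{m_k}\multimap A$ — which it is, precisely because $A$ being a ground type makes $\mathtt{E}_A$ (and hence the generalized $\pi_1^{\vec m}$ with its extra $\mathtt{tt}^{m_i}$ arguments) available. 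Beyond this typing discipline, the argument is a routine unwinding of $\beta$-reductions and the commuting behaviour of $\mathtt{let}\ \cdot\ \mathtt{be}\ x,y\ \mathtt{in}\ \cdot$, so I would present the reductions compactly rather than step by step.
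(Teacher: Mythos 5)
Your proposal is correct: the paper dismisses this lemma as ``Straightforward,'' and your induction on $n$, with the preliminary remark that $\pi_1^{\vec m}(\mathtt{b}\,M\,N)$ selects the component indicated by $\mathtt{b}$ via the eraser $\mathtt{E}_A$, is exactly the intended unwinding of Definitions~\ref{defn: generalized projection} and~\ref{defn: generalized selection}. The only quibble is terminological: you say the $n$-tuple unfolds ``right-associatively'' but then (correctly, as the definition requires) split off $\mathtt{b}_n$ as the second component of the outermost pair, which is the left-nested reading $\langle\langle \mathtt{b}_1,\ldots,\mathtt{b}_{n-1}\rangle,\mathtt{b}_n\rangle$.
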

\begin{proof}
Straightforward.
\end{proof}
Notice that, if $n=1$ and $k=0$  in Definition~\eqref{defn: generalized selection}, we get  the usual  \texttt{if}-\texttt{then}-\texttt{else} construction defined in~\cite{gaboardi2009light} as:
\begin{equation}\label{eqn: pi1m}
\mathtt{if}\ x \ \mathtt{then}\ M_1\ \mathtt{else}\ M_2
\triangleq  \pi_1(x\,M_1\,M_2)
\end{equation}
with type $A$ and context $x: \mathbf{B}$, where $\pi_1: A \otimes A \multimap A$ is as in Definition~\ref{defn: generalized projection}. Clearly, if  $\mathtt{b}_1\triangleq\mathtt{tt}$ and $\mathtt{b}_2\triangleq \mathtt{ff}$, then $\mathtt{if}\ \mathtt{b}_i\ \mathtt{then}\ M_1\ \mathtt{else}\ M_2 
\rightarrow_\beta^* M_i$ for $i=1,2$.

Before defining the linear $\lambda$-term $\mathtt{enc}_A^s$ we need to encode the $ \lambda $-abstractions and the applications in $\mathsf{IMLL}_2$.

\begin{lem} \label{lem: existence abs app} Let $s>0$. The following statements hold:
\begin{enumerate}[(1)]
\item \label{eqn: abs}  A linear $\lambda$-term $\mathtt{abs}^s: \mathbf{B}^s \multimap \mathbf{B}^{s}\multimap \mathbf{B}^s$ exists such that $\mathtt{abs}\lceil x \rceil \lceil M \rceil \rightarrow^*_\beta \lceil \lambda x. M \rceil$, if $\vert \lambda x .M \vert \leq s $ and $ x \in \lbrace \mathrm{x}_1, \ldots, \mathrm{x}_{s} \rbrace$.
\item\label{eqn: app}  A linear $\lambda$-term $\mathtt{app}^s: \mathbf{B}^s \multimap \mathbf{B}^{s}\multimap \mathbf{B}^s$ exists such that $\mathtt{app} \lceil M \rceil \lceil N \rceil \rightarrow^* _\beta \lceil MN 
\rceil $, if $\vert MN  \vert \leq s$.
\end{enumerate}
\end{lem}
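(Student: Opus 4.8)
The plan is to build $\mathtt{abs}^s$ and $\mathtt{app}^s$ by internalising the coding function $\lceil\_\rceil$ restricted to the finite set $\Lambda_s$ of normal linear $\lambda$-terms of size at most $s$, using the generalized selection mechanism of Definition~\ref{defn: generalized selection}. Since $\Lambda_s$ is a fixed finite set, every partial function $\Lambda_s \times \Lambda_s \rightharpoonup \Lambda_s$ (or $\Lambda_s \rightharpoonup \Lambda_s$) that we care about is finite and hence tabulable. The key observation is that the pairing construction $\langle\mathtt{b}_1,\dots,\mathtt{b}_n\rangle$ on booleans lets us treat a tuple in $\mathbf{B}^s$ (resp. $\mathbf{B}^{2s}$) as a single index into a look-up table of $2^s$ (resp. $2^{2s}$) entries, each entry being a closed normal inhabitant of $\mathbf{B}^s$, namely the code $\lceil P\rceil$ of the intended output $P$ (and a harmless default code, e.g. $\lceil\mathrm{x}_1\rceil$, for index values that do not encode any term or for which the operation is undefined because the result would exceed size $s$).

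First I would fix, once and for all, the coding $\lceil\_\rceil:\Lambda_s\to\mathbf{B}^s$ described in the paragraph preceding the lemma, together with its (partial) inverse; note that both $x$ and $M$ in item~\eqref{eqn: abs} range over the finite set $\Lambda_s$ (with $x$ further restricted to the variable alphabet $\{\mathrm{x}_1,\dots,\mathrm{x}_s\}$), and similarly $M,N$ in item~\eqref{eqn: app}. For $\mathtt{abs}^s$, I would set
\[
\mathtt{abs}^s \triangleq \lambda u.\lambda v.\ \mathtt{let}\ \langle u,v\rangle\ \mathtt{be}\ w\ \mathtt{in}\
\big(\mathtt{if}\ w\ \mathtt{then}\ [\,\dots\,]\big),
\]
where $w:\mathbf{B}^{2s}$ and the bracketed list has one entry per element of $\{0,1\}^{2s}$: the entry indexed by $(\lceil x\rceil,\lceil M\rceil)$ is the closed term $\lceil\lambda x.M\rceil$ when $x\in\{\mathrm{x}_1,\dots,\mathrm{x}_s\}$ and $\vert\lambda x.M\vert\le s$, and the default code otherwise. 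The ground type $A$ in Definition~\ref{defn: generalized selection} is instantiated to $\mathbf{B}^s$, which is indeed a ground type (a tensor of booleans), and $k=0$, so the selection term has type $\mathbf{B}^s$ with context $w:\mathbf{B}^{2s}$, giving $\mathtt{abs}^s:\mathbf{B}^s\multimap\mathbf{B}^s\multimap\mathbf{B}^s$ after the two abstractions and the tensor destructor. The defining reduction $\mathtt{abs}^s\,\lceil x\rceil\,\lceil M\rceil\rightarrow^*_\beta\lceil\lambda x.M\rceil$ then follows immediately from the computation rule for generalized selection (the preceding unnamed Lemma), applied to the tuple $\langle\lceil x\rceil,\lceil M\rceil\rangle$. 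The term $\mathtt{app}^s$ is built the same way, with the table entry indexed by $(\lceil M\rceil,\lceil N\rceil)$ being $\lceil MN\rceil$ when $\vert MN\vert\le s$ and the default otherwise, and the reduction $\mathtt{app}^s\,\lceil M\rceil\,\lceil N\rceil\rightarrow^*_\beta\lceil MN\rceil$ again by the selection computation rule.

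The only genuinely delicate points — and where I expect the bulk of the (routine but fiddly) work to lie — are the bookkeeping ones: checking that the look-up tables are well-defined, i.e. that for the intended inputs the result really is an element of $\Lambda_s$ (this is exactly where the size hypotheses $\vert\lambda x.M\vert\le s$ and $\vert MN\vert\le s$ are used, and why $s$ was chosen large enough in the first place), and that every table entry is a \emph{closed normal} inhabitant of $\mathbf{B}^s$ so that Definition~\ref{defn: generalized selection} applies verbatim. One must also be slightly careful with the tensor/pairing conventions so that $\langle u,v\rangle:\mathbf{B}^{2s}$ matches the index format expected by the $n=2s$ instance of selection; this is pure notation-chasing given the isomorphisms $(\mathbf{B}^p\otimes\mathbf{B}^q)\simeq\mathbf{B}^{p+q}$ recorded in Section~\ref{subsec: IMLL2}. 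There is no asymptotic subtlety here — the lemma asserts only existence, not a size bound — so I would not track the (exponential in $s$) size of $\mathtt{abs}^s$ and $\mathtt{app}^s$ at this stage; that accounting is deferred to Lemma~\ref{lem: size of duplicator}.
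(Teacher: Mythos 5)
Your proposal is correct and matches the paper's proof in all essentials: both realize $\mathtt{abs}^s$ and $\mathtt{app}^s$ as finite look-up tables over codes, implemented with the generalized selection construct of Definition~\ref{defn: generalized selection}, with a default entry for boolean tuples that do not code a term and with the size hypotheses $\vert\lambda x.M\vert\le s$, $\vert MN\vert\le s$ guaranteeing that every tabulated output lies in $\Lambda_s$. The only (cosmetic) difference is that the paper keeps the two-argument table curried --- an outer selection on the first argument whose entries $P_T$ have type $\mathbf{B}^s\multimap\mathbf{B}^s$ and are themselves inner selections on the second argument, exploiting the $\vec m$-parameter of generalized selection --- whereas you re-tuple the two arguments into a single index in $\mathbf{B}^{2s}$ and perform one flat selection, at the cost of the re-tupling bookkeeping you already flag.
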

\begin{proof}
We sketch the proof of Point~\ref{eqn: abs} only, since Point~\ref{eqn: app} is similar. Recall the notation in Definition~\ref{defn: generalized selection}. We let boolean values range over $b_1, b_2, \ldots$ and with $\mathtt{b}$ we denote the corresponding encoding of the boolean value $b$ in $\mathsf{IMLL}_2$.   The linear $\lambda$-term $\mathtt{abs}$ is  of the form:
\begin{equation*}
\lambda x. \lambda y. (\mathtt{if \ }x \mathtt{\ then\ } [ P_{\mathtt{tt}^s}, P_{\langle \mathtt{tt}^{s-1}, \mathtt{ff}\rangle}, \ldots, P_{\langle \mathtt{ff}, \mathtt{tt}^{s-1}\rangle},P_{\mathtt{ff}^s}]^s)\, y
\end{equation*}
where, for all $s$-tuple of booleans $T=\langle \mathtt{b}_1, \ldots, \mathtt{b}_s \rangle$, the linear $\lambda$-term  $P_{T}$ with type $\mathbf{B}^s \multimap \mathbf{B}^s$ is as follows:
\begin{equation*}
\lambda y.\mathtt{if \ }y \mathtt{\ then\ } [Q^T_{\mathtt{tt}^s}, Q^T_{\langle \mathtt{tt}^{s-1}, \mathtt{ff}\rangle}, \ldots, Q^T_{\langle \mathtt{ff}, \mathtt{tt}^{s-1}\rangle},Q^T_{\mathtt{ff}^s}]\enspace .
\end{equation*}
For all $T=\langle \mathtt{b}_1, \ldots, \mathtt{b}_s \rangle$ and for all $T'=\langle \mathtt{b}'_1, \ldots, \mathtt{b}'_s \rangle$ we define:
\begin{equation*}
 Q^T_{T'}= \begin{cases} \lceil \lambda x. M \rceil  &\text{if } \langle \mathtt{b}_1, \ldots, \mathtt{b}_s \rangle= \lceil x\rceil, \  \langle \mathtt{b}'_1, \ldots, \mathtt{b}'_s \rangle= \lceil M \rceil, \\ &  \text{and }\vert \lambda x. M \vert \leq  s\\ \\
\langle \mathtt{tt},\overset{s}{ \ldots}, \mathtt{tt} \rangle &\text{otherwise}.
\end{cases}
\end{equation*}
%We sketch the proof of point~\ref{eqn: abs} only, since point~\ref{eqn: app} is similar. Recall the notation in Definition~\ref{defn: generalized selection}. We let boolean values range over $b_1, b_2, \ldots$ and with $\mathtt{b}$ we denote our encoding of the boolean value $b$ in $\mathsf{IMLL}_2$ (see~(\ref{eqn: boolean data type})).   The $\lambda$-term $\mathtt{abs}$ is  of the form:
%\begin{equation}
%\lambda x. \lambda y. \mathtt{if \ }x \mathtt{\ then\ } (P_{\mathtt{tt}^s}, P_{\langle \mathtt{tt}^{s-1}, \mathtt{tt}\rangle}, \ldots, P_{\langle \mathtt{ff}, \mathtt{tt}^{s-1}\rangle},P_{\mathtt{ff}^s})
%\end{equation}
%where, for all $T=\langle \mathtt{b}_1, \ldots, \mathtt{b}_s \rangle$ of type $\mathbf{B}^s$, $P_{T}$ is as follows:
%\begin{equation}
%\mathtt{if \ }y \mathtt{\ then\ } (Q^T_{\mathtt{tt}^s}, Q^T_{\langle \mathtt{tt}^{s-1}, \mathtt{tt}\rangle}, \ldots, Q^T_{\langle \mathtt{ff}, \mathtt{tt}^{s-1}\rangle},Q^T_{\mathtt{ff}^s})
%\end{equation}
%For all $T=\langle \mathtt{b}_1, \ldots, \mathtt{b}_s \rangle$ and for all $T'=\langle \mathtt{b}'_1, \ldots, \mathtt{b}'_s \rangle$ we define:
%\begin{equation}
% Q^T_{T'}= \begin{cases} \lceil \lambda x. M \rceil  &\text{if } \langle \mathtt{b}_1, \ldots, \mathtt{b}_s \rangle= \lceil x\rceil, \  \langle \mathtt{b}'_1, \ldots, \mathtt{b}'_s \rangle= \lceil M \rceil, \\ &  \text{and }\vert \lambda x. M \vert \leq  s\\ \\
%\langle \mathtt{tt}, \ldots, \mathtt{tt} \rangle &\text{otherwise}.
%\end{cases}
%\end{equation}
\end{proof}
The $\lambda$-term $\mathtt{enc}^s_{A}$, given a value  $V_A$  in $\eta$-long normal form and of type $A$, combines the $\lambda$-terms $\mathtt{abs}^s$ and $\mathtt{app}^s$ to construct its encoding.
\begin{defn}[The linear $\lambda$-term $\mathtt{enc}^s_{A}$] \label{defn: enc} Let $s>0$. We define the linear $\lambda$-terms $\mathtt{enc}^s_{A}: A^-[\mathbf{B}^s]\multimap \mathbf{B}^s$, where $A$ is a  $\Pi_1$-type, and $\overline{\mathtt{enc}}^s_A: \mathbf{B}^s \multimap A^-[\mathbf{B}^s]$, where $A$ is a $\Sigma_1$-type, by simultaneous induction on the size of $A$:
    \begin{align*}
& \mathtt{enc}^s_{\alpha}  \triangleq  \lambda z. z &&     \mathtt{enc}^s_{B \multimap C}
       \triangleq \lambda z. \mathtt{abs}^s \lceil x \rceil\,   (\mathtt{enc}^s_{C}\,   (z\, (\overline{\mathtt{enc}}^s_{B}\,   \lceil x \rceil))) \\
       &    \overline{\mathtt{enc}}^s_{\alpha} \triangleq   \lambda z. z
       &&  \overline{\mathtt{enc}}^s_{B \multimap C}\triangleq \lambda z. \lambda x.   \overline{\mathtt{enc}}^s_{C}\, (\mathtt{app}^s    z\, (\mathtt{enc}^s_{B}\,x))
    \end{align*}
    \noindent
    with $x$  chosen fresh in $\lbrace \mathrm{x}_1, \ldots, \mathrm{x}_{s} \rbrace$.\qed
    \end{defn}
    The following will be used to compact the proof of some of the coming lemmas.
    \begin{defn} 
Let $s>0$, and let $A$ be a $\Pi_1$-type and $\Gamma=x_1:A_1, \ldots, x_n:A_n$ be a context of $\Sigma_1$-types. If  $M$ is an inhabitant of type $A^-[\mathbf{B}^s]$   with context $\Gamma^-[\mathbf{B}^s]$ then  $M[\Gamma]$ denotes the substitution:
\begin{equation*}
  M[\overline{\mathtt{enc}}^s_{A_1}\,x'_1 /x_1 ,\ldots, \overline{\mathtt{enc}}^s_{A_n}\, x'_n/x_n] 
\end{equation*}    
for some $x'_1, \ldots, x'_n$. \qed
\end{defn} 
To prove that $\mathtt{enc}^s_{A}$ is able to encode a value $V_A$ of type $A$ we need an intermediate step. We first prove that $\mathtt{enc}^s_{A}$   substitutes every $\lambda$-abstraction in $V_{A}$ with an instance of  $\mathtt{abs}^s$, and every application with an instance of  $\mathtt{app}^s$, thus producing a \enquote{precode}. Then we  prove that, when every free variable in it has been substituted with its respective encoding,  the precode reduces to $\lceil V_{A}\rceil$.
\begin{defn} Let $s>0$. If $M$ is a linear $\lambda$-term in normal form such that $\vert M \vert \leq s$, we define $M^s$ by induction on $\vert M \vert$:
\begin{enumerate}
\item $M=x$ if and only if $M^s=x$,
\item $M= \lambda x. N$ if and only if $M^s= \mathtt{abs}^s \, \lceil x' \rceil \, N^s[\lceil x' \rceil/x] $,
\item $M=PQ$ if and only if $M^s= \mathtt{app}^s \, P^s \, Q^s$,
\end{enumerate} 
where $x'$ is fresh, chosen in $\lbrace \mathrm{x}_1, \ldots, \mathrm{x}_{s} \rbrace$. \qed
\end{defn}
\begin{lem} \label{lem: substitution lemma for s} Let $s>0$. If $M$ and $N$ are linear $\lambda$-terms, then $M^s[N^s/x]=(M[N/x])^s$.
\end{lem}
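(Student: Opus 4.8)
\textbf{Proof plan for Lemma~\ref{lem: substitution lemma for s}.}
The plan is to prove the identity $M^s[N^s/x]=(M[N/x])^s$ by induction on the structure of $M$ (equivalently, on $\vert M\vert$), exactly matching the three clauses of the definition of $(\_)^s$. First I would fix $s>0$, linear $\lambda$-terms $M,N$ in normal form with $\vert M\vert\le s$ and $\vert N\vert\le s$, and the variable $x$; note that since $M$ is linear, $x$ occurs at most once in $M$, so the substitution $M[N/x]$ either leaves $M$ untouched (when $x\notin FV(M)$, in which case both sides equal $M^s$ trivially) or replaces the unique occurrence of $x$. The interesting analysis is therefore when $x$ occurs exactly once in $M$.

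\emph{Base case $M=x$.} Then $M^s=x$, so $M^s[N^s/x]=N^s$, while $M[N/x]=N$ and $(M[N/x])^s=N^s$. If instead $M=y\neq x$, both sides equal $y$. \emph{Case $M=\lambda y.P$} (with $y$ chosen, by $\alpha$-equivalence, distinct from $x$ and not free in $N$): by definition $M^s=\mathtt{abs}^s\,\lceil y'\rceil\,P^s[\lceil y'\rceil/y]$ for a fresh $y'\in\{\mathrm{x}_1,\dots,\mathrm{x}_s\}$. Applying $[N^s/x]$ and using that $y'$ is fresh (so $\lceil y'\rceil$ contains no $x$) and that substitutions for distinct variables commute, we get $\mathtt{abs}^s\,\lceil y'\rceil\,(P^s[N^s/x])[\lceil y'\rceil/y]$. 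By the induction hypothesis $P^s[N^s/x]=(P[N/x])^s$, so this equals $\mathtt{abs}^s\,\lceil y'\rceil\,(P[N/x])^s[\lceil y'\rceil/y]$, which is exactly $(\lambda y.P[N/x])^s=((\lambda y.P)[N/x])^s$. \emph{Case $M=PQ$:} by definition $M^s=\mathtt{app}^s\,P^s\,Q^s$; since $M$ is linear, $x$ occurs in exactly one of $P,Q$, say $P$ (the other case is symmetric). Then $M^s[N^s/x]=\mathtt{app}^s\,(P^s[N^s/x])\,Q^s=\mathtt{app}^s\,(P[N/x])^s\,Q^s$ by the induction hypothesis applied to $P$, and this is $(\;(P[N/x])Q\;)^s=((PQ)[N/x])^s$.

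The main obstacle — mild, but the only point requiring care — is bookkeeping about freshness and the side conditions $\vert{-}\vert\le s$: one must check that all the terms to which the induction hypothesis is applied (the bodies $P$, the components $P,Q$, and the results of the substitution) are again in normal form with size bounded by $s$, so that $(\_)^s$ is defined on them; this follows because substituting a subterm $N$ with $\vert N\vert\le\vert M\vert\le s$ into a single occurrence in a normal linear term again yields a normal linear term whose size is still bounded appropriately within the context of the enclosing $M[N/x]$ (more precisely, every subterm of $M[N/x]$ of the relevant shape arises either from a subterm of $M$ or from $N$, hence has size $\le s$). Also the fresh variable $y'$ introduced for abstractions must be chosen consistently on both sides; since $(\_)^s$ is well-defined only up to the choice of such fresh names, we read the equality $M^s[N^s/x]=(M[N/x])^s$ modulo this choice, or equivalently we fix one global injective naming of fresh variables and observe that the same $y'$ is produced on both sides. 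With these observations the three inductive cases close immediately, and the lemma follows.
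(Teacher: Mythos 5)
Your proof is correct and follows essentially the same route as the paper's: a structural induction on $M$ mirroring the three clauses of $(\_)^s$, with the variable, abstraction, and application cases computed exactly as in the paper. The extra bookkeeping you supply on freshness of $y'$ and on the size bound $\le s$ (which the paper leaves implicit) is sound and does not change the argument.
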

\begin{proof}
By induction on $\vert M \vert$. If $M=x$ then $x^s[N^s/x]= x[N^s/x]= N^s= (x[N/x])^s$. If $M=PQ$ then either $x$ occurs in $P$ or it occurs in $Q$, and let us consider the case  $x \in FV(P)$, the other case being similar: by using the induction hypothesis we have $(PQ)^s [N^s/x]= \mathtt{app}^s \, P^s[N^s/x] \, Q^s= \mathtt{app}^s (P[N/x])^s \, Q^s= (P[N/x]Q)^s=((PQ)[N/x])^s$. If $M= \lambda y. P$ then we have that $(\lambda y. P)^s [N^s /x]$ $= \mathtt{abs}^s \, \lceil y' \rceil \, P^s[N^s /x][\lceil y' \rceil/y]$ $= \mathtt{abs}^s \, \lceil y' \rceil   \, (P[N /x])^s [\lceil y' \rceil /y] $ $= (\lambda y. P[N/x])^s$ $= ((\lambda y. P)[N/x])^s$.
\end{proof}
\begin{lem}\label{lem: precoding lemma} Let $s>0$. If $M$ is a linear $\lambda$-term in normal form such that $\vert M \vert \leq s$ with free variables $x_1, \ldots, x_n$ then 
\[M^s [ \vec{\lceil x' \rceil}]\rightarrow_\beta^* \lceil M[x'_1/x_1, \ldots, x'_n/x_n] \rceil\]
where $ \vec{\lceil x' \rceil}=[\lceil x'_1 \rceil/x_1, \ldots, \lceil x'_n \rceil/x_n]$ and  $x'_1, \ldots, x'_n$  are distinct and fresh in $\lbrace \mathrm{x}_1, \ldots, \mathrm{x}_{s} \rbrace$.
\end{lem}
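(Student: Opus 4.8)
The plan is to prove Lemma~\ref{lem: precoding lemma} by induction on $\vert M \vert$, mirroring the three clauses in the definition of $M^s$. Throughout, I write $\theta$ for the substitution $[x'_1/x_1, \ldots, x'_n/x_n]$ and $\vec{\lceil x'\rceil}$ for $[\lceil x'_1 \rceil/x_1, \ldots, \lceil x'_n \rceil/x_n]$, so the claim is $M^s[\vec{\lceil x'\rceil}] \rightarrow^*_\beta \lceil M\theta \rceil$. The base case and the two inductive cases each reduce to a single application of one of the $\mathtt{abs}^s$/$\mathtt{app}^s$ equations from Lemma~\ref{lem: existence abs app}, together with Lemma~\ref{lem: substitution lemma for s} to commute the coding operation $(\_)^s$ with meta-level substitution.

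\textbf{The three cases.} If $M = x_i$ is a variable, then $M^s = x_i$, so $M^s[\vec{\lceil x'\rceil}] = \lceil x'_i \rceil = \lceil x_i\theta \rceil$, with no reduction needed. If $M = PQ$ with $\vert PQ \vert \le s$, then by linearity the free variables split as $FV(P) \uplus FV(Q)$; write $\theta_P, \theta_Q$ for the corresponding restrictions of $\theta$. We have $M^s[\vec{\lceil x'\rceil}] = \mathtt{app}^s\, (P^s[\vec{\lceil x'\rceil}])\, (Q^s[\vec{\lceil x'\rceil}])$, and by the induction hypothesis $P^s[\vec{\lceil x'\rceil}] \rightarrow^*_\beta \lceil P\theta_P \rceil$ and $Q^s[\vec{\lceil x'\rceil}] \rightarrow^*_\beta \lceil Q\theta_Q \rceil$. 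Since $\vert P\theta_P\,Q\theta_Q \vert = \vert PQ \vert \le s$, Lemma~\ref{lem: existence abs app}.\ref{eqn: app} gives $\mathtt{app}^s\, \lceil P\theta_P \rceil\, \lceil Q\theta_Q \rceil \rightarrow^*_\beta \lceil (P\theta_P)(Q\theta_Q) \rceil = \lceil (PQ)\theta \rceil$. If $M = \lambda y. N$ with $\vert \lambda y. N \vert \le s$, choose a fresh $y'$; then $M^s = \mathtt{abs}^s\, \lceil y' \rceil\, N^s[\lceil y' \rceil/y]$. Applying $\vec{\lceil x'\rceil}$ and using Lemma~\ref{lem: substitution lemma for s} (to identify $N^s[\lceil y'\rceil/y][\vec{\lceil x'\rceil}]$ with the instantiation handled by the induction hypothesis on $N$, whose free variables are $y, x_1, \ldots, x_n$), we get $N^s[\lceil y'\rceil/y][\vec{\lceil x'\rceil}] \rightarrow^*_\beta \lceil N[y'/y]\theta \rceil$. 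Since $\vert \lambda y'.(N[y'/y]\theta) \vert = \vert \lambda y. N \vert \le s$ and $y' \in \{\mathrm{x}_1, \ldots, \mathrm{x}_s\}$, Lemma~\ref{lem: existence abs app}.\ref{eqn: abs} yields $\mathtt{abs}^s\, \lceil y' \rceil\, \lceil N[y'/y]\theta \rceil \rightarrow^*_\beta \lceil \lambda y'. N[y'/y]\theta \rceil = \lceil (\lambda y. N)\theta \rceil$, the last equality holding up to $\alpha$-conversion, which is legitimate since terms are taken modulo $\alpha$-equivalence.

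\textbf{Expected obstacle.} The routine part is the reduction bookkeeping; the delicate point is the management of \emph{names and freshness}. One must check that all the witnesses $x'_1, \ldots, x'_n$ and the bound-variable renamings $y'$ chosen in the $\lambda$-abstraction case can simultaneously be taken distinct and inside the fixed finite pool $\{\mathrm{x}_1, \ldots, \mathrm{x}_s\}$ — this is exactly what the size bound $\vert M \vert \le s$ buys us, since a normal linear term of size at most $s$ has at most $s$ variable occurrences, hence fewer than $s$ distinct variables, leaving room for fresh choices. A second subtlety is that the substitutions $[\lceil y'\rceil/y]$ and $[\lceil x'_i\rceil/x_i]$ must be applied in a consistent order and not interfere: since $y$ is bound in $N$ and the $x_i$ are free, and $y'$ is chosen fresh, these commute, and Lemma~\ref{lem: substitution lemma for s} applied to $(\_)^s$ plus the fact that $\lceil \_ \rceil$ produces closed boolean tuples (so no capture can occur) makes this precise. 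Beyond that, the induction goes through mechanically.
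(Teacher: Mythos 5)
Your proof is correct and follows essentially the same route as the paper's: induction on $\vert M\vert$ with the three cases resolved by the induction hypothesis, Lemma~\ref{lem: existence abs app}, and a final $\alpha$-conversion in the abstraction case (where the paper, like you, simply treats $y$ as an extra free variable of $N$ when invoking the induction hypothesis). The only cosmetic difference is your appeal to Lemma~\ref{lem: substitution lemma for s} in the abstraction case, which is unnecessary there — and not literally applicable, since $\lceil y'\rceil$ is not of the form $P^s$ — but the step it is meant to justify already follows directly from the induction hypothesis, exactly as in the paper.
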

\begin{proof}
By induction on $\vert M \vert$. If $M=x$ then $\exists i \leq n$ $x_i=x$, so that  
$x^s [\lceil x'_i \rceil/x]= x [\lceil x'_i \rceil/x]=\lceil x'_i\rceil= \lceil  x[x'_i/x] \rceil$. If $M= \lambda y. N$ then, using the induction hypothesis, we have: 
\allowdisplaybreaks
\begin{align*}
(\lambda y. N)^s  [ \vec{\lceil x' \rceil}]&= (\mathtt{abs}^s \, \lceil y' \rceil \, N^s [\lceil y' \rceil/y]) [ \vec{\lceil x' \rceil} ]\\
&=\mathtt{abs}^s \, \lceil y' \rceil \, N^s[ \vec{\lceil x' \rceil}, \lceil y' \rceil /y]\\
&\rightarrow_\beta^* \mathtt{abs}^s \, \lceil y' \rceil \, \lceil N[x_1'/x_1, \ldots, x'_n/x_n, y'/y] \rceil\\
&\rightarrow_\beta^* \lceil \lambda y'. N[x_1'/x_1, \ldots, x'_n/x_n, y'/y] \rceil&& \text{Lemma~\ref{lem: existence abs app}}\\
&=_{\alpha}\lceil  (\lambda y. N)[x_1'/x_1, \ldots, x'_n/x_n] \rceil.
\end{align*}
If $M=PQ$ then let $y_1, \ldots, y_m$ (resp. $z_1, \ldots, z_k$) be the free variables of $P$ (resp. $Q$), and let $\vec{x'}=y'_1, \ldots, y'_m, z'_1, \ldots, z'_k$. Then we have: 
\allowdisplaybreaks
\begin{align*}
&(PQ)^s [ \vec{\lceil x' \rceil}]=\\
&= \mathtt{app}^s \, P^s  [ \vec{\lceil y' \rceil}] \, Q^s [ \vec{\lceil z' \rceil}] \\
&\rightarrow_\beta^* \mathtt{app}^s \,  \lceil P[y_1'/y_1, \ldots, y'_m/y_m] \rceil \, \lceil Q[z_1'/z_1, \ldots, z'_k/z_k] \rceil \\
&\rightarrow_\beta^*  \lceil P[y_1'/y_1, \ldots, y'_m/y_m] Q[z_1'/z_1, \ldots, z'_k/z_k]  \rceil && \text{Lemma~\ref{lem: existence abs app}}\\
&= \lceil (PQ)[x_1'/x_1, \ldots, x'_n/x_n] \rceil.  
\end{align*}
\end{proof}
It is easy to check that  if $M$ is an inhabitant of a $\Pi_1$-type $A$ with context $\Gamma=x_1:A_1, \ldots, x_n:A_n$ of $\Sigma_1$-types, then $M$ has also type $A^-[\mathbf{B}^s]$   with context $\Gamma^-[\mathbf{B}^s]$.
\begin{lem}\label{lem: enc} Let $M$ be a $\eta$-long normal form of type $A$  with context $\Gamma=x_1:A_1, \ldots, x_n:A_n$, where $A$ is a $\Pi_1$-type and $\Gamma$ is a context of $\Sigma_1$-types, and let $\sum_{i=1}^m\vert A_i^- \vert+ \vert A^-  \vert=k$ and $s=c \cdot (k\cdot \log k)$, for some $c$ large enough. Then:
\begin{equation*}
 (\mathtt{enc}^s_A\, M[\Gamma]) [\vec{\lceil x' \rceil}] \rightarrow_\beta^*\lceil M[x'_1/x_1, \ldots, x'_n/x_n]  \rceil \enspace , 
\end{equation*}
 where $ \vec{\lceil x' \rceil}=[\lceil x'_1 \rceil/x_1, \ldots, \lceil x'_n \rceil/x_n]$, with $x'_1, \ldots, x'_n$  distinct and chosen fresh in $\lbrace \mathrm{x}_1, \ldots, \mathrm{x}_{s} \rbrace$.
\end{lem}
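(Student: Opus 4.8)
The statement to prove is Lemma~\ref{lem: enc}, which says that the linear $\lambda$-term $\mathtt{enc}^s_A$ correctly internalizes the coding function $\lceil\_\rceil$ on $\eta$-long normal forms of a $\Pi_1$-type $A$, modulo the substitution bookkeeping captured by $M[\Gamma]$ and $\vec{\lceil x'\rceil}$. The plan is to proceed by simultaneous induction on $\vert A\vert$, proving in parallel the dual statement for the $\Sigma_1$-case, namely that for every $\eta$-long normal form $M$ that ``consumes'' a value of type $A$ (i.e.\ $M$ has $\Sigma_1$-type $A$ as one of its hypotheses or, more precisely, the companion statement for $\overline{\mathtt{enc}}^s_A$), applying $\overline{\mathtt{enc}}^s_A$ to $\lceil M\rceil$ reconstructs $M[\Gamma]$ up to $\beta$-reduction. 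This mirrors exactly the structure of Definition~\ref{defn: enc} and of the proof of Lemma~\ref{lem: sub}, so the induction metric will again be $\vert M\vert$ together with the number of universal quantifiers in $\Gamma, A$ — though since $\mathtt{enc}$ acts on types in $A^-$ form and on $\eta$-long normal forms, quantifiers have already been stripped, so a plain induction on $\vert M\vert$ with a side induction on $\vert A\vert$ should suffice.

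The key steps, in order, are as follows. First, I would establish the base case $A=\alpha$: here $M$ is an $\eta$-long normal form of atomic type, hence (by Lemma~\ref{lem: canonical definition for eta long normal forms}) $M$ is a normal linear $\lambda$-term built only from applications and variables — a ``neutral'' term — and $\mathtt{enc}^s_\alpha = \lambda z.z$, so the claim reduces to showing $M[\Gamma]$ itself $\beta$-reduces to $\lceil M[x'_1/x_1,\ldots]\rceil$ once the hypothesis variables are replaced by the $\overline{\mathtt{enc}}^s_{A_i}\lceil x'_i\rceil$; this is where the interplay with $\mathtt{app}^s$ (via Lemma~\ref{lem: existence abs app}\ref{eqn: app}) and Lemma~\ref{lem: precoding lemma} enters, essentially by peeling off the outermost application and using the $\Sigma_1$-side induction hypothesis on the head variable. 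Second, the inductive step $A = B\multimap C$: by Lemma~\ref{lem: canonical definition for eta long normal forms}.\ref{eqn: 5 canonical definition for eta long normal forms}, an $\eta$-long normal form of type $B\multimap C$ is $\lambda x.N$ with $N$ an $\eta$-long normal form of type $C$ in context $\Gamma, x:B$; then I unfold $\mathtt{enc}^s_{B\multimap C}(\lambda x.N)[\Gamma]$ according to Definition~\ref{defn: enc}, $\beta$-reduce the outer redex, feed $\overline{\mathtt{enc}}^s_B\,\lceil x'\rceil$ for the bound variable, apply the induction hypothesis to $N$ (which has strictly smaller size) to get $\lceil N[\ldots, x'/x]\rceil$, and finally invoke Lemma~\ref{lem: existence abs app}\ref{eqn: abs} to see that $\mathtt{abs}^s\lceil x'\rceil\lceil N[\ldots]\rceil \rightarrow_\beta^* \lceil \lambda x'.N[\ldots]\rceil$, which is $\alpha$-equivalent to the desired $\lceil(\lambda x.N)[x'_1/x_1,\ldots]\rceil$. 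The $\Sigma_1$-companion step for $\overline{\mathtt{enc}}^s_{B\multimap C}$ is handled symmetrically, using Lemma~\ref{lem: existence abs app}\ref{eqn: app} in the role that \ref{eqn: abs} played above.

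The main obstacle, as in the $\mathtt{sub}$ development, is bookkeeping: keeping the substitution notation $M[\Gamma]$ and $\vec{\lceil x'\rceil}$ aligned through the induction, making sure that when we split a context $\Gamma = \Gamma', \Gamma'', z{:}B\multimap C$ for a term of the form $P[zN/w]$ the two sub-substitutions and the two families of fresh variables do not clash, and that the side conditions of Lemma~\ref{lem: existence abs app} (the size bounds $\vert\lambda x.M\vert \le s$, $\vert MN\vert\le s$) are met — this is exactly what the choice $s = c\cdot(k\cdot\log k)$ with $c$ large enough guarantees, via Lemma~\ref{lem: pi1type bound}, since every subterm of an $\eta$-long normal form of type $A$ has size at most $\vert A^-\vert \le k$ and so its code fits in $s$ bits. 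I would state explicitly at the start of the proof that all these size side-conditions follow uniformly from Lemma~\ref{lem: pi1type bound} and the choice of $s$, discharge them once, and then carry out the induction without repeating the estimate. The actual $\beta$-reduction chains in each case are routine and parallel those displayed in the proof of Lemma~\ref{lem: sub}, so I would present them in the same ``\text{Definition}~\ref{defn: enc}'', ``\text{induction hyp.}'', ``\text{Lemma}~\ref{lem: existence abs app}'' annotated-\texttt{align} style used earlier in this appendix.
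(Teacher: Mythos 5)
Your proposal is correct and, in substance, follows the same route as the paper: the same case analysis on the shape of the $\eta$-long normal form (variable, abstraction, application headed by a context variable), the same unfolding of Definition~\ref{defn: enc}, and the same appeals to Lemma~\ref{lem: existence abs app} and to the choice of $s$ via Lemma~\ref{lem: pi1type bound}. The one organizational difference worth noting is that the paper does not carry the codes $\lceil \cdot \rceil$ through the induction: it first reduces the statement, via Lemma~\ref{lem: precoding lemma}, to the cleaner claim $\mathtt{enc}^s_A\, M[\Gamma] \rightarrow_\beta^* M^s$, where $M^s$ is the ``precode'' in which every abstraction and application of $M$ has been replaced by an occurrence of $\mathtt{abs}^s$ and $\mathtt{app}^s$, and only then runs the induction on $\vert M \vert$ (using Lemma~\ref{lem: substitution lemma for s} in the application case). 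You instead inline the content of Lemma~\ref{lem: precoding lemma}, invoking Lemma~\ref{lem: existence abs app} at each constructor; this works, but it obliges you to check at every step that the relevant subterm has already reduced to a literal code before $\mathtt{abs}^s$ or $\mathtt{app}^s$ can fire, which is precisely the bookkeeping the precode factorization avoids. Two smaller remarks: your opening suggestion of a simultaneous induction on $\vert A \vert$ would not go through as stated, since in the neutral case $y\,N_1\cdots N_k$ of atomic type the arguments $N_i$ have types drawn from the context rather than subtypes of $A$; your subsequent switch to induction on $\vert M \vert$ is the right call and is what the paper does. Finally, the paper states no explicit companion lemma for $\overline{\mathtt{enc}}^s$: the $\Sigma_1$ side is absorbed by the notation $M[\Gamma]$, which pre-wraps the context variables in $\overline{\mathtt{enc}}^s_{A_i}$, and by unfolding $\overline{\mathtt{enc}}^s_{B \multimap C}$ inline in the case $M = P[yN/x]$; your explicit dual statement is an equivalent and arguably cleaner packaging of the same content.
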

\begin{proof} By Lemma~\ref{lem: precoding lemma} it suffices to prove by induction on $\vert M \vert $ that the reduction  $\mathtt{enc}^s_A\, M[\Gamma]  \rightarrow_\beta^* M^s$ holds.  If $M=x$ then $A= \alpha$ and $\Gamma= x:\alpha$, because   $M$ is in $\eta$-long normal form,  so that we have $\mathtt{enc}_{\alpha}^s \, x[x: \alpha]= \mathtt{enc}^s_{\alpha}(\overline{\mathtt{enc}}_\alpha^s \, x) \rightarrow_\beta^* x= x^s $.  If $M= \lambda y.N$ then $A= B \multimap C$, so that:
\allowdisplaybreaks
\begin{align*}
&\mathtt{enc}_{B \multimap C}^s ((\lambda y.N)[\Gamma]) \\
&\rightarrow_\beta \mathtt{abs}^s\, \lceil x' \rceil (\mathtt{enc}^s_C((\lambda y. N[\Gamma])(\overline{\mathtt{enc}}^s_B \, \lceil y' \rceil)))  &&\text{Definition~\ref{defn: enc}}\\
&\rightarrow_\beta \mathtt{abs}^s\, \lceil y' \rceil (\mathtt{enc}^s_C(N[\Gamma][\overline{\mathtt{enc}}^s_B \, \lceil y' \rceil/y])) \\
&= \mathtt{abs}^s\, \lceil y' \rceil  (\mathtt{enc}^s_C(N[\Gamma][\overline{\mathtt{enc}}^s_B \, x/x])) [ \lceil y' \rceil /y]\\
&= \mathtt{abs}^s\, \lceil y' \rceil (\mathtt{enc}^s_C(N[\Gamma,  y:B]))[ \lceil y'\rceil/y]&&\text{Definition~\ref{defn: enc}}\\
&\rightarrow_\beta^* \mathtt{abs}^s\, \lceil y ' \rceil (N^s [\lceil y'\rceil /y]) && \text{induction hyp.}\\
&=  (\lambda y. N)^s .
\end{align*}
Last, suppose $M=P[yN/x]$, and let $\Sigma$, $\Delta$ be contexts such that  $\Gamma=\Sigma, \Delta, y: B \multimap C$, $\operatorname{dom}(\Sigma)=FV(P)$, and $\operatorname{dom}(\Delta)=FV(N)$.   Then we have:
\allowdisplaybreaks
\begin{align*}
&\mathtt{enc}^s_{A} (P[yN/x])[\Gamma]\\
&= \mathtt{enc}^s_{A} (P[\Sigma][(yN)[\Delta, y:B \multimap C]   /x]) \\
&= \mathtt{enc}^s_{A} (P[\Sigma][(\overline{\mathtt{enc}}^s_{B \multimap C}\,  y  ) N[\Delta]   /x])  \\
&\rightarrow_\beta^* \mathtt{enc}^s_{A} (P[\Sigma][ \overline{\mathtt{enc}}^s_{C}(\mathtt{app}^s \,   y  (\mathtt{enc}^s_B \, N[\Delta])) /x]) &&\text{Definition~\ref{defn: enc}}\\
& \rightarrow_\beta^* \mathtt{enc}^s_{A} (P[\Sigma][ \overline{\mathtt{enc}}^s_{C}(\mathtt{app}^s \,    y \, N^s) /x])&&\text{induction hyp.}\\
&= \mathtt{enc}^s_{A} (P[\Sigma][ \overline{\mathtt{enc}}^s_{C}(yN)^s  /x])\\
&=\mathtt{enc}^s_{A} (P[\Sigma, x: C])  [ (yN)^s   /x]&&\text{Definition~\ref{defn: enc}}\\
%%%
&\rightarrow_\beta^* P^s  [ (yN)^s    /x] &&\text{induction hyp.}\\
%%%
&=(P[yN/x])^s  && \text{Lemma~\ref{lem: substitution lemma for s}}. 
\end{align*}
\end{proof}

\subsection{The linear $\lambda$-term $ \mathtt{dec}^s_{A}$} \label{subsection: dec and DBn}
The linear $ \lambda $-term $ \mathtt{dec}^s_{A} $ is the component of $\mathtt{D}_{A}$ requiring the type inhabitation. Roughly, it takes in input a tuple of boolean values encoding the $\eta$-long normal form $V_{A}$ of a ground type $A$, and it produces the pair $\langle V_{A}, V_{A}\rangle$. To ensure that $ \mathtt{dec}^s_{A} $ is defined on all possible inputs, it is built in such a way that it returns a default inhabitant of $A$ whenever the tuple of booleans in input does not encode any $\lambda$-term. 
\begin{defn}[The linear $\lambda$-term $\mathtt{dec}^s_{A}$]\label{defn: dec} Let $A$ be a ground type  and let  $U$ be a value of type $A$. If  for some $c$ large enough $s=c \cdot (\vert A^- \vert \cdot \log \vert A^- \vert )$,  then we  define the linear $\lambda$-term $\mathtt{dec}^s_{A}: \mathbf{B}^s \multimap A\otimes A$ as follows:
\begin{equation*}
\lambda x. \mathtt{if \ }x \mathtt{\ then\ } [P_{\mathtt{tt}^s}, P_{\langle \mathtt{tt}^{s-1}, \mathtt{ff}\rangle}, \ldots, P_{\langle \mathtt{ff}, \mathtt{tt}^{s-1}\rangle},P_{\mathtt{ff}^s}]
\end{equation*}
where, for all $T=\langle \mathtt{b}_1, \ldots, \mathtt{b}_s \rangle$ of type $\mathbf{B}^s$:
\begin{equation*}
 P_{T}= \begin{cases} \langle V_{A}, V_{A} \rangle  &\text{if }\langle \mathtt{b}_1, \ldots, \mathtt{b}_s \rangle= \lceil V_{A} \rceil \\
\langle U, U \rangle &\text{otherwise}.
\end{cases}
\end{equation*}
\end{defn}
We are now able to prove the fundamental result of this section: 
\begin{thm}[Duplication~\cite{mairson2003computational}]
Every inhabited ground type is duplicable.
\end{thm}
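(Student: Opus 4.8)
The plan is to assemble the duplicator as the composition $\mathtt{dec}^s_{A}\circ\mathtt{enc}^s_{A}\circ\mathtt{sub}^s_{A}$ depicted in Figure~\ref{fig:DA diagram}, and then to chain the three reduction lemmas established for its components. Fix an inhabited ground type $A$; normalising an inhabitant of $A$ produces a \emph{value} $U$ of $A$ (closed normal inhabitants are exactly the values, using Fact~\ref{fact:Stability} and subject reduction for $\mathsf{IMLL}_2$), and this $U$ is the argument fed to Definition~\ref{defn: dec} --- the only place where the inhabitation hypothesis is used. Put $s\triangleq c\cdot(\vert A^-\vert\cdot\log\vert A^-\vert)$ with $c$ large enough that the coding $\lceil\_\rceil\colon\Lambda_{s}\to\mathbf{B}^{s}$ of Section~\ref{sec: the d-soundness theorem DICE} exists, which is exactly the value of $s$ demanded by Definition~\ref{defn: dec}. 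Because $A$ is a \emph{closed} $\Pi_1$-type we have $A[\mathbf{B}^{s}]=A$, so the components typecheck as $\mathtt{sub}^s_{A}\colon A\multimap A^-[\mathbf{B}^{s}]$, $\mathtt{enc}^s_{A}\colon A^-[\mathbf{B}^{s}]\multimap\mathbf{B}^{s}$ and $\mathtt{dec}^s_{A}\colon\mathbf{B}^{s}\multimap A\otimes A$; their composition $\mathtt{D}_{A}$ is therefore a closed linear $\lambda$-term of type $A\multimap A\otimes A$.

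Next I would verify the reduction $\mathtt{D}_{A}\,V\rightarrow^*_{\beta\eta}\langle V,V\rangle$ for an arbitrary value $V$ of $A$. Unfolding $\circ$ (recall $M\circ N\triangleq\lambda z.M(Nz)$), $\mathtt{D}_{A}\,V\rightarrow_\beta\mathtt{dec}^s_{A}(\mathtt{enc}^s_{A}(\mathtt{sub}^s_{A}\,V))$. Applying Lemma~\ref{lem: sub} with the empty context and $M\triangleq V$ (legitimate since $V$ is normal and $\vdash V:A=A[\mathbf{B}^{s}]$) gives $\mathtt{sub}^s_{A}\,V\rightarrow^*_\beta V_{A}$, the $\eta$-long normal form of $V$ at type $A$. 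Then Lemma~\ref{lem: enc}, again with empty context, yields $\mathtt{enc}^s_{A}\,V_{A}\rightarrow^*_\beta\lceil V_{A}\rceil$; here one invokes Lemma~\ref{lem: pi1type bound} to know $\vert V_{A}\vert\le\vert A^-\vert\le s$, so that $V_{A}\in\Lambda_{s}$ and $\lceil V_{A}\rceil$ is defined. Finally, by Definition~\ref{defn: dec} together with the selection behaviour of Definition~\ref{defn: generalized selection}, $\mathtt{dec}^s_{A}\,\lceil V_{A}\rceil\rightarrow^*_\beta P_{\lceil V_{A}\rceil}=\langle V_{A},V_{A}\rangle$, since $\lceil V_{A}\rceil$ really does encode an $\eta$-long normal inhabitant of $A$ (so the exception branch $\langle U,U\rangle$ is not taken). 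Concatenating, $\mathtt{D}_{A}\,V\rightarrow^*_\beta\langle V_{A},V_{A}\rangle$; and since $V_{A}\rightarrow^*_\eta V$ by Definition~\ref{defn: eta long nf} and $\rightarrow_\eta$ is contextually closed, $\langle V_{A},V_{A}\rangle\rightarrow^*_\eta\langle V,V\rangle$. Hence $\mathtt{D}_{A}\,V\rightarrow^*_{\beta\eta}\langle V,V\rangle$, and as $V$ was an arbitrary value of $A$, Definition~\ref{defn: duplicable and erasable types} gives that $A$ is duplicable.

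The substantive content --- and the main obstacle --- lies not in this final gluing but in the construction and correctness of the three components, i.e.\ Lemmas~\ref{lem: sub} and~\ref{lem: enc}, Definition~\ref{defn: dec}, and the auxiliary Lemmas~\ref{lem: existence abs app} and~\ref{lem: precoding lemma}. The delicate points there are: (i) propagating the size bound $s$ coherently so that each nested \texttt{if}-\texttt{then}-\texttt{else} appearing in $\mathtt{abs}^s$, $\mathtt{app}^s$, $\mathtt{enc}^s_{A}$ and $\mathtt{dec}^s_{A}$ ranges over a set of candidate $\lambda$-terms small enough to be written out explicitly --- this is precisely why the bound $\vert M\vert\le\vert A^-\vert$ of Lemma~\ref{lem: pi1type bound} is essential, and why it holds only for ground types, which admit no negative $\forall$ and hence no $\forall\mathrm{L}$; (ii) tracking free variables correctly as $\mathtt{enc}^s_{A}$ descends under $\lambda$-abstractions and through $\multimap\mathrm{L}$-patterns, which motivates the substitution notations $M[\Gamma]$ and the precoding operation $M\mapsto M^s$; and (iii) ensuring $\mathtt{dec}^s_{A}$ is \emph{total} on $\mathbf{B}^{s}$, which is what forces the default pair $\langle U,U\rangle$ and therefore the inhabitation assumption of the theorem. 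Granting those, the statement follows from the chain of reductions above.
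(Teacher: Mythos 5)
Your proposal is correct and follows exactly the paper's own argument: the duplicator is defined as $\mathtt{dec}^s_{A}\circ\mathtt{enc}^s_{A}\circ\mathtt{sub}^s_{A}$ with $s=c\cdot(\vert A^-\vert\cdot\log\vert A^-\vert)$, and correctness is obtained by chaining Lemma~\ref{lem: sub}, Lemma~\ref{lem: enc} and Definition~\ref{defn: dec}, followed by the $\eta$-reduction $\langle V_A,V_A\rangle\rightarrow^*_\eta\langle V,V\rangle$. Your additional remarks on where the inhabitation hypothesis and the bound of Lemma~\ref{lem: pi1type bound} enter are accurate elaborations of the same proof.
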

\begin{proof}
 The duplicator $\mathtt{D}_{A}$ of a inhabited ground type is defined as follows: we fix $s=c \cdot (\vert A^- \vert  \cdot \log \vert A^- \vert )$,  we fix a default value  $U$ of $A$ (see Definition~\ref{defn: dec}), and we set:
\begin{equation*}
\mathtt{D}_{A}\triangleq \mathtt{dec}^s_{A}\circ \mathtt{enc}^s_{A}\circ \mathtt{sub}^s_{A}
\end{equation*}
which has type $A \multimap A \otimes A$. By Lemma~\ref{lem: sub}, Lemma~\ref{lem: enc}, and Definition~\ref{defn: dec} the conclusion follows. Moreover, for all values $V$ of type $A$, we have:
\begin{equation*}
\mathtt{D}_A\, V \rightarrow_{\beta}^* \langle V_A, V_A\rangle \rightarrow_{\eta}^* \langle V, V\rangle \enspace .
\end{equation*}
\end{proof}
\begin{rem}\label{rem: duplicator} If $A$ is a ground type inhabited by the value $U$, we shall  write $\mathtt{D}_{A}^{U}$ to stress that the default inhabitant of $A$ used in  constructing the duplicator $\mathtt{D}_{A}$ of $A$ is $U$.
\end{rem}

\end{document}